\tikzstyle{vertex}=[circle,solid,draw=black,minimum size=14pt,inner sep=0pt]
\tikzstyle{cut}=[solid,draw=red]
\tikzset{>=stealth',every on chain/.append style={join},every join/.style={->}}
\tikzset{every picture/.style={draw=black, line width=.5pt},font=\sffamily}
\tikzset{every node/.style={circle, inner sep=0pt, minimum size=14},font=\sffamily}
\tikzset{>={Latex[color=,length=6pt,width=2.5pt 3]},font=\sffamily}
\tikzstyle{graph} = [fill=black!3, fill opacity=1, draw=black,font=\sffamily]
\tikzstyle{source} = [fill=yellow!30, fill opacity=1, draw=black,font=\sffamily]
\tikzstyle{joint} = [fill=black!3, fill opacity=1, draw=black,font=\sffamily]
\tikzstyle{tree} = [fill=red!30, fill opacity=1, draw=black,font=\sffamily]
\tikzstyle{stain} = [fill=orange, draw=black, line width=.7pt, fill opacity=.15, densely dotted]
\tikzstyle{stainEdge} = [fill=green, draw=black, line width=.7pt, fill opacity=.15, densely dotted]
\newtheorem{theorem}{Theorem}
\newtheorem{lemma}[theorem]{Lemma}
\newtheorem{corollary}[theorem]{Corollary}
\newtheorem{claim}[theorem]{Claim}
\newtheorem{fact}[theorem]{Fact}
\newtheorem{conjecture}[theorem]{Conjecture}
\newtheorem{definition}[theorem]{Definition}
\newtheorem*{conjecture*}{Conjecture}
\newtheorem*{question*}{Open Problem}
\newenvironment{claimproof}{\begin{proof}}{\end{proof}}
\newcommand{\N}{\ensuremath{\mathbb{N}}}
\newcommand{\ccFPT}{\ensuremath{\mathsf{FPT}}}
\newcommand{\W}[1]{\ensuremath{\mathsf{W[#1]}}}
\renewcommand{\to}{\!\rightarrow\!}
\newcommand{\homs}[2]{\mbox{\ensuremath{\mathsf{Hom}(#1 \to #2)}}}
\newcommand{\embs}[2]{\mbox{\ensuremath{\mathsf{Emb}(#1 \to #2)}}}
\newcommand{\subs}[2]{\mbox{\ensuremath{\mathsf{Sub}(#1 \to #2)}}}
\newcommand{\strembs}[2]{\mbox{\ensuremath{\mathsf{StrEmb}(#1 \to #2)}}}
\newcommand{\indsubs}[2]{\mbox{\ensuremath{\mathsf{IndSub}(#1 \to #2)}}}
\newcommand{\auts}[1]{\ensuremath{\mathsf{Aut}(#1)}}
\newcommand{\cphoms}[2]{\ensuremath{\mathsf{cp}\textsc{-}\mathsf{Hom}}(#1 \to #2)}
\newcommand{\cfhoms}[2]{\ensuremath{\mathsf{cf}\textsc{-}\mathsf{Hom}}(#1 \to #2)}
\newcommand{\cpindsubs}[3]{\ensuremath{\mathsf{cp}\text{-}\mathsf{IndSub}}(#1 \to_{\!#2} #3)}
\newcommand{\clique}{\ensuremath{\textsc{Clique}}}
\newcommand{\homsprobd}{\ensuremath{\textsc{Hom}_\textsc{d}}}
\newcommand{\homsprob}{\ensuremath{\textsc{Hom}}}
\newcommand{\cphomsprobd}{\ensuremath{\textsc{cp-Hom}_\textsc{d}}}
\newcommand{\cphomsprob}{\ensuremath{\textsc{cp-Hom}}}
\newcommand{\indsubsprob}{\ensuremath{\textsc{IndSub}}}
\newcommand{\indsubsprobd}{\ensuremath{\textsc{IndSub}_\textsc{d}}}
\newcommand{\mcolindsubsprobd}{\ensuremath{\textsc{MultiColIndSub}_\textsc{d}}}
\newcommand{\cpindsubsprobd}{\ensuremath{\textsc{cp-IndSub}_\textsc{d}}}
\newcommand{\subsprob}{\ensuremath{\textsc{Sub}}}
\newcommand{\subsprobd}{\ensuremath{\textsc{Sub}_\textsc{d}}}
\newcommand{\mcolsubsprobd}{\ensuremath{\textsc{MultiColSub}_\textsc{d}}}
\newcommand{\fptred}{\ensuremath{\leq^{\mathrm{fpt}}_{\mathrm{T}}}}
\newcommand{\gadgets}{\ensuremath{\mathcal{F}}}
\newcommand{\imn}{\ensuremath{\mathsf{imn}}} 
\newcommand{\igm}{\ensuremath{\mathsf{igm}}} 
\newcommand\defeq{\stackrel{\mathclap{\footnotesize\mbox{def}}}{=}}
\newcommand{\scO}{\mathcal{O}}
\newcommand{\scF}{\mathcal{F}}
\newcommand{\scS}{\mathcal{S}}
\newcommand{\scP}{\mathcal{P}}
\newcommand{\vt}{\mathcal{V}}
\newcommand{\bags}{\mathcal{B}}
\newcommand{\et}{\mathcal{E}}
\newcommand{\dtw}{\tau} 
\newcommand{\tw}{\operatorname{tw}} 
\newcommand{\cw}{\operatorname{cw}} 
\newcommand{\skel}{\Lambda}
\newcommand{\skelcw}{\mathsf{skel}\text{-}\mathsf{cw}} 
\newcommand{\vc}{\mathsf{vc}} 
\newcommand{\arrow}[1]{%
  \parbox{#1}{\scalebox{.8}{\tikz{\draw[->,>=stealth](0,0)--(#1,0);}}}
}
\newcommand{\orient}[1]{{#1}_{\arrow{5pt}}}
\newcommand{\sk}{\ensuremath{\Lambda}}
\newcommand{\algo}{\mathcal{A}} 
\newcommand{\scT}{\mathcal{T}}
\newcommand{\aspan}{\textsc{as}}
\newcommand{\relabel}{\textsc{relab}}
\newcommand{\link}{\textsc{clique}}
\newcommand{\unite}{\textsc{union}}
\newcommand{\create}{\textsc{create}}
\newcommand{\xon}{\ensuremath{x_{\operatorname{on}}}}
\newcommand{\xoff}{\ensuremath{x_{\operatorname{off}}}}
\newcommand{\Lab}{L}
\newcommand{\rt}{\operatorname{root}}
\newcommand{\ls}{\operatorname{ls}}
\newcommand{\poly}{\operatorname{poly}}
\title{Exact and Approximate Pattern Counting in Degenerate Graphs: New Algorithms, Hardness Results, and Complexity Dichotomies}
\begin{document}

\author{Marco Bressan \\ Department of Computer Science\\ University of Milan\\ Italy \\ marco.bressan@unimi.it
\and
Marc Roth \\ Merton College\\ University of Oxford\\ United Kingdom \\ marc.roth@merton.ox.ac.uk
}



\maketitle
\thispagestyle{empty}


\begin{abstract}
We consider the problem of counting the copies of a $k$-node graph $H$ in an $n$-node graph~$G$. Due to the sparsity of real-world graphs, there is considerable interest in understanding the complexity of this problem when the degeneracy $d$ of $G$ is small.
In this work we present several results on this topic. Our main contributions are:
\begin{itemize}[leftmargin=10pt]
    \item We fully classify the complexity of counting the copies and the induced copies of $H$ in $d$-degenerate graphs $G$, under the Exponential Time Hypothesis. We prove that the copies of $H$ in $G$ can be counted in time $f(k,d) \cdot n^{\max(\imn(H),1)}\cdot \log n$, where $\imn(H)$ is the size of the largest induced matching of $H$, and that this is essentially optimal: if the class of allowed patterns has unbounded induced matching number, the problem cannot be solved in time $f(k,d)\cdot n^{o(\imn(H)/\log \imn(H))}$ for any $f$. We show a similar result for counting \emph{induced} copies, in which case the relevant parameter of $H$ turns out to be its independence number $\alpha(H)$. In the language of parameterized complexity, this gives dichotomies in tractable and hard cases when the parameter is $k+d$, and implies that, unless ETH fails, several patterns cannot be counted in time $f(k,d)\cdot n^{o(k/\log k)}$, including $k$-matchings, $k$-independent sets, (induced) $k$-paths, (induced) $k$-cycles, and induced $(k,k)$-bicliques.
    \item We introduce a novel family of obstructions, that we call \emph{F-gadgets}, which are at the heart of our hardness results. We show that counting the homomorphisms from $H$ to a graph of bounded degeneracy is hard when $H$ has an $F$-gadget of unbounded treewidth. From this result, the hardness results for subgraphs and induced subgraphs above are derived using the complexity monotonicity principle of Curticapean, Dell and Marx (STOC 17) and Chen and Mengel (PODS 16), via a recent approach of Gishboliner, Levanzov and Shapira (ECCC 20). We conjecture that F-gadgets actually \emph{characterise} the hardness of counting homomorphisms in graphs of bounded degeneracy. Our work proves one part of this conjecture, leaving the other part as an open problem.
    \item We give novel algorithms for \emph{approximate} counting of subgraphs and induced subgraphs, based on recent reductions to the colourful decision version of the problem by Dell, Lapinskas and Meeks (SODA 20). We show an algorithm that computes an expected $\varepsilon$-multiplicative approximation of the number of copies of $H$ in $G$ in time $f(k,d)\cdot \varepsilon^{-2} \cdot n^{\tau_1(H)+o(1)}$, where $\tau_1(H)$ is the dag-treewidth of $H$. For induced copies, we give an algorithm with running time $(kd)^{O(k)}\varepsilon^{-2} \cdot n^{\imn(H)+1+o(1)}$. This implies, for instance, that we can efficiently approximate the number of induced $(k,k)$-bicliques in a degenerate graph, which for non-degenerate graphs is impossible under standard parameterized complexity assumptions.
\end{itemize}
\end{abstract}

\clearpage
\pagenumbering{arabic}

\section{Extended Abstract}
We study the following problems. Given a pattern graph $H$ and a host graph $G$,
\begin{enumerate}[leftmargin=1em,itemsep=2pt,parsep=0pt,topsep=2pt]
\item[]$\#\subsprob$: compute the number of subgraphs of $G$ isomorphic to $H$
\item[]$\#\indsubsprob$: compute the number of induced subgraphs of $G$ isomorphic to $H$
\item[]$\#\homsprob$: compute the number of homomorphisms from $H$ to $G$
\end{enumerate}
These three problems arise in a variety of disciplines such as statistical physics~\cite{TemperleyF61,Kasteleyn63}, database theory~\cite{DurandM15,ChenM16,Arenasetal20}, constraint satisfaction problems~\cite{DalmauJ04}, bioinformatics~\cite{Nogaetat08}, and network analysis~\cite{Miloetal02,Schilleretal15}.
Unfortunately, they are believed to be intractable: loosely speaking, any algorithm for solving them has running time $|G|^{\Theta(k)}$, where $k=|V(H)|$, unless standard conjectures fail.
To circumvent this obstacle, it is common to allow the complexity to depend not only on~$|G|$ and~$k$, but on other structural parameters as well. For instance, it is well known that $\#\homsprob$ can be solved in time $f(k) \cdot |G|^{\tw(H)}$, for some~$f$, where $k=|V(H)|$ and $\tw(H)$ is the treewidth of $H$~\cite{DalmauJ04}. Since $H$ is typically much smaller than~$G$, a running time of $f(k)\cdot |G|^{O(1)}$ is better than $|G|^{\Theta(k)}$ even if $f$ grows quickly. Therefore, we can consider $\#\homsprob$ as tractable when restricted to a class $C$ of patterns with bounded treewidth. In this case one says that $\#\homsprob(C)$ is \emph{fixed-parameter tractable}, or \ccFPT\ for short, where the parameterization is given by $k$; in other words, it is solvable in time $f(k) \cdot |G|^{O(1)}$.
It is also known~\cite{DalmauJ04} that, under the Exponential Time Hypothesis (ETH)~\cite{ImpagliazzoP01}, no algorithm for $\#\homsprob(C)$ runs in time $f(k) \cdot |G|^{o(\tw(H) / \log\tw(H))}$ for any function $f$, whenever $C$ has unbounded treewidth.\footnote{In this paper, when we say that a problem is not \ccFPT, or hard in any other sense, we always assume that one works under standard hardness conjectures (often, ETH or one of its variants). For the sake of readability, from now on we take this assumption for granted, and avoid repeating it.} Thus, the treewidth of $C$ characterises the fixed-parameter tractability of $\#\homsprob(C)$. Similar characterizations are known for the other two problems as well: Curticapean and Marx~\cite{CurticapeanM14} have shown that $\#\subsprob(C)$ is \ccFPT\ if and only if the vertex cover number $\vc(C)$ of $C$ is bounded, while Chen, Thurley and Weyer~\cite{ChenTW08} have shown that $\#\indsubsprob(C)$ is \ccFPT\ if and only if $C$ is finite. 

In this work, we study $\#\subsprob$, $\#\indsubsprob$ and $\#\homsprob$ by assuming as additional parameter the \emph{degeneracy} of $G$, denoted by $d(G)$ or simply $d$. We seek to understand when those problems are efficiently solvable given that $d$ is small, or more precisely, when they are solvable in time $f(k,d) \cdot |G|^{O(1)}$ for some function~$f$. More formally, let us denote by $\#\subsprobd$, $\#\homsprobd$, $\#\indsubsprobd$ the three problems above, but with the parameterization given by $k+d$. Our goal is to find, for each problem, an explicit criterion on the class of allowed patterns~$C$ such that the restriction of the problem to $C$ is \ccFPT\ if and only if $C$ satisfies the criterion. In addition, we aim at establishing fine-grained upper and lower bounds on the complexity of those problems. We observe that degeneracy is a natural and widely-used notion of sparsity which, unlike (say) the treewidth or the maximum degree, captures well the structure of many real-world graphs~\cite{Eppstein2011-ListClique}. For this reason, pattern counting in graphs with small degeneracy has been intensely studied, both in the exact and the approximate variants, with many exciting results~\cite{Bera-ITCS20,Bera-SODA21,Bressan19,Bressan21,Eden20-CliqueGap,Eden20-CliqueArboricity,Gishboliner20}. Still, no \ccFPT\ classification like the one seeked here has been found yet.

The remainder of this extended abstract discusses the related work (Section~\ref{sub:related}), presents our contributions (Section~\ref{sub:results}), discusses our techniques for the upper bounds and the lower bounds (Section~\ref{sub:techniques_ub} and Section~\ref{sub:techniques_lb}), before concluding with some open problems (Section~\ref{sec:conclusion}).

\subsection{Related work}\label{sub:related}
For what concerns upper bounds, the seminal work of Chiba and Nishizeki~\cite{Chiba&1985} showed that, when~$H$ is the complete graph, 
the copies of $H$ in $G$ can be counted in time $f(k,d) \cdot |V(G)|$. The key idea is that, if $G$ is $d$-degenerate then we can orient it acyclically so that every node has out-degree at most~$d$. Therefore, in time $O(f(k,d)\cdot |V(G)|)$ we can perform an exhaustive search of depth $k-1$ from each vertex of $G$. Recently, one of the authors extended this approach to arbitrary patterns via \emph{DAG tree decompositions}~\cite{Bressan19,Bressan21}, which are similar to standard tree decompositions of undirected graphs, but are designed to exploit the degeneracy orientation of $G$. The resulting running time bounds are in the form $f(k,d) \cdot |V(G)|^{\tau(H)} \cdot \log |V(G)|$, where $\tau(H)$ is called the \emph{dag treewidth} of $H$ (for instance, $\tau(H)=1$ for the clique). Our upper bounds for exact counting rely on these bounds (for approximate counting, instead, we use novel algorithms). In a line of work on near-linear-time algorithms, Gishboliner et al.~\cite{Gishboliner20} and Bera et al.~\cite{Bera-ITCS20,Bera-SODA21} have shown that, if the longest induced cycle of $H$ has size at most $5$, then $\#\subsprobd$ can be solved in time $f(d,k)\cdot |V(G)| \cdot \log |V(G)|$.

For what concerns hardness results,~\cite{Bressan19,Bressan21} showed that, unless ETH fails, no algorithm can count the copies of \emph{every} $H$ in time $f(k,d) \cdot |V(G)|^{o(\tau(H)/\log \tau(H))}$, for any $f$. However, this bound does not apply to \emph{every class} of patterns with unbounded dag-treewidth $\tau$, and so~\cite{Bressan19,Bressan21} do not establish an \ccFPT\ classification. Even more recently,~\cite{Gishboliner20,Bera-SODA21} showed that, under the Triangle Detection Conjecture, counting the copies of $H$ in $G$ takes time $\Omega(|V(G)|^{1+\gamma})$ for some $\gamma > 0$ whenever $H$ has an induced cycle of length at least $6$. Again, this does not establish an \ccFPT\ classification,\footnote{However, we wish to point out that an implicit consequence of~\cite{Gishboliner20} is that $\#\subsprobd$ is $\#\W{1}$-hard when $H$ is a matching, and $\#\indsubsprobd$ is $\#\W{1}$-hard when $H$ is the independent set. Here $\#\W{1}$ should be considered the parameterized equivalent of $\#\mathsf{P}$ (see Section~\ref{sec:prelims} for a formal definition).  In fact, we will show that counting $k$-matchings and counting $k$-independent sets are the minimal $\#\W{1}$-hard cases for $\#\subsprobd$ and $\#\indsubsprobd$ respectively. } and the Triangle Detection Conjecture does not seem sufficiently powerful for that matter. In addition, the techniques used in these works are not immediately extendable to obtain an \ccFPT\ classification; and although we use a technique of~\cite{Gishboliner20} (the extension of complexity monotonicity to degenerate graphs, see below), that technique alone is far from sufficient, and indeed our results are based on substantial novel ingredients.

\subsection{Our Results}\label{sub:results}
We provide several results of the fixed-parameter tractability of $\#\homsprobd$, $\#\subsprobd$, and $\#\indsubsprobd$. Our main result is a set of upper and lower bounds that establishes a complete and explicit \ccFPT\ classification for both $\#\subsprobd$ and $\#\indsubsprobd$, therefore giving necessary and sufficient conditions for any class of patterns $C$ to be easily countable (in the \ccFPT\ sense). In addition, we give novel hardness results for $\#\homsprobd$, as well as for $\#\indsubsprobd(\Phi)$, the problem of counting the induced copies of all $k$-vertex patterns that satisfy property $\Phi$. We also give novel (in)tractability results for the \emph{approximate} versions of these problems, where the goal is to compute a multiplicative $\varepsilon$-approximation of the solution. In particular, we provide novel algorithms and upper bounds for approximately counting copies and induced copies. Table~\ref{tab:results} and Table~\ref{tab:results_approx} summarize all these results and compares them to the non-degenerate case.

The remainder of this section discusses all the results in detail (Sections~\ref{sec:intro_subs}-\ref{sub:striking_fact} for exact counting, and Section~\ref{sec:intro_approx} for approximate counting). Our main claims will be stated both in the language of fine-grained complexity (that is, as fine-grained running time bounds) and in the language of parameterized complexity (that is, as dichotomies between classes of patterns). 
In what follows, we denote by $C$ a generic computable class of graphs. For every problem, say $\#\homsprobd$, we denote by $\#\homsprobd(C)$ its restriction to $H \in C$. 

\begin{table}[p]\small
    \begin{tabularx}{\textwidth}{lll}
        \toprule
        \begin{tabular}{l}
            \textbf{Problem}\\(Exact Counting)
        \end{tabular}
          &\begin{tabular}{l}
             \textbf{Our results}  \\
             (parameter: $k+d$)
        \end{tabular}& \begin{tabular}{l}
             \textbf{Known results}  \\
             (parameter: $k$)
        \end{tabular}\\
        \midrule
        \begin{tabular}{l}
             $\#\subsprob$
        \end{tabular}
         & \begin{tabular}{l}
         $f(k,d)\cdot n^{\max(1,\imn(H))}\cdot \log n$\\
         not in $f(k,d)\cdot n^{o(\imn(H)/\log \imn(H))}$\\[7pt]
         Theorem~\ref{thm:intro_subs_finegrained} and Theorem~\ref{thm:intro_subs_param}
         \end{tabular} &
         \begin{tabular}{l}
         $f(k)\cdot n^{\mathsf{vc}(H)+O(1)}$\\
         not in $f(k)\cdot n^{o(\mathsf{vc}(H)/\log \mathsf{vc}(H))}$\\[7pt]
         see~\cite{CurticapeanM14}
        \end{tabular}\\
        \midrule
        \begin{tabular}{l}
             $\#\indsubsprob$
        \end{tabular} &
        \begin{tabular}{l}
        $f(k,d)\cdot n^{\alpha(H)}\cdot \log n$\\
        not in $f(k,d)\cdot n^{o(\alpha(H)/\log \alpha(H))}$\\[7pt]
        Theorem~\ref{thm:intro_indsubs_finegrained} and Theorem~\ref{thm:intro_indsubs_param}
        \end{tabular}  &
        \begin{tabular}{l}
        $f(k)\cdot n^{k}$\\
        not in $f(k)\cdot n^{o(k/\log k)}$ $^\dagger$\\[7pt]
        see~\cite{ChenTW08}
        \end{tabular}\\
        \midrule
        \begin{tabular}{l}
             $\#\indsubsprob(\Phi)$
        \end{tabular} & \begin{tabular}{l} $\#\W{1}$-hard if $\Phi$ = connectedness\\ or if $\Phi$ is minor-closed, unless\\ trivial or with bounded $\alpha(H)$ \\[7pt]
        Theorem~\ref{thm:intro_graphlets} and Theorem~\ref{thm:intro_indsubs_minor_closed}
        \end{tabular}
        & \begin{tabular}{l}
        see~\cite{JerrumM15,Meeks16,RothSW20} for an overview
        \end{tabular}\\
        \midrule
        \begin{tabular}{l}
             $\#\homsprob$
        \end{tabular} & \begin{tabular}{l} $f(k,d)\cdot n^{\tau_1(H)}\cdot \log n$\\ $\#\W{1}$-hard if $\igm(H)$ unbounded\\[7pt]
        Theorem~\ref{thm:intro_homs_hard} and Theorem~\ref{thm:intro_homs_tract}
        \end{tabular}  & \begin{tabular}{l}$f(k)\cdot n^{\tw(H)+1}$\\
        not in $f(k)\cdot n^{o(\tw(H)/\log \tw(H))}$\\[7pt]
        see~\cite{DalmauJ04,Marx10}
        \end{tabular}\\
        \bottomrule
     \end{tabularx}
     \small \caption[caption]{Our results for the case of \emph{exact} counting. Here $\imn$, $\mathsf{vc}$, $\alpha$, $\tau_1$, $\tw$, $\igm$ denote respectively: induced matching number, vertex-cover number, independence number, dag treewidth, treewidth, and size of the largest induced grid minor.
     All fine-grained lower bounds assume the ETH. All the hardness results hold for any computable class $C$ of patterns whose relevant parameter (e.g., $\imn$) is unbounded. The fine-grained lower bound for counting homomorphisms was shown in~\cite{Marx10} and transfers to subgraphs and induced subgraphs via complexity monotonicity~\cite{CurticapeanDM17}.\\
     
     $^\dagger$\begin{footnotesize} For dense classes of graphs $C$, the conditional lower bound is known to be tight (cf.\ \cite{Chenetal05,Chenetal06,RothS18}).\end{footnotesize}}
     \label{tab:results}
\end{table}
\begin{table}[p]\small
\vspace{1cm}
    \begin{tabularx}{\textwidth}{lll}
        \toprule
        \begin{tabular}{l}
            \textbf{Problem}\\(Approx. Counting)
        \end{tabular}
          &\begin{tabular}{l}
             \textbf{Our results}  \\
             (parameter: $k+d$)
        \end{tabular}& \begin{tabular}{l}
             \textbf{Known results}  \\
             (parameter: $k$)
        \end{tabular}\\
        \midrule
        \begin{tabular}{l}
             $\#\subsprob$
        \end{tabular}
         & \begin{tabular}{l}
         $f(k,d)\cdot(1/\varepsilon)^{2} \cdot n^{\tau_1(H)+o(1)}$\\[7pt]
         Theorem~\ref{thm:intro_FPTRAS_subcount} 
         \end{tabular} &
         \begin{tabular}{l}
         $k^{O(k)}\cdot(1/\varepsilon)^{O(1)}\cdot n^{\mathsf{tw}(H)+O(1)}$\\[7pt]
         see~\cite{ArvindR02}
        \end{tabular}\\
        \midrule
        \begin{tabular}{l}
             $\#\indsubsprob$
        \end{tabular} &
        \begin{tabular}{l}
        $(kd)^{O(k)}\cdot (1/\varepsilon)^{2}\cdot n^{\imn(H)+1+o(1)}$\\[7pt]
        Theorem~\ref{thm:intro_FPTRAS_indsubcount} 
        \end{tabular}  &
        \begin{tabular}{l}
        $\W{1}$-hard for any infinite\\ class of patterns$^\ddagger$\\[7pt]
        see~\cite{ChenTW08}
        \end{tabular}\\
        \midrule
        \begin{tabular}{l}
             $\#\indsubsprob(\Phi)$
        \end{tabular} &
        \begin{tabular}{l}
        \ccFPT\ if $\Phi$ is minor-closed\\[7pt]
        Theorem~\ref{thm:intro_approx_minor_closed} 
        \end{tabular}  &
        \begin{tabular}{l}
        see~\cite{Meeks16} for an overview
        \end{tabular}\\
        \bottomrule
     \end{tabularx}
     \small\caption[caption]{Our results for the case of \emph{approximate} counting.
     The symbols $\tw$, $\tau_1$, and $\imn$ are as in Table~\ref{tab:results}. Note that $\#\homsprob$ is left out since we cannot prove bounds better than those for exact counting; in fact, even in the non-degenerate case it is not known whether approximate counting is easier than exact counting, see~\cite{BulatovZ20}.\\
     
     $^\ddagger$\begin{footnotesize} The problem of detecting an induced copy of a pattern graph $H$ in $G$ is $\W{1}$-hard (when parameterized by $k=|V(H)|$), whenever the class of allowed pattern graphs is infinite~\cite{ChenTW08}. Since an $\varepsilon$-approximating reveals whether the solution is zero, the computation of the latter is $\W{1}$-hard under randomised reductions. \end{footnotesize}}
     \label{tab:results_approx}
\end{table}

\subsubsection{Counting Subgraphs: The Induced Matching Number}\label{sec:intro_subs}
Let $\imn(H)$ be the size of the largest induced matching of $H$. A class of graphs $C$ has unbounded induced matching number if for every $c \in \N$, there exists a graph $H \in C$ with $\imn(H) > c$.  
We prove:\pagebreak
  
\begin{theorem}\label{thm:intro_subs_finegrained}
    There exists a computable function $f$ such that $\#\subsprob$ can be solved in time\footnote{The factor $\log |V(G)|$ can be reduced to $O(1)$ if we accept a a bound in expectation, by replacing the dictionary with deterministic logarithmic access time of~\cite{Bressan21} with a dictionary with expected access time $O(1)$.}
    $f(|H|,d(G)) \cdot |V(G)|^{\max(1,\imn(H))} \cdot \log |V(G)|$.
    Moreover, for any computable class of graphs $C$ of unbounded induced matching number, there is no function $f$ such that $\#\subsprob(C)$ can be solved in time
    $f(|H|,d(G)) \cdot |V(G)|^{o(\imn(H) / \log \imn(H))}$
    unless ETH fails.
\end{theorem}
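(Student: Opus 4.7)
The plan is to prove the upper and lower bounds separately, both by the complexity monotonicity paradigm. For the upper bound, I would invoke the identity of Curticapean--Dell--Marx~\cite{CurticapeanDM17} that expresses subgraph counts as a finite linear combination of homomorphism counts,
\[
\#\subsprob(H \to G) = \sum_{F \in \mathcal{S}(H)} \alpha_{H,F} \cdot \#\homsprob(F \to G),
\]
where $\mathcal{S}(H)$ is the \emph{spasm} of $H$, i.e.\ the set of graphs obtained from $H$ by iteratively merging pairs of non-adjacent vertices, and the coefficients $\alpha_{H,F}$ are computable from $H$ alone. Since $|\mathcal{S}(H)|$ and the maximum $|V(F)|$ depend only on $|V(H)|$, it suffices to bound the cost of each term. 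By the DAG-tree-decomposition algorithm of~\cite{Bressan19,Bressan21}, each $\#\homsprob(F\to G)$ can be computed in time $f(|V(F)|,d(G))\cdot |V(G)|^{\tau(F)}\cdot\log|V(G)|$. The crux of the upper bound is therefore the structural inequality $\max_{F\in\mathcal{S}(H)}\tau(F)\le \max(1,\imn(H))$.

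To establish this inequality I would first prove, by case analysis, the monotonicity $\imn(F)\le\imn(H)$ under a single identification of two non-adjacent vertices: given an induced matching $M'$ in $F$, either the merged vertex lies outside $V(M')$, in which case $M'$ is itself an induced matching of $H$, or the merged vertex is incident to a unique edge of $M'$, in which case one of the two preimages yields an induced matching of $H$ of the same size; the non-edges required by inducedness pull back because identification never deletes edges. Iterating over the sequence of quotient operations yields $\imn(F)\le \imn(H)$ for every $F\in\mathcal{S}(H)$. Second, I would exhibit a DAG tree decomposition of $F$ whose number of roots is at most $\max(1,\imn(F))$, by taking as roots one endpoint of each edge of a maximum induced matching and enumerating the remaining vertices along the acyclic orientation of the host in the spirit of Chiba--Nishizeki. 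This yields $\tau(F)\le\max(1,\imn(F))\le\max(1,\imn(H))$, completing the upper bound.

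For the lower bound, I would apply the degenerate-regime extension of complexity monotonicity of Gishboliner--Levanzov--Shapira~\cite{Gishboliner20}: if $\#\homsprob(F\to G)$ is hard on $d$-degenerate hosts for some $F\in\mathcal{S}(H)$, then so is $\#\subsprob(H\to G)$, since the other terms in the linear combination above can be isolated by interpolation on host graphs of bounded degeneracy. It therefore suffices to exhibit, for each $H\in C$ with $\imn(H)=s$ large, a quotient $F\in\mathcal{S}(H)$ for which $\#\homsprob(F\to G)$ is not solvable in time $n^{o(s/\log s)}$ on $O(1)$-degenerate hosts. I would obtain such an $F$ by identifying suitably chosen subsets of the endpoints of a maximum induced matching $\{u_1v_1,\dots,u_sv_s\}$ of $H$, so that the resulting quotient admits an F-gadget (in the sense introduced later in the paper) of treewidth $\Omega(s)$. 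The F-gadget hardness theorem proved elsewhere in the paper then yields the claimed $n^{\Omega(s/\log s)}$ lower bound on degenerate hosts, analogously to Marx's $n^{\Omega(\tw/\log\tw)}$ bound for $\#\homsprob$~\cite{Marx10}.

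The main obstacle, on both sides of the theorem, is controlling the induced matching number through the quotient operation that defines the spasm. On the upper-bound side, this means converting the matching-number bound into a low-root DAG tree decomposition of $F$ compatible with the orientation algorithm of~\cite{Bressan19,Bressan21}. On the lower-bound side, one must simultaneously ensure that the chosen quotient $F$ lies in $\mathcal{S}(H)$, that its F-gadget has treewidth proportional to $\imn(H)$, and that the hard instances produced by the F-gadget reduction map into $O(1)$-degenerate host graphs; this last condition is exactly what the F-gadget framework developed in the paper is engineered to guarantee, and without it the naive Dalmau--Jonsson--Marx reduction would produce non-degenerate hosts and break the reduction.
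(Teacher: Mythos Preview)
Your proposal follows essentially the same route as the paper for both bounds, but the upper-bound sketch contains a genuine gap at the key structural step.

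For the upper bound you correctly reduce to showing $\tau(\orient{F}) \le \max(1,\imn(F))$ for every acyclic orientation $\orient{F}$ of every spasm $F$ of $H$, and your monotonicity argument $\imn(F)\le\imn(H)$ is fine. However, your proposed construction of the d.t.d.\ --- ``take as roots one endpoint of each edge of a maximum induced matching'' --- does not work. The bags of a d.t.d.\ of $\orient{F}$ must consist of \emph{sources} of the given orientation $\orient{F}$, and the vertices of a fixed induced matching of $F$ need not be sources of $\orient{F}$; even when they are, there is no reason they should reach all non-sources. The paper runs the argument in the opposite direction: for each $\orient{F}$ one greedily removes sources from $S(\orient{F})$ while every non-source remains reachable; the surviving sources $s_1,\dots,s_t$ each have a private out-neighbour $u_i$ not reachable from any $s_j$ with $j\neq i$, and then one checks that $\{s_1u_1,\dots,s_tu_t\}$ is an \emph{induced} matching of $F$, whence $t\le\imn(F)$. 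This minimal ``kernel'' becomes the root bag, with singleton bags $\{s\}$ for all remaining sources as children. So the inequality you need is correct, but the direction of the argument must be reversed.

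For the lower bound your plan coincides with the paper's: quotient a large induced matching of $H$ to the $1$-subdivision of a linear-size expander $F_k$ (so the quotient has an $F_k$-gadget of treewidth $\Omega(\imn(H))$), invoke the F-gadget reduction lemma, and finish by degenerate complexity monotonicity \`a la Gishboliner--Levanzov--Shapira. One inaccuracy: the F-gadget reduction does \emph{not} produce $O(1)$-degenerate hosts but hosts of degeneracy $O(|V(\widehat{H})|)$. This is still bounded by a function of the parameter $|H|+d(G)$, so the parameterized Turing reduction goes through, but the constant-degeneracy claim is false as stated and you should not rely on it.
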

\noindent In the language of parameterized complexity, Theorem~\ref{thm:intro_subs_finegrained} yields:
\begin{theorem}\label{thm:intro_subs_param}
    Let $C$ be a computable class of graphs. If $C$ has bounded induced matching number, then $\#\subsprobd(C)$ is \ccFPT. Otherwise, $\#\subsprobd(C)$ is $\#\W{1}$-hard.
\end{theorem}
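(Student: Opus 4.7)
The plan is to derive Theorem~\ref{thm:intro_subs_param} as an immediate corollary of Theorem~\ref{thm:intro_subs_finegrained}, whose fine-grained bounds already package both the algorithmic content and the hardness reduction. There is essentially no new mathematics beyond a translation into parameterized-complexity language; the only thing to verify is that the quantitative bounds of Theorem~\ref{thm:intro_subs_finegrained} are of the right shape.

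For the tractable direction, suppose $C$ has bounded induced matching number, so that there is a constant $c$ with $\imn(H) \le c$ for every $H \in C$. Plugging this into the upper bound of Theorem~\ref{thm:intro_subs_finegrained} yields running time $f(|H|, d(G)) \cdot |V(G)|^{\max(1,c)} \cdot \log|V(G)|$; since $\max(1,c)$ is a constant independent of the input, this is of the form $g(k+d) \cdot n^{O(1)}$ for some computable $g$, which is exactly the definition of $\ccFPT$ with combined parameter $k+d$.

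For the hard direction, suppose $C$ has unbounded induced matching number. Here I would appeal to the very reduction that underlies the fine-grained lower bound of Theorem~\ref{thm:intro_subs_finegrained}: that reduction starts from a canonical $\#\W{1}$-hard problem such as $\#\clique$ and, given a source instance with parameter $\ell$, constructs via $F$-gadgets an instance of $\#\subsprobd$ whose pattern $H$ lies in $C$ with $\imn(H)$ growing in $\ell$, whose host graph has degeneracy bounded by a function of $\ell$, and whose pattern size is bounded by a function of $\ell$. The main thing to check is that this is an fpt-reduction with respect to the combined parameter $k+d$, i.e., that both $|H|$ and $d(G)$ are bounded by computable functions of the source parameter $\ell$, and that the target parameter itself is computable from $\ell$; since this is built in by the design of the gadget construction (the gadgets are chosen precisely so as to inflate $\imn(H)$ while keeping $|H|$ and $d(G)$ under control), the $\#\W{1}$-hardness of $\#\subsprobd(C)$ follows. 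The only potential obstacle — ensuring that an ETH-style fine-grained reduction is simultaneously a genuine parameterized reduction — is avoided because the reduction is in fact uniform in $\ell$ and does not exploit any non-parameterized slack.
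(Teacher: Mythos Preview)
Your proposal is correct and matches the paper's approach: the paper states that Theorems~\ref{thm:intro_subs_finegrained} and~\ref{thm:intro_subs_param} are both immediate consequences of Theorem~\ref{thm:main_subgraphs}, which packages the algorithmic upper bound (via $\tau_2(H)\le\max(1,\imn(H))$) together with Lemma~\ref{lem:subs_hard}, a single parameterized Turing reduction from $\#\cphomsprob$ on an expander family that simultaneously yields the $\#\W{1}$-hardness and the ETH lower bound. The one point worth making explicit is the caveat you already flag: the $\#\W{1}$-hardness does not follow from the ETH statement of Theorem~\ref{thm:intro_subs_finegrained} alone, but from the underlying reduction, and the paper handles this simply by having Lemma~\ref{lem:subs_hard} assert both conclusions in one breath.
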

Our upper bounds subsume several known algorithms for subgraph counting in degenerate graphs. For instance, we subsume the clique-counting algorithm of Chiba and Nishizeki~\cite{Chiba&1985} 
or the $(k,k)$-biclique counting algorithm of~\cite{Eppstein99} --- indeed, both cliques and bicliques have induced matching number $1$. Our negative results, on the other hand, imply novel conditional lower bounds for a variety of subgraph patterns:
\begin{corollary}\label{cor:intro_subs_cor}
	The following problems are $\#\W{1}$-hard when parameterized by $k$ and $d(G)$ and cannot be solved in time $f(k,d(G))\cdot |G|^{o(k/\log k)}$ for any function $f$, unless ETH fails:
	\begin{enumerate}[itemsep=2pt,parsep=0pt,topsep=2pt]
		\item compute the number of $k$-cycles in $G$.
		\item compute the number of $k$-paths in $G$.
		\item compute the number of $k$-matchings in $G$.
	\end{enumerate}
\end{corollary}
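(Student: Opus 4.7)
The plan is to derive Corollary~\ref{cor:intro_subs_cor} as a direct consequence of Theorems~\ref{thm:intro_subs_finegrained} and~\ref{thm:intro_subs_param}, applied to the three pattern classes in question. The entire task reduces to verifying a single combinatorial claim: for each of the three families (the class of $k$-cycles $\{C_k\}_{k\ge 3}$, the class of $k$-paths $\{P_k\}_{k\ge 1}$, and the class of $k$-matchings $\{M_k\}_{k\ge 1}$), the induced matching number grows linearly in the number of vertices $k$.

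The first step is to compute $\imn(H)$ for each pattern. For $M_k$, the graph is itself a union of $k$ pairwise non-adjacent edges, so $\imn(M_k)=k$ trivially. For $P_k = v_1v_2\cdots v_k$, an induced matching corresponds to selecting edges $v_{i_j}v_{i_j+1}$ whose indices are pairwise at distance at least~$3$ along the path (any closer selection would either share a vertex or be joined by a path-edge), yielding $\imn(P_k) = \lceil (k-1)/3\rceil = \Theta(k)$. An analogous argument on the cyclic ordering gives $\imn(C_k) = \lfloor k/3\rfloor = \Theta(k)$.

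The second step is to feed these bounds into the two theorems. Since each family has unbounded induced matching number, Theorem~\ref{thm:intro_subs_param} immediately yields the claimed $\#\W{1}$-hardness when the parameter is $k + d(G)$. For the fine-grained bound, Theorem~\ref{thm:intro_subs_finegrained} rules out, under ETH, any algorithm running in $f(|H|, d(G)) \cdot |V(G)|^{o(\imn(H)/\log \imn(H))}$. Since $\imn(H) = \Theta(k)$ for each of our three families, and the function $x \mapsto x/\log x$ is monotone increasing for $x \ge 3$, we have $o(\imn(H)/\log \imn(H)) = o(k/\log k)$, which gives precisely the lower bound asserted in the corollary.

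There is no real obstacle here beyond the routine $\imn$ computations above; the entire substance of the corollary is packaged inside Theorem~\ref{thm:intro_subs_finegrained}. The only subtle point worth double-checking is that the hardness of Theorem~\ref{thm:intro_subs_finegrained} truly holds with the restriction $H \in C$ for each specific family $C$ (rather than allowing the reduction to pick any pattern of large $\imn$); since the statement of Theorem~\ref{thm:intro_subs_finegrained} is already formulated for an arbitrary computable class $C$ of unbounded induced matching number, and each of our three families qualifies as such a class, this is automatic and requires no further argument.
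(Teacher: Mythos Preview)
Your proof is correct and follows the same approach as the paper: the paper simply notes that $k$-cycles, $k$-paths, and $k$-matchings all have induced matching number $\Omega(k)$ and invokes the lower bound (Lemma~\ref{lem:subs_hard}, which is the hardness half of Theorems~\ref{thm:intro_subs_finegrained} and~\ref{thm:intro_subs_param}). Your version is more explicit about the exact values of $\imn$ and the monotonicity step, but the argument is identical in substance.
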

Before, $\#\W{1}$-hardness and fine-grained lower bounds for these patterns were known only for non-degenerate host graphs~\cite{FlumG04,Curticapean13,CurticapeanDM17}.

\subsubsection{Counting Induced Subgraphs: The Independence Number}\label{sec:intro_indsubs}
Let $\alpha(H)$ be the size of the largest independent set of $H$. A class of graphs $C$ has unbounded independence number if for every $c \in \N$, there exists a graph $H \in C$ with $\alpha(H) > c$. We prove:
\begin{theorem}\label{thm:intro_indsubs_finegrained}
    There exists a computable function $f$ such that $\#\indsubsprob$ can be solved in time\footnote{Again, the $\log |V(G)|$ factor can be removed if we want only an \emph{expected} running time bound.}
    $f(|H|,d(G)) \cdot |V(G)|^{\alpha(H)} \cdot \log |V(G)|$.
    Moreover, for any computable class of graphs $C$ of unbounded independence number, there is no function $f$ such that $\#\indsubsprob(C)$ can be solved in time
    $f(|H|,d(G)) \cdot |V(G)|^{o(\alpha(H) / \log \alpha(H))}$
    unless ETH fails. 
\end{theorem}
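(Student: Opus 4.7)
\textbf{Plan for Theorem~\ref{thm:intro_indsubs_finegrained}.} I will handle the upper bound and the lower bound separately, the upper bound by a direct reduction to $\#\subsprob$ via Möbius/inclusion-exclusion, and the lower bound by the ``complexity monotonicity for degenerate graphs'' approach outlined in the introduction, exploiting the novel $F$-gadget machinery.

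\textbf{Upper bound.} The standard identity expressing induced subgraphs in terms of (non-induced) subgraphs reads
\[
\strembs{H}{G} \;=\; \sum_{H' \supseteq H,\, V(H')=V(H)} (-1)^{|E(H')|-|E(H)|}\,\embs{H'}{G},
\]
from which, after dividing by $|\auts{H}|$, one obtains $\#\indsubs{H}{G}$ as a linear combination of at most $2^{\binom{|V(H)|}{2}}$ terms $\#\subs{H'}{G}$, with coefficients that are computable from $|H|$ alone. The plan is to invoke the subgraph counting algorithm of Theorem~\ref{thm:intro_subs_finegrained} on each $H'$, and bound $\imn(H')$ uniformly in terms of $\alpha(H)$. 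The key observation is that an induced matching of size $m$ in $H'$ yields an independent set of size $m$ in $H'$ (pick one endpoint per edge), so $\imn(H') \le \alpha(H')$; and since $H' \supseteq H$ on the same vertex set, $\alpha(H') \le \alpha(H)$. Hence every term runs in time $f(|H|,d) \cdot |V(G)|^{\alpha(H)} \cdot \log|V(G)|$, and absorbing the constant (in $|H|$) number of terms into $f$ finishes the upper bound.

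\textbf{Lower bound.} Here the approach is, in spirit, the one sketched in Section~\ref{sub:techniques_lb}: decompose $\#\indsubs{H}{G}$ as a linear combination of homomorphism counts $\#\homs{H'}{G}$ (by chaining the identity above with the Möbius inversion $\#\emb(F \to G) = \sum_\sigma \mu_\sigma \#\hom(F/\sigma \to G)$ over partitions $\sigma$ of $V(F)$), and then apply the degenerate-graph version of complexity monotonicity of Gishboliner, Levanzov and Shapira to conclude that the complexity of $\#\indsubsprobd$ on $C$ is governed by the hardest $\#\homsprobd(H')$ with nonzero coefficient. The crux is therefore to show: if $\alpha(H)$ is large, then at least one $H'$ appearing with a nonzero coefficient in the above decomposition carries an $F$-gadget of treewidth $\Omega(\alpha(H))$. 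Once such an $H'$ is exhibited, Theorem~\ref{thm:intro_homs_hard} together with the Marx-style fine-grained lower bound for homomorphism counting on bounded-treewidth-obstructed patterns yields the claimed $n^{o(\alpha(H)/\log\alpha(H))}$ bound, and the ETH ruling-out follows.

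\textbf{Main obstacle.} The delicate step is the existence proof of the ``witness'' $H'$. The natural candidate is obtained by choosing a partition $\sigma$ that contracts most of $V(H)$ while respecting the independent set of size $\alpha(H)$, so that the resulting quotient $H'$ contains (as an induced subgraph/gadget) a dense pattern such as a large clique or balanced biclique built on the vertices of the independent set; cliques and bicliques have trivial but highly nontrivial $F$-gadgets of treewidth linear in the size of the set. Two things must be checked simultaneously: (i)~the coefficient of $\#\homs{H'}{G}$ in the decomposition is nonzero (a Möbius-on-the-partition-lattice computation, where one typically needs a carefully chosen $\sigma$ so that the contribution does not telescope to zero), and (ii)~the corresponding $F$-gadget in $H'$ indeed has treewidth $\Omega(\alpha(H))$. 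Step~(i) is analogous in flavour to the non-vanishing arguments used in~\cite{CurticapeanDM17,RothSW20} for dense hosts, but must be carried out so that only patterns of bounded degeneracy-relevant parameter are produced; step~(ii) reduces to the combinatorial content of the $F$-gadget framework introduced earlier in the paper. I expect (i) to be the main obstacle, as the interaction between the supergraph inclusion-exclusion and the partition-lattice Möbius function can easily cause the target coefficient to cancel, and a careful alternating-sum analysis (possibly invoking properties of the reduced pattern class, e.g.\ passing to a minimal-size $H$ realising a given $\alpha$) will be required.
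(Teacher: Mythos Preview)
Your upper bound argument is correct and essentially equivalent to the paper's: the paper cites \cite[Theorem~9 and Lemma~14]{Bressan21} directly, which internally amounts to the same chain $\tau_3(H)\le \alpha(H)$ you derive via $\imn(H')\le\alpha(H')\le\alpha(H)$ for every edge-supergraph $H'$ of $H$.

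Your lower bound plan, however, has a genuine gap in the choice of the witness $H'$. You propose to find $H'$ as a \emph{quotient} obtained by a partition $\sigma$ that ``contracts most of $V(H)$'' to build a clique or biclique on the independent set. This is looking in the wrong direction. The paper's key observation is that the nonvanishing terms in the decomposition of $|\indsubs{H}{\star}|$ include \emph{every edge-supergraph} of $H$ on the same vertex set (this is a known fact from \cite{CurticapeanDM17}; no delicate M\"obius computation is needed). Since $H$ has an independent set $I$ of size $\ell=\alpha(H)$, you are free to add \emph{arbitrary} edges inside $I$: in particular, you can plant the $1$-subdivision of an expander $F_k$ with $\tw(F_k)\ge k=\Theta(\ell)$ directly on $I$, and the resulting edge-supergraph $\widehat{H}$ has an $F_k$-gadget by construction (Lemma~\ref{lem:F-supergraph} in the paper). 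The coefficient $a(\widehat{H})$ is nonzero for free, so step~(i), which you flagged as the main obstacle, is in fact trivial once you use supergraphs instead of quotients. The fine-grained bound then follows exactly as in your outline via the $F$-gadget reduction lemma and Theorem~\ref{thm:hardness_bottleneck}.
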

\noindent In the language of parameterized complexity, Theorem~\ref{thm:intro_indsubs_finegrained} yields:
\begin{theorem}\label{thm:intro_indsubs_param}
    Let $C$ be a computable class of graphs. If $C$ has bounded independence number, then $\#\indsubsprobd(C)$ is \ccFPT. Otherwise, $\#\indsubsprobd(C)$ is $\#\W{1}$-hard.
\end{theorem}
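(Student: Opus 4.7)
The plan is to derive both directions of the dichotomy directly from the fine-grained Theorem~\ref{thm:intro_indsubs_finegrained}; no new combinatorial work should be required beyond recognizing what that theorem already gives us.

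For the tractable direction, suppose that $C$ has bounded independence number, i.e., there is a constant $c$ with $\alpha(H)\le c$ for every $H\in C$. Substituting this bound into the algorithm guaranteed by Theorem~\ref{thm:intro_indsubs_finegrained} yields, for any $H\in C$, a running time of the form $f(|H|,d(G))\cdot |V(G)|^{c}\cdot \log|V(G)|$, which is of the form $g(k,d)\cdot |V(G)|^{O(1)}$. This matches the definition of $\ccFPT$ for the parameterization $k+d$, so $\#\indsubsprobd(C)$ is $\ccFPT$, as required.

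For the hard direction, suppose $C$ has unbounded independence number. I intend to extract $\#\W{1}$-hardness from the very reduction that witnesses the fine-grained lower bound in Theorem~\ref{thm:intro_indsubs_finegrained}. That bound is proved by a parameterized Turing reduction from a $\#\W{1}$-hard source problem (such as counting $k$-cliques in multicolored graphs) to $\#\indsubsprobd(C)$: on input of parameter $k'$, the reduction constructs in FPT time a collection of instances $(H_i, G_i)$ with $H_i\in C$ and $|H_i|+d(G_i)$ bounded by a computable function of $k'$, and recovers the source answer from the answers on these instances. Verifying that this is simultaneously a parameterized Turing reduction, and not merely a subexponential-time reduction, is precisely what yields $\#\W{1}$-hardness of $\#\indsubsprobd(C)$.

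The only non-routine step is the hardness direction, and essentially all of its content is already packed into Theorem~\ref{thm:intro_indsubs_finegrained}: once that theorem is proved, the present statement is a formal transcription. In one direction we substitute a constant into the exponent; in the other we merely observe that the fine-grained reduction is simultaneously a parameterized reduction. As a sanity check on the latter: no $\ccFPT$ algorithm can exist for $\#\indsubsprobd(C)$, because for any constant $c$ the unboundedness of $\alpha$ on $C$ lets us pick $H\in C$ with $\alpha(H)/\log\alpha(H) > c$, so an algorithm running in time $f(k,d)\cdot |V(G)|^{c}$ would contradict the ETH-based lower bound. The main obstacle, therefore, is not in Theorem~\ref{thm:intro_indsubs_param} itself but in the underlying Theorem~\ref{thm:intro_indsubs_finegrained}.
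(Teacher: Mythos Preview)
Your proposal is correct and matches the paper's approach. In the paper, both Theorem~\ref{thm:intro_indsubs_finegrained} and Theorem~\ref{thm:intro_indsubs_param} are stated as immediate consequences of the combined Theorem~\ref{thm:main_indsubgraphs}, whose hardness part (Lemma~\ref{lem:indsubs_hard}) is exactly the parameterized Turing reduction you describe; the only caveat is that $\#\W{1}$-hardness is not literally part of the \emph{statement} of Theorem~\ref{thm:intro_indsubs_finegrained}, so you are right to point back to the underlying reduction rather than to the ETH bound alone (your ``sanity check'' only yields ``not \ccFPT\ under ETH'', which is weaker than $\#\W{1}$-hardness).
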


\noindent As a consequence of our results, we obtain novel conditional lower bounds for several patterns:
\begin{corollary}\label{cor:intro_indsubs_cor}
	The following problems are $\#\W{1}$-hard when parameterized by $k$ and $d(G)$ and cannot be solved in time $f(k,d(G))\cdot |G|^{o(k/\log k)}$ for any function $f$, unless ETH fails. 
	\begin{enumerate}[itemsep=2pt,parsep=0pt,topsep=2pt]
		\item compute the number of induced copies of the $(k,k)$-biclique in $G$.
		\item compute the number of $k$-independent sets in $G$.
		\item compute the number of induced $k$-cycles in $G$.
		\item compute the number of induced $k$-paths in $G$.
		\item compute the number of induced $k$-matchings in $G$.
	\end{enumerate}
\end{corollary}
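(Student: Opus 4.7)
The plan is to obtain Corollary~\ref{cor:intro_indsubs_cor} as a direct instantiation of Theorem~\ref{thm:intro_indsubs_finegrained} (and its parameterized reformulation Theorem~\ref{thm:intro_indsubs_param}). For each of the five pattern families listed in items (1)--(5), I would let $C$ denote the corresponding computable class of graphs (respectively: $\{K_{k,k} : k\in\N\}$, $\{\overline{K_k}:k\in\N\}$, $\{C_k:k\geq 3\}$, $\{P_k:k\in\N\}$, and the class of $k$-edge matchings $M_k$) and verify the single hypothesis of Theorem~\ref{thm:intro_indsubs_finegrained}, namely that $C$ has unbounded independence number.

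The key step is a quick calculation of $\alpha(H)$ for each family, showing not merely unboundedness but $\alpha(H)=\Theta(k)$. For $K_{k,k}$ either colour class is independent, so $\alpha(K_{k,k})=k$. For the empty graph $\overline{K_k}$ we have $\alpha=k$ trivially. For the cycle, $\alpha(C_k)=\lfloor k/2\rfloor$, and for the path, $\alpha(P_k)=\lceil k/2\rceil$. For a $k$-matching, picking one endpoint of each of the $k$ disjoint edges yields an independent set of size $k$, and this is tight, so $\alpha(M_k)=k$. In every case $\alpha(H)$ tends to infinity with $k$, so Theorem~\ref{thm:intro_indsubs_param} immediately gives $\#\W{1}$-hardness of the restriction of $\#\indsubsprobd$ to $C$, which is the problem in the corollary.

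For the fine-grained part I would then substitute into the ETH lower bound of Theorem~\ref{thm:intro_indsubs_finegrained}: since $\alpha(H)=\Theta(k)$ in each case, we have $\alpha(H)/\log\alpha(H)=\Theta(k/\log k)$, so no algorithm can solve the problem in time $f(|H|,d(G))\cdot |V(G)|^{o(k/\log k)}$ under ETH. Using that $|H|\in\{k,2k\}$ depending on the family, the bound $f(|H|,d(G))$ can be absorbed into $f(k,d(G))$, yielding the form stated in the corollary.

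I do not expect any real obstacle here: the content of the corollary is entirely carried by Theorem~\ref{thm:intro_indsubs_finegrained}, and the work is limited to the elementary verification of the independence numbers above and the observation that each family is computable. The only small care needed is to match the parameterization conventions (parameter $k$ in the corollary versus $|V(H)|+d(G)$ in the theorem), which is immediate because $|V(H)|\leq 2k$ throughout.
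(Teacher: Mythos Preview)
Your proposal is correct and matches the paper's own argument essentially verbatim: the paper proves the corollary by invoking Lemma~\ref{lem:indsubs_hard} (the lower-bound half of Theorem~\ref{thm:intro_indsubs_finegrained}) and noting that each listed pattern class has independence number $\Omega(k)$. Your explicit computations of $\alpha$ and your remark about absorbing $|H|\le 2k$ into the parameter are exactly the small checks needed.
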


\noindent \textbf{Remark.}
Between our upper and lower bounds, there is a gap in the exponent due to the factors $(\log\imn(H))^{-1}$, $(\log\alpha(H))^{-1}$, and $(\log k)^{-1}$. This is not an artifact of our analysis, but arises from the well-known ``can you beat treewidth'' open problem (see Conjecture~1.3 in~\cite{Marx13}). If ``you cannot beat treewidth'', as recent results seem to suggest~\cite{Komarath20,Bringmann21}, then all those factors can be dropped and the exponents of our upper bounds are asymptotically tight.

\subsubsection{Generalised Induced Subgraph Counting}\label{sec:intro_indsubs_gen}
Following Jerrum and Meeks~\cite{JerrumM15}, we study the following problem. Let $\Phi$ be a fixed graph property. The problem $\#\indsubsprob(\Phi)$\footnote{In~\cite{JerrumM15}, the problem is called $p\text{-}\#\textsc{UnlabelledInducedSubgraphsWithProperty}(\Phi)$.} expects as input a graph $G$ and a positive integer $k$, and the goal is to compute the number of induced subgraphs of size $k$ in $G$ that satisfy~$\Phi$. The parameterization is given by $k$. Initially, Jerrum and Meeks established $\#\indsubsprob(\Phi)$ to be ($\#\W{1}$-)hard if~$\Phi$ is the property of being connected. Today, it is conjectured that $\#\indsubsprob(\Phi)$ is intractable whenever~$\Phi$ is non-trivial, and recent results go in this direction~\cite{RothSW20}. 

In principle, one can hope that $\#\indsubsprob(\Phi)$ becomes easier when $d$ is small. However we prove that, even in this case, the problem remains mostly intractable. As usual, let $\#\indsubsprobd(\Phi)$ be the version of $\#\indsubsprob(\Phi)$ with $d(G)$ as additional parameter. As a first result, we prove:
\begin{theorem}\label{thm:intro_graphlets}
    Let $\Phi$ be the property of being connected. Then $\#\indsubsprobd(\Phi)$ is $\#\W{1}$-hard.
\end{theorem}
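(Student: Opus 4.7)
The plan is to reduce from counting induced copies of the $k$-vertex path $P_k$ in $d$-degenerate graphs, a problem that is $\#\W{1}$-hard by Corollary~\ref{cor:intro_indsubs_cor} (indeed, $P_k$ is connected and satisfies $\alpha(P_k) = \lceil k/2 \rceil$, so the class $\{P_k : k \in \N\}$ has unbounded independence number and falls under Theorem~\ref{thm:intro_indsubs_param}). Since $P_k$ is itself a connected pattern on $k$ vertices, the count of induced $P_k$'s in $G$ is one of the contributions to the count of all connected induced $k$-vertex subgraphs, which I aim to extract via complexity monotonicity.

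The starting identity is that, for every graph $G$ and every integer $k \geq 1$, the output of $\#\indsubsprob(\Phi)$ on input $(G,k)$ equals
$$\sum_{\substack{H \text{ connected} \\ |V(H)|=k}} \#\indsubs{H}{G},$$
where the sum ranges over isomorphism classes of connected $k$-vertex graphs. Hence $\#\indsubsprob(\Phi)$ amounts to evaluating a $\{0,1\}$-linear combination of induced-subgraph counts over a finite family of $k$-vertex patterns, in which $P_k$ appears with coefficient $1$.

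Next, I would apply the complexity monotonicity principle of Chen--Mengel~\cite{ChenM16} and Curticapean--Dell--Marx~\cite{CurticapeanDM17} in the degeneracy-preserving form introduced by Gishboliner--Levanzov--Shapira~\cite{Gishboliner20} and further developed in this paper: any algorithm evaluating the above linear combination on $d$-degenerate inputs can, with an $f(k,d)$ overhead, compute any individual term $\#\indsubs{H}{G}$ on inputs of degeneracy at most $d' = f(k,d)$. Instantiating with $H = P_k$ yields an \ccFPT-Turing reduction from counting induced $k$-paths in bounded-degeneracy graphs to $\#\indsubsprobd(\Phi)$, delivering the $\#\W{1}$-hardness.

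The main technical obstacle is precisely the degeneracy-preserving form of complexity monotonicity: the classical Chen--Mengel / Curticapean--Dell--Marx argument isolates a single term of a linear combination by constructing auxiliary hosts via local gadget replacements (small tensor products, vertex labellings, etc.), and one must verify that each such modification inflates the degeneracy by at most a function of $k$. This is the extension carried out in~\cite{Gishboliner20} and built upon in the paper, so the reduction preserves bounded degeneracy end to end.
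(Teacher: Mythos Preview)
Your argument has a genuine gap at the complexity-monotonicity step. The principle of~\cite{CurticapeanDM17,ChenM16}, and its degeneracy-preserving extension of~\cite{Gishboliner20}, applies to finite linear combinations of \emph{homomorphism} counts: from an oracle for $G \mapsto \sum_{H'} a(H')\,|\homs{H'}{G}|$ one can recover each $|\homs{H'}{G}|$ with $a(H')\neq 0$, because $|\homs{H'}{\star}|$ is multiplicative under tensor products and the resulting linear system is invertible. There is no analogous mechanism for separating individual \emph{induced-subgraph} summands: the functions $|\indsubs{H}{\star}|$ are not multiplicative under any known graph product, so you cannot conclude that an oracle for $\sum_{H\text{ connected}} |\indsubs{H}{G}|$ lets you isolate $|\indsubs{P_k}{G}|$.

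If you instead rewrite $\sum_{H\text{ connected}} |\indsubs{H}{G}|$ in the homomorphism basis as $\sum_{H'} b(H')\,|\homs{H'}{G}|$ and then apply monotonicity, you still need a specific $H'$ with $b(H')\neq 0$ that has an $F$-gadget of large treewidth. But now the coefficients coming from the expansion of $|\indsubs{P_k}{G}|$ can cancel against those coming from the other connected patterns, and you have given no argument ruling this out. Controlling such cancellations is exactly the hard part, and it is why the paper takes a different route: via Lemma~\ref{lem:phi_IS_biclique} it invokes the algebraic criterion of~\cite{DorflerRSW19} (edge-transitivity plus a prime-power edge count) to certify that the coefficient of the subdivided biclique $K^2_{\ell,\ell}$ survives in the colour-prescribed expansion, using only that $\Phi$ distinguishes $K^2_{\ell,\ell}$ from the independent set of the same size. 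For connectivity this is immediate, which is how Theorem~\ref{thm:intro_graphlets} follows from Theorem~\ref{thm:indsubsphi_hard_cor}.
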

\noindent This is only a special case, and understanding the complexity of $\#\indsubsprobd(\Phi)$ for any $\Phi$ remains elusive, even for the not necessarily degenerate case. However, we provide a complete picture in case of \emph{minor-closed properties}. This is a well-studied class that includes, for example, planarity and acyclicity. 
\begin{theorem}\label{thm:intro_indsubs_minor_closed}
    Let $\Phi$ be a minor-closed property. If $\Phi$ is trivial (i.e., constant) or of bounded independence number, then $\#\indsubsprobd(\Phi)$ is \ccFPT. Otherwise $\#\indsubsprobd(\Phi)$ is $\#\W{1}$-hard.
\end{theorem}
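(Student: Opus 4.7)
\textbf{Proof proposal for Theorem~\ref{thm:intro_indsubs_minor_closed}.}

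\emph{Tractable direction.} If $\Phi$ is trivial, the count is either $0$ or $\binom{n}{k}$, and we are done. So assume $\alpha(H) \le c$ for every $H \in \Phi$. Let $\Phi_k$ denote the set of $k$-vertex graphs in $\Phi$; note that $|\Phi_k| \le 2^{\binom{k}{2}}$ depends only on $k$. I would simply write
\[
\#\indsubsprob(\Phi)(G,k) \;=\; \sum_{H \in \Phi_k} \#\indsubs{H}{G},
\]
enumerate $\Phi_k$ (which is possible in time depending only on $k$ since $\Phi$ is decidable via its finite obstruction set), and apply Theorem~\ref{thm:intro_indsubs_finegrained} to each term to obtain $f(k,d)\cdot n^{\alpha(H)}\log n \le f(k,d)\cdot n^{c}\log n$. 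Summing over $\Phi_k$ yields FPT running time parameterised by $k+d$.

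\emph{Hard direction.} Suppose $\Phi$ is minor-closed, non-trivial, and has unbounded independence number. The plan is to reduce from counting $k$-independent sets in $d$-degenerate graphs, which is $\#\W{1}$-hard by Corollary~\ref{cor:intro_indsubs_cor}. The first step is to observe that $\overline{K_k} \in \Phi$ for every $k$: since $\Phi$ contains a graph $G^\star$ with $\alpha(G^\star) \ge k$, and $\overline{K_k}$ appears as an induced subgraph of $G^\star$ (hence as a minor), minor-closedness gives $\overline{K_k} \in \Phi$. In particular, every $k$-vertex independent set of the host $G$ contributes to $\#\indsubsprob(\Phi)(G,k)$.

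The main step is to \emph{extract} the contribution $\#\indsubs{\overline{K_k}}{G}$ from the aggregate sum $\sum_{H \in \Phi_k}\#\indsubs{H}{G}$. To this end I would invoke the complexity monotonicity framework of Curticapean--Dell--Marx, extended to bounded-degeneracy hosts by Gishboliner--Levanzov--Shapira: expand the sum in the homomorphism basis as
\[
\sum_{H \in \Phi_k} \#\indsubs{H}{G} \;=\; \sum_{H^\star} b(H^\star)\cdot \#\homs{H^\star}{G},
\]
where $H^\star$ ranges over the spasms (edge-contraction quotients) of the graphs in $\Phi_k$ and the coefficients $b(H^\star)$ come from Möbius inversion. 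Since counting $k$-independent sets in $d$-degenerate graphs is $\#\W{1}$-hard, the linear combination above must contain a term $\#\homs{H^\star}{G}$ whose computation in $d$-degenerate hosts is itself $\#\W{1}$-hard, unless all such terms are cancelled by contributions from the other graphs of $\Phi_k$.

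\emph{Main obstacle.} The core difficulty is precisely to rule out such global cancellations. I would proceed as follows. For $k$ large enough relative to the (finite) Robertson--Seymour obstruction set of $\Phi$, pick a graph $H_0\in\Phi_k$ containing a large induced matching (which exists because $\overline{K_k}\in\Phi$ but, since $\Phi$ is non-trivial, some fixed graph $F\notin\Phi$ forces $\Phi_k$ to avoid certain dense configurations, not sparse ones). Fix an ordering of spasms compatible with the Möbius-style triangular structure used in Jerrum--Meeks and Roth--Schmitt--Wellnitz, and identify a distinguished spasm $H^\star$ of $H_0$ that is (i) of maximal ``hardness rank'' (for instance, with the largest $F$-gadget or induced grid minor, per Theorem~\ref{thm:intro_homs_hard}), and (ii) not a spasm of any other $H\in\Phi_k$. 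Property (ii) forces the coefficient $b(H^\star)$ to coincide with the non-zero Möbius coefficient coming from $H_0$ alone, ruling out cancellation. Property (i) then makes $\#\homs{H^\star}{\cdot}$ $\#\W{1}$-hard on $d$-degenerate hosts, and complexity monotonicity upgrades this to $\#\W{1}$-hardness of $\#\indsubsprobd(\Phi)$, completing the proof.
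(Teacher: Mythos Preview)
Your tractable direction is correct and matches the paper's argument exactly.

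The hard direction, however, has a genuine gap. Your plan hinges on locating a graph $H_0\in\Phi_k$ with large induced matching number and then a spasm $H^\star$ of $H_0$ that is \emph{not} a spasm of any other $H\in\Phi_k$. Neither step is justified. For the first: take $\Phi=\{\text{edgeless graphs}\}$, which is minor-closed, non-trivial, and has unbounded independence number; here $\Phi_k=\{\overline{K_k}\}$ and $\imn(\overline{K_k})=0$, so no such $H_0$ exists. More seriously, even when $\Phi_k$ is rich, property (ii) is essentially impossible to enforce: the homomorphism-basis expansion of $\#\indsubs{H}{\star}$ involves quotients of \emph{all edge-supergraphs} of $H$ (not just spasms of $H$ itself), and these overlap massively across the graphs in $\Phi_k$. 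Your handwave about ``avoiding dense configurations, not sparse ones'' does not control these overlaps, and the triangular ordering you allude to applies to a single pattern, not to an arbitrary sum over $\Phi_k$.

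The paper sidesteps cancellation entirely by a different mechanism. It passes to a colour-prescribed variant $\#\cpindsubsprobd(\Phi)$ and invokes the algebraic criterion of D\"orfler--Roth--Schmitt--Wellnitz: if $H$ is edge-transitive with a prime-power number of edges and $\Phi$ distinguishes $H$ from the independent set of the same order, then the coefficient of $H$ in the homomorphism expansion of $|\cpindsubs{\Phi}{H}{\star}|$ is provably non-zero. The paper instantiates $H=K^2_{\ell,\ell}$, the $1$-subdivision of the $(\ell,\ell)$-biclique with $\ell$ a power of $2$; this graph is edge-transitive with $2\ell^2$ edges. Since $\Phi$ is non-trivial and minor-closed, $\Phi(K^2_{\ell,\ell})=0$ for large $\ell$ (every graph is a minor of a large enough subdivided biclique), while $\Phi(\mathsf{IS}_{2\ell+\ell^2})=1$ by the unbounded-independence-number assumption. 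Finally, $K^2_{\ell,\ell}$ has a $K_{\ell,\ell}$-gadget, so the F-gadget reduction (Lemma~\ref{lem:FGadget_main}) yields $\#\W{1}$-hardness. The point is that the algebraic criterion guarantees non-vanishing for one specific, carefully chosen $H'$ directly, without ever needing to argue about the interaction of all graphs in $\Phi_k$.
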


\subsubsection{Counting Homomorphisms: The Source of Hardness}\label{sec:intro_homs}
Recall that $\#\homsprobd$ is the problem of counting homomorphisms, parameterized by $k+d(G)$. A central part of our work consists in novel hardness results for $\#\homsprobd$. This is indeed the starting point for all the hardness results for $\#\subsprobd$ and $\#\indsubsprobd$ that we described above. From a technical standpoint, our main finding is that induced grid minors\footnote{A graph $F$ is an induced minor of a graph $H$ if $F$ can be obtained from $H$ by deleting vertices and contracting edges. In contrast to not necessarily induced minors, edge-deletions are not allowed.} are obstructions:
\begin{theorem}\label{thm:intro_homs_hard}
    Let $C$ be a computable class of graphs. If $C$ has induced grid minors of unbounded size, then $\#\homsprobd(C)$ is $\#\W{1}$-hard.
\end{theorem}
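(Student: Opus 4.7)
The plan is to reduce from the $\#\W{1}$-hard problem of counting homomorphisms from the $\ell\times\ell$ grid $\Gamma_\ell$ to an arbitrary host graph $G'$; hardness follows from Dalmau and Jonsson's theorem since $\tw(\Gamma_\ell)=\ell$ is unbounded. Fix $H\in C$ containing an induced grid minor of size $\ell$ growing with $|H|$, and let $\{B_v\}_{v\in V(\Gamma_\ell)}$ be the pairwise disjoint connected branch sets witnessing this minor, with $D=V(H)\setminus\bigcup_v B_v$ the deleted vertices. The crucial structural consequence of the minor being \emph{induced} is that there is at least one $H$-edge between $B_u$ and $B_v$ when $uv\in E(\Gamma_\ell)$, and \emph{no} $H$-edge between them otherwise; the absence of edges across non-adjacent branch sets is the essential lever in what follows.

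Given an instance $G'$ of the grid-homomorphism problem, I would build a bounded-degeneracy host $G$ by a colour-prescribed gadget construction in the spirit of the $F$-gadget framework introduced earlier in the paper. For each pair $(a,v)\in V(G')\times V(\Gamma_\ell)$ include a copy $H_{a,v}$ of $H[B_v]$, colour-coded by $(a,v)$; for each $ab\in E(G')$ and each $uv\in E(\Gamma_\ell)$ insert between $H_{a,u}$ and $H_{b,v}$ the edges that $H$ realises between $B_u$ and $B_v$; finally attach a single shared copy of $H[D]$, connected to each $H_{a,v}$ exactly as $D$ is connected to $B_v$ in $H$. One checks that every colour-prescribed homomorphism $H\to G$ produces an assignment $v\mapsto a_v$ with $(a_u,a_v)\in E(G')$ whenever $uv\in E(\Gamma_\ell)$, together with $O_{|H|}(1)$ local choices inside each $H_{a_v,v}$, thereby encoding a homomorphism $\Gamma_\ell\to G'$. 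Since each vertex of $G$ lies in exactly one copy of size at most $|H|$ and since an acyclic orientation of $G'$ pulls back to one of the inter-copy edges, we obtain $d(G)\le\poly(|H|)$, which is absorbed into the parameter $k+d(G)$.

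The last step is to extract $\#\homs{\Gamma_\ell}{G'}$ from $\#\homs{H}{G}$ via complexity monotonicity (Curticapean--Dell--Marx, Chen--Mengel), in the degeneracy-preserving form used elsewhere in the paper: $\#\homs{H}{G}$ expands as a linear combination over the partition lattice of $V(H)$ of counts $\#\homs{H/\!\sim}{G}$, and the upper-triangular structure allows inversion. The main obstacle is certifying that the coefficient of the ``trivial'' partition --- the term that equals $\#\homs{\Gamma_\ell}{G'}$ up to a computable factor --- is non-zero, i.e.\ that $H$ carries a genuine $F$-gadget with $F=\Gamma_\ell$ of treewidth $\Theta(\ell)$. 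This is where the induced-minor hypothesis pays off: any strict quotient of $H$ that identifies two branch sets $B_u,B_v$ with $uv\notin E(\Gamma_\ell)$ forces an edge that is absent in the colour-prescribed $G$, so the corresponding contribution vanishes, isolating $\Gamma_\ell$ as the unique grid-structured contribution of unbounded treewidth. Invoking the previously established hardness of $\#\homsprobd$ for patterns with an $F$-gadget of unbounded treewidth then yields $\#\W{1}$-hardness of $\#\homsprobd(C)$.
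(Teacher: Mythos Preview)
Your proposal has a genuine gap: the degeneracy bound on the constructed host $G$ fails, and this is precisely the obstacle that the paper's F-gadget machinery is built to handle.

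In your construction, for every $ab\in E(G')$ and every grid edge $uv\in E(\Gamma_\ell)$ you insert between $H_{a,u}$ and $H_{b,v}$ the $H$-edges realised between $B_u$ and $B_v$. So if some $x\in B_u$ has an $H$-neighbour in $B_v$, the copy of $x$ in $H_{a,u}$ acquires one inter-copy edge for \emph{every} $G'$-neighbour $b$ of $a$. Since $G'$ is an arbitrary host---that is the point of reducing from grid homomorphisms---its degeneracy is unbounded, and no acyclic orientation of $G'$ ``pulls back'' to give out-degree bounded in $|H|$. The shared copy of $H[D]$ has the same defect: a vertex of $D$ with an $H$-neighbour $h\in B_v$ must be joined to the copy of $h$ in \emph{every} $H_{a,v}$, hence has degree $\Theta(|V(G')|)$. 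The claim $d(G)\le\poly(|H|)$ is therefore false. The paper avoids this by insisting on F-gadgets, where each $P_e$ is non-empty: in the reduction of Lemma~\ref{lem:FGadget_main} every edge of the source host becomes a fresh path copy whose internal vertices have degree at most $|R|+2$, and vertices in $S_v$-copies never touch other $S_u$-copies directly. An induced grid minor need not supply such paths---adjacent branch sets may touch directly---so the paper (Lemma~\ref{lem:induced_minor_char}) passes from an induced $\boxplus_{2k}$-minor to a $\boxplus_k$-gadget by using the odd-coordinate branch sets as the $P_e$. Your argument omits this halving step, and without it the degeneracy bound cannot be repaired.

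Your final paragraph is also tangled. Complexity monotonicity expresses \emph{subgraph} or \emph{induced-subgraph} counts as linear combinations of homomorphism counts; it does not decompose $|\homs{H}{G}|$ over the partition lattice, so there is no ``coefficient of the trivial partition'' to isolate here. The passage from colour-prescribed to uncoloured homomorphism counts is plain inclusion--exclusion (Facts~\ref{fact:cphoms_to_cfhoms} and~\ref{fact:cfhoms_to_homs}), not monotonicity. And your closing sentence invokes the F-gadget hardness result under the premise that $H$ carries a $\Gamma_\ell$-gadget, which you have not established: an induced minor is strictly weaker than an F-gadget (e.g.\ the clique is an induced minor of itself but has only trivial F-gadgets). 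Establishing that implication \emph{is} the content of Lemma~\ref{lem:induced_minor_char}; once you have it, together with Lemma~\ref{lem:hardness_F_gadgets}, the theorem follows directly and the gadget construction you sketched is superfluous.
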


\noindent Notice the parallel with $\#\homsprob(C)$,
which is $\#\W{1}$-hard if $C$ has grid minors of unbounded size, \emph{where the minors are not necessarily induced} (this follows by the results of~\cite{DalmauJ04} and by the Excluded Grid Theorem~\cite{RobertsonS86-ExGrid}). However, while $\#\homsprob(C)$ is known to be hard \emph{if and only if} $C$ has grid minors of unbounded size, for $\#\homsprobd(C)$ we give only the ``if'' direction. Hence, we do not know if $\#\homsprobd(C)$ is \ccFPT\ when $C$ has bounded induced grid minors. We leave closing this gap as an open problem, see Section~\ref{sec:conclusion}.

Even though we do not prove that $\#\homsprobd$ is \ccFPT\ when $C$ has bounded induced grid minors, we give some positive results for $\#\homsprobd$. In particular, we prove upper bounds based on the cliquewidth of the \emph{skeleton graph} of the acyclic orientations of $H$. Given an acyclic orientation~$\orient{H}$ of~$H$, its skeleton graph $\skel(\orient{H})$ is the bipartite directed graph having on the left side the \emph{sources} of $\orient H$ (vertices with no incoming arcs), and on the right side the \emph{joints} of $\orient H$ (vertices reachable from two or more sources), and where the arc $(s,v)$ exists if and only if $v$ is reachable from~$s$.\footnote{The skeleton graph was introduced in~\cite{Bressan19} to prove that the dag-treewidth of any pattern $H$ is at most $\lfloor\frac{k}{4}\rfloor+2$, and captures the structure of $\orient H$ that is relevant for the DAG tree decomposition.} The skeleton clique-width $\skelcw(H)$ of $H$ is the maximum clique-width of $\skel(\orient{H})$ over all acyclic orientations of $H$. Our result is:
\begin{theorem}\label{thm:intro_homs_tract}
    There exists a computable function $f$ such that $\#\homsprob$ can be solved in time $f(|H|,d(G))\cdot |V(G)|^{\skelcw(H)}\cdot \log |V(G)|$.
\end{theorem}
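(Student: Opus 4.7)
The plan is to generalize the DAG tree decomposition approach of~\cite{Bressan19,Bressan21} by replacing the tree decomposition of the skeleton graph with a clique-width expression. The argument unfolds in three stages.

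\textbf{Stage 1 (orientation splitting).} Using the $d$-degeneracy of $G$, compute in time $f(d)\cdot|V(G)|$ an acyclic orientation $\orient G$ with maximum out-degree at most $d$. Any homomorphism $\phi\colon H\to G$ pulls $\orient G$ back to $H$; because $\orient G$ is acyclic, the pullback is an acyclic orientation $\orient H$ of $H$. Hence
$$|\homs{H}{G}|\;=\;\sum_{\orient H}N(\orient H,\orient G),$$
summed over the at most $f(|H|)$ acyclic orientations of $H$, where $N(\orient H,\orient G)$ counts homomorphisms whose pullback equals $\orient H$. It suffices to compute each $N(\orient H,\orient G)$ within the claimed budget.

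\textbf{Stage 2 (skeleton decomposition).} Fix $\orient H$ with source set $\roots$ and joint set $\joints$, and compute in time $f(|H|)$ a clique-width expression $\psi$ of $\skel(\orient H)$ of width $w=\cw(\skel(\orient H))\le\skelcw(H)$. The structural fact underlying the DAG tree decomposition method of~\cite{Bressan19} is that every vertex of $V(H)\setminus(\roots\cup\joints)$ is reachable from exactly one source and therefore, once that source is mapped, admits only $d^{O(|H|)}$ possible images under a compatible homomorphism. Joints are the only "shared" vertices whose images must be reconciled between different source branches, and the skeleton records exactly these source--joint reachabilities.

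\textbf{Stage 3 (clique-width DP).} Run a dynamic program on $\psi$. For each subexpression producing a labelled sub-skeleton $\Lambda'$, maintain a table indexed by assignments $\sigma\colon[w]\to V(G)\cup\{\star\}$, where $\sigma(i)$ records a canonical image representing label class $i$; the stored value is the number of partial compatible homomorphisms of the sub-pattern of $H$ corresponding to $\Lambda'$ that are consistent with $\sigma$, with internal (non-joint) descendants already enumerated and summed out. The four clique-width operations translate into: enumerating $O(n)$ images and $d^{O(|H|)}$ internal extensions when introducing a source; pointwise multiplication on disjoint union; summation on relabel; and, on a join between labels $i$ and $j$, enforcing that every joint labelled $j$ is reachable in $\orient G$ from every source labelled $i$, which can be checked using a precomputed dictionary of bounded-depth reachable sets (of size $d^{O(|H|)}$) with $O(\log n)$ access time as in~\cite{Bressan21}. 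Each table has size at most $(n+1)^w$, each transition runs in $f(|H|,d)\cdot n^w\cdot\log n$, $\psi$ has $f(|H|)$ nodes, and there are $f(|H|)$ orientations, yielding total time $f(|H|,d)\cdot n^{\skelcw(H)}\cdot\log n$.

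The main obstacle lies in Stage 3: a naive clique-width DP that tracks an independent image in $V(G)$ for every vertex of $H$ currently exposed would incur a blow-up of $n^{w\cdot|V(H)|}$. Achieving the sharp exponent $\skelcw(H)$ requires exploiting the bipartite directed structure of $\skel(\orient H)$, so that joint images are not free variables but are constrained by (and integrated out against) the tuple of source images reaching them; only a single canonical source-image representative per label class needs to be stored. Designing the join transition so that these joint constraints compose correctly across subexpressions, while keeping the exponent of $n$ equal to $w$ and not $w+1$, is the most delicate point the full proof must establish.
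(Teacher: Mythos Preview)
Your high-level plan (orient, restrict to one $\orient H$, work with the skeleton) matches the paper, but the paper takes a different and cleaner route at Stage~3. Rather than design a new clique-width dynamic program over $G$, the paper proves the purely structural bound $\tau_1(H)\le\skelcw(H)$ (Theorem~\ref{thm:skelcw}) by converting a width-$w$ clique expression of $\skel(\orient H)$ into a dag tree decomposition of $\orient H$ of width at most $w$, and then simply invokes the existing d.t.d.\ algorithm of~\cite{Bressan21}. This avoids any new algorithmic correctness argument; all the work is in the combinatorics of the conversion.

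Your direct-DP proposal has a genuine gap precisely at the point you flag as ``most delicate''. You assert that one canonical image per label suffices, but you give no mechanism for it. Concretely: a single label class may contain several sources $s_1,\dots,s_t$ that are mapped to distinct vertices $g_1,\dots,g_t$ of $G$. A later $\link(i,j)$ must enforce, for every joint $v$ with label $j$, reachability in $\orient G$ from \emph{each} of $g_1,\dots,g_t$. If the table stores only one ``representative'' $g_r$, the information needed to check the others has been summed out and is unrecoverable; your join rule ``check reachability from the representative'' does not imply reachability from the rest. The same issue arises on the joint side. So as written, the DP is not sound, and you have not indicated how to repair it without blowing up the state to $n^{\Theta(|V(H)|)}$.

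The paper's conversion is exactly what makes the ``one per label'' idea work, but in the d.t.d.\ world rather than as a DP invariant. For each parse-tree node $x$ it defines a \emph{representant} $r_x(i)=\min L_x(i)$ for each active label $i$, assigns each vertex a \emph{designated source} $s_v$, and sets the bag $B(x)=\{s_v:v\in R_x\}$, of size at most $w$. The core of the argument (Claims~\ref{claim:props}--\ref{cl:sv_broken_2}) shows that because all vertices sharing a label receive identical future edges, the designated source of the representant already covers every joint that any source in that label class could reach; this is what certifies the d.t.d.\ connectivity property with only $w$ sources per bag. That structural fact is the missing ingredient in your Stage~3, and once you have it, routing through $\tau_1$ and the existing algorithm is simpler than building a bespoke clique-width DP.
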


\subsubsection{A Remarkable Parallel}\label{sub:striking_fact}
The hardness results described above reveal a remarkable parallel between pattern counting in general graphs and pattern counting in bounded-degeneracy graphs: in both cases, a class of patterns $C$ is hard to count if $C$ is not free from some specific minor, with the only difference that, in the degenerate case, the minor must be induced. This minor-free condition actually gives a dichotomy into $\#\W{1}$-hard and $\ccFPT$ cases, with the possible exception of $\#\homsprobd$, for which we know only the hardness direction.
\begin{lemma}\label{lem:striking_fact}
Let $P \in \{\#\homsprob,\#\subsprob,\#\indsubsprob\}$, and let $P_{\textsc{d}}$ be the version of $P$ parameterized by $k+d(G)$. Then $P(C)$ and $P_{\textsc{d}}(C)$ are $\#\W{1}$-hard if $C$ has, respectively, minors and induced minors of unbounded size, where the minors are:
\begin{itemize}
    \item for $P=\#\homsprob$,  grids
    \item for $P=\#\subsprob$, matchings
    \item for $P=\#\indsubsprob$, independent sets
\end{itemize}
With the possible exception of $\#\homsprobd$, the converse holds too, with the problem becoming $\ccFPT$.
\end{lemma}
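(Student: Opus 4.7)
The plan is to view this lemma as a packaging of results already established, plus a small dictionary that translates each relevant graph parameter into a (possibly induced) minor-exclusion condition. I would organise the argument into three short cases, one per problem.

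For $P=\#\homsprob$, I would cite Dalmau--Jonsson for the dichotomy of $\#\homsprob(C)$ via the treewidth of $C$, and then invoke the Excluded Grid Theorem of Robertson--Seymour to trade ``unbounded treewidth'' for ``grid minors of unbounded size''. In the degenerate case, the hardness direction is exactly Theorem~\ref{thm:intro_homs_hard}; since the lemma only asserts the direction ``$\#\W{1}$-hard if $\ldots$'' and explicitly excludes $\#\homsprobd$ from the converse clause, nothing further is needed here.

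For $P=\#\subsprob$, I would recall the Curticapean--Marx dichotomy in terms of $\vc(H)$, combine it with the elementary bound $\nu(H)\le \vc(H)\le 2\nu(H)$ to restate it through the matching number, and then prove the easy equivalence: $H$ contains a $t$-matching as a subgraph iff $H$ contains a $t$-matching as a minor (the nontrivial direction picks one endpoint per branch-set pair of the matching minor). For $\#\subsprobd$ I would invoke Theorem~\ref{thm:intro_subs_param}, which classifies tractability in terms of $\imn(H)$, and then prove $\imn(H)\ge t$ iff $H$ has the $t$-matching as an \emph{induced} minor: the forward direction uses singleton branch sets, and the backward direction picks one endpoint from each matched branch-set pair, using that non-matched branch-set pairs must have no edges between them (otherwise the quotient of the induced minor would not be a matching).

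For $P=\#\indsubsprob$, the Chen--Thurley--Weyer dichotomy says the problem is hard iff $C$ is infinite; since every $t$-vertex graph has the $t$-independent set as a minor (delete all edges and keep singleton branch sets), ``$C$ infinite'' matches ``$C$ has independent-set minors of unbounded size''. For $\#\indsubsprobd$ I would cite Theorem~\ref{thm:intro_indsubs_param} and observe that $\alpha(H)\ge t$ iff $H$ has a $t$-independent set as induced minor: singleton branch sets one way, and one vertex per pairwise non-adjacent branch set the other.

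None of these steps should be a genuine obstacle: the only non-citation work is the four minor and induced-minor equivalences, each resting on the fact that between branch sets of an induced minor the edge relation agrees exactly with that of the quotient, so edge-freeness (or its absence) at the quotient level transfers to chosen representatives. The mild care point will be making sure that in the induced-minor equivalences the representatives are chosen consistently so that the resulting subgraph is genuinely induced and not merely a subgraph copy of the target.
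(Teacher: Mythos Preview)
Your proposal is correct and follows essentially the same approach as the paper's own proof: case-split on $P$, cite the known dichotomies (Dalmau--Jonsson plus the Excluded Grid Theorem, Curticapean--Marx via $\vc\asymp\nu$, Chen--Thurley--Weyer) together with Theorems~\ref{thm:intro_homs_hard}, \ref{thm:intro_subs_param}, and~\ref{thm:intro_indsubs_param}, and supply the four elementary equivalences between (induced) subgraph and (induced) minor containment for matchings and independent sets. The paper's argument is organised identically, and your ``mild care point'' about choosing representatives so the resulting copy is genuinely induced is exactly the only substance in those equivalences.
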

\noindent This parallel suggests, of course, that induced grid minors are the right obstructions to $\#\homsprobd$. This is the main problem that our work leaves open, see Section~\ref{sec:conclusion}.

\subsubsection{Approximate Counting}\label{sec:intro_approx}
Since for exact pattern counting our results are mostly negative, we also consider \emph{approximate} pattern counting. In particular, we study fixed-parameter tractable randomised approximation schemes (FPTRASes,~\cite{ArvindR02}), which are the FPT equivalent of fully polynomial-time randomised approximation schemes. Let $P$ be a counting problem with parameterization $\kappa$. For any instance~$I$ of~$P$, let $P(I)$ be the corresponding solution. An $\varepsilon$\emph{-approximation} of $P(I)$ is any rational number $\hat{c}$ such that $(1-\varepsilon)\cdot P(I) \leq \hat{c} \leq (1+\varepsilon)\cdot P(I)$. An FPTRAS for $P$ and $\kappa$ is a (randomised) algorithm $\mathbb{A}$ that, on input $I$ and $\varepsilon$, returns an $\varepsilon$-approximation of $P(I)$ with probability at least $2/3$, and has running time bounded by $f(\kappa(I)) \cdot \poly(|I|,\varepsilon^{-1})$. The error probability $2/3$ can be improved by standard probability amplification to any $\delta > 0$, at the cost of an additional factor of $O(\log (1/\delta))$ in the running time.

We begin with the approximate counting of copies. In the general case, Arvind and Raman~\cite{ArvindR02} show that $\#\subsprob(C)$ admits an FPTRAS if $C$ has bounded treewidth.
Here we show that, in the bounded-degeneracy case, the treewidth can be replaced by the dag-treewidth:
\begin{theorem}\label{thm:intro_FPTRAS_subcount}
There exists a computable function $f$ such that $\#\subsprob$ can be $\varepsilon$-approximated with probability $2/3$ in time $\varepsilon^{-2} \cdot f(k,d)\cdot n^{\tau_1(H)+o(1)}$.
Hence, $\#\subsprobd(C)$ has an FPTRAS whenever $\tau_1(C)$ is bounded.
\end{theorem}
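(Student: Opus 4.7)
The plan is to combine two ingredients: the Dell--Lapinskas--Meeks (DLM) reduction of approximate subgraph counting to its colour-prescribed variant, and the exact DAG tree decomposition algorithm of~\cite{Bressan19,Bressan21} for subgraph counting in $d$-degenerate host graphs. Approximate counting is reduced to polylogarithmically many colour-prescribed queries on $d$-degenerate coloured graphs, and each query is answered exactly by the DAG tree decomposition dynamic programming.

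First I would invoke the DLM framework: to $\varepsilon$-approximate $\#\subsprob$ with constant success probability, it suffices to perform $O(\varepsilon^{-2} \cdot \mathrm{poly}(k, \log n))$ oracle calls to the colour-prescribed version of the problem, in which each vertex of $G$ is coloured by some $v \in V(H)$ and only copies of $H$ that map each $v$ to a vertex of colour $v$ are counted. Since vertex colourings do not affect degeneracy, the coloured host graph is still $d$-degenerate, so any algorithm for colour-prescribed counting on $d$-degenerate graphs applies directly.

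Second, I would implement the colour-prescribed oracle by a straightforward adaptation of the DAG tree decomposition DP of~\cite{Bressan21}. That algorithm computes $\#\subsprob$ in time $f(k,d) \cdot n^{\tau_1(H)} \cdot \log n$ by fixing a degeneracy orientation $\orient{G}$ of $G$, choosing an acyclic orientation $\orient{H}$ of $H$ together with a DAG tree decomposition of width $\tau_1(H)$, and aggregating partial mappings bag by bag. The colour-prescription is a local, vertex-wise constraint separable across $V(H)$: each partial mapping considered at a bag is simply discarded whenever it maps some $v \in V(H)$ to a vertex of $G$ whose colour is not $v$. This neither raises the width nor alters the asymptotic per-bag work, so the oracle runs within the same bound $f(k,d) \cdot n^{\tau_1(H)} \cdot \log n$.

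Multiplying the two contributions yields total running time $\varepsilon^{-2} \cdot f(k,d) \cdot n^{\tau_1(H) + o(1)}$, where the $n^{o(1)}$ factor absorbs the $\log n$ term of the DP and the polylogarithmic overhead of the DLM reduction once $k$ and $d$ are held fixed. The FPTRAS corollary for $\#\subsprobd(C)$ with bounded $\tau_1(C)$ then follows immediately by taking $\tau_1(H) \le c$ uniformly over $H \in C$. The main technical point to verify is that colour-prescription really slots into the DP of~\cite{Bressan21} without inflating its state; since the constraint is a pointwise filter on the mappings stored at each bag, this amounts to a routine inspection of the algorithm rather than a genuine obstacle.
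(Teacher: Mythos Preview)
Your approach is essentially the same as the paper's: combine the DLM reduction with a colour-aware adaptation of the dag tree decomposition DP of~\cite{Bressan21}. One small inaccuracy worth noting: the DLM framework as invoked in the paper (Theorem~\ref{thm:coldecapprox}) reduces approximate counting to the \emph{multicoloured decision} problem (decide whether a colourful copy exists, with colours in $[k]$), not to colour-prescribed \emph{counting} with colours in $V(H)$; also, the number of calls is $\varepsilon^{-2}k^{2k}n^{o(1)}$ rather than $\varepsilon^{-2}\poly(k,\log n)$. The paper bridges colourful decision to a colour-prescribed computation by guessing the $k!$ bijections $c_H:V(H)\to[k]$ and then running the adapted DP for each guess. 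Since your oracle is strictly stronger than what DLM needs and the $k^{2k}$ and $k!$ factors are absorbed into $f(k,d)$, this does not affect correctness or the final bound.
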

\noindent Combined with our hardness results, this implies that there are classes of patterns for which, in degenerate graphs, exact counting is hard but approximate counting is easy. For example, let~$C$ be the class of all $k$-wreath graphs.\footnote{For $k\geq 3$, the graph~$W_k$ has vertices $V_0~\dot\cup\dots \dot\cup ~V_{k-1}$ and, for all $i\in\{0,\dots,k-1\}$ it contains all edges between $V_i$ and $V_{i+1(\mathsf{mod}~k)}$.} One can see that $C$ has unbounded induced matching number, and therefore by Theorem~\ref{thm:intro_subs_param} $\#\subsprobd(C)$ is $\#\W{1}$-hard; but its dag treewidth turns out to be at most $2$, hence by Theorem~\ref{thm:intro_FPTRAS_subcount} $\#\subsprobd(C)$ has an FPTRAS. Moreover, $C$ has unbounded treewidth, and thus Theorem~\ref{thm:intro_FPTRAS_subcount} is not subsumed by~\cite{ArvindR02}. Further classes with bounded dag treewidth but unbounded induced matching number and unbounded treewidth include, for example, exploded paths and unions of cliques.

Next, we consider approximate counting of induced copies. In the general case, efficient approximation algorithms are unlikely to exist, as even \emph{detecting} an induced copy is $\W{1}$-hard~\cite{ChenTW08}.
In sharp contrast we show that, in the degenerate case, FPTRASes exist whenever the patterns have bounded induced matching number:
\begin{theorem}\label{thm:intro_FPTRAS_indsubcount}
The problem $\#\indsubsprob$ can be $\varepsilon$-approximated with probability $2/3$ in time $O(\varepsilon^{-2} k^{O(k)} d^{k+1} \cdot |V(G)|^{\imn(H)+1+o(1)})$.
Hence, $\#\indsubsprobd(C)$ has an FPTRAS whenever $\imn(C)$ is bounded.
\end{theorem}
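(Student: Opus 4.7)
The plan is to combine the colour-coding framework of Dell, Lapinskas and Meeks (SODA 20) with a new algorithm for colour-prescribed induced $H$-detection on $d$-degenerate graphs. The DLM reduction transforms $\varepsilon$-approximation of $|\indsubs{H}{G}|$ into $O(\varepsilon^{-2}\cdot\poly(k,\log n))$ queries to a decision procedure that, given a $k$-coloured graph $G'$ of degeneracy at most $d$ together with a bijection between colour classes and $V(H)$, tests whether $G'$ admits a colourful induced copy of~$H$. (A small verification is needed to check that the reduction preserves the degeneracy bound, but this follows because the instances produced by DLM are subgraphs of $G$.) Hence it suffices to design a decision algorithm running in time $k^{O(k)}\cdot d^{k+1}\cdot n^{\imn(H)+1+o(1)}$; multiplying by the DLM overhead then yields the theorem.

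The decision subroutine will proceed as follows. First, compute in linear time an acyclic orientation $\orient G$ of $G$ with maximum out-degree at most $d$. Then iterate over the $k^{O(k)}$ possible combinatorial \emph{shapes} of a copy in $\orient G$, namely pairs $(\orient H,\sigma)$ where $\orient H$ is an acyclic orientation of $H$ and $\sigma$ a linear extension of $\orient H$. For each shape, identify a \emph{frontier} $R\subseteq V(H)$ of size at most $\imn(H)+1$, enumerate the images of $R$ in $n^{|R|}\le n^{\imn(H)+1}$ ways, and extend each partial embedding greedily by branching the remaining vertices over at most $d$ out-neighbour candidates in $\orient G$ each (contributing $d^{k+1}$ overall). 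For every fully-determined candidate, verify colourfulness and the induced condition in $k^{O(1)}$ time. Summing over shapes, frontier placements, and branches gives the claimed running time.

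The main technical obstacle is to prove that a frontier of size at most $\imn(H)+1$ always exists for every shape $(\orient H,\sigma)$. The plan is to let $R$ consist of one suitably chosen endpoint of each edge of a maximum induced matching $M=\{a_1 b_1,\dots,a_\mu b_\mu\}$ of~$H$, together with a single auxiliary vertex covering isolated or source-heavy components of $\orient H$ unreachable from $V(M)$. Maximality of~$M$ guarantees that every $v\in V(H)\setminus V(M)$ is adjacent in $H$ to some endpoint of an edge of $M$, which anchors the greedy branching along out-arcs of $\orient G$. The delicate point is that the choice of which endpoint of each matching edge is placed in $R$ must be compatible with both $\orient H$ and $\sigma$, and with the induced constraints imposed by the non-edges of $H$; resolving this compatibility—possibly via a small case analysis or a Ramsey-type two-colouring of the edges of $M$—is where I expect the bulk of the technical work to lie. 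The single $+1$ in $\imn(H)+1$ and the residual $n^{o(1)}$ factor in the exponent are precisely designed to absorb this auxiliary vertex and the cost of the $O(\log n)$-time dictionary lookups for out-neighbour access, matching the data-structural approach of~\cite{Bressan21}.
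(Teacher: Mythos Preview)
Your high-level plan---reduce via the DLM framework to colourful induced detection, then iterate over acyclic orientations of $H$ and enumerate images of a small ``frontier''---matches the paper's. The gap is in the detection algorithm itself.

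The branching step ``extend each partial embedding by branching the remaining vertices over at most $d$ out-neighbour candidates'' can only determine vertices of $\orient H$ that are reachable from the frontier $R$ along directed paths; sources of $\orient H$ outside $R$ have no in-arcs and are never reached this way. There can be $\Theta(k)$ such sources: for $H$ the independent set on $k$ vertices, $\imn(H)=0$, your frontier has size at most $1$, and the remaining $k-1$ vertices are all isolated sources---one auxiliary vertex cannot cover them. Relatedly, your key structural claim that maximality of the induced matching $M$ forces every $v\notin V(M)$ to be adjacent to $V(M)$ is already false for $P_4$ with $M=\{v_1v_2\}$: the vertex $v_4$ is not adjacent to $V(M)$. (Maximality only says that for every edge $vw$ with $v,w\notin V(M)$, at least one of $v,w$ is adjacent to $V(M)$.)

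The paper uses essentially your frontier idea for the reachable part: Lemma~\ref{lem:match_and_kernel} gives a kernel $K$ of at most $\imn(H)$ sources from which every non-source of $\orient H$ is reachable, and one enumerates the image of $\orient H(K)$ in time $n^{|K|}d^{O(k)}$. The missing ingredient is what to do with the leftover sources $\overline K=S\setminus K$. Since they are all sources, they form an \emph{independent set} in $H$, and the paper finds their images via a separate multicoloured independent-set detection routine (Lemma~\ref{lem:multicoloured_IS}). That routine exploits degeneracy directly: if every colour class has more than $d(k-1)$ vertices the answer is automatically YES (a greedy argument along the degeneracy ordering), and otherwise some class is small enough to enumerate and recurse. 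It is this IS subroutine, not an extra frontier vertex, that absorbs the potentially many leftover sources; the $+1$ in the exponent and the extra factor of $d$ come from the $|G|\le d\cdot n$ cost of each invocation of it.
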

\noindent Also in this case, there are classes of patterns for which exact counting is hard but approximate counting is tractable.
For instance, consider the class $C$ of all bicliques, for which $\alpha(C)$ is unbounded but $\imn(C)=1$: exact counting is $\#\W{1}$-hard by Theorem~\ref{thm:intro_indsubs_param}, yet by Theorem~\ref{thm:intro_FPTRAS_indsubcount} we have an FPTRAS with running time proportional to $n^{2+o(1)}$.

Finally, we consider the generalised subgraph counting problem, $\#\indsubsprobd(\Phi)$, see Section~\ref{sec:intro_indsubs_gen}. By Theorem~\ref{thm:intro_indsubs_minor_closed}, $\#\indsubsprobd(\Phi)$ is $\#\W{1}$-hard for any non-trivial minor-closed property of unbounded independence number. Here we complement that result by showing that, for minor-closed properties, the approximate version is easy:
\begin{theorem}\label{thm:intro_approx_minor_closed}
Let $\Phi$ be a minor-closed graph property. Then $\#\indsubsprobd(\Phi)$ has an FPTRAS.
\end{theorem}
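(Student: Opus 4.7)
The plan is to exploit a simple structural observation about minor-closed classes together with a direct sampling argument, bypassing both the Dell--Lapinskas--Meeks framework and the machinery of Theorems~\ref{thm:intro_FPTRAS_subcount}--\ref{thm:intro_FPTRAS_indsubcount}. The key observation is that for any minor-closed $\Phi$ containing some graph on $\geq k$ vertices, the edgeless graph $\overline{K_k}$ lies in $\Phi$: given any $G_0\in\Phi$ with $|V(G_0)|\geq k$, one can delete $|V(G_0)|-k$ vertices and then the remaining edges to exhibit $\overline{K_k}$ as a minor of $G_0$, so minor-closedness forces $\overline{K_k}\in\Phi$. This is precisely the obstruction that makes so many minor-closed $\Phi$ $\#\W{1}$-hard to count exactly in Theorem~\ref{thm:intro_indsubs_minor_closed}; here it works decisively in our favour.

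Next, in a $d$-degenerate graph $G$ a uniformly random $k$-subset $S\subseteq V(G)$ induces an independent set with probability at least $1/2$ whenever $n:=|V(G)|\geq 2dk^2$. Indeed, degeneracy gives $|E(G)|\leq dn$, so by linearity of expectation
\[
\mathbb{E}\bigl[|E(G[S])|\bigr] \;=\; |E(G)|\cdot\frac{k(k-1)}{n(n-1)}\;\leq\;\frac{dk^2}{n-1}\;\leq\;\tfrac{1}{2},
\]
and Markov's inequality yields $\Pr[G[S]\text{ is independent}]\geq 1/2$. Writing $N$ for the quantity we wish to approximate and $p:=N/\binom{n}{k}$, the key observation then gives $p\geq 1/2$ whenever $\Phi$ contains at least one graph of order $\geq k$, because every independent $k$-subset contributes to $N$.

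The FPTRAS follows by case analysis on $n$ and $\Phi$. First, test whether $\overline{K_k}\in\Phi$, which is decidable in time depending only on $k$ and $\Phi$ via the finite list of Robertson--Seymour forbidden minors; if not, $\Phi$ has maximum order $<k$ and we return $N=0$. Otherwise, if $n< 2dk^2$, enumerate all $\binom{n}{k}\leq (2dk^2)^k$ subsets and test each against $\Phi$ to compute $N$ exactly. If $n\geq 2dk^2$, draw $m=\Theta(\varepsilon^{-2})$ independent uniform $k$-subsets, let $\hat p$ be the empirical fraction whose induced subgraph lies in $\Phi$, and output $\hat p\cdot\binom{n}{k}$. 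Because $p\geq 1/2$, the multiplicative Chernoff bound $\Pr[|\hat p-p|\geq\varepsilon p]\leq 2\exp(-m\varepsilon^2 p/3)$ guarantees a $(1\pm\varepsilon)$-approximation of $N$ with probability $\geq 2/3$. Total running time: $f(k,d,\Phi)\cdot\mathrm{poly}(n,\varepsilon^{-1})$.

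The only genuinely new ingredient is the structural claim $\overline{K_k}\in\Phi$, and the main point requiring care is the handling of the boundary regime $n<2dk^2$, which is brute-forced inside the FPT budget. Neither DLM nor Theorem~\ref{thm:intro_FPTRAS_indsubcount} is invoked, reflecting the fact that in the relevant parameter range the count $N$ is already dominated by the trivially approximable contribution of independent $k$-subsets.
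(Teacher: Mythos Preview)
Your proof is correct and follows essentially the same strategy as the paper: observe that $\overline{K_k}\in\Phi$ whenever the minor-closed $\Phi$ contains any graph on $\geq k$ vertices (else return $0$), and then run a Monte Carlo sampler, exploiting the fact that random $k$-subsets of a $d$-degenerate graph are frequently independent. The paper routes this through an intermediate result (Theorem~\ref{thm:fptras_phi_excludedIS}) covering the slightly more general situation where $\Phi$ holds on all but finitely many independent sets, and then specialises.

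The one substantive difference is how the success probability is bounded. The paper lower-bounds the number of independent $k$-subsets by $\binom{\lceil n/(d+1)\rceil}{k}$, using that every $d$-degenerate graph has an independent set of that size; this yields $p\geq k^{-k}(d+1)^{-k}$ and hence $\Theta(\varepsilon^{-2}(k(d+1))^k)$ samples. Your Markov argument on $\mathbb{E}[|E(G[S])|]$ is sharper and gives $p\geq 1/2$ directly once $n$ is large enough, so only $\Theta(\varepsilon^{-2})$ samples are needed; the $(kd)^{O(k)}$ cost is absorbed entirely into the brute-force regime. Both yield an FPTRAS, but your version has the better trade-off between the sampling and brute-force terms.

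Two cosmetic boundary fixes: the threshold should be $n>2dk^2$ (or $n\geq 2dk^2+1$) for the displayed inequality $\frac{dk^2}{n-1}\leq\tfrac12$ to hold as stated, and in the degenerate case $d=0$ your threshold collapses to $0$, so you should still brute-force when $n<k$. Neither affects the argument.
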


We conclude with a note about hardness of approximate counting in bounded-degeneracy graphs. In Section~\ref{sec:approx}, we show hardness for some classes of patterns (for example, the subdivisions of a clique). Unfortunately, those hardness results do not match the upper bounds above, so we do not show a dichotomy like we did for exact counting. However we point out that, for approximate counting problems, complexity dichotomies are rare, and sometimes unlikely to exist~\cite{DyerGJ10}. In fact, even the case of arbitrary host graphs, which has been studied for a long time, is not resolved yet: While the algorithm of Arvind and Raman~\cite{ArvindR02} establishes tractability for $H$ of bounded treewidth, it is open whether the remaining instances yield hardness. On the positive side, we provide a possible starting point for future research in Section~\ref{sec:conclusion}.

\subsection{Overview of our Techniques: Upper Bounds}\label{sub:techniques_ub}
For exact counting, we use the dynamic programming algorithm of~\cite{Bressan19,Bressan21} based on \emph{dag tree decompositions}. The algorithm has running time in the form $f(d,k) \cdot n^{\tau_i(H)} \log n$, where $\tau_1(H) \le \tau_2(H) \le \tau_3(H)$ are three variants of the \emph{dag treewidth} of $H$; $\tau_1$ is for counting homomorphisms, $\tau_2$ is for counting copies, and $\tau_3$ is for counting induced copies. Our contribution here is to prove upper bounds on $\tau_1,\tau_2,\tau_3$ as a function of $\imn$ and $\alpha$. We also bound $\tau_1,\tau_2,\tau_3$ as a function of more complex parameters, see Section~\ref{sec:hom_ub}. The techniques here are mostly constructive: we show how to build a dag tree decomposition of $H$ starting from (say) the tree decomposition of certain induced minors of $H$, see below.

For approximate counting, we make use of a recent result~\cite{DLM20} that reduces parameterized approximate counting problems to their \emph{colourful} decision version. In the colourful decision version,~$G$ is given with a (not necessarily proper) $k$-colouring of its vertices, and we are asked whether it contains or not a colourful copy of $H$ (one that spans all $k$ colours). The result of~\cite{DLM20} says that, if one can solve this problem in time $T(G,k,d)$, then in time $\varepsilon^{-2} k^{2k} n^{o(1)} T(G,k,d)$ one can probabilistically count the (uncoloured) copies of $H$ in $G$ within a multiplicative error of~$\varepsilon$. This holds for both subgraphs and induced subgraphs. Hence, our contribution here is to give FPT algorithms for the colourful decision version of our counting problems. For example, to decide if $G$ contains a colourful independent set, we proceed as follows: If every colour class $V_i$ of~$G$ is large, say $|V_i| > k d$, then $G$ necessarily contains a colourful $k$-independent set (this can be seen by sorting $G$ in degeneracy ordering and greedily removing the first vertex of each colour). Otherwise, if there are $\ell \ge 1$ colour classes of size $\le kd$, then we can explicitly enumerate their colourful $\ell$-independent sets in time $O((kd)^{\ell})$ and recurse on the remaining classes. Our algorithms use this routine and other FPT-style arguments.

\subsection{Overview of our Techniques: Lower Bounds}\label{sub:techniques_lb}
Most of our work is devoted to the lower bounds. We prove those bounds by (i) developing hardness results for $\#\homsprobd$, and (ii) lifting them to $\#\subsprobd$ and $\#\indsubsprobd$. 

\noindent These two steps are carried out with different techniques:

\textbf{A novel obstruction: F-gadgets}. In the first step we introduce F-gadgets, a novel kind of obstructions designed to capture the hardness of counting homomorphisms in bounded-degeneracy graphs. Informally, a graph $H$ has an $F$-gadget if $H$ contains an induced subgraph that can be obtained from $F$ by substituting each vertex $v$ by a connected component $S_v$ and each edge $e$ by a path $P_e$ of length at least $2$ (see Figure~\ref{fig:F_gadget_intro} for an example, and Definition~\ref{def:gadgets} for a formalization). Therefore, F-gadgets can be seen as induced minors under constraints, and indeed, if $F$ is an F-gadget of $H$, then $F$ is an induced minor of $H$, but the converse is not true (for instance, the complete graph is an induced minor of itself, but not an F-gadget of itself).
\begin{figure}[ht]
    \centering
    \begin{tikzpicture}[
  scale=1.3,
  every node/.style={circle, inner sep=0pt, minimum size=12},
  halfspace/.style={left color=white, right color=white, path fading=west, fill opacity=1}
]
\pgfdeclarelayer{foreground}
\pgfsetlayers{background,main,foreground}

\begin{scope}[shift={(-4,1)}]
\node[graph] (a) at (0,0) {$a$};
\node[graph] (b) at (0,-1) {$b$};
\node[graph] (c) at (1,-1) {$c$};
\draw (a) -- (b) -- (c) -- (a);
\end{scope}

\begin{scope}[shift={(-1,0)},scale=1.5]
\begin{pgfonlayer}{foreground}
\node[graph] (Ha1) at (0,1.2) {$1$};
\node[graph] (Ha2) at (.4,1.2) {$3$};
\node[graph] (Ha3) at ($(Ha1)!.5!(Ha2)+(90:.3)$) {$2$};
\node[graph] (Hb1) at (0,0) {$5$};
\node[graph] (Hc1) at (1.2,0) {$7$};
\node[graph] (Hc2) at (1.5,.3) {$10$};
\node[graph] (Hc3) at (1.5,-.3) {$8$};
\node[graph] (Hc4) at (1.8,0) {$9$};
\node[graph] (Hab) at ($(Ha1)!.5!(Hb1)$) {$4$};
\node[graph] (Hbc) at ($(Hb1)!.5!(Hc1)$) {$6$};
\node[graph] (Hac) at ($(Ha2)!.5!(Hc2)$) {$11$};
\node[graph] (Hr) at (1.5,1.2) {$12$};
\draw (Ha1) -- (Ha2) -- (Ha3) -- (Ha1);
\draw (Ha1) -- (Hab) -- (Hb1) -- (Hbc) -- (Hc1) -- (Hc2) -- (Hc3) -- (Hc1) -- (Hc4) -- (Hc2) -- (Hac) -- (Ha2) -- (Hr) -- (Hc2);
\draw (Hr) -- (Hac);
\end{pgfonlayer}
\node (H) at ($(Hac)+(1,.8)$) {};
\end{scope}

\begin{scope}[shift={(4,0)},scale=1.5]
\begin{pgfonlayer}{foreground}
\node[graph] (Ha1) at (0,1.2) {$1$};
\node[graph] (Ha2) at (.4,1.2) {$3$};
\node[graph] (Ha3) at ($(Ha1)!.5!(Ha2)+(90:.3)$) {$2$};
\node[graph] (Hb1) at (0,0) {$5$};
\node[graph] (Hc1) at (1.2,0) {$7$};
\node[graph] (Hc2) at (1.5,.3) {$10$};
\node[graph] (Hc3) at (1.5,-.3) {$8$};
\node[graph] (Hc4) at (1.8,0) {$9$};
\node[graph] (Hab) at ($(Ha1)!.5!(Hb1)$) {$4$};
\node[graph] (Hbc) at ($(Hb1)!.5!(Hc1)$) {$6$};
\node[graph] (Hac) at ($(Ha2)!.5!(Hc2)$) {$11$};
\node[graph] (Hr) at (1.5,1.2) {$12$};
\draw (Ha1) -- (Ha2) -- (Ha3) -- (Ha1);
\draw (Ha1) -- (Hab) -- (Hb1) -- (Hbc) -- (Hc1) -- (Hc2) -- (Hc3) -- (Hc1) -- (Hc4) -- (Hc2) -- (Hac) -- (Ha2) -- (Hr) -- (Hc2);
\draw (Hr) -- (Hac);
\end{pgfonlayer}
\end{scope}

\pgfmathanglebetweenpoints{\pgfpointanchor{Ha2}{center}}{\pgfpointanchor{Hc2}{center}}
\edef\angleAC{\pgfmathresult}
\path[stainEdge,rounded corners=9] ($(Ha1)!.3!(Hb1)+(-.2,.25)$) rectangle ($(Ha1)!.7!(Hb1)+(.2,-.25)$) {};
\path[stainEdge,rounded corners=9] ($(Hb1)!.33!(Hc1)+(-.3,.2)$) rectangle ($(Hb1)!.67!(Hc1)+(.3,-.2)$) {};
\path[stainEdge,rounded corners=9,rotate=\angleAC] ($(Ha2)!.27!(Hc2)+(-.3,.2)$) rectangle ($(Ha2)!.73!(Hc2)+(.3,-.2)$) {};
\draw[stain] plot[smooth cycle,tension=.9] coordinates {($(Ha1)+(-.25,-.15)$) ($(Ha2)+(.25,-.15)$) ($(Ha3)+(0,.25)$)};
\draw[stain] plot[smooth cycle,tension=.5] coordinates {($(Hc1)+(-.3,0)$) ($(Hc2)+(0,.3)$) ($(Hc4)+(.3,0)$) ($(Hc3)+(0,-.3)$)};
\draw[stain] (Hb1) circle (.3);
\draw[stain,fill=white] (Hr) circle (.35);
\node[above left=13pt] (Sa) at (Ha1) {$S_a$};
\node[below left=11pt] (Sb) at (Hb1) {$S_b$};
\node[above right=12pt] (Sc) at (Hc2) {$S_c$};
\node[left=9pt] (Pab) at (Hab) {$P_{ab}$};
\node[below=8pt] (Pbc) at (Hbc) {$P_{bc}$};
\node[below left=7pt] (Pac) at (Hac) {$P_{ac}$};
\node[above=13pt] (R) at (Hr) {$R$};

\end{tikzpicture}
    \caption{Left to right: a graph $F$, a graph $H$, and an $F$-gadget of $H$.}
    \label{fig:F_gadget_intro}
\end{figure}
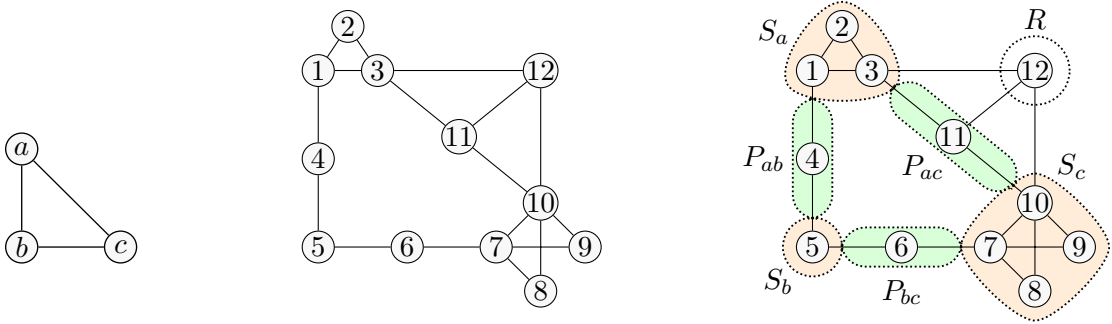

\noindent Exploiting F-gadgets, we prove the following result: if a computable class of graphs $C$ has $F$-gadgets of unbounded treewidth, then $\#\homsprobd(C)$ is $\#\W{1}$-hard. To this end, we show that the problem of counting the homomorphisms from a graph $F$ to an arbitrary graph $G$ can be FPT-reduced to the problem of counting the homomorphisms from a graph $H$ to a $|H|$-degenerate graph $G'$, where $H$ has an $F$-gadget and $|G'|=\poly(|G|)$. More precisely, let $\gadgets(C)$ be the class of all graphs $F$ such that some $H \in C$ has an $F$-gadget. We show the following paramerized Turing reduction:
\begin{linenomath*}
\begin{equation}
\#\homsprob(\gadgets(C)) \fptred
\#\homsprobd(C)    
\end{equation}
\end{linenomath*}
Since $\#\homsprob(\gadgets(C))$ is $\#\W{1}$-hard whenever $\gadgets(C)$ has unbounded treewidth~\cite{DalmauJ04}, it follows that $\#\homsprobd(C)$ is $\#\W{1}$-hard whenever $C$ has F-gadgets of unbounded treewidth, as we claimed. For technical reasons, we perform the reduction using the \emph{colour-prescribed} variants of the problems, where the $G$ is ``coloured'' with the vertices of the pattern\footnote{More precisely, the colouring is a homomorphism from $G$ to the pattern, see Section~\ref{sec:prelims}.}, and we must count only the homomorphisms that respect that colouring. The link with the original (i.e., uncoloured) problems are given by standard results, see~\cite{DalmauJ04,Grohe07,DorflerRSW19} as well as~\cite{Curticapean15,Roth19}.

Although F-gagdets are the technical machinery used in the proof, our hardness result can be restated in more natural terms using \emph{induced grid minors}, see Theorem~\ref{thm:intro_homs_hard}. To this end, we establish the following connection:
\begin{lemma}\label{lem:intro_induced_minor_char}
A class of graphs $C$ has F-gadgets of unbounded treewidth if and only if it has unbounded induced grid minors.
\end{lemma}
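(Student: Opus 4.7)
Both directions hinge on the structural fact that in an induced grid minor, cells at grid distance at least $2$ are non-adjacent in $H$.

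\textbf{Direction $(\Leftarrow)$.} Suppose some $H\in C$ has a $k\times k$ induced grid minor witnessed by connected vertex subsets $\{S_{(a,b)}\}_{(a,b)\in[k]^2}$ of $H$. I construct an $F$-gadget of $H$ where $F$ is the $\lfloor k/3\rfloor\times\lfloor k/3\rfloor$ grid, whose treewidth is $\Theta(k)$ and thus unbounded in $k$. Take $S^F_{(i,j)}=S_{(3i,3j)}$ as the vertex cells. For each horizontal edge $e=(i,j)-(i,j+1)$ of $F$, let $P_e$ be the internal vertices of a shortest path from $S_{(3i,3j)}$ to $S_{(3i,3j+3)}$ inside $H[S_{(3i,3j)}\cup S_{(3i,3j+1)}\cup S_{(3i,3j+2)}\cup S_{(3i,3j+3)}]$; vertical edges are symmetric. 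The spacing of $3$ between chosen vertex cells ensures that distinct $F$-vertex cells and the cells used by distinct paths $P_e$ sit at grid-minor distance at least $2$, so no unwanted adjacencies can appear between them. The shortest-path choice then makes each $P_e$ induced and forces its intermediate vertices to be non-adjacent to either endpoint cell, since any such adjacency would produce a shorter path. This yields a valid $F$-gadget.

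\textbf{Direction $(\Rightarrow)$.} Suppose $H\in C$ has an $F$-gadget with $\tw(F)\ge t$. Contracting each vertex cell $S_v$ of the gadget to a single vertex produces, as an induced minor of $H$, a subdivision $F^\star$ of $F$ in which each edge of $F$ is subdivided by the internal vertices of the corresponding $P_e$. By the Excluded Grid Theorem~\cite{RobertsonS86-ExGrid}, $F$ has a $k\times k$ grid as (not necessarily induced) minor with $k\to\infty$ as $t\to\infty$; let $\{T_a\}_a$ be connected subsets of $V(F)$ witnessing this minor. I lift it to an \emph{induced} grid minor of $F^\star$ as follows: for each grid vertex $a$, let $T^\star_a$ consist of $T_a$, together with all subdivision vertices on $F$-edges internal to $T_a$, and, for each grid-adjacent $b$, the subdivision vertices of one chosen $F$-edge from $T_a$ to $T_b$, split between $T^\star_a$ and $T^\star_b$ so that both sides remain connected and exactly one surviving edge crosses. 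All other subdivision vertices are discarded. Because $F^\star$ is bipartite between originals and subdivision vertices, after these deletions the only edges between $T^\star_a$ and $T^\star_b$ are those along the chosen subdivision paths, which exist precisely when $a,b$ are grid-adjacent. Contracting each $T^\star_a$ then yields the $k\times k$ grid as induced minor of $F^\star$, and hence of $H$ by transitivity of induced minors.

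\textbf{Main obstacle.} The delicate step is $(\Rightarrow)$: high treewidth alone does not guarantee any large induced grid minor, as cliques show. The $F$-gadget rescues the argument because it supplies not $F$ but a \emph{subdivision} of $F$ as induced minor of $H$; in a subdivision one may discard subdivision vertices to effectively ``delete'' edges of $F$, and this converts any grid minor of $F$ into an induced grid minor of $F^\star$. The requirement that each $P_e$ has length at least $2$ is precisely what creates this freedom.
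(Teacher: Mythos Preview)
Your overall strategy mirrors the paper's: for $(\Leftarrow)$, thin out the grid and take evenly spaced cells as the $S_v$'s while routing paths through the intermediate cells (the paper uses spacing $2$ rather than your $3$, but the idea is identical); for $(\Rightarrow)$, apply the Excluded Grid Theorem to $F$ and push the resulting grid minor through the subdivided copy of $F$ sitting inside $H$ (the paper builds the witness structure directly in $H$ rather than passing through an explicit intermediate $F^\star$, but the content is the same).

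There is, however, a gap in your verification of the $F$-gadget axioms in the $(\Leftarrow)$ direction. You claim the shortest-path choice ``forces its intermediate vertices to be non-adjacent to either endpoint cell, since any such adjacency would produce a shorter path.'' This fails for the \emph{first} intermediate vertex $p_1$ of $P_e$: if some $a\in S^F_{(i,j)}$ with $a\neq s_{u,e}$ is also adjacent to $p_1$, the alternative path $a,p_1,p_2,\ldots$ has the \emph{same} length as your chosen path, so shortest-path minimality yields no contradiction. The resulting edge $\{a,p_1\}$ then lies in no $H[S_v]$, in no $p_{e'}$ (since $a\notin\{s_{u,e}\}\cup P_e\cup\{s_{v,e}\}$), and is not incident to $R$, so condition~(3) of Definition~\ref{def:gadgets} is violated. (The paper's own verification of condition~(3) is brief and does not explicitly rule out this case either.) A repair is possible, but it requires more than the shortest-path argument you give.
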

\noindent Therefore, our hardness result can be rephrased as follows: $\#\homsprobd(C)$ is $\#\W{1}$-hard for any computable class of graphs $C$ having unbounded induced grid minors.

\textbf{Complexity monotonicity and expanders.} In the second step, we use \emph{complexity monotonicity}, a principle discovered independently by Curticapean, Dell and Marx~\cite{CurticapeanDM17} and by Chen and Mengel~\cite{ChenM16}. It is well-known (cf.\ \cite[5.2.3]{Lovasz12}) that subgraph counts can be written as finite linear combinations of homomorphism counts, that is:
\begin{linenomath*}
\begin{equation}\label{eq:lincomb_intro}
    |\subs{H}{G}| = \sum_{H'} a_H(H')\cdot|\homs{H'}{G}|\,,
\end{equation}
\end{linenomath*}
where $H'$ ranges over all possible graphs, but $a_H$ is non-zero only for a finite number of them. Therefore, computing $|\subs{H}{G}|$ is \emph{at most as hard} as computing the terms $|\homs{H'}{G}|$. Now, complexity monotonicity says that computing $|\subs{H}{G}|$ is \emph{precisely as hard as} computing the single hardest term $|\homs{H'}{G}|$ for which $a_H(H') \ne 0$. This allows one to prove hardness results for $\#\subsprob$ starting from hardness results for $\#\homsprob$. This principle was originally shown for general graphs $G$, but Gishboliner et al.~\cite{Gishboliner20} showed that it holds when $G$ is degenerate as well, providing a way to link the hardness of $\#\subsprobd$ to that of $\#\homsprobd$. The same principle also holds for computing \emph{induced} subgraph counts.

To exploit this principle, we show the following result. For any graph $H$, there exists some~$H'$ such that $a_H(H') \ne 0$ and that $H'$ has an F-gadget that is a regular expander of treewidth $\Omega(\imn(H))$. Since $\#\homsprobd(C)$ is hard when $\tw(\gadgets(C))$ is unbounded, by complexity monotonicity we infer that $\#\subsprobd(C)$ is hard when $\imn(C)$ is unbounded. Besides mere hardness, we also make sure to preserve the parameters along the reductions, which yields us the fine-grained lower bounds of Theorem~\ref{thm:intro_subs_finegrained}. The argument for $\#\indsubsprobd(C)$ is the same, but in this case the F-gadget turns out to be a regular expander of treewidth $\Omega(\alpha(H))$, which yields the fine-grained lower bounds of Theorem~\ref{thm:intro_indsubs_finegrained}. 

We use the same strategy for $\#\indsubsprobd(\Phi)$. However, this requires significantly more work: given only $\Phi$, or some property enjoyed by $\Phi$ (for instance, given only that $\Phi$ is minor-closed), it is much harder to determine which coefficients vanish in the linear combination of a specific pattern $H$ that satisfies $\Phi$. To do this, we rely on the ``algebraic approach to hardness''~\cite{DorflerRSW19}. Let $|\indsubs{\Phi,k}{\star}|$ be the function that maps $G$ to the number of $k$-vertex induced subgraphs of $G$ that satisfy $\Phi$. First, we can again write $|\indsubs{\Phi,k}{\star}|$ as a linear combination of homomorphism counts:
\begin{linenomath*}
\begin{equation}\label{eq:lincomb_intro_indsubphi}
   |\indsubs{\Phi,k}{G}| = \sum_{H'} a_{\Phi,k}(H')\cdot|\homs{H'}{G}|\,.
\end{equation}
\end{linenomath*}
Note that the sum is over all possible graphs $H'$. Now, the algebraic approach to hardness states that $a_{\Phi,k}(H') \ne 0$ whenever $H'$ is an edge-transitive graph with a prime-power number of edges, and $\Phi$ distinguishes $H'$ from an independent set of size $|V(H')|$.\footnote{More precisely, the algebraic approach applies to an intermediate, vertex-coloured version of the problem. We will carefully introduce the vertex-coloured version when needed, and show that it is interreducible with the uncoloured version, even in case of degenerate graphs $G$.} Therefore, if one can find some~$H',\Phi$ that satisfy these requirements, one can apply again complexity monotonicity. We show that these requirements hold when $H'$ is the subdivision of an $(\ell,\ell)$-biclique with $\ell$ a power of $2$, and $\Phi$ is one of several interesting properties, including for example all non-trivial minor-closed properties of unbounded independence number. Then we apply complexity monotonicity; since the class of all bicliques has unbounded treewidth, and a subdivision of a biclique has the biclique as an F-gadget, we obtain hardness of $\#\indsubsprobd(\Phi)$ from our results on $\#\homsprobd(C)$, yielding Theorems~\ref{thm:intro_graphlets} and~\ref{thm:intro_indsubs_minor_closed}.

\subsection{Conclusions and Open Questions}\label{sec:conclusion}
In this work, we have given exhaustive and explicit classifications for the problems of counting subgraphs and induced subgraphs in degenerate host graphs, from the viewpoints of both fine-grained and parameterized complexity theory. The most important problem which we leave open is the question of whether a similar classification can be shown for counting homomorphisms, too. While complexity dichotomies for parameterized homomorphism counting problems in FPT and $\#\W{1}$-hard cases are not always possible~\cite{RothW20}, we conjecture that one exists in the current setting. Formally, we state the question as follows:
\begin{question*}
Find an explicit criterion on computable graph classes $C$ such that $\#\homsprobd(C)$ is \ccFPT\ if the criterion is satisfied, and $\#\homsprobd(C)$ is $\#\W{1}$-hard otherwise. 
\end{question*}
\noindent Assuming such a criterion exists, our work shows that it must lie between dag treewidth (which, if bounded, induces fixed-parameter tractability) and induced grid minors size (which, if unbounded, induces $\#\W{1}$-hardness). However, as far as we know, it might in fact be the case that dag treewidth and induced grid minor size are equivalent measures in the sense that a class of graphs $C$ has bounded dag treewidth if and only if it has bounded induced grid minors. Let us point out the similarity to treewidth and (not necessarily induced) grid minors: By the Excluded-Grid-Theorem~\cite{RobertsonS86-ExGrid}, a class of graphs has bounded treewidth if and only if it has bounded grid minors. If a similar statement is true for dag treewidth and induced grid minors, we expect the proof to be of significant difficulty, requiring multiple novel structural insights on dag treewidth. The reason for the latter is the fact that induced grid minor size is a measure that seems, a priori, unrelated to orientations of edges, while the dag treewidth originates from DAGs and only extends to undirected graphs by considering all acyclic orientations.

A second direction for future research is provided by the absence of hardness results for approximate pattern counting in degenerate graphs. As mentioned earlier, complexity dichotomies for approximate counting problems are not always possible~\cite{DyerGJ10}, and even for not necessarily degenerate host graphs, the complexity of approximating subgraph counts is only partially resolved~\cite{ArvindR02}. Consequently, a complete picture of the complexity of approximate pattern counting in degenerate graphs seems to be elusive at this point. Nevertheless, we suggest the induced matching problem as a concrete starting point for further research on that matter: 
\begin{question*}
Let $k$ be a positive integer and let $G$ be an $n$-vertex graph of degeneracy $d$. Is it possible to compute (with high probability) an $\varepsilon$-approximation of the number of induced $k$-matchings in $G$ in time $f(k,d)\cdot \mathsf{poly}(|V(G)|,\varepsilon^{-1})$ for some computable function $f$?  
\end{question*}
Note that we chose the problem of counting induced $k$-matchings in degenerate graphs, since exact counting is hard by Theorem~\ref{thm:intro_indsubs_param}, but its decision version is known to be fixed-parameter tractable~\cite{ErmanKKW10}. Furthermore, $k$-matchings constitute the minimal class of graphs for which our algorithm for approximating induced subgraph counts (see Theorem~\ref{thm:intro_FPTRAS_indsubcount}) does not yield fixed-parameter tractability.

\paragraph*{Acknowledgements.}
M.\ Roth is grateful to Lior Gishboliner for introducing him to pattern counting in degenerate graphs.

\pagebreak

\tableofcontents

\pagebreak

\section{Preliminaries}\label{sec:prelims}

\subparagraph*{Graph Theory, Homomorphisms, etc.}
Graphs are simple and without self-loops, unless stated otherwise. Given a graph $G=(V,E)$, a \emph{subgraph} of $G$ is a graph obtained by deleting edges and vertices, and an \emph{induced subgraph} of $G$ is a graph obtained by deleting vertices only. Given a graph $H$, we write $\subs{H}{G}$ ($\indsubs{H}{G}$) for the set of all subgraphs (induced subgraphs) of $G$ isomorphic to $H$.
Given a vertex subset $S\subseteq V$, we write $G[S]$ for the graph induced by $S$, that is $V(G[S])=S$ and $E(G[S])= E(G) \cap S^2$. Given an edge subset $S \subseteq E$, we write $G[S]$ for the \emph{edge-subgraph} $(V,S)$ of $G$.

A \emph{homomorphism} from a graph $H$ to a graph $G$ is a mapping $\varphi:V(H) \rightarrow V(G)$ that preserves the edges, i.e., for each edge $\{u,v\}\in E(H)$ we have $\{\varphi(u),\varphi(v)\}\in E(G)$. An injective homomorphism is called an \emph{embedding}, and an embedding is called a \emph{strong embedding} if $\{u,v\}\in E(H)$ is equivalent to $\{\varphi(u),\varphi(v)\}\in E(G)$. We write $\homs{H}{G}$, $\embs{H}{G}$, and $\strembs{H}{G}$ for the sets of homomorphisms, embeddings, and strong embeddings, respectively, from $H$ to $G$. Finally, we write $\auts{H}$ for the set of all automorphisms of $H$.
We will use the following well-known relationships:
\begin{fact}
For every pair of graphs $H$ and $G$, we have $|\embs{H}{G}|=|\auts{H}|\cdot|\subs{H}{G}|$ and  $|\strembs{H}{G}|=|\auts{H}|\cdot|\indsubs{H}{G}|$.
\end{fact}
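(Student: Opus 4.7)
The plan is to prove both identities by the same orbit-style double-counting argument: partition the set on the left of each identity according to the image of the map, and show that every fiber has size exactly $|\auts{H}|$.

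For the first identity, I would define the projection $\pi : \embs{H}{G} \to \subs{H}{G}$ that sends an embedding $\varphi$ to the subgraph $(\varphi(V(H)), \{\{\varphi(u),\varphi(v)\} : \{u,v\} \in E(H)\})$ of $G$. Because $\varphi$ is injective and edge-preserving, this image is a subgraph of $G$ isomorphic to $H$, so $\pi$ is well-defined and surjective (every $H' \in \subs{H}{G}$ comes with a witnessing isomorphism $H \to H'$, which is an embedding in $G$). Next I would fix an arbitrary $H' \in \subs{H}{G}$ and one witnessing isomorphism $\varphi_0 : H \to H'$, and show that the map $\auts{H} \to \pi^{-1}(H')$ given by $\sigma \mapsto \varphi_0 \circ \sigma$ is a bijection: it is injective because $\varphi_0$ is, and it is surjective because any $\varphi \in \pi^{-1}(H')$ is an isomorphism $H \to H'$, so $\varphi_0^{-1} \circ \varphi \in \auts{H}$ is a preimage. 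Summing over $H'$ then yields $|\embs{H}{G}| = |\auts{H}| \cdot |\subs{H}{G}|$.

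For the second identity, I would run the same argument with the projection $\pi' : \strembs{H}{G} \to \indsubs{H}{G}$ defined by $\pi'(\varphi) = G[\varphi(V(H))]$. The one new point to verify is that $\pi'$ lands in $\indsubs{H}{G}$: since $\varphi$ is a strong embedding, the equivalence $\{u,v\} \in E(H) \Leftrightarrow \{\varphi(u),\varphi(v)\} \in E(G)$ exactly says that $\varphi$ is an isomorphism from $H$ onto $G[\varphi(V(H))]$. Surjectivity follows because every induced subgraph in $\indsubs{H}{G}$ admits an isomorphism from $H$, which is automatically a strong embedding. The fiber analysis via $\sigma \mapsto \varphi_0 \circ \sigma$ is identical to the previous case, completing the count.

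The argument has no real obstacle; the only subtlety is keeping the two kinds of embeddings straight when identifying the fibers with $\auts{H}$, which is handled cleanly by fixing a reference isomorphism $\varphi_0$ in each fiber and using composition. Both identities will thus follow from the same one-line calculation $|X| = \sum_{Y \in \pi(X)} |\pi^{-1}(Y)| = |\pi(X)| \cdot |\auts{H}|$ applied to $X \in \{\embs{H}{G}, \strembs{H}{G}\}$.
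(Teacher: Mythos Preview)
Your argument is correct and is the standard proof of this well-known identity. The paper does not actually supply a proof of this Fact; it is stated as a ``well-known relationship'' without justification. Your orbit/fiber argument (partitioning embeddings by their image and exhibiting the bijection $\sigma \mapsto \varphi_0 \circ \sigma$ between $\auts{H}$ and each fiber) is precisely the canonical way to establish it, and your care in checking that $\pi$ and $\pi'$ are well-defined and surjective, and that every $\varphi$ in a fiber is a genuine isomorphism onto the image, is appropriate.
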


We will also need the following transformations.
Given a partition $\rho$ of $V(G)$, the \emph{quotient graph} (or just the \emph{quotient}) $G/\rho$ is obtained from $G$ by identifying all vertices that are in the same block of $\rho$ and deleting multiple edges. We point out that $G/\rho$ might, a priori, have self-loops, but will usually only consider those quotients without self-loops.\footnote{A quotient without self-loops is called a \emph{spasm} by some authors, see for instance~\cite{CurticapeanDM17}.}
Given a function $f:A\times B \rightarrow C$ and an element $a\in A$, we write $f(a,\star):B \rightarrow C$ for the function that maps $b\in B$ to $f(a,b)$.
\begin{fact}[See Chapt.\ 5.2.3 in~\cite{Lovasz12}]\label{fact:sub_hom_basis}
Let $H$ be a graph. We have
\begin{linenomath*}
\begin{equation}
|\embs{H}{\star}| =  \sum_{\rho}\mu(\bot,\rho) \cdot  |\homs{H/\rho}{\star}|\,,
\end{equation}
\end{linenomath*}
where the sum is taken over all partitions of $V(H)$, $\bot$ is the finest partition, and
\begin{linenomath*}
\begin{equation}
\mu(\bot,\rho) = (-1)^{|V(H)|-|V(H/\rho)|} \cdot \prod_{B\in \rho}(|B|-1)!
\end{equation}
\end{linenomath*}
is the M\"obius function of the partition lattice of $V(H)$.
\end{fact}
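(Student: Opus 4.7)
The identity is a classical application of Möbius inversion on the partition lattice of $V(H)$; the plan is to first give a combinatorial ``zeta-type'' expansion and then invert it. Let me write $n=|V(H)|$ and, for the duration, fix an arbitrary host graph $G$ in the slot marked by $\star$. The starting observation is that every homomorphism $\varphi \in \homs{H}{G}$ canonically determines a partition $\rho_\varphi$ of $V(H)$ whose blocks are the non-empty fibres $\varphi^{-1}(v)$ for $v \in V(G)$. Two $\varphi$'s with the same fibre partition $\rho$ differ only by the choice of where to send each block, so $\varphi$ factors uniquely as $\bar\varphi \circ \pi_\rho$, where $\pi_\rho\colon H \to H/\rho$ is the quotient map and $\bar\varphi\colon H/\rho \to G$ is injective. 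Conversely, any such $\bar\varphi$ pulls back to a homomorphism from $H$ only when $H/\rho$ is loop-free (otherwise no homomorphism to the simple graph $G$ can identify the endpoints of an edge); when it is loop-free, $\bar\varphi$ is an embedding. Hence there is a bijection $\{\varphi \in \homs{H}{G} : \rho_\varphi = \rho\} \leftrightarrow \embs{H/\rho}{G}$, where the right-hand side is automatically empty if $H/\rho$ has a self-loop.

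The same argument applied to the pattern $H/\tau$ in place of $H$ yields, for every partition $\tau$ of $V(H)$, the zeta-type relation
\begin{equation*}
|\homs{H/\tau}{G}| \;=\; \sum_{\rho \geq \tau} |\embs{H/\rho}{G}|,
\end{equation*}
where $\rho$ ranges over all partitions of $V(H)$ that coarsen $\tau$, and the identification uses the fact that a homomorphism $H/\tau \to G$ is the same as a homomorphism $H \to G$ whose fibre partition refines its argument in the opposite direction (i.e.\ satisfies $\rho_\varphi \geq \tau$). This is the zeta transform on the partition lattice of $V(H)$, viewed as a ranked poset with minimum element $\bot$ (all singletons).

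Now I would apply Möbius inversion to the displayed relation, obtaining for each $\tau$:
\begin{equation*}
|\embs{H/\tau}{G}| \;=\; \sum_{\rho \geq \tau} \mu(\tau,\rho)\cdot |\homs{H/\rho}{G}|.
\end{equation*}
Specialising to $\tau = \bot$ and noting that $H/\bot = H$ gives the claimed formula for $|\embs{H}{\star}|$ up to the explicit form of $\mu(\bot,\rho)$. The final step is to invoke the classical evaluation of the Möbius function on the partition lattice (due to Schützenberger and Rota), which states that $\mu(\bot,\rho) = (-1)^{n-|\rho|} \prod_{B \in \rho}(|B|-1)!$, and to record that the number of blocks $|\rho|$ equals $|V(H/\rho)|$ by construction of the quotient.

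\textbf{Main obstacle.} There is no serious technical obstacle, since every ingredient is standard. The most delicate point is the bookkeeping around quotients $H/\rho$ that contain a self-loop: these terms must appear on the right-hand side (otherwise the Möbius inversion step is not a valid manipulation on the full partition lattice), but they should contribute nothing to either side. This is clean because any homomorphism from a graph with a self-loop to a simple graph $G$ must collapse the endpoints of that loop into an identified vertex, which cannot happen in a simple target; hence $|\homs{H/\rho}{G}| = 0$ whenever $H/\rho$ has a loop, and the formula holds uniformly.
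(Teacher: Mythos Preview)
Your argument is correct and is precisely the standard proof via M\"obius inversion on the partition lattice. Note, however, that the paper does not give its own proof of this statement: it is recorded as a Fact with a citation to Lov\'asz's book, so there is nothing to compare against beyond observing that your proof is the classical one from that reference.
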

\begin{fact}[See Chapt.\ 5.2.3 in~\cite{Lovasz12}]\label{fact:indsub_sub_basis}
Let $H$ be a graph. We have
\begin{linenomath*}
\begin{equation}
|\strembs{H}{\star}| = \sum_{H'}(-1)^{|E(H')|-|E(H)|} \cdot |\{H' \supseteq H\}|\cdot  |\embs{H'}{\star}|\,,
\end{equation}
\end{linenomath*}
where the sum is taken over all (isomorphism classes of) graphs, and $\{H' \supseteq H\}$ contains all sets $S\in V(H)^2\setminus E(H)$ such that $(V(H),E(H)\cup S)$ is isomorphic to $H'$.
\end{fact}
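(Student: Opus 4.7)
The plan is to prove the identity by inclusion-exclusion over the set of non-edges of $H$, exactly as one does for the analogous spasm/quotient identity for embeddings versus homomorphisms. Fix an arbitrary graph $G$ and write $\bar{E}(H) := \binom{V(H)}{2} \setminus E(H)$ for the non-edges of $H$. For every embedding $\varphi \in \embs{H}{G}$, I will define the ``bad set'' $N(\varphi) := \{\{u,v\} \in \bar{E}(H) : \{\varphi(u),\varphi(v)\} \in E(G)\}$, namely the set of non-edges of $H$ that are nevertheless mapped by $\varphi$ to edges of $G$. The key observation is that $\varphi$ is a strong embedding if and only if $N(\varphi) = \emptyset$, since strong embeddings are precisely the embeddings that reflect non-edges.

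Next, I would apply the standard Möbius identity on the Boolean lattice of subsets of $\bar{E}(H)$: using $\mathbb{1}[N(\varphi) = \emptyset] = \sum_{S \subseteq N(\varphi)} (-1)^{|S|}$ and swapping the order of summation, I get
\begin{equation}
|\strembs{H}{G}| \;=\; \sum_{\varphi \in \embs{H}{G}} \mathbb{1}[N(\varphi) = \emptyset] \;=\; \sum_{S \subseteq \bar{E}(H)} (-1)^{|S|} \cdot \bigl|\{\varphi \in \embs{H}{G} : S \subseteq N(\varphi)\}\bigr|.
\end{equation}
The condition that $\varphi$ is an embedding of $H$ into $G$ with $S \subseteq N(\varphi)$ is equivalent to the condition that $\varphi$ is an embedding of the supergraph $H_S := (V(H), E(H) \cup S)$ into $G$: the edge-preservation requirements on $E(H)$ and on $S$ together produce exactly the edge-preservation requirement on $E(H_S)$, and injectivity is unchanged.

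Finally, I would group terms by the isomorphism class $H'$ of $H_S$. For each such $H'$, the number of sets $S \subseteq \bar{E}(H)$ with $H_S \cong H'$ is by definition $|\{H' \supseteq H\}|$, and every such $S$ contributes a factor $(-1)^{|S|} = (-1)^{|E(H')| - |E(H)|}$ and the value $|\embs{H_S}{G}| = |\embs{H'}{G}|$ (isomorphic graphs have equinumerous embedding sets into $G$). Collecting these contributions yields the claimed formula, and since $G$ was arbitrary the identity holds as an identity of functions in the argument $\star$. The proof is essentially just careful bookkeeping, so I do not anticipate a genuine obstacle; the only mild subtlety is checking that ``adding $S$ to $E(H)$ and then taking embeddings'' is the right way to rewrite the inner count, which the observation above handles.
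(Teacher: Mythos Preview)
Your argument is correct; this is exactly the standard inclusion--exclusion proof (M\"obius inversion over the Boolean lattice of non-edges) that underlies this identity. Note, however, that the paper does not give its own proof of this statement: it is recorded as a \emph{Fact} with a pointer to Chapter~5.2.3 of Lov\'asz's book, so there is nothing in the paper to compare against beyond that citation.
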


A graph $G$ is $H$\emph{-coloured} if it comes with a homomorphism $c\in \homs{G}{H}$, called \emph{colouring}. A homomorphism $\varphi \in \homs{H}{G}$ is \emph{colour-prescribed} if $c(\varphi(v))=v$ for all $v\in V(H)$. 
This means that each $v \in V(H)$ must be mapped to a vertex of $G$ that is coloured with $v$. We write $\cphoms{H}{G}$ for the set of all colour-prescribed homomorphisms from $H$ to $G$. Moreover, $\varphi$ is called \emph{colourful} if $c(\varphi(V(H)))=V(H)$, that is, each colour is met precisely once. We write $\cfhoms{H}{G}$ for the set of all colourful homomorphisms from $H$ to $G$. We always assume that colourings are surjective, since otherwise $\cphoms{H}{G}=\cfhoms{H}{G}=\emptyset$.
For any $S\subseteq V(H)$, let $G-S$ be the graph obtained from $G$ by deleting all vertices coloured with some vertex in $S$.
We will use the following two well-known facts (see for instance~\cite[Section~2.5]{Roth19}).

\begin{fact}\label{fact:cphoms_to_cfhoms}
$|\cphoms{H}{G}| = |\auts{H}|^{-1} \cdot |\cfhoms{H}{G}|$.
\end{fact}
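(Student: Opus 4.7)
The plan is to exhibit an explicit bijection
\[
\Psi: \cphoms{H}{G} \times \auts{H} \;\longleftrightarrow\; \cfhoms{H}{G},
\]
from which the equality follows immediately after dividing by $|\auts{H}|$.

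The key observation is that, for any colourful homomorphism $\varphi$, the composition $\sigma := c \circ \varphi$ is automatically an automorphism of $H$. Indeed, $\sigma$ is a homomorphism $H \to H$ (as a composition of homomorphisms), and it is a bijection on $V(H)$ because $\varphi$ is colourful, meaning $c$ restricted to $\varphi(V(H))$ hits every vertex of $H$ exactly once. A bijective endomorphism of a finite graph must map the (finite) edge set injectively to itself, hence bijectively, so $\sigma \in \auts{H}$. This is the crucial structural fact, and essentially the only nontrivial point in the argument.

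Given this, I would define the bijection as follows. In one direction, send $(\psi,\sigma) \in \cphoms{H}{G} \times \auts{H}$ to $\psi \circ \sigma$. One checks that $\psi \circ \sigma$ is a homomorphism, and that for each $v \in V(H)$ we have $c(\psi(\sigma(v))) = \sigma(v)$ (using that $\psi$ is colour-prescribed), so the image under $c$ of $(\psi\circ\sigma)(V(H))$ is $\sigma(V(H)) = V(H)$, witnessing colourfulness. In the other direction, send a colourful $\varphi$ to the pair $(\varphi \circ \sigma^{-1}, \sigma)$, where $\sigma := c \circ \varphi \in \auts{H}$ by the previous paragraph. Then $c(\varphi(\sigma^{-1}(v))) = \sigma(\sigma^{-1}(v)) = v$, so $\varphi \circ \sigma^{-1}$ is indeed colour-prescribed.

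A direct computation shows that the two maps are mutual inverses: $\psi \circ \sigma \circ \sigma^{-1} = \psi$ and $c \circ (\psi \circ \sigma) = \sigma$ in one direction, and $(\varphi \circ \sigma^{-1}) \circ \sigma = \varphi$ in the other. Hence $\Psi$ is a bijection and $|\cfhoms{H}{G}| = |\auts{H}| \cdot |\cphoms{H}{G}|$. I do not anticipate any real obstacle; the only subtlety worth highlighting in the write-up is the finiteness argument ensuring that a bijective homomorphism of a finite graph to itself is an automorphism.
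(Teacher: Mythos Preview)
Your proof is correct. The paper does not actually give a proof of this fact; it simply states it as well-known and points to \cite[Section~2.5]{Roth19}. Your argument via the bijection $(\psi,\sigma)\mapsto \psi\circ\sigma$, with inverse $\varphi\mapsto(\varphi\circ(c\circ\varphi)^{-1},\,c\circ\varphi)$, is the standard one and all the verifications you outline go through as stated.
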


\begin{fact}\label{fact:cfhoms_to_homs} $|\cfhoms{H}{G}| = \sum_{S\subseteq V(H)}(-1)^{|S|} \cdot |\homs{H}{G-S}|$.
\end{fact}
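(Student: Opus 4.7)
The proof is a standard inclusion-exclusion argument on the colour classes that fail to be hit by the homomorphism. The plan is as follows.

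First I would re-express the colourful condition in terms of ``bad events''. Recall that $\varphi \in \homs{H}{G}$ is colourful iff $c(\varphi(V(H))) = V(H)$, i.e., every colour $v \in V(H)$ appears in the image $\varphi(V(H))$ (equivalently, since the map $c\circ\varphi:V(H)\to V(H)$ is from a finite set to itself, surjectivity is the same as bijectivity, so every colour is hit exactly once). For each $v \in V(H)$, let
\[
A_v = \{\varphi \in \homs{H}{G} : v \notin c(\varphi(V(H)))\}
\]
be the set of homomorphisms that \emph{miss} the colour $v$. Then by definition $\cfhoms{H}{G} = \homs{H}{G} \setminus \bigcup_{v \in V(H)} A_v$.

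Next I would identify the intersections. For any $S \subseteq V(H)$, a homomorphism $\varphi$ lies in $\bigcap_{v \in S} A_v$ iff no vertex of $H$ is mapped to a vertex of $G$ carrying a colour in $S$, i.e., iff $\varphi$ is a homomorphism from $H$ to the subgraph $G-S$ obtained by deleting all vertices of $G$ whose colour lies in $S$. Hence
\[
\Bigl|\bigcap_{v \in S} A_v\Bigr| = |\homs{H}{G-S}|,
\]
with the $S = \emptyset$ case giving $|\homs{H}{G}|$.

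Finally I would invoke the inclusion-exclusion principle on the finite collection $\{A_v\}_{v \in V(H)}$ inside the ambient set $\homs{H}{G}$, obtaining
\[
|\cfhoms{H}{G}| = \sum_{S \subseteq V(H)} (-1)^{|S|} \Bigl|\bigcap_{v \in S} A_v\Bigr| = \sum_{S \subseteq V(H)} (-1)^{|S|} \cdot |\homs{H}{G-S}|,
\]
which is the claimed identity. There is no real obstacle here: the only point deserving care is the routine bookkeeping that $\bigcap_{v\in S} A_v$ really equals $\homs{H}{G-S}$, which follows immediately from the fact that a homomorphism avoiding all colours in $S$ has image contained in $V(G)\setminus c^{-1}(S) = V(G-S)$, and that the edge relation on $G-S$ is the induced one.
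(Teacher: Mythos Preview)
Your argument is correct and is exactly the standard inclusion--exclusion proof of this identity. The paper does not actually prove this statement: it is recorded as a well-known fact with a reference (see the sentence ``We will use the following two well-known facts (see for instance~\cite[Section~2.5]{Roth19})'' preceding Facts~\ref{fact:cphoms_to_cfhoms} and~\ref{fact:cfhoms_to_homs}), so there is nothing further to compare.
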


A graph $F$ is a \emph{minor} of a graph $H$ if it can be obtained from $H$ by a sequence of edge contractions, edge deletions, and vertex deletions. It is well known that minors can be equivalently defined via \emph{models}: a model of $F$ in $H$ is a mapping from $V(F)$ to nonempty disjoint subsets of $V(H)$, that we call \emph{blocks} and denote by $\{B_u : u \in V(F)\}$, such that $H[B_u]$ is connected for all $u \in V(F)$, and that $H$ has an edge between $B_u$ and $B_v$ if $\{u,v\}\in E(F)$. We emphasise that $B_u \ne \emptyset$ for all $u \in V(F)$.
A graph $F$ is an \emph{induced minor} of a graph $H$ if $F$ can be obtained from $H$ by a sequence of edge contractions and vertex deletions. Induced minors can be expressed via \emph{witness structures} (cf.\ \cite{Hofetal12}), which are identical to models save for the fact that $H$ has an edge between $B_u$ and $B_v$ if \emph{and only if} $\{u,v\}\in E(F)$.

\subparagraph*{Parameterized and Fine-grained Complexity Theory.}
We give a concise introduction to parameterized counting problems and fine-grained complexity theory. For more details see, e.g.,~\cite[Chapter 14]{FlumG06}.

A \emph{parameterized counting problem} is a pair $(P,\kappa)$, where $P:\{0,1\}^\ast\rightarrow \mathbb{N}$ is a counting problem and $\kappa:\{0,1\}^\ast\rightarrow \mathbb{N}$ is a parameterization. For example, the problem $\#\clique$ expects as input a graph $G$ and a positive integer $k$, and the task is to compute the number of $k$-cliques in~$G$, that is, $P(G,k):=|\subs{K_k}{G}|$. The parameterization is given by $k$, that is, $\kappa(G,k):=k$.
A parameterized (counting) problem is called \emph{fixed-parameter tractable} (FPT) if there exists a computable function $f$ such that the problem can be solved in time $f(\kappa(x))\cdot |x|^{O(1)}$, where $x\in \{0,1\}^\ast$ is the input instance. An algorithm running in this time is called an \emph{FPT algorithm}.
A \emph{parameterized Turing-reduction} from $(P,\kappa)$ to $(P',\kappa')$ is an FPT algorithm $\algo$ equipped with oracle access to $P'$ that computes $P(x)$. Additionally, there must be a computable function~$f$ such that for any input $x$, the parameter $\kappa'(y)$ of each oracle query $y$ must be bounded by $f(\kappa(x))$. We write $(P,\kappa)\fptred (P',\kappa')$ if a parameterized Turing-reduction exists.
The notion of parameterized intractability is given by $\#\W{1}$-hardness:\footnote{We use the definition via a complete problem and refer the reader to~\cite[Chapter 14]{FlumG06} for a more structural and machine-based definition of $\#\W{1}$.} A parameterized counting problem $(P,\kappa)$ is $\#\W{1}$\emph{-hard} if $\#\clique \fptred (P,\kappa)$. It is known that $\#\W{1}$-hard problems are not fixed-parameter tractable unless ETH, defined below, fails.

\begin{conjecture}[ETH~\cite{ImpagliazzoP01}]
The \emph{Exponential Time Hypothesis} (ETH) asserts that the problem $3\textsc{-SAT}$ cannot be solved in time $\exp(o(n))$ where $n$ is the number of variables of the input formula.
\end{conjecture}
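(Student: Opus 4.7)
The final statement is the \emph{Exponential Time Hypothesis} (ETH) of Impagliazzo and Paturi, which is a widely believed open conjecture rather than a theorem to be proved. No proof is known, and providing one is not realistically within the scope of ``sketching a proof''. In particular, any proof of ETH would immediately entail $\ccP\neq\NP$, since a polynomial-time algorithm for $3$-SAT would also yield a $2^{o(n)}$ algorithm. The plan below is therefore an honest account of why a proof is out of reach and what evidence justifies using ETH axiomatically, as the paper does.

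The natural route to a proof would be a super-linear unconditional lower bound on the time (equivalently, the circuit complexity) of $3$-SAT, sharp enough to exclude algorithms running in time $2^{o(n)}$. Such lower bounds are far beyond the current state of the art: even showing $\NP\not\subseteq\ccP/\mathrm{poly}$, which is strictly weaker than ETH, is a central open problem. Well-known meta-barriers, relativization, natural proofs, and algebrization, indicate that no purely diagonal or combinatorial argument of the kind presently available will suffice, so the first step of any serious attempt would have to be the development of a fundamentally new lower-bound technique that evades these barriers. The hardest part, essentially the whole content of the would-be proof, is precisely this: constructing an explicit combinatorial invariant that witnesses the hardness of $3$-SAT and is provably resilient against all $2^{o(n)}$-time computations.

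In the absence of such a technique, the supporting evidence for ETH is of two kinds. First, decades of work on satisfiability have produced only algorithms of the form $2^{(1-c_k)n}$ with $c_k\to 0$ as $k\to\infty$, and no sub-exponential algorithm for $3$-SAT has ever been found. Second, via sparsification and standard reductions, ETH is known to be equivalent to analogous lower bounds for a large collection of NP-complete problems, none of which admits a sub-exponential algorithm either, so a single sub-exponential algorithm for any of them would simultaneously refute ETH everywhere. A proposal that did not introduce a genuinely new lower-bound method would therefore have to proceed by reduction from some other unconditionally proven hardness result, and no suitable starting point is currently known. Consistent with standard practice in fine-grained and parameterized complexity, and with the use made of ETH throughout the rest of this paper, the conjecture is adopted here as a working hypothesis rather than proved.
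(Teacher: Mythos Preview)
Your assessment is correct and matches the paper's treatment exactly: the statement is labelled a \emph{Conjecture}, and the paper offers no proof, using ETH purely as a hardness assumption (e.g., in Theorem~\ref{thm:hardness_bottleneck}). There is nothing to compare, since neither you nor the paper attempts a proof; your explanation of why ETH is adopted axiomatically is appropriate and, if anything, more detailed than what the paper provides.
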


Our hardness results will rely on the following theorem, which is an immediate consequence of the works of Dalmau and Jonsson~\cite{DalmauJ04} and Marx~\cite{Marx10}.
We include a proof for the readers' convenience.
\begin{theorem}\label{thm:hardness_bottleneck}
    Let $C$ be a computable class of graphs. If $C$ has unbounded treewidth\footnote{We will exclusively rely on treewidth in a black-box manner. Thus we avoid stating the definition and refer the reader e.g.\ to Chapter~7 in~\cite{CyganFKLMPPS15}.} then $\#\cphomsprob(C)$ is $\#\W{1}$-hard and cannot be solved in time
    $f(|H|)\cdot |G|^{o(\mathsf{tw}(H)/\log \tw(H))}$
    for any function $f$, unless ETH fails.
\end{theorem}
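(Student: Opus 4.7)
The statement follows by combining two known results --- the $\#\W{1}$-hardness result of Dalmau and Jonsson for $\#\homsprob$~\cite{DalmauJ04} and the fine-grained ETH lower bound of Marx~\cite{Marx10} --- with a parsimonious parameter-preserving reduction from $\#\homsprob(C)$ to $\#\cphomsprob(C)$. The heart of the argument is to verify that the uncoloured hardness results transfer to the colour-prescribed variant at essentially no cost in parameter or instance size.

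\textbf{Step 1: A tensor-product reduction.} Given an instance $(H,G)$ of $\#\homsprob(C)$, I form the categorical (tensor) product $G' := H \times G$, whose vertex set is $V(H)\times V(G)$ and whose edges are the pairs $\{(u,x),(v,y)\}$ with $\{u,v\}\in E(H)$ and $\{x,y\}\in E(G)$. Equip $G'$ with the $H$-colouring $c:(v,x)\mapsto v$, which is clearly a homomorphism $G'\to H$. Any colour-prescribed $\psi \in \cphoms{H}{G'}$ has the form $\psi(v)=(v,\pi(v))$ for some map $\pi:V(H)\to V(G)$, and $\psi$ is a homomorphism if and only if $\pi \in \homs{H}{G}$. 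This yields the bijection
\begin{equation}
|\homs{H}{G}| = |\cphoms{H}{H\times G}|,
\end{equation}
and the construction is computable in time polynomial in $|H|\cdot|G|$.

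\textbf{Step 2: Transferring hardness.} The reduction issues a single oracle call with pattern $H\in C$ (unchanged) and host graph $G'$ of size $|V(G')|=|V(H)|\cdot|V(G)|$. Hence it is a parameterized Turing reduction from $\#\homsprob(C)$ to $\#\cphomsprob(C)$, proving the $\#\W{1}$-hardness part of the claim via~\cite{DalmauJ04}. For the fine-grained bound, suppose there were an algorithm solving $\#\cphomsprob(C)$ in time $f(|H|)\cdot|G|^{o(\tw(H)/\log\tw(H))}$ for some computable $f$. Plugging in $G'=H\times G$ gives an algorithm for $\#\homsprob(C)$ with running time at most
\begin{equation}
f(|H|)\cdot (|H|\cdot|G|)^{o(\tw(H)/\log\tw(H))} \;\le\; f'(|H|)\cdot |G|^{o(\tw(H)/\log\tw(H))},
\end{equation}
where the factor $|H|^{o(\tw(H)/\log\tw(H))}$ depends only on $|H|$ and is absorbed into the new function $f'$. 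This contradicts Marx's ETH-based lower bound for $\#\homsprob(C)$~\cite{Marx10}.

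\textbf{Main obstacle.} There is no conceptual difficulty here: the tensor-product trick is standard, and both cited hardness results are used as a black box. The only points that require attention are mechanical: (i) confirming that the bijection between $\homs{H}{G}$ and $\cphoms{H}{H\times G}$ is exact (no isomorphism class considerations enter since we count homomorphisms, not subgraphs), and (ii) tracking the parameter carefully so that the factor blowup $|H|\cdot|G|$ in the host size does not pollute the exponent $o(\tw(H)/\log\tw(H))$ in the fine-grained bound --- which it does not, since $|H|$ is a function of the parameter alone.
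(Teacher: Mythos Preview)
Your proof is correct and follows essentially the same route as the paper. The paper's proof is terser: it cites~\cite{DorflerRSW19} for the reduction $\#\homsprob(C)\fptred\#\cphomsprob(C)$ (which is exactly your tensor-product construction), and for the fine-grained bound it invokes Marx's partitioned-subgraph-isomorphism result~\cite{Marx10} directly, noting that it reduces tightly to $\#\cphomsprob(C)$ (via~\cite{RothSW20}) rather than detouring through the uncoloured $\#\homsprob(C)$ as you do. Both chains are valid; yours has the small advantage of using one explicit reduction for both claims, while the paper's is slightly more direct for the ETH bound since Marx's hard instances are already colour-partitioned.
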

\begin{proof}
    The $\#\W{1}$-hardness follows by the $\#\W{1}$-hardness of $\#\homsprob(C)$, see~\cite{DalmauJ04}, and by a reduction from $\#\homsprob(C)$ to $\#\cphomsprob(C)$, see~\cite{DorflerRSW19}.
    The lower bound follows from a result on the partitioned subgraph isomorphism problem, see~\cite{Marx10}, which tightly reduces to $\#\cphomsprob(C)$, see~\cite[Section~2]{RothSW20}).
\end{proof}
We will apply Theorem~\ref{thm:hardness_bottleneck} to the following family $\mathcal{F}$ of regular expanders:
\begin{theorem}[see~\cite{Grohe&2009expansion}, Theorem 8]\label{thm:nice_expanders}
There exists a family of graphs $\mathcal{F}=\{F_k\}_{k\geq 1}$ and constants $k_0,c>0$ such that for all $k\geq k_0$ we have $\mathsf{tw}(F_k)\geq k$ and $|V(F_k)|\leq ck$ and $|E(F_k)|\leq ck$. 
\end{theorem}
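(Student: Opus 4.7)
The plan is to realise $F_k$ as a bounded-degree expander on $\Theta(k)$ vertices. Concretely, I would fix a family $\{X_n\}_{n \geq 1}$ of $3$-regular graphs on $n$ vertices whose edge expansion is bounded below by an absolute constant $h_0 > 0$; such families exist classically (random cubic graphs are expanders with high probability, and explicit algebraic constructions such as Ramanujan graphs give deterministic examples). Setting $F_k := X_{\lceil n_0 k \rceil}$ for a sufficiently large constant $n_0$ to be fixed below, the vertex- and edge-count bounds are immediate from $3$-regularity: $|V(F_k)| = O(k)$ and $|E(F_k)| = \tfrac{3}{2} |V(F_k)| = O(k)$.

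The remaining task is to establish $\tw(F_k) \geq k$, which I would do through the standard connection between treewidth and balanced vertex separators. Recall that if $\tw(G) \leq t$, then $G$ admits a set $S \subseteq V(G)$ with $|S| \leq t+1$ such that every connected component of $G - S$ has at most $\tfrac{2}{3}|V(G)|$ vertices. It therefore suffices to show that every such balanced separator of $F_k$ has size $\Omega(|V(F_k)|)$.

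This separator lower bound is a direct consequence of edge expansion. Given a balanced separator $S$ together with a partition $V(F_k)\setminus S = A \cup B$ with no edges between $A$ and $B$ and $|A|,|B| \leq \tfrac{2}{3}|V(F_k)|$, a short case analysis on $|S|$ produces a side (say $A$) of linear size $\Theta(|V(F_k)|)$ with $|A| \leq \tfrac{1}{2}|V(F_k)|$. Edge expansion applied to $A$ yields $|E(A, V(F_k)\setminus A)| \geq h_0 |A|$; but every such edge must be incident to $S$ since there are none to $B$, and by $3$-regularity each vertex of $S$ contributes at most $3$ such edges, so $|S| \geq h_0 |A|/3 = \Omega(|V(F_k)|)$. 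Choosing $n_0$ sufficiently large in terms of $h_0$ makes this lower bound at least $k$ for all $k \geq k_0$.

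The main content-bearing step is the expansion-to-separator implication; everything else — the existence of the expander family, the treewidth-to-separator connection, and the arithmetic of constants — is standard. The principal nuisance is synchronising constants so that a single $c$ simultaneously bounds $|V(F_k)|$ and $|E(F_k)|$ while the separator argument still delivers $\tw(F_k) \geq k$; this is the kind of bookkeeping that I would push to the end rather than carry through each step.
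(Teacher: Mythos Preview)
Your argument is correct and is essentially the standard proof: bounded-degree expanders have linear-sized balanced separators, hence linear treewidth. The case analysis you allude to works cleanly --- if $|S| \geq k$ you are done; otherwise $|S|$ is $o(n)$ once $n_0$ is chosen large, so the smaller side $A$ satisfies $|A| \geq n/3 - |S| = \Theta(n)$ while still $|A| \leq n/2$, and the expansion-plus-degree bound forces $|S| \geq h_0|A|/3 = \Omega(n)$, a contradiction for suitable $n_0$.

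As for comparison with the paper: the paper does not prove this theorem at all. It is stated with a citation to Grohe and Marx (``see~[Grohe--Marx 2009], Theorem~8'') and used as a black box throughout Sections~\ref{sec:subgraphs} and~\ref{sec:indsubgraphs}. Your sketch is exactly the kind of argument that underlies the cited result, so you have reconstructed what the paper simply imports.
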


\subparagraph*{Degeneracy and dag tree decompositions.}
The \emph{degeneracy} $d(G)$ of a graph $G$ can be defined as the smallest integer $d$ such that there exists an acyclic orientation of $G$ with maximum out-degree bounded by $d$. We can assume that such an orientation is given, since it can be found in time $O(|E(G)|)$~\cite{NesetrildM12}.

Our running time upper bounds rely on the dag tree decomposition of~\cite{Bressan21} and the dynamic program that it yields. Here, we adopt a generalisation of the original definition of dag tree decomposition. We need this generalisation for some of our results in Section~\ref{sec:hom_ub}. Let $\orient{H}=(V,A)$ be any directed acyclic graph. We denote by $S=S(\orient{H})$ the set of sources of $\orient{H}$, that is, the vertices of indegree zero. For any $B \subseteq V$, we let $\orient{H}(B)$ be the subgraph of $\orient{H}$ induced by all vertices reachable from some vertex in $B$. For any tree $\scT$ and any two vertices $B',B''$ of $\scT$, we denote by $\scT(B',B'')$ the unique simple path between $B'$ and $B''$ in $\scT$.

\begin{definition}
\label{def:gdtd}
A (generalised) dag tree decomposition (d.t.d.) of $\orient{H}$ is a rooted tree $\scT=(\bags,\et)$ such that: 
\begin{enumerate}[itemsep=4pt,parsep=0pt,topsep=2pt]
\item $B \subseteq V(H)$ for all $B \in \bags$
\item $\bigcup_{B \in \bags} V(\orient{H}(B)) = V(H)$
\item \label{pr:joint_path} for all $B,B_1,B_2 \in \bags$, if $B \in \scT(B_1,B_2)$ then $V(\orient{H}(B_1)) \cap V(\orient{H}(B_2)) \subseteq V(\orient{H}(B))$
\end{enumerate}
The width of $\scT$ is $\dtw(\scT)=\max_{B \in \scT}|B|$. The (generalised) dag treewidth of $\orient{H}$ is $\dtw(\orient{H}) = \min_{\scT} \dtw(\scT)$ where the minimum is over all possible (generalised) d.t.d.'s of $\orient{H}$.
\end{definition}

It is immediate to verify that if $\scT$ is a d.t.d.\ of $\orient{H}$ according to~\cite{Bressan21} then $\scT$ is a generalised d.t.d.\ according to Definition~\ref{def:gdtd} as well. It is easy to verify that the dynamic program of~\cite{Bressan21} and all subsequent bounds continue to hold when $\scT$ is a generalised d.t.d., too. Hence for our upper bounds we can use $\dtw(\orient{H})$ with $\dtw$ as in Definition~\ref{def:gdtd}.

Now let $H$ be any simple graph. We let $\Sigma(H)$ be the set of all possible acyclic orientations $\orient{H}$ of $H$. We let $R(H)$ be the set of all partitions $\rho$ of $V(H)$ such that the quotient $H/\rho$ is without self-loops. We let $D(H)$ be the set of all supergraphs of $H$ on the same vertex set $V(H)$. Following~\cite{Bressan21}, we define:
\begin{linenomath*}
\begin{align}
    \dtw_1(H) &= \max\left\{ \dtw(\orient{H}) : \orient{H} \in \Sigma(H) \right\}
    \\
    \dtw_2(H) &= \max\left\{ \dtw_1(H/\rho) : \rho \in R(H) \right\}
    \\
    \dtw_3(H) &= \max\left\{ \dtw_2(H') : H' \in D(H) \right\}
\end{align}
\end{linenomath*}
By the discussion above, the parameters $\dtw_1,\dtw_2,\dtw_3$ defined above in terms of Definition~\ref{def:gdtd} are bounded from above by their homologues defined in~\cite{Bressan21}. Therefore from now on we omit the adjective ``generalised'' unless necessary.

\section{F-gadgets}
\label{sec:gadgets}
In this section we introduce our main technical tool: F-gadgets. Informally, a graph $H$ has an $F$-gadget if, by exploding the vertices and subdividing the edges of $F$, we can obtain an induced subgraph of $H$. We will show that $\#\homsprobd(C)$ is hard if $C$ contains graphs that have F-gadgets of unbounded treewidth. As the reader will have noticed, we write ``F-gadget'' to denote the concept and $F$-gadget to specify a graph $F$.

Before giving a formal definition of F-gadgets, we use a toy example to provide some intuition. Suppose that we want to count the copies of $H=C_6$, the cycle on $6$ vertices, in graphs of bounded degeneracy. As shown in~\cite{Bera-ITCS20,Bera-SODA21}, this problem is at least as hard as counting $F=C_3$ in arbitrary graphs. To see why this is the case, consider the following reduction. Given $G$, we compute its $1$-subdivision $G'$. It is immediate to see that $d(G') \le 2$, so $G'$ has bounded degeneracy, and that the copies of $H$ in $G$ are in one-to-one correspondence with the copies of $F$ in $G'$. Thus, counting the copies of $F$ in $G$ reduces to counting the copies of $H$ in $G'$.

The purpose of F-gadgets is to be an obstruction that captures this kind of hardness and applies to counting homomorphisms. That is, the hardness of counting, in a bounded-degeneracy graph $G'$, the homomorphisms from a graph $H$ which looks like a subdivision $F$. However, for technical reasons, the actual definition of $H$ having an $F$-gadget is fairly more complex than just $H$ being the $1$-subdivision of $F$.
\begin{definition}
\label{def:gadgets}
Given two graphs $F$ and $H$, we say that $H$ has an $F$-gadget if $V(H)$ can be partitioned as:\footnote{Formally, this is not a partition since $R$ might be empty, but we decided to abuse notation here for the sake of readability.}
\begin{linenomath*}
\begin{equation}
    V(H) = \dot{\bigcup_{v\in V(F)}} S_v ~~~\dot\cup~~~ \dot{\bigcup_{e\in E(F)}} P_e ~~~\dot\cup~~~ R
    \label{eq:Fgadget_partition}
\end{equation} 
\end{linenomath*}
under the following constraints:
\begin{enumerate}[itemsep=2pt,parsep=0pt,topsep=2pt]
	\item $\forall\, v\in V(F)$: $S_v \ne \emptyset$ and $H[S_v]$ is connected.
	\item $\forall\, e\in E(F)$: $P_e \ne \emptyset$, and if $e=\{u,v\}$, then there are two vertices $s_{u,e} \in S_u$ and $s_{v,e} \in S_v$ such that the subgraph $p_e := H[\{s_{u,e}\} \cup P_e \cup \{s_{v,e}\}]$ is a simple path with endpoints $s_{u,e}$ and $s_{v,e}$.
	\item $\forall\, e_h \in E(H)$, either $e_h \in H[S_v]$ for some $v \in V(F)$, or $e_h \in p_e$ for some $e \in E(F)$, or $e_h$ is incident to a vertex of $R$.
\end{enumerate}
If this is the case, then we also say that the triple $(\mathcal{S},\mathcal{P},R)$, where $\mathcal{S}=\{S_v\}_{v \in V(F)}$ and $\mathcal{P}=\{P_e\}_{e \in E(F)}$, is an $F$-gadget of $H$.
\end{definition}

\noindent Recalling the example above, if $H$ is the $1$-subdivision of $F$, then it is easy to see that $H$ has an $F$-gadget: for each $v\in V(F)$, set $S_v:=\{v\}$, and for each $e\in E(F)$, set $P_e:=\{v_e\}$, where $v_e$ is the vertex of $H$ corresponding to the edge $e$ of $F$, and set $R:=\emptyset$. 
This holds more in general if~$H$ is obtained from $F$ by subdividing each edge at least once.
For another example of F-gadget, see Figure~\ref{fig:F_gadget_intro}.

We need one last piece of notation. Let $C$ be any class of graphs.
\begin{definition}
The set of F-gadgets of $C$ is $\gadgets(C) = \left\{ F \,:\, \exists H \in C \,:\, H \text{ has an } F\text{-gadget} \right\}$.
\end{definition}
For instance, if $C = \{K_n : n \in \mathbb{N}^+ \}$, then $\gadgets(C)$ contains only the singleton graph.

\subsection{Hardness via F-gadgets}
Using F-gadgets, we show that the problem of counting colour-prescribed homomorphisms from~$F$ to~$G$ can be reduced to counting colour-prescribed homomorphisms from $H$ to $G'$, where $H$ has an $F$-gadget and $G'$ is a graph of degeneracy roughly $|V(H)|$.
\begin{lemma}[F-gadget reduction lemma]\label{lem:FGadget_main}
    For some computable function $f$ there exists a deterministic algorithm $\algo$ that, given in input a triple $(F,H,G)$ such that $H$ has an $F$-gadget and $G$ is $F$-coloured,
    in time $f(|H|)\cdot |G|^{O(1)}$ computes an $H$-coloured graph $G'$ such that $d(G')\leq |V(H)|+2$, $|G'| = O(|G|\cdot |H|)$, and $|\cphoms{F}{G}| = |\cphoms{H}{G'}|$.
\end{lemma}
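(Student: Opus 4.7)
The approach is to construct $G'$ by taking, for each vertex $x \in V(G)$, one disjoint copy of the induced subgraph $H[S_{c(x)}]$, and, for each edge $e' = \{x,y\} \in E(G)$, one disjoint copy of the internal path $H[P_{c(e')}]$ connected to the corresponding $S$-copies exactly as $P_e$ is connected to $s_{u,e}$ and $s_{v,e}$ in the path $p_e$; finally, a single global copy of $H[R]$ is added, with each of its vertices joined to every $S$-, $P$-, and $R$-copy vertex whose underlying element of $V(H)$ is a neighbour of the corresponding $r\in R$ in $H$. Formally, $V(G') = V_S \,\dot\cup\, V_P \,\dot\cup\, V_R$ with $V_S = \{(x,s) : x\in V(G),\, s\in S_{c(x)}\}$, $V_P = \{(e',q) : e'\in E(G),\, q\in P_{c(e')}\}$, $V_R = \{x_r : r\in R\}$, each coloured by its second (or only) coordinate, and the edges are placed exactly as described. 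The bound $|V(G')| = O(|V(G)|\cdot |V(H)|)$ and the $f(|H|)\cdot |G|^{O(1)}$ running time are then immediate from the construction.

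To bound the degeneracy of $G'$ by $|V(H)|+2$, I would exhibit an acyclic orientation in which $V_P$ appears first (with each $P_e$-copy traversed along its path), then $V_S$ (with each $S_v$-copy listed in a degeneracy ordering of $H[S_v]$), and finally $V_R$. A straightforward case check then shows that a $V_P$-vertex has at most one later neighbour inside its path copy, at most one in $V_S$ (the far endpoint of $p_e$), and at most $|R|$ in $V_R$; a $V_S$-vertex has at most $|S_v|-1$ later neighbours in its block and at most $|R|$ in $V_R$; and a $V_R$-vertex has at most $|R|-1$ later neighbours. Using the disjointness $|S_v| + |R| \le |V(H)|$ from \eqref{eq:Fgadget_partition}, each of these out-degrees is at most $|V(H)|+2$.

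The heart of the proof is the bijection $\Phi: \cphoms{F}{G} \to \cphoms{H}{G'}$ defined by $\Phi(\varphi)(s) = (\varphi(v),s)$ for $s\in S_v$, by $\Phi(\varphi)(q) = (\{\varphi(u),\varphi(v)\}, q)$ for $q\in P_e$ with $e=\{u,v\}$, and by $\Phi(\varphi)(r) = x_r$ for $r\in R$. Colour-prescription, edge-preservation and injectivity are direct from the construction. The main obstacle is surjectivity: given $\psi \in \cphoms{H}{G'}$, I must show that $\psi$ arises from a unique $\varphi$. This is where condition~(3) of Definition~\ref{def:gadgets} plays an essential role, since it guarantees that every edge of $H$ lies in exactly one block-type used in the construction of $E(G')$, ruling out any ``cross'' edges that could mislead the inverse map. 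The connectedness of $H[S_v]$ together with the absence in $G'$ of edges between distinct $S$-copies pins $\psi(S_v)$ to a single $S_v$-copy and so fixes $\varphi(v)$; the path structure of $p_e$ combined with the $P_e$-colouring forces the image of the interior of $p_e$ to lie in a single $P_e$-copy, and the colours of its two endpoints then force this copy to correspond to an edge $\{\varphi(u),\varphi(v)\} \in E(G)$, so that $\varphi$ is a homomorphism $F\to G$. The $R$-vertices require no argument, since each $r$-colour class of $G'$ is a singleton $\{x_r\}$. Once surjectivity is established, $\Phi$ is a bijection and the identity $|\cphoms{F}{G}| = |\cphoms{H}{G'}|$ follows.
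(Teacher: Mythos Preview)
Your proposal is correct and follows essentially the same construction and proof strategy as the paper (your degeneracy ordering is simply the reverse of the paper's, which is of course equivalent). One minor slip in the case analysis: when $|P_e|=1$ the single $V_P$-vertex has \emph{two} $V_S$-neighbours (both endpoints of $p_e$) and zero later $V_P$-neighbours, so the breakdown ``one in $V_P$, one in $V_S$'' is not quite right---but the total $2+|R|\le |V(H)|+2$ is unaffected.
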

\begin{proof}
    The algorithm $\algo$ works as follows. First, we compute an $F$-gadget $(\mathcal{S},\mathcal{P},R)$ of $H$, see Definition~\ref{def:gadgets}. Note that $(\mathcal{S},\mathcal{P},R)$ exists since by hypothesis $H$ has an $F$-gadget, and the time needed to compute it is a function of $H$. Next, we construct an $H$-coloured graph $G'$, whose colouring is denoted by $c_H$, as follows. We start with $G'$ being the empty graph, and:
	\begin{enumerate}
		\item For each $g\in V(G)$, we add to $G'$ a copy $H[S_v]^g$ of $H[S_v]$, where $v = c_F(g)$.
		For each vertex $h_v^g \in H[S_v]^g$, we set $c_H(h_v^g) := h_v$, where $h_v \in S_v$ is the vertex of which $h_v^g$ is a copy.
		\item For each $\{g,g'\}\in E(G)$ let $(u,v):=(c_F(g),c_F(g'))$. Note that $e:=\{u,v\}\in E(F)$, since $c_F$ is a homomorphism. We add to $G'$ a copy $H[P_e]^{\{g,g'\}}$ of $H[P_e]$. Furthermore, for each vertex $h_e^{\{g,g'\}} \in H[P_e]^{\{g,g'\}}$, we set $c_H(h_e^{\{g,g'\}}):=h_e$, where $h_e \in P_e$ is the vertex of which $h_e^{\{g,g'\}}$ is a copy. Now let $p_e := H[\{s_{u,e}\} \cup P_e \cup \{s_{v,e}\}]$. We add an edge between $s_{u,e}^g \in H[S_u]^g$ and $h_{e}^{\{g,g'\}}$, where $h_e \in P_e$ is the vertex of $P_e$ connected to $s_{u,e}$ in $H$. We do the same with $s_{v,e}^{g'}$.
		\item We add to $G'$ a copy $H[R]'$ of $H[R]$, and for each vertex $r' \in H[R]'$, we set $c_H(r')=r'$. Finally, for each edge $\{r,h\} \in E(H)$ with $r\in R$ and $h \notin R$, we add to $G'$ the edge $\{r',h'\}$ between the copy $r'$ of $r$ and any copy $h'$ of $h$.
	\end{enumerate}
	A pictorial example is given in Figure~\ref{fig:cpred}
	\begin{figure}
    \centering
    \begin{tikzpicture}[
  scale=1.3,
  every node/.style={circle, inner sep=0pt, minimum size=12},
  halfspace/.style={left color=white, right color=white, path fading=west, fill opacity=1}
]
\pgfdeclarelayer{foreground}
\pgfsetlayers{background,main,foreground}

\begin{scope}[shift={(-5,0)}]
\node[graph] (a) at (0,0) {$a$};
\node[graph] (b) at (0,-1) {$b$};
\node[graph] (c) at (1,-1) {$c$};
\draw (a) -- (b) -- (c) -- (a);
\node (F) at (1.4,.3) {$F$};
\end{scope}

\begin{scope}
\node[graph] (ga1) at (0,0) {$a$};
\node[graph] (gb1) at (0,-1) {$b$};
\node[graph] (gc1) at (1,-1) {$c$};
\node[graph] (ga2) at (1,0) {$a$};
\draw (ga1) -- (gb1) -- (gc1) -- (ga2);
\node (gb2) at (-1,0) {};
\node (ga3) at (2,-1) {};
\draw[dashed] (ga1) -- (gb2);
\draw[dashed] (gc1) -- (ga3);
\node (G) at (2,.3) {$G$};
\end{scope}

\begin{scope}[shift={(-6,-5)},scale=1.5]
\begin{pgfonlayer}{foreground}
\node[graph] (Ha1) at (0,1.2) {$1$};
\node[graph] (Ha2) at (.4,1.2) {$3$};
\node[graph] (Ha3) at ($(Ha1)!.5!(Ha2)+(90:.3)$) {$2$};
\node[graph] (Hb1) at (0,0) {$5$};
\node[graph] (Hc1) at (1.2,0) {$7$};
\node[graph] (Hc2) at (1.5,.3) {$10$};
\node[graph] (Hc3) at (1.5,-.3) {$8$};
\node[graph] (Hc4) at (1.8,0) {$9$};
\node[graph] (Hab) at ($(Ha1)!.5!(Hb1)$) {$4$};
\node[graph] (Hbc) at ($(Hb1)!.5!(Hc1)$) {$6$};
\node[graph] (Hac) at ($(Ha2)!.5!(Hc2)$) {$11$};
\draw (Ha1) -- (Ha2) -- (Ha3) -- (Ha1);
\draw (Ha1) -- (Hab) -- (Hb1) -- (Hbc) -- (Hc1) -- (Hc2) -- (Hc3) -- (Hc1) -- (Hc4) -- (Hc2) -- (Hac) -- (Ha2);
\end{pgfonlayer}
\node (H) at ($(Hac)+(1,.8)$) {$H$};
\end{scope}

\pgfmathanglebetweenpoints{\pgfpointanchor{Ha2}{center}}{\pgfpointanchor{Hc2}{center}}
\edef\angleAC{\pgfmathresult}
\path[stainEdge,rounded corners=9] ($(Ha1)!.3!(Hb1)+(-.2,.25)$) rectangle ($(Ha1)!.7!(Hb1)+(.2,-.25)$) {};
\path[stainEdge,rounded corners=9] ($(Hb1)!.33!(Hc1)+(-.3,.2)$) rectangle ($(Hb1)!.67!(Hc1)+(.3,-.2)$) {};
\path[stainEdge,rounded corners=9,rotate=\angleAC] ($(Ha2)!.27!(Hc2)+(-.3,.2)$) rectangle ($(Ha2)!.73!(Hc2)+(.3,-.2)$) {};
\draw[stain] plot[smooth cycle,tension=.9] coordinates {($(Ha1)+(-.25,-.15)$) ($(Ha2)+(.25,-.15)$) ($(Ha3)+(0,.25)$)};
\draw[stain] plot[smooth cycle,tension=.5] coordinates {($(Hc1)+(-.3,0)$) ($(Hc2)+(0,.3)$) ($(Hc4)+(.3,0)$) ($(Hc3)+(0,-.3)$)};
\draw[stain] (Hb1) circle (.3);
\node[above left=13pt] (Sa) at (Ha1) {$S_a$};
\node[below left=11pt] (Sb) at (Hb1) {$S_b$};
\node[above right=12pt] (Sc) at (Hc2) {$S_c$};
\node[left=9pt] (Pab) at (Hab) {$P_{ab}$};
\node[below=9pt] (Pbc) at (Hbc) {$P_{bc}$};
\node[above right=9pt] (Pac) at (Hac) {$P_{ac}$};

\begin{scope}[shift={(0,-5)},scale=1.7]
\node[graph] (GHa1) at (0,1.2) {$1$};
\node[graph] (GHa2) at (.4,1.2) {$3$};
\node[graph] (GHa3) at ($(GHa1)!.5!(GHa2)+(90:.3)$) {$2$};
\begin{scope}[shift={(1.1,0)}]
\node[graph] (GHa12) at (0,1.2) {$1$};
\node[graph] (GHa22) at (.4,1.2) {$3$};
\node[graph] (GHa32) at ($(GHa12)!.5!(GHa22)+(90:.3)$) {$2$};
\draw (GHa12) -- (GHa22) -- (GHa32) -- (GHa12);
\end{scope}
\node[graph] (GHb1) at (0,0) {$5$};
\node[graph] (GHc1) at (1.2,0) {$7$};
\node[graph] (GHc2) at (1.5,.3) {$10$};
\node[graph] (GHc3) at (1.8,0) {$9$};
\node[graph] (GHc4) at (1.5,-.3) {$8$};
\node[graph] (GHab) at ($(GHa1)!.5!(GHb1)$) {$4$};
\node[graph] (GHbc) at ($(GHb1)!.5!(GHc1)$) {$6$};
\node (GHb2) at ($(GHa1)-(.63,0)$) {};
\draw (GHa1) -- (GHa2) -- (GHa3) -- (GHa1);
\draw (GHa1) -- (GHab) -- (GHb1) -- (GHbc) -- (GHc1) -- (GHc2) -- (GHc3) -- (GHc1) -- (GHc4) -- (GHc2);
\draw[dashed] (GHa1) -- (GHb2);
\begin{scope}[shift={(2.8,-1.2)}]
\node (GHa31) at ($(GHc2)+(1.2,0)$) {};
\end{scope}
\node[graph] (GHc2a22) at ($(GHc2)!.5!(GHa22)$) {$11$};
\draw (GHc2) -- (GHc2a22) -- (GHa22);

\node (GHc2a31) at ($(GHc2)!.5!(GHa31)$) {};
\draw[dashed] (GHc2) -- (GHc2a31);
\node (G1) at ($(GHa22)+(.6,.2)$) {$G'$};
\end{scope}
\end{tikzpicture}
    \caption{the construction of $G'$. Top: a graph $F$ and (a part of) a graph $G$ that is $F$-coloured. Bottom: a graph $H$ that has an $F$-gadget, and (a part of) the $H$-coloured graph $G'$ computed by the reduction.}
    \label{fig:cpred}
    \end{figure}
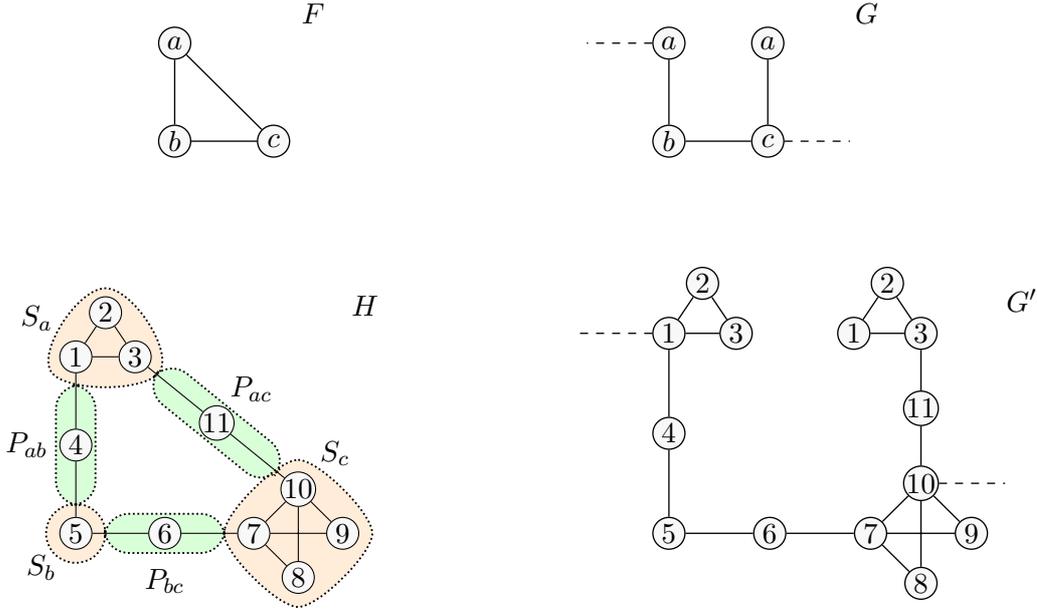
	Observe that $|G'| = O(|G| \cdot |H|)$, since for every vertex and/or edge of $G$ we add to $G'$ a subgraph of $H$. This proves the running time bound.
	It remains to prove (i) that $c_H$ is a valid $H$-colouring of $G$, (ii) that $d(G') \le |V(H)|+2$, (iii) that $|\cphoms{F}{G}| = |\cphoms{H}{G'}|$.
\begin{claim}
	$c_H$ is an $H$-colouring of $G'$.
\end{claim}
\begin{claimproof}
    By definition of $H$-colouring, we are claiming that $c_H\in \homs{G'}{H}$. By construction, $c_H$ is a mapping from $V(G')$ to $V(H)$. Thus, it remains to show that $c_H$ is edge-preserving. Let $e'\in E(G')$. According to the step of the construction of $G'$ where $e'$ is created:
	\begin{enumerate}
		\item $e' \in H[S_v]^g$ for some $v\in V(F)$ and $g\in V(G)$.
		Since $H[S_v]^g$ is a copy of $H[S_v]$, then $e'=\{h_v^g,\hat{h}_v^g\}$ where $\{h_v,\hat{h}_v\}$ is an edge of $H[S_v]$.
		Since by construction we set precisely $c_H(h_v^g)=h_v$ and $c_H(\hat{h}_v^g)=\hat{h}_v$, in this case $c_H$ preserves $e'$.
		\item $e' \in H[P_e]^{\{g,g'\}}$, or $e'$ connects $s_{u,e}^g \in H[S_u]^g$ to $H[P_e]^{\{g,g'\}}$, for some $u \in V(F)$ and $e \in E(F)$, and for some $g,g' \in V(G')$.
		\begin{enumerate}
		    \item if $e' \in H[P_e]^{\{g,g'\}}$, then $e'=\{h_e^{\{g,g'\}}, \hat{h}_e^{\{g,g'\}}\}$ where $\{h_e, \hat{h}_e\}$ is an edge of $H[P_e]$.
    		Since we set $c_G(h_e^{\{g,g'\}})=h_e$ and $c_G(\hat{h}_e^{\{g,g'\}})=\hat{h}_e$, in this case $c_H$ preserves $e'$.
		    \item if $e'$ connects $s_{u,e}^g \in H[S_u]^g$ to $H[P_e]^{\{g,g'\}}$, then $e'=\{s_{u,e}^g,h_e^{\{g,g'\}}\}$.
		    In this case, again by construction, $\{s_{u,e},h_e\} \in E(H)$, and we set precisely $c_H(s_{u,e}^g) = s_{u,e}$ and $c_H(h_e^{\{g,g'\}})=h_e$. Thus, in this case $c_H$ preserves $e'$.
		\end{enumerate}
        \item $e' \in H[R]'$, or $e'$ connects some $r' \in H[R]'$ to a copy $h^g$ of some vertex $h \in V(H)$.
        \begin{enumerate}
            \item if $e' \in H[R]'$ then we are done since $c_H$ is the identity on $V(H[R]')$.
            \item if $e'$ connects some $r' \in H[R]'$ to $h^g$, then, by construction $(r,h) \in E(H)$, where $r \in R$ is the vertex of which $r'$ is a copy. So in this case $c_H$ preserves $e'$ as well.
        \end{enumerate}
	\end{enumerate} 
This concludes the proof.
\end{claimproof}
\begin{claim}
	$d(G')\leq |V(H)|+ 2$
\end{claim}
\begin{claimproof}
    Order the vertices of $G'$ as follows. First, any vertex $u$ such that $c_H(u) \in R$.
	Next, any vertex $u$ such that $c_H(u) \in S_v$ for some $v\in V(F)$.
	Finally, any vertex $u$ such that $c_H(u)\in P_e$ for some $e\in E(F)$.
	We denote the three sets as  $\mathcal{R}'$, $\mathcal{S}'$, and $\mathcal{P}'$.
	We claim that, in this ordering, any vertex $u$ has at most $|V(H)|+2$ neighbors preceding it.
	This is straightforward to see if $u \in \mathcal{R}'$. Suppose instead that $u \in \mathcal{S}'$.
	Then, by the construction of $G'$, we have $u = h_v^g$ and $h_v^g$ has at most $|S_v^g|-1 + |R| = |S_v|-1+|R| \leq |V(H)|$ neighbours in $\mathcal{R}'\cup \mathcal{S}'$.
	Finally, suppose that $u \in \mathcal{P}'$.
	Then $u$ has at most $|R|+2\leq |V(H)|+2$ neighbors in total --- the two neighbours in the copy of a path $p_e$, and vertices in $R$.
\end{claimproof}
\begin{claim}
	$|\cphoms{F}{G}| = |\cphoms{H}{G'}|$.
\end{claim}
\begin{claimproof}
First, we show that $|\cphoms{F}{G}| \leq |\cphoms{H}{G'}|$.
To this end, for each $\varphi \in \cphoms{F}{G}$ we show some $\hat{\varphi} \in \cphoms{H}{G'}$ so that the resulting map $\mu : \varphi \mapsto \hat\varphi$ is injective.
Let $h \in V(H)$.
If $h \in S_v$ for some $v \in V(F)$, then $\hat{\varphi}(h)=h^{g}_v$, where $g=\varphi(v)$.
If $h \in P_e$ for some $e \in E(F)$, then $\hat{\varphi}(h) = h^{\{g,g'\}}_e$ where $\{g,g'\}=\varphi(e)$.
If $h \in R$, then $\hat{\varphi}(h)=h$.
Using the construction of $G'$ one can check that $\hat{\varphi}\in \cphoms{H}{G'}$, and that $\mu : \varphi \mapsto \hat{\varphi}$ is injective.

    Second, we show that $|\cphoms{H}{G'}|\leq |\cphoms{F}{G}|$.
	As above, for each $\hat{\varphi} \in \cphoms{H}{G'}$ we show some $\varphi \in \cphoms{F}{G}$ so that the resulting map $\mu': \hat\varphi \mapsto \varphi$ is injective.
	In fact, we will use $\mu' = \mu^{-1}$.
	Let then $\hat{\varphi} \in \cphoms{H}{G'}$.
	The crucial observation is that for any $v \in V(F)$ we must have $\hat{\varphi}(S_v)=V(H[S_v]^g)$ for some $g$ such that $c_F(g)=v$.
	Similarly, for any $e=\{u,v\} \in E(F)$ we must have $\hat{\varphi}(P_e) = H[P_e]^{\{g,g'\}}$ where $c_F(\{g,g'\})=e$.
	This holds because $\hat\varphi \in \cphoms{H}{G'}$ and by the construction of $G'$.
	Therefore, for any $v \in V(F)$ we simply let $\varphi(v) := g$ if $\hat{\varphi}(S_v)=S^g_v$. The argument above implies that $\varphi$ is colour-prescribed and  edge-preserving.
    Finally, let us show that $\mu' : \hat\varphi \mapsto \varphi$ is injective. Assume indeed that $\hat{\varphi} \neq \hat{\varphi}'$ for some $\hat{\varphi},\hat{\varphi}'\in \cphoms{H}{G'}$, and let $\varphi:=\mu'(\hat{\varphi})$ and $\varphi':=\mu'(\hat{\varphi}')$. Note that $\hat{\varphi}$ and $\hat{\varphi}'$ cannot differ on $R$ as they are colour-prescribed and thus the identity on $R$. If they differ on some $h \in P_e$ for some $e$, they must also differ on some $h \in S_v$ for some $v\in V(F)$ to which $e$ is incident, as otherwise one of $\hat{\varphi}$ and $\hat{\varphi}'$ would not violated its own colour prescription. Consequently, we have $\varphi \neq \varphi'$, so $\mu'$ is injective.
	
	For completeness we point out, as anticipated, that $\mu'=\mu^{-1}$, as one can check. 
\end{claimproof}
\noindent This concludes the proof of Lemma~\ref{lem:FGadget_main}.
\end{proof}

\section{Classification for $\#\subsprob_{\text{D}}(C)$}\label{sec:subgraphs}
In this section we give our complete and almost-tight classification for $\#\subsprobd(C)$.
We recall that $\#\subsprobd(C)$ is the problem that expects as input a graph $H\in C$ and an arbitrary graph $G$, and whose goal is to compute the number of subgraphs of $G$ that are isomorphic to $H$, with the parameterization given by $|H|+d(G)$.
We show:
\begin{theorem}\label{thm:main_subgraphs}
    If $C$ has bounded induced matching number then $\#\subsprobd(C)$ is fixed-parameter tractable, and can be solved in time
    \begin{linenomath*}
    \begin{equation}
    f(|H|,d(G))\cdot |V(G)|^{\max(1,\imn(H))}\cdot \log |V(G)|
    \end{equation}
    \end{linenomath*}
    for some computable function $f$.
    Otherwise, $\#\subsprobd(C)$ is $\#\W{1}$-hard and cannot be solved in time
    \begin{linenomath*}
    \begin{equation}
    f(|H|,d(G))\cdot |V(G)|^{o\left(\frac{\imn(H)}{\log (\imn(H))}\right)}
    \end{equation}
    \end{linenomath*}
    for any function $f$, unless ETH fails.
\end{theorem}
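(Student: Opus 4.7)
The plan is to prove the upper and the lower bound separately, following the general strategy outlined in Sections~\ref{sub:techniques_ub} and~\ref{sub:techniques_lb}.

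\textbf{Upper bound.} The plan is to invoke the dynamic programming algorithm of~\cite{Bressan21}, whose running time on subgraph counting is $f(k,d)\cdot n^{\dtw_2(H)}\cdot \log n$, and to show the structural inequality $\dtw_2(H)\leq \max(1,\imn(H))$. Since $\dtw_2(H) = \max_{\rho\in R(H)}\dtw_1(H/\rho)$, it suffices to prove that $\dtw_1(H')\leq \max(1,\imn(H'))$ for every simple graph $H'$, and to observe that taking quotients cannot increase the induced matching number. To bound $\dtw_1(H')$, I will take an arbitrary acyclic orientation $\orient{H'}$ and construct a (generalised) dag tree decomposition of width $\max(1,\imn(H'))$. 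The idea is to pick a maximum induced matching $M\subseteq E(H')$ and use, as bags of $\scT$, sets of sources whose ``reachability closures'' $\orient{H'}(B)$ together cover $V(H')$; the bound $\imn(H')$ controls the number of sources that can simultaneously appear in the intersection of two reachable sets, via a standard argument relating pairwise disjoint ``in-paths'' to induced matchings. This is the same kind of combinatorial accounting used in~\cite{Bressan19,Bressan21} to bound $\dtw_1(H)\leq \lfloor k/4\rfloor + 2$, but sharpened using $\imn$ instead of $k$.

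\textbf{Lower bound.} The plan follows the two-step route described in Section~\ref{sub:techniques_lb}. First, apply complexity monotonicity in the degenerate setting (the version of~\cite{CurticapeanDM17,ChenM16} lifted to bounded-degeneracy host graphs by~\cite{Gishboliner20}) to express $|\subs{H}{G}|$, via Facts~\ref{fact:sub_hom_basis}, as a finite linear combination of terms $|\homs{H/\rho}{G}|$ with coefficients $a_H(H/\rho)=\mu(\bot,\rho)/|\auts{H}|\neq 0$ for every partition $\rho\in R(H)$. By complexity monotonicity, computing $|\subs{H}{G}|$ is at least as hard as computing the single term $|\homs{H/\rho^\star}{G}|$ for the hardest $\rho^\star$. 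Second, I will prove the key structural claim: \emph{for every graph $H$ there exists $\rho^\star\in R(H)$ such that $H/\rho^\star$ has an $F$-gadget where $F$ is a $3$-regular expander with $\tw(F)=\Omega(\imn(H))$.} Given a maximum induced matching $M=\{u_iv_i : 1\leq i\leq \imn(H)\}$ of $H$, the quotient $\rho^\star$ will be chosen so that $H/\rho^\star$ consists of the $2\imn(H)$ vertices of $M$ plus subdivided connecting paths coming from $V(H)\setminus V(M)$; plugging in an expander $F$ from Theorem~\ref{thm:nice_expanders} on the index set $\{1,\dots,\imn(H)\}$ then yields the $F$-gadget, with $S_{u_i}=\{u_i\}$, $S_{v_i}=\{v_i\}$, and $P_e$ coming from an induced path realising the edge $e$.

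\textbf{Combining.} Once the key claim is in place, the $F$-gadget reduction lemma (Lemma~\ref{lem:FGadget_main}) turns computing $|\cphoms{F}{G'}|$ into computing $|\cphoms{H/\rho^\star}{G''}|$ on a graph $G''$ of degeneracy $O(|V(H)|)$ and size $\poly(|G'|)$. Chaining this with the hardness of $\#\cphomsprob$ on classes of unbounded treewidth (Theorem~\ref{thm:hardness_bottleneck}) and with complexity monotonicity yields $\#\W{1}$-hardness of $\#\subsprobd(C)$ whenever $\imn(C)$ is unbounded, together with the fine-grained lower bound $f(k,d)\cdot n^{o(\imn(H)/\log\imn(H))}$ via the same $(\tw/\log \tw)$-type bound inherited from~\cite{Marx10}. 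Standard reductions between colour-prescribed and uncoloured versions~\cite{DalmauJ04,DorflerRSW19} handle the passage back to $\#\subsprobd$.

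\textbf{Main obstacle.} The technically delicate step is the key structural claim: producing a quotient $\rho^\star$ such that $H/\rho^\star$ admits an $F$-gadget for a specific $3$-regular expander $F$ of treewidth $\Omega(\imn(H))$. The issue is that edges of $H$ outside the induced matching $M$ may create unwanted chords in the induced subgraph defined by the $S_v$'s and $P_e$'s, violating condition~(3) of Definition~\ref{def:gadgets}. Handling this will require choosing $\rho^\star$ carefully so that such ``extra'' edges are either absorbed into the $S_v$'s or pushed into the residual set $R$ in a way that preserves the $F$-gadget structure; it is at this step that the definition of $F$-gadgets (with the auxiliary set $R$ and the connectedness condition on $S_v$) pays off.
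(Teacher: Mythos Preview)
Your overall scaffolding is right, and the upper-bound plan matches the paper's: invoke the $\tau_2$-based algorithm of~\cite{Bressan21} and prove $\tau_2(H)\le\max(1,\imn(H))$, using that quotients cannot increase $\imn$. However, your sketch of \emph{how} to bound $\dtw_1(\orient{H'})$ is vague. The paper does not build a d.t.d.\ directly from a maximum induced matching; instead it proves that every acyclic orientation has a \emph{$t$-kernel} (a set of $t$ sources from which every non-source is reachable) with $t\le\max(1,\imn)$, and that a $t$-kernel immediately yields a d.t.d.\ of width $t$ (root bag $=K$, one child $\{s\}$ per remaining source). The bound $t\le\imn$ comes from a greedy peeling argument: keep removing sources whose reachable set is covered by the rest; for each surviving source $s_i$ there is a private out-neighbour $u_i$, and $\{\{s_i,u_i\}\}$ is then shown to be an \emph{induced} matching. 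Your ``pairwise disjoint in-paths'' accounting is not the mechanism used.

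The lower-bound plan has a genuine gap in the key structural claim. Your proposed construction --- take the $2\imn(H)$ matching endpoints as singleton blocks $S_{u_i},S_{v_i}$ and realise the expander edges via ``induced paths coming from $V(H)\setminus V(M)$'' --- does not type-check (an $F$-gadget for $F$ on $\imn(H)$ vertices needs $\imn(H)$ blocks $S_v$, not $2\imn(H)$) and, more seriously, there is no reason such paths exist at all: if $H$ is itself a matching, $V(H)\setminus V(M)=\emptyset$. The obstacle you flag (extra edges creating chords) is an artefact of this wrong construction.

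The paper's trick is much simpler and sidesteps all of this. Let $F_k$ be the expander and $\hat F$ its $1$-subdivision; then $|E(\hat F)|\le 2ck\le\ell=\imn(H)$. Take an induced matching $M$ in $H$ of exactly $|E(\hat F)|$ edges. Since $H[M]$ is a bare matching with $|E(\hat F)|$ edges, there is a partition $\sigma$ of $V(M)$ with $H[M]/\sigma\cong\hat F$ (just identify endpoints according to the incidence structure of $\hat F$). Extend $\sigma$ to a partition $\rho$ of $V(H)$ by putting every vertex outside $M$ in its own singleton block. Because $H[M]$ is \emph{induced}, $H/\rho$ has no self-loops and contains $\hat F$ as an induced subgraph; all remaining vertices go into $R$, and every edge touching them is automatically ``incident to $R$'', so condition~(3) of Definition~\ref{def:gadgets} is free. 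This gives the $F_k$-gadget in $H/\rho$ directly, with no path-finding and no chord issue. Once you have this, your ``Combining'' paragraph is exactly the paper's argument.
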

The upper bounds and the lower bounds are proven separately in the next two subsections.

\subsection{Upper Bounds}
For the first part of Theorem~\ref{thm:main_indsubgraphs}, we rely on a result by one of the authors~\cite[Theorem 9]{Bressan21}.
The result states that $\#\subsprobd(C)$ can be solved in time
\begin{linenomath*}
\begin{equation}
    f(|H|,d(G))\cdot |V(G)|^{\tau_2(H)} \cdot \log |V(G)|
\end{equation}
\end{linenomath*}
for some computable function $f$, where $\tau_2(H)$ is the maximum dag treewidth of any quotient $\widehat{H}$ of $H$ (with self-loops deleted), see Section~\ref{sec:prelims}. Therefore, we only need to show that $\tau_2(H) \le \max(1,\imn(H))$.
Let $\orient{H}$ be a generic acyclic orientation of $H$, and let $S=S(\orient{H})$ be the set of sources of $\orient{H}$.
For any subset $U \subseteq V(H)$, we write $\orient{H}(U)$ for the set all vertices of $H$ that, in $\orient H$, are reachable from some vertex of $U$.
For any $t \ge 1$, a $t$\emph{-kernel} of $\orient{H}$ is a subset $K \subseteq S$ with $|K|=t$ such that $V(H) \setminus S \subseteq \orient{H}(K)$; that is, any non-source node of $\orient{H}$ is reachable from some node of $K$.
Note that by definition $t$-kernels are nonempty, since $t \ge 1$. We prove two intermediate results and, then, we prove that $\tau_2(H) \le \max(1,\imn(H))$.

\begin{lemma}\label{lem:kernel_and_tau}
	If $\orient{H}$ has a $t$-kernel, then $\dtw(\orient{H}) \le t$.
\end{lemma}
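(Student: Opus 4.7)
The strategy is to build a (generalised) dag tree decomposition of $\orient{H}$ explicitly from the $t$-kernel, and then verify the three conditions of Definition~\ref{def:gdtd}. Let $K \subseteq S$ be a $t$-kernel of $\orient{H}$. I will take $\scT$ to be a star: the root bag is $K$ itself (which contains exactly $t$ sources), and for every source $s \in S \setminus K$ I attach a leaf bag $\{s\}$ as a child of the root. Every bag is thus a subset of $V(H)$, and the width of $\scT$ is $\max(t, 1) = t$, giving the claimed bound $\dtw(\orient{H}) \le t$ as soon as conditions (1)--(3) of Definition~\ref{def:gdtd} are checked.

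Condition (1) is immediate from the construction. For condition (2), note that $V(\orient{H}(K))$ contains $K$ together with every vertex reachable from $K$, which by the $t$-kernel property of $K$ already covers $K$ and every non-source vertex; the leaf bags $\{s\}$ then cover the remaining sources, since $s \in V(\orient{H}(\{s\}))$. Hence $\bigcup_{B \in \bags} V(\orient{H}(B)) = V(H)$.

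The only nontrivial point is condition (3). The only path in $\scT$ that contains an intermediate bag is a path of the form $\{s_1\} - K - \{s_2\}$ with $s_1 \ne s_2$ and $s_1, s_2 \in S \setminus K$; all other pairs of bags are adjacent, in which case the path-bag inclusion is vacuous. So fix such a pair and let $v \in V(\orient{H}(\{s_1\})) \cap V(\orient{H}(\{s_2\}))$. Then $v$ is reachable from both $s_1$ and $s_2$ in $\orient{H}$. Because $s_1 \ne s_2$ are both sources, neither of them can be reachable from the other, so $v$ is not a source of $\orient{H}$. The $t$-kernel property of $K$ then forces $v \in V(\orient{H}(K))$, which is exactly what is required.

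The main (and really only) obstacle I anticipate is keeping the quantifiers of condition~(3) correct: one must carefully argue that $v$ cannot coincide with $s_1$ or $s_2$, which is where the sourcehood of the kernel vertices (together with acyclicity of $\orient{H}$) enters. Once this is observed, the construction and the verification are routine, and the width bound follows directly from $|K| = t$.
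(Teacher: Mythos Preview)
Your proposal is correct and follows exactly the same construction as the paper: a star-shaped d.t.d.\ with root bag $K$ and singleton leaves $\{s\}$ for $s \in S \setminus K$. The paper simply asserts that the three properties of Definition~\ref{def:gdtd} are immediate, whereas you spell out the verification (especially the key observation for property~(3) that any $v$ reachable from two distinct sources cannot itself be a source and hence lies in $\orient{H}(K)$); this is a welcome level of detail.
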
 
\begin{proof}
    Let $K$ be a $t$-kernel of $\orient H$. We construct a d.t.d.\ $\scT=(\bags,\et)$ of $\orient H$ as follows. The root of $\scT$ is $K$, and for each source $s \in S\setminus K$, the bag $B_s = \{s\}$ is a child of $K$ in $\scT$. One can immediately check that $\scT$ satisfies all three properties of Definition~\ref{def:gdtd} and therefore is a valid d.t.d.\ for $\orient H$. Finally, note that $|B| \le t$ for all $B \in \bags$, so $\dtw(\scT) \le t$, which proves that $\dtw(\orient H) \le t$.
\end{proof}

\begin{lemma}\label{lem:match_and_kernel}
	$\orient{H}$ has a $t$-kernel with $t \le \max(1,\imn(H))$.
\end{lemma}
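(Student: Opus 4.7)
The goal is to exhibit a kernel $K \subseteq S$ of size at most $\max(1,\imn(H))$. I will argue by taking an \emph{inclusion-minimal} kernel of $\orient{H}$ and showing that its size is bounded by $\imn(H)$ (except in the trivial edge-free case). The degenerate case is easy: if $H$ has no edges, then every vertex of $\orient{H}$ is a source, so $V(H) \setminus S = \emptyset$ and any singleton $\{s\} \subseteq S$ is a $1$-kernel; since $\max(1,\imn(H)) = \max(1,0) = 1$, this case is settled. From now on I assume $H$ has at least one edge.

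Since $S$ itself is a kernel (every non-source is reachable from a source), we may pick an inclusion-minimal kernel $K=\{s_1,\ldots,s_t\}$. By minimality, for every $i$ there exists a non-source $v_i$ that is reachable from $s_i$ in $\orient{H}$ but from no other $s_j \in K$; otherwise $K \setminus \{s_i\}$ would still be a kernel. Fix a shortest directed path $P_i$ in $\orient H$ from $s_i$ to $v_i$, and let $w_i$ be the second vertex on $P_i$, so that $(s_i,w_i)$ is an arc of $\orient{H}$ and $v_i$ is reachable in $\orient H$ from $w_i$. I will show that the undirected edges $e_i=\{s_i,w_i\}$ form an induced matching of size $t$ in $H$, which immediately yields $t \leq \imn(H)$.

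For disjointness of the $e_i$: the sources $s_1,\ldots,s_t$ are pairwise distinct by assumption; each $w_i$ has an incoming arc and so is not a source, hence $w_i \neq s_j$ for all $i,j$; finally, if $w_i = w_j$ for $i\neq j$, then $v_i$ would be reachable from $s_j$ via $s_j \to w_j = w_i \to \cdots \to v_i$, contradicting the choice of $v_i$. For the absence of extra edges: any edge $\{s_i,s_j\}$ with $i\neq j$ is impossible since both endpoints are sources (so neither can be the head of the oriented version). An edge $\{s_i,w_j\}$ with $i\neq j$ would be oriented as $(s_i,w_j)$ in $\orient{H}$ (because $s_i$ is a source), making $v_j$ reachable from $s_i$, a contradiction. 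An edge $\{w_i,w_j\}$ with $i\neq j$ is oriented either as $(w_i,w_j)$ or $(w_j,w_i)$, and in either case the target vertex $v_j$ or $v_i$ becomes reachable from the ``wrong'' source, again a contradiction. Thus $\{e_1,\ldots,e_t\}$ is an induced matching in $H$, proving $t \leq \imn(H) \leq \max(1,\imn(H))$.

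The main subtlety—and what the plan above turns on—is the minimality step: one must extract, for each $s_i$, a witness $v_i$ \emph{uniquely} reachable from $s_i$ among the chosen sources, and then pick $w_i$ as the \emph{very first} step of a shortest path towards that witness. Without the shortest-path choice, side arcs could destroy the induced property of the matching; without the uniqueness witness $v_i$, the reachability arguments that rule out the edges $\{s_i,w_j\}$ and $\{w_i,w_j\}$ would fail. Once those two ingredients are in place, the verification that the $e_i$'s form an induced matching is essentially a routine case analysis based on the direction of each putative extra edge in $\orient{H}$.
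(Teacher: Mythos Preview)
Your proof is correct and follows essentially the same approach as the paper: take an inclusion-minimal kernel, attach to each $s_i$ an out-neighbour witnessing its indispensability, and check by the same three-case analysis that these edges form an induced matching. The only cosmetic difference is that the paper picks the witnessing out-neighbour $u_i$ directly (arguing that if no such out-neighbour existed then $s_i$ would be redundant), whereas you route through a deeper witness $v_i$ and take the first step $w_i$ towards it; note that your shortest-path requirement is actually unnecessary---any directed path works, since if $w_i$ were reachable from some $s_j \ne s_i$ then $v_i$ would be too.
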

\begin{proof}
    For any subset $U\subseteq S$ we let $\orient{H}^{+}(U)= \orient{H}(U) \setminus U$. In words, $\orient{H}^+(U)$ is the set of non-sources reachable from $U$. Now consider the following iterative procedure. We start by setting $U=S$. Note that this implies that $\orient{H}^+(U) = V(H) \setminus S$. Then, while there is some $s \in U$ such that $\orient H^+(U\setminus\{s\})= H^+(U)$, we set $U \leftarrow U \setminus\{s\}$.
    Note that, since we start with $\orient{H}^+(U) = V(H) \setminus S$ and at each step we ensure that $\orient{H}^+(U\setminus\{s\}) = \orient{H}^+(U)$, then $U$ is always a $|U|$-kernel, unless $U \ne \emptyset$.
	
	Now consider the final value of $U$. Observe that, if $U=\emptyset$, then $H$ is necessarily the independent set. In this case, $\imn(H)=0$. We choose any $s \in S$ and set $K=\{s\}$, and $K$ will be a $t$-kernel of $H$, with $t \le 1 = \max(1,\imn(H))$.
	
	Suppose instead that $U=\{s_1,\dots,s_k\}\ne \emptyset$. In this case, we set $K=U$. Note that, by construction, $K$ is a $k$-kernel of $\orient H$. Now, we claim that $k = |K| \le \imn(H)$.
	First, observe what follows. For each $s_i \in U$, there must exist an arc $(s_i,u_i) \in \orient{H}$ such that for all $s_j \ne s_i$ we have $u_i \notin \orient{H}(s_j)$. Indeed, if this was not the case, then $\orient{H}^+(\{s_i\}) \subseteq \orient{H}^+(U \setminus \{s_i\})$.
	This implies that $\orient{H}^+(U \setminus \{s_i\}) = \orient{H}^+(U)$, so the procedure should have removed $s_i$ from $U$, a contradiction.
	Now, we claim that $M=\{\{s_1,u_1\},\dots,\{s_k,u_k\}\}$ is an induced matching of $H$. To see that $M$ is a matching, simply note that by the above discussion the pairs $\{s_1,u_1\}$ are pairwise disjoint edges of $H$. To see that $M$ is induced, observe what follows:
	\begin{enumerate}
		\item $E(H)$ cannot contain $\{s_i,s_j\}$, as both $s_i$ and $s_j$ are sources.
		\item $E(H)$ cannot contain $\{s_i,u_j\}$ for $i \neq j$, as otherwise $u_j$ would be reachable from $s_i$, contradicting the fact that $u_j$ is reachable only from $s_j$.
		\item $E(H)$ cannot contain $\{u_i,u_j\}$, since otherwise $u_i$ would be reachable from $s_j$, or vice versa, which is again a contradiction.
	\end{enumerate} 
	Thus $M$ is an induced matching of $H$, and therefore $|M| \le \imn(H)$. Moreover, $\imn(H) \ge 1$ as $|M| \ge 1$. Hence again $|M|\le \max(1,\imn(H))$, as claimed.
\end{proof}

\begin{lemma}\label{lem:tau2_leq_imn}
    Any graph $H$ satisfies $\tau_2(H) \leq \max(1,\imn(H))$.
\end{lemma}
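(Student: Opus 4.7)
The plan is to assemble the lemma from the two preceding lemmas combined with one extra observation about how induced matchings behave under quotients.

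First I would unfold the definitions. Since $\tau_2(H) = \max\{\tau_1(H/\rho) : \rho \in R(H)\}$ and $\tau_1(\widehat H) = \max\{\dtw(\orient{\widehat H}) : \orient{\widehat H} \in \Sigma(\widehat H)\}$, it suffices to fix an arbitrary quotient $\widehat H = H/\rho$ without self-loops and an arbitrary acyclic orientation $\orient{\widehat H}$ of $\widehat H$, and show that
\[
\dtw(\orient{\widehat H}) \le \max(1, \imn(H)).
\]
Chaining Lemma~\ref{lem:match_and_kernel} (applied to $\orient{\widehat H}$) and Lemma~\ref{lem:kernel_and_tau} already yields $\dtw(\orient{\widehat H}) \le \max(1, \imn(\widehat H))$, so the remaining step is to establish that taking quotients cannot increase the induced matching number, i.e.\ $\imn(\widehat H) \le \imn(H)$.

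The key step is to lift an induced matching from $\widehat H$ back to $H$. Given an induced matching $\widehat M = \{\{a_1,b_1\},\dots,\{a_m,b_m\}\}$ of $\widehat H$, each $a_i,b_i$ corresponds to a distinct block $A_i,B_i$ of $\rho$; I would choose, for each $i$, a representative pair $a_i' \in A_i$ and $b_i' \in B_i$ with $\{a_i',b_i'\}\in E(H)$ (such a pair exists because $\{a_i,b_i\} \in E(\widehat H)$). The $2m$ chosen vertices are distinct since the blocks are distinct. To verify that $M = \{\{a_i',b_i'\}\}_{i=1}^m$ is induced in $H$, I would argue by contradiction: any extra edge in $H$ between two of these representatives would project to an edge of $\widehat H$ (here I use that $\widehat H$ has no self-loops, so distinct blocks truly give distinct vertices), contradicting the assumption that $\widehat M$ is an induced matching of $\widehat H$. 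This yields $\imn(\widehat H) \le \imn(H)$.

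Combining the two observations gives $\tau_1(H/\rho) \le \max(1,\imn(H/\rho)) \le \max(1,\imn(H))$ for every $\rho \in R(H)$, and taking the maximum over $\rho$ yields the claimed bound on $\tau_2(H)$. I don't expect any real obstacle here: the induced-matching lifting step is the only nontrivial ingredient, and the ``no self-loops'' condition on admissible quotients is exactly what makes representatives in distinct blocks correspond to distinct vertices.
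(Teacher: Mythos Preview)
Your proposal is correct and follows essentially the same approach as the paper: apply Lemmas~\ref{lem:match_and_kernel} and~\ref{lem:kernel_and_tau} to an arbitrary quotient and orientation, then use that quotients do not increase the induced matching number. The paper simply asserts the latter fact without argument, whereas you supply the natural lifting proof; your parenthetical about self-loops is slightly misplaced (distinct blocks give distinct quotient vertices regardless), but the argument is sound as written.
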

\begin{proof}
    Let $H' = H/\rho$ be any quotient of $H$, and $\orient H'$ be any acyclic orientation of it.
    By Lemma~\ref{lem:kernel_and_tau} and Lemma~\ref{lem:match_and_kernel}, we have $\tau(\orient H') \le \imn(H')+1$.
	Now, observe that $\imn(H') \leq \imn(H)$, since taking the quotient of a graph cannot increase the induced matching number.
	By definition of $\tau_2(H)$, this implies that $\tau_2(H) \le \imn(H)$.
\end{proof}

\subsection{Lower Bounds}
We first show that graphs with induced matching number $\ell$ admit quotients with treewidth $\Omega(\ell)$. For the remainder of this section, we let $\mathcal{F}=\{F_k\}_{k\geq 1}$ be the class of expander graphs defined in Theorem~\ref{thm:nice_expanders}, and $c,k_0$ be the constants specified by the same theorem.

\begin{lemma}\label{lem:F-quotient}
    For some positive constants $c$ and $k_0$, we have what follows.
    Let $H$ be a graph with $\imn(H)=\ell$ and set $k:= \lfloor\frac{\ell}{2c} \rfloor$. If $k\geq k_0$, then there exists a partition $\rho$ of the vertices of $H$ such that $H/\rho$ is without self-loops, and has an $F_k$-gadget, that is, an F-gadget of treewidth at least $k$.
\end{lemma}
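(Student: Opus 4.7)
The plan is to extract from $H$ an induced matching of size $\ell=\imn(H)$ and use its edges to build a partition $\rho$ whose quotient $H/\rho$ contains the $1$-subdivision of $F_k$ as an induced substructure; that substructure then serves directly as an $F_k$-gadget with singleton $S$- and $P$-blocks. Theorem~\ref{thm:nice_expanders} gives us $|V(F_k)|,|E(F_k)| \le ck$ and $\tw(F_k)\ge k$, and the hypothesis $\ell\ge 2ck$ ensures that we have at least $2|E(F_k)|$ pairwise-disjoint matching edges at our disposal --- exactly enough to supply one matching edge per ``half-edge'' $(w,e)$ of $F_k$ (that is, per incidence of a vertex $w$ with an edge $e$).

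Concretely, fix an induced matching $M$ of $H$ of size $\ell$ and, for every $(w,e)$ with $w\in e\in E(F_k)$, assign a distinct matching edge $m_e^{(w)}=\{x_e^w,y_e^w\}\in M$; this is possible because $|\{(w,e):w\in e\}|=2|E(F_k)|\le 2ck\le \ell$. Define the partition $\rho$ by declaring, for each $w\in V(F_k)$, the block $W_w:=\{x_e^w:e\ni w\}$; for each $e=\{w,w'\}\in E(F_k)$, the block $E_e:=\{y_e^w,y_e^{w'}\}$; and by putting every remaining vertex of $H$ in its own singleton block (their union we call $R$). Since $M$ is an \emph{induced} matching, the only edges of $H$ among its endpoints are the matching edges themselves, so every $W_w$ and every $E_e$ is independent in $H$, and $H/\rho$ is free of self-loops. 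The candidate $F_k$-gadget of $H/\rho$ is then $S_w:=\{W_w\}$, $P_e:=\{E_e\}$, with the remaining singleton blocks forming $R$.

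The three conditions of Definition~\ref{def:gadgets} are then straightforward to verify: each $S_w$ and each $P_e$ is a nonempty singleton (hence trivially connected); for each $e=\{w,w'\}$ the matching edges $m_e^{(w)}$ and $m_e^{(w')}$ yield, after quotienting, the simple path $W_w$--$E_e$--$W_{w'}$ from $s_{w,e}:=W_w$ to $s_{w',e}:=W_{w'}$ through the internal vertex $E_e$; and every other edge of $H/\rho$ comes from an edge of $H$ that is either an assigned matching edge (already contained in some $p_e$), an unassigned matching edge (both endpoints in $R$), or an edge with at least one endpoint outside the matching (hence incident to $R$). Combined with $\tw(F_k)\ge k$, this produces the desired F-gadget of treewidth at least $k$. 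The one delicate point --- and the only place where the induced-matching hypothesis is truly needed --- is to rule out ``parasitic'' edges in $H/\rho$ such as $W_w W_{w'}$ or $E_e E_{e'}$ that would destroy the subdivision structure; any such edge would lift to an edge of $H$ with both endpoints among the matching endpoints, and the induced-matching property forces it to be a matching edge, which by the assignment strictly spans one $W$-block and one $E$-block.
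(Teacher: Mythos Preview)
Your proof is correct and follows essentially the same approach as the paper. Both arguments take an induced matching of size $2|E(F_k)|\le 2ck\le\ell$ and identify endpoints so that the quotient contains the $1$-subdivision of $F_k$ as an induced subgraph, with all remaining vertices pushed into $R$; the paper asserts the existence of such a partition $\sigma$ of the matching vertices abstractly and then appeals to the remark that a ($1$-subdivision of $F_k$)-induced-subgraph yields an $F_k$-gadget, whereas you give the explicit half-edge assignment and verify the three conditions of Definition~\ref{def:gadgets} directly.
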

\begin{proof}
    The intuition is that, if $H$ has a large induced matching, then we can identify the endpoints of such a matching according to some appropriate quotient $\rho$ in such a way to obtain the $1$-subdivision of $F_k$. This implies that $H$ has an $F_k$-gadget, and by Theorem~\ref{thm:nice_expanders} we have $\tw(F_k)\ge k$.
    
    Formally, since $k\geq k_0$, by Theorem~\ref{thm:nice_expanders} we have $|E(F_k)| \le ck$. Let $\hat{F}$ be the $1$-subdivision of $F_k$, and let $m=|E(\hat{F})|$. Clearly $m \le 2ck$, which implies $m \leq \ell$ since $k \le \frac{\ell}{2c}$. Now look at $H$. Since $m \le \ell = \imn(H)$, then $H$ has an induced $m$-matching $H[M]$. And since $H[M]$ is an $m$-matching and $m = |E(\hat{F})|$, there exists a partition $\sigma$ such that $H[M]/\sigma$ is isomorphic to $\hat{F}$. Now, consider the partition $\rho:=\sigma\cup \{\{v\}~|~v\in V(H)\setminus M \}$ of $V(H)$. As we have just shown, $H/\rho$ contains $\hat{F}$ as an induced subgraph, and moreover $H/\rho$ is without self-loops, for otherwise $H[M]$ would not be induced. But $\hat{F}$ is the $1$-subdivision of $F_k$. This implies that $H/\rho$ has an $F_k$-gadget as remarked in Section~\ref{sec:gadgets}.
\end{proof}
Next, we show that if $C$ has F-gadgets of unbounded induced matching number then $\#\subsprobd(C)$ is hard.
\begin{lemma}\label{lem:subs_hard}
    If $C$ has unbounded induced matching number, then $\#\subsprobd(C)$ is $\#\W{1}$-hard and cannot be solved in time
    \begin{linenomath*}
    \begin{equation}
    f(|H|,d(G))\cdot |V(G)|^{o\left(\frac{\imn(H)}{\log (\imn(H))}\right)} \,,
    \end{equation}
    \end{linenomath*}
    for any function $f$, unless ETH fails.
\end{lemma}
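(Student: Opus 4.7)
The plan is to combine the F-gadget reduction lemma (Lemma~\ref{lem:FGadget_main}) with the complexity monotonicity principle, adapted to degenerate host graphs, in order to transport the ETH lower bound for counting homomorphisms from treewidth-unbounded expanders (Theorem~\ref{thm:hardness_bottleneck}) into the setting of $\#\subsprobd(C)$. The crucial bridge between these two ingredients is Lemma~\ref{lem:F-quotient}, which furnishes, for any $H \in C$ with large $\imn(H)$, a quotient of $H$ that contains an $F_k$-gadget with $k = \Theta(\imn(H))$.

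First, given $H \in C$ with $\imn(H) = \ell$ large enough, apply Lemma~\ref{lem:F-quotient} to obtain a partition $\rho$ of $V(H)$ such that the quotient $\widehat{H} := H/\rho$ is without self-loops and has an $F_k$-gadget with $k = \lfloor\ell/(2c)\rfloor$ and $\tw(F_k) \ge k$. Invoking Lemma~\ref{lem:FGadget_main} on the triple $(F_k,\widehat H,\cdot)$ produces an FPT-reduction
\begin{linenomath*}
\begin{equation*}
\#\cphomsprob(\{F_k\}_{k}) \;\fptred\; \#\cphomsprobd(\{\widehat H\}),
\end{equation*}
\end{linenomath*}
where the host graph $G'$ output by the reduction has degeneracy at most $|V(\widehat H)|+2 \le |V(H)|+2$ and size $O(|G|\cdot |H|)$, so both the parameter $|H|+d(G')$ and the polynomial blow-up in $|G|$ are controlled.

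Second, I would apply complexity monotonicity to climb from $\#\cphomsprobd(\{\widehat H\})$ to $\#\subsprobd(C)$. By Fact~\ref{fact:sub_hom_basis}, $|\embs{H}{\star}|$ is a linear combination of $|\homs{H/\sigma}{\star}|$ over partitions $\sigma$ of $V(H)$ for which the quotient carries no self-loops, and each coefficient
\begin{linenomath*}
\begin{equation*}
\mu(\bot,\sigma) = (-1)^{|V(H)|-|V(H/\sigma)|}\prod_{B\in\sigma}(|B|-1)!
\end{equation*}
\end{linenomath*}
is nonzero; in particular the coefficient of $|\homs{\widehat H}{\star}|$ for $\sigma=\rho$ does not vanish. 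The degenerate-graph version of complexity monotonicity due to Gishboliner, Levanzov and Shapira~\cite{Gishboliner20} (extending~\cite{CurticapeanDM17,ChenM16}), combined with the standard interreducibility between $\#\homsprob$ and its colour-prescribed variant~\cite{DalmauJ04,DorflerRSW19}, then yields an FPT Turing reduction
\begin{linenomath*}
\begin{equation*}
\#\cphomsprobd(\{\widehat H\}) \;\fptred\; \#\subsprobd(C),
\end{equation*}
\end{linenomath*}
in which the degeneracy and size of the host graph are preserved up to functions of the parameter.

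Chaining the two reductions gives $\#\cphomsprob(\{F_k\}_k) \fptred \#\subsprobd(C)$. Since $\{F_k\}_k$ has unbounded treewidth with $\tw(F_k)\ge k$ (Theorem~\ref{thm:nice_expanders}), Theorem~\ref{thm:hardness_bottleneck} gives $\#\W{1}$-hardness and rules out an algorithm running in time $f(k)\cdot n^{o(k/\log k)}$ under ETH; because the reductions preserve $k = \Theta(\imn(H))$ and blow up the host by a polynomial factor of constant exponent, the lower bound transfers to the claimed $f(|H|,d(G))\cdot |V(G)|^{o(\imn(H)/\log\imn(H))}$ bound for $\#\subsprobd(C)$. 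The main obstacle in fleshing this out is verifying that every link in the chain is tight in both parameter and exponent: namely, that the quotient provided by Lemma~\ref{lem:F-quotient} is genuinely self-loop-free (so that its M\"obius coefficient is nonzero), that the colour-prescribed-to-uncoloured interreducibility does not inflate the degeneracy, and that the polynomial blow-up in~$|G|$ at each stage has exponent independent of the parameter, so that the ETH lower bound survives the composition.
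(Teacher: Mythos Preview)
Your proposal is correct and follows essentially the same route as the paper's own proof: reduce from $\#\cphomsprob$ on the expander family via Lemma~\ref{lem:F-quotient} and Lemma~\ref{lem:FGadget_main}, then invoke complexity monotonicity in the degenerate setting~\cite{Gishboliner20} together with inclusion--exclusion (Facts~\ref{fact:cphoms_to_cfhoms} and~\ref{fact:cfhoms_to_homs}) to recover $|\cphoms{\widehat H}{G'}|$ from oracle calls to $\#\subsprobd(C)$.

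One point to tighten: the non-vanishing of the coefficient of $|\homs{\widehat H}{\star}|$ does not follow from $\mu(\bot,\rho)\ne 0$ alone, because several partitions $\sigma$ may give quotients isomorphic to $\widehat H$ and their M\"obius values could in principle cancel once you collect terms into the homomorphism basis indexed by isomorphism classes. What is actually needed (and what the paper invokes explicitly) is the result of~\cite{CurticapeanDM17} that the \emph{aggregate} coefficient $a(H')$ is nonzero for every self-loop-free quotient $H'$ of $H$; you cite this reference, so the fix is just to state the non-vanishing at the level of $a(\widehat H)$ rather than at the level of a single M\"obius term.
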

\begin{proof}
    Let
    \begin{linenomath*}
    \begin{equation}
    \scF' = \{F_k \in \scF ~|~ k\geq k_0 ~\wedge~ \exists \ell \in \mathbb{N}, H \in C ~:~ k=\lfloor\ell/2c\rfloor \wedge\imn(H)=\ell \} \,.
    \end{equation}
    \end{linenomath*}
    Note that $\scF'$ has unbounded treewidth, since for all sufficiently large $\ell$ we have $F_k \in \scF'$ with $k=\Omega(\ell)$. Thus, by Theorem~\ref{thm:hardness_bottleneck}, unless ETH fails no algorithm exists that for some function $\hat{f}$ solves
    $\#\homsprob(\scF')$ in time $\hat{f}(|F|)\cdot |V(G)|^{o(\tw(F) / \log \tw(F))}$.
    We will construct a tight parameterized Turing-reduction from $\#\cphomsprob(\scF')$ to $\#\subsprobd(C)$, which will imply the thesis.
    
    Let $(F,G)$ be an instance of $\#\cphomsprob(\scF')$, that is, $F\in \scF'$ and $G$ is an $F$-coloured graph and the goal is to compute $|\cphoms{F}{G}|$. By definition of $\scF'$, we have $F=F_k$ for some $k\geq k_0$. Hence, our reduction first searches a graph $H\in C$ such that $\imn(H)=\ell$, $k=\lfloor\ell/2c\rfloor$, and $F=F_k$. Since $C$ is computable, finding $H$ takes time at most $f_1(|F|)$ for some computable function $f_1$.
    
    Now, by Lemma~\ref{lem:F-quotient} $V(H)$ admits a quotient graph $\widehat{H}=H/\rho$ that is simple and has an $F_k$-gadget. Note that we can find $\widehat{H}=H/\rho$ in time only depending on $H$. By Lemma~\ref{lem:FGadget_main}, for some computable function $f_2$ in time $f_2(|\widehat{H}|) \cdot |G|^{O(1)}$ we can compute an $\widehat{H}$-coloured graph $G'$ such that
    \begin{linenomath*}
    \begin{equation}
    |\cphoms{F_k}{G}|=|\cphoms{\widehat{H}}{G'}|
    \end{equation}
    \end{linenomath*}
    Moreover, still by Lemma~\ref{lem:FGadget_main}, we have $d(G') \le |V(\widehat{H})|+2$ and $|G'| \le (|\widehat{H}|\cdot|G|)^{O(1)}$.
    
    Now, we show how to compute $|\cphoms{\widehat{H}}{G'}|$ by using any algorithm for $\#\subsprobd(C)$ as an oracle. First, observe that by Facts~\ref{fact:cphoms_to_cfhoms} and~\ref{fact:cfhoms_to_homs}, we have that
    \begin{linenomath*}
    \begin{equation}\label{eq:cphoms_to_homs}
        |\cphoms{\widehat{H}}{G'}| = |\auts{\widehat{H}}^{-1}| \cdot \sum_{S\subseteq V(\widehat{H})}(-1)^{|S|}\cdot |\homs{\widehat{H}}{G' - S}|\,, 
    \end{equation}
    \end{linenomath*}
    where $G'-S$ is the graph obtained from $G'$ by deleting all vertices that are coloured with a vertex in~$S$. Thus, we focus on computing the term $|\homs{\widehat{H}}{G' - S}|$ for each possible $S\subseteq V(\widehat{H})$; clearly, there are $2^{|V(\widehat{H})|}$ such terms. Note also that $d(G'-S) \le d(G')$, since $G'-S \subseteq G'$. 
    
    Let then $S\subseteq V(\widehat{H})$ be fixed. Since $|\subs{H}{\star}| = |\auts{H}^{-1}| \cdot |\embs{H}{\star}|$, and by Fact~\ref{fact:sub_hom_basis}, we have:
    \begin{linenomath*}
    \begin{equation}
    |\subs{H}{\star}| = |\auts{H}^{-1}| \cdot \sum_\sigma \mu(\bot,\sigma) \cdot |\homs{H/\sigma}{\star}| \,.
    \end{equation}
    \end{linenomath*}
    Collecting for isomorphic graphs we obtain a function $a$ of finite support such that
    \begin{linenomath*}
    \begin{equation}
    |\subs{H}{\star}| =  \sum_{H'} a(H') \cdot |\homs{H'}{\star}| \,,
    \end{equation}
    \end{linenomath*}
    where the sum is taken over all (isomorphism classes of) graphs $H'$. It is by now well-known that, for each partition $\sigma$ such that $H/\sigma$ is without self-loops, we have $a(H')\neq 0$, where $H'$ is the representant of the isomorphism class of $H/\sigma$~\cite{CurticapeanDM17}. Thus, without loss of generality, we can assume that $a(\widehat{H}) \ne 0$.
    
    This allows us to invoke complexity monotonicity for bounded degeneracy graphs as described by Gishboliner et al.~\cite{Gishboliner20}. That is, we use the oracle for $\#\subsprobd(C)$ to obtain 
    \begin{linenomath*}
    \begin{equation}
    |\subs{H}{(G'-S) \otimes H_i}| =  \sum_{H'} a(H') \cdot |\homs{H'}{(G'-S)\otimes H_i}|
    \end{equation}
    \end{linenomath*}
    for a sequence of graphs $H_i$ only depending on $H$, where $\otimes$ is the tensor product of graphs\footnote{The adjacency matrix of $G_1\otimes G_2$ is the Kronecker product of the adjacency matrices of $G_1$ and $G_2$.}. Using that 
    \begin{linenomath*}
    \begin{equation}
    |\homs{H'}{(G'-S) \otimes H_i}| = |\homs{H'}{(G'-S)}| \cdot |\homs{H'}{H_i}| \,,
    \end{equation}
    \end{linenomath*}
    we obtain a system of linear equations which has a unique solution for an appropriate choice of the $H_i$~\cite[Lemma 3.6]{CurticapeanDM17}. Hence, solving the systems of linear equations for each $S$ reveals $|\homs{\widehat{H}}{G-S}|$, which in turn reveals $|\cphoms{\widehat{H}}{G}|$ by Equation~(\ref{eq:cphoms_to_homs}). This concludes the reduction.
    
    Now let us bound the running time. The crucial observation of Gishboliner et al.\ \cite[Observation A.1]{Gishboliner20} is now that the degeneracy of $(G'-S)\otimes H_i$ satisfies
    \begin{linenomath*}
    \begin{equation}
        d((G'-S)\otimes H_i) \le d(G'-S)\cdot |V(H_i)|\leq (|V(\widehat{H})|+2)\cdot |V(H_i)| \,.
    \end{equation}
    \end{linenomath*}
    Since $H$, and thus $\widehat{H}$ and the $H_i$, are a function of $F$, we observe the parameter of each oracle query $(H, (G'-S)\otimes H_i)$ with respect to the parameterization of $\#\subsprobd(C)$, that is, $|H|+d((G'-S)\otimes H_i)$, is bounded by a function of $|F|$.
    Now, suppose that there exists an algorithm that solves $\#\subsprobd(C)$ in time:
    \begin{linenomath*}
\begin{equation}
        f(|H|,d(G'))\cdot |V(G')|^{o\left(\frac{\imn(H)}{\log (\imn(H))}\right)} \,.
\end{equation}
\end{linenomath*}
    Now recall that, by construction, we have $\imn(H)=\ell \in \scO(k)$, and that $|G'|\leq (|\widehat{H}|\cdot |G|)^{O(1)}$.
    Hence, every call to the oracle for $\#\subsprobd(C)$ takes time at most:
    \begin{linenomath*}
\begin{equation}
        \hat{f}(|F|)\cdot |V(G)|^{o\left(\frac{\tw(F)}{\log (\tw(F))}\right)} \,.
\end{equation}
\end{linenomath*}
    for some function $\hat{f}$.
    As the number of oracle calls is bounded by a function of $F$ as well, by Theorem~\ref{thm:hardness_bottleneck} we would refute ETH, concluding the proof.
\end{proof}

We emphasise a variety of lower bounds that follow from the previous lemma; note that Corollary~\ref{cor:intro_subs_cor} is subsumed by the following result.
\begin{corollary}
	The following problems are $\#\W{1}$-hard when parameterized by $k$ and $d(G)$, and cannot be solved in time $f(k,d(G))\cdot |G|^{o(k/\log k)}$ for any function $f$, unless ETH fails. Given $G$ and $k$, 
	\begin{enumerate}
		\item compute the number of $k$-cycles in $G$.
		\item compute the number of $k$-paths in $G$.
		\item compute the number of $k$-matchings in $G$.
		\item compute the number of $k$-trees (i.e., trees with $k$ vertices) in $G$.
	\end{enumerate}
\end{corollary}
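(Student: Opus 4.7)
The plan is to derive each of the four hardness statements as a direct instantiation of Lemma~\ref{lem:subs_hard}. That lemma asserts that $\#\subsprobd(C)$ is $\#\W{1}$-hard, and not solvable in time $f(|H|,d(G))\cdot |V(G)|^{o(\imn(H)/\log \imn(H))}$ unless ETH fails, whenever $C$ is a computable class of graphs of unbounded induced matching number. So I only have to verify two things for each of the four classes: (i) the class is computable, which is trivial in every case; (ii) the induced matching number of the $k$-vertex pattern is linear in $k$, so that the stated fine-grained bound $|G|^{o(k/\log k)}$ is implied by the one provided by the lemma.

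First I would handle the easiest case, $k$-matchings. Here the pattern $M_k$ is itself an induced matching on $k$ vertices, so $\imn(M_k) = \lfloor k/2\rfloor$, and $\imn(M_k)/\log\imn(M_k) = \Theta(k/\log k)$. For $k$-cycles and $k$-paths the argument is the same textbook observation: picking every third edge yields an induced matching, so $\imn(C_k), \imn(P_k) \geq \lfloor k/3\rfloor$, again $\Theta(k/\log k)$ after taking the logarithmic correction. For $k$-trees, note that paths are trees, so the class $C$ of all trees on $k$ vertices contains $P_k$ for every $k$, and hence $\imn$ is unbounded on $C$; but to get the fine-grained $o(k/\log k)$ bound one has to be a bit more careful, because the pattern is chosen by the adversary from inside $C$. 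I would handle this by restricting to the subclass $\{P_k\}_{k\in\mathbb{N}}\subseteq C$ of paths: any algorithm solving $\#\subsprobd(C)$ in time $f(k,d)\cdot |G|^{o(k/\log k)}$ would in particular solve $\#\subsprobd(\{P_k\})$ within the same bound, contradicting Lemma~\ref{lem:subs_hard} applied to the class of paths.

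In each case, the reduction from Lemma~\ref{lem:subs_hard} preserves the parameter $k+d(G)$ up to a computable function, so substituting $\imn(H)=\Theta(k)$ into the lower bound $|G|^{o(\imn(H)/\log\imn(H))}$ yields exactly the claimed $|G|^{o(k/\log k)}$ bound, and $\#\W{1}$-hardness is inherited verbatim. No further ingredients are needed.

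The only genuine subtlety—if one can even call it that—is the $k$-tree case, where $C$ is closed under taking arbitrary trees rather than a single parametric family: the point is simply that the lower bound of Lemma~\ref{lem:subs_hard} is stated for the worst pattern in $C$, so restricting to a hard subfamily (paths) is the cleanest route. Apart from this bookkeeping, the proof is a one-line application of the lemma for each of the four items.
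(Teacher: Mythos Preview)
Your treatment of items (1)--(3) is correct and matches the paper exactly: each pattern family has induced matching number $\Theta(k)$, so Lemma~\ref{lem:subs_hard} applies directly.

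For item (4), however, there is a genuine gap stemming from a misreading of the problem. The input to ``compute the number of $k$-trees in $G$'' is just the pair $(G,k)$, and the output is the single number $\sum_T |\subs{T}{G}|$, summed over all (isomorphism classes of) trees $T$ on $k$ vertices. This is \emph{not} an instance of $\#\subsprobd(C)$ for $C=\{\text{trees}\}$, which would take a specific tree as part of the input. Consequently, your reduction ``an algorithm for $\#\subsprobd(C)$ solves in particular $\#\subsprobd(\{P_k\})$'' does not apply: an algorithm that outputs only the \emph{aggregate} tree count gives you no obvious way to recover the $k$-path count.

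What is actually needed is an argument that the aggregated quantity $\sum_T |\subs{T}{\star}|$, when expanded as a linear combination of homomorphism counts, still has a nonzero coefficient on some hard term. The paper invokes a result of Curticapean, Dell and Marx~\cite[Corollary~5.3]{CurticapeanDM17} stating that in this particular sum no quotient (without self-loops) of any $k$-tree cancels. Hence the quotient $\widehat{H}$ of $P_k$ built in Lemma~\ref{lem:F-quotient} survives with nonzero coefficient, and the proof of Lemma~\ref{lem:subs_hard} can be rerun verbatim with the aggregate tree count in place of $|\subs{H}{\star}|$. This non-vanishing step for the summed linear combination is the missing ingredient in your proposal.
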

\begin{proof}
    The claim for (1.-3.) follows immediately by the previous theorem since $k$-cycles, $k$-paths, and $k$-matchings all have induced matching number $\Omega(k)$.
    
    For (4.), note that the number of $k$-trees in a graph $G'$ equals $\sum_{T} |\subs{T}{G'}|$, where the sum is over all trees with $k$ vertices. Curticapean, Dell and Marx~\cite[Corollary~5.3]{CurticapeanDM17} observed that the linear combination of homomorphism counts associated with the previous sum also satisfies that no quotient (without self-loops) of \emph{any} $k$-tree vanishes. Thus the proof of the previous theorem applies to counting all $k$-trees as well, and the lower bounds are obtained by observing that the set of all $k$-trees has induced matching number $\Omega(k)$, by taking for example $T=P_k$.
\end{proof}

Let us finally prove the full classification for counting subgraphs in degenerate graphs; Theorems~\ref{thm:intro_subs_finegrained} and~\ref{thm:intro_subs_param} are immediate consequences:
\begin{proof}[Proof of Theorem~\ref{thm:main_subgraphs}]
    The upper bound holds by Lemma~\ref{lem:tau2_leq_imn} and~\cite[Theorem~9]{Bressan21}, and the lower bound holds by Lemma~\ref{lem:subs_hard}.
\end{proof}
\section{Classification for $\#\indsubsprob_{\text{D}}(C)$}\label{sec:indsubgraphs}
In this section we give our complete and almost-tight classification for $\#\indsubsprobd(C)$.
We recall that $\#\indsubsprobd(C)$ is the problem that expects as input a graph $H\in C$ and an arbitrary graph~$G$, and whose goal is to compute the number of induced subgraphs $|\indsubs{H}{G}|$ of $G$ that are isomorphic to $H$, with the parameterization is given by $|H|+d(G)$. For what follows, recall that $\alpha(H)$ is the independence number of $H$.
We prove:
\begin{theorem}\label{thm:main_indsubgraphs}
    If $C$ has bounded independence number, then $\#\indsubsprobd(C)$ is fixed-parameter tractable and can be solved in time
    \begin{linenomath*}
    \begin{equation}
    f(|H|,d(G)) \cdot |V(G)|^{\alpha(H)}\cdot \log |V(G)| \,,
    \end{equation}
    \end{linenomath*}
    for some computable function $f$.
    Otherwise, $\#\indsubsprobd(C)$ is $\#\W{1}$-hard and cannot be solved in time
    \begin{linenomath*}
    \[f(|H|,d(G))\cdot |V(G)|^{o\left(\frac{\alpha(H)}{\log (\alpha(H))}\right)} \,,\]
    \end{linenomath*}
    for any function $f$, unless ETH fails.
\end{theorem}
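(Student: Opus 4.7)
The plan is to mirror the proof of Theorem~\ref{thm:main_subgraphs}, replacing $\imn$ by $\alpha$ throughout. For the upper bound, I invoke the induced-subgraph variant of the dynamic program of~\cite{Bressan21}, whose running time is $f(|H|,d(G))\cdot |V(G)|^{\tau_3(H)}\cdot \log |V(G)|$, and reduce the task to proving $\tau_3(H)\le \alpha(H)$. By definition $\tau_3(H)=\max_{H'\in D(H)} \tau_2(H')$, and Lemma~\ref{lem:tau2_leq_imn} gives $\tau_2(H')\le \max(1,\imn(H'))$. Two monotonicity observations then finish the argument: selecting one endpoint per edge of an induced matching of $H'$ yields an independent set (so $\imn(H')\le \alpha(H')$), and adding edges cannot increase the independence number (so $\alpha(H')\le \alpha(H)$ for every $H'\in D(H)$). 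Since $\alpha(H)\ge 1$, this gives $\tau_3(H)\le \alpha(H)$.

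For the lower bound I imitate Lemma~\ref{lem:subs_hard}. Let $\scF' = \{F_k \in \scF ~|~ k\ge k_0 ~\wedge~ \exists \ell \in \N, H \in C ~:~ k=\lfloor \ell/4c \rfloor \wedge \alpha(H)=\ell\}$. Since $\alpha$ is unbounded on $C$, $\scF'$ has unbounded treewidth, so Theorem~\ref{thm:hardness_bottleneck} yields the desired ETH lower bound for $\#\cphomsprob(\scF')$. I then build a tight parameterised Turing reduction from $\#\cphomsprob(\scF')$ to $\#\indsubsprobd(C)$. Given $(F_k,G)$, I locate $H\in C$ with $\alpha(H)=\ell$ and $k=\lfloor \ell/4c \rfloor$, pick an independent set $I\subseteq V(H)$ of size $\ell$, and fix an arbitrary matching $M$ on $I$ of size $m=\lfloor \ell/2 \rfloor$. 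Since $I$ is independent in $H$, the supergraph $H^{\ast}=(V(H),E(H)\cup M)$ contains $M$ as an induced matching, so $\imn(H^{\ast})\ge m$. Applying Lemma~\ref{lem:F-quotient} to $H^{\ast}$ yields a partition $\rho$ such that $\widehat H=H^{\ast}/\rho$ is simple and has an $F_k$-gadget, and then Lemma~\ref{lem:FGadget_main} on $(F_k,\widehat H,G)$ produces an $\widehat H$-coloured graph $G'$ with $d(G')\le|V(\widehat H)|+2$, $|G'|\le (|H|\cdot|G|)^{O(1)}$, and $|\cphoms{F_k}{G}|=|\cphoms{\widehat H}{G'}|$. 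Facts~\ref{fact:cphoms_to_cfhoms} and~\ref{fact:cfhoms_to_homs} reduce the computation of the latter to that of $|\homs{\widehat H}{G'-S}|$ for all $S\subseteq V(\widehat H)$.

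The final and most technical step is to evaluate each $|\homs{\widehat H}{G'-S}|$ using the oracle for $\#\indsubsprobd(C)$. I apply complexity monotonicity \emph{twice} in the bounded-degeneracy setting, extending the argument of Gishboliner et al.~\cite{Gishboliner20}. First, Fact~\ref{fact:indsub_sub_basis} expresses $|\strembs{H}{\cdot}|$ as a linear combination of $|\embs{H''}{\cdot}|$ over $H''\supseteq H$ with non-vanishing coefficients $(-1)^{|E(H'')|-|E(H)|}$; the tensor-product trick of~\cite{CurticapeanDM17}, instantiated with auxiliary graphs $T$ depending only on $H$, extracts $|\embs{H^{\ast}}{G'-S}|$ from oracle queries of the form $(H,(G'-S)\otimes T)$. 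Second, Fact~\ref{fact:sub_hom_basis} expresses $|\embs{H^{\ast}}{\cdot}|$ as a non-vanishing linear combination of $|\homs{H^{\ast}/\sigma}{\cdot}|$ over valid $\sigma$, and a further tensor-product reduction extracts $|\homs{H^{\ast}/\rho}{\cdot}|=|\homs{\widehat H}{\cdot}|$. Combined with $\tw(F_k)\ge k=\Omega(\alpha(H))$ and the polynomial blow-up of $|G'|$, Theorem~\ref{thm:hardness_bottleneck} then yields the claimed fine-grained lower bound.

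The main obstacle I expect is this nested application of complexity monotonicity: one must verify that chaining the induced-to-subgraph and subgraph-to-homomorphism reductions does not inflate the degeneracy of the oracle queries beyond a function of the parameter. This is controllable because at both steps the auxiliary tensor graphs are determined solely by supergraphs or quotients of $H$, so their sizes are bounded by a function of $|H|$; hence $d(G\otimes T_1\otimes T_2)\le d(G)\cdot|V(T_1)|\cdot|V(T_2)|$ remains bounded in $|H|+d(G)$ throughout the combined reduction.
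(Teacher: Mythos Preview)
Your upper-bound argument is correct and effectively rederives what the paper imports as Lemma~14 of~\cite{Bressan21}; nothing to add there.

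The lower-bound argument, however, has a real gap in the ``nested complexity monotonicity'' step. The tensor-product trick of~\cite{CurticapeanDM17} works precisely because homomorphism counts are multiplicative under tensor product, $|\homs{H'}{G\otimes T}|=|\homs{H'}{G}|\cdot|\homs{H'}{T}|$. Embedding counts are \emph{not} multiplicative in this way (already $|\embs{P_3}{P_3\otimes P_3}|\neq|\embs{P_3}{P_3}|^2$), so you cannot set up the linear system needed to extract $|\embs{H^{\ast}}{G'-S}|$ from oracle values $|\indsubs{H}{(G'-S)\otimes T}|$. If instead you fold Facts~\ref{fact:indsub_sub_basis} and~\ref{fact:sub_hom_basis} into a single expansion $|\strembs{H}{\star}|=\sum_{H''}a(H'')\,|\homs{H''}{\star}|$ and apply the tensor trick once, you must then show $a(\widehat H)\neq 0$ for $\widehat H=H^{\ast}/\rho$. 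But $\widehat H$ has strictly fewer than $|V(H)|$ vertices and can arise as $H'/\sigma$ for many pairs $(H',\sigma)$ with alternating signs; you give no argument ruling out cancellation, and none is known in general.

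The paper sidesteps this entirely by never taking a quotient. Its Lemma~\ref{lem:F-supergraph} notes that the $1$-subdivision $\hat F$ of $F_k$ has $|V(\hat F)|\le 2ck\le\ell=\alpha(H)$, so one can plant $\hat F$ as an induced subgraph inside an independent set of $H$ by \emph{adding edges only}. The resulting $\widehat H$ is an edge-supergraph of $H$ on the same vertex set, and for such graphs $a(\widehat H)\neq 0$ is guaranteed: any $H''$ on $|V(H)|$ vertices in the expansion can only come from the trivial partition, so all contributions to $a(\widehat H)$ have the same sign $(-1)^{|E(\widehat H)|-|E(H)|}$. A single application of complexity monotonicity (Lemma~\ref{lem:indsubs_hard}) then finishes the proof. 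Your detour through $H^{\ast}$ and its quotient is unnecessary and, as written, does not go through.
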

The upper bounds of Theorem~\ref{thm:main_indsubgraphs} follow directly from Theorem 9 and Lemma 14 of~\cite{Bressan21}.
Thus, we focus on the lower bounds. Similarly to the hardness proof for $\#\subsprobd(C)$, see Section~\ref{sec:subgraphs}, we will use a family of expanders to identify graphs with $F$-gadgets of high treewidth, which occur with a non-zero coefficient in the associated expression as linear combination of homomorphism counts.
We let $\mathcal{F}=\{F_k\}_{k\geq 1}$ be the family of expander graphs given by Theorem~\ref{thm:nice_expanders}. Recall that an edge-supergraph of a graph $H$ is a graph obtained from $H$ by adding edges.

\begin{lemma}\label{lem:F-supergraph}
    There exist positive constants $c$ and $k_0$ such that the following is true:
    Let $H$ be a graph with $\alpha(H)=\ell$ and set $k:= \lfloor\frac{\ell}{2c} \rfloor$. If $k\geq k_0$, then there exists an edge-supergraph $\widehat{H}$ of $H$ such that $\widehat{H}$ has an $F_k$-gadget, that is, an F-gadget of treewidth at least $k$.
\end{lemma}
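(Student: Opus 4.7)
The plan is to mirror the proof of Lemma~\ref{lem:F-quotient} almost verbatim, but replacing the operation of taking a quotient of $H$ with the operation of adding edges to $H$, and replacing the role of an induced matching with the role of an independent set. The underlying observation is that a $1$-subdivision of $F_k$ is bipartite, with one side of size $|V(F_k)|$ and the other side of size $|E(F_k)|$; its edges run only between the two sides. Hence, if we can find inside $H$ an independent set $I$ of size at least $|V(F_k)|+|E(F_k)|$, then we may place a copy of the $1$-subdivision on $I$ purely by \emph{adding} edges.

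Concretely, I would proceed as follows. Let $c,k_0$ be the constants from Theorem~\ref{thm:nice_expanders}, so that $|V(F_k)|+|E(F_k)|\leq 2ck$ for $k\geq k_0$. Since $k=\lfloor \ell/(2c)\rfloor$, we have $\ell \geq 2ck \geq |V(F_k)|+|E(F_k)|$, so we can select an independent set $I\subseteq V(H)$ with $|I|=\ell$ and then partition a subset $I'\subseteq I$ into two parts $I_V$ and $I_E$ with $|I_V|=|V(F_k)|$ and $|I_E|=|E(F_k)|$. Fix bijections $\phi:V(F_k)\to I_V$ and $\psi:E(F_k)\to I_E$. Construct $\widehat{H}$ from $H$ by adding, for every edge $e=\{u,v\}\in E(F_k)$, the two edges $\{\phi(u),\psi(e)\}$ and $\{\psi(e),\phi(v)\}$. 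Then declare the candidate $F_k$-gadget $(\mathcal{S},\mathcal{P},R)$ via $S_v:=\{\phi(v)\}$ for $v\in V(F_k)$, $P_e:=\{\psi(e)\}$ for $e\in E(F_k)$, and $R:=V(H)\setminus I'$.

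The verification of the three conditions of Definition~\ref{def:gadgets} is then routine. Each $S_v$ is a singleton (hence nonempty and connected in $\widehat{H}[S_v]$) and the $S_v$, $P_e$, $R$ clearly partition $V(\widehat{H})=V(H)$. For condition~2, taking $s_{u,e}:=\phi(u)$ and $s_{v,e}:=\phi(v)$, the induced subgraph $p_e=\widehat{H}[\{\phi(u),\psi(e),\phi(v)\}]$ contains the two added edges $\{\phi(u),\psi(e)\},\{\psi(e),\phi(v)\}$, and does \emph{not} contain $\{\phi(u),\phi(v)\}$ because this pair lies in $I$, which is independent in $H$, and this edge was not inserted in the construction. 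So $p_e$ is a simple path of length two, as required. For condition~3, any edge of $\widehat{H}$ is either (i) an original edge of $H$, which must be incident to $R$ since $I'\subseteq I$ is independent in $H$, or (ii) one of the newly inserted edges, which by construction lies in some $p_e$. Finally, $\widehat{H}$ has an $F_k$-gadget, and by Theorem~\ref{thm:nice_expanders} we have $\tw(F_k)\geq k$.

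There is no genuine obstacle in this argument; the only subtle point, as in Lemma~\ref{lem:F-quotient}, is ensuring that no \emph{unintended} edges appear inside the would-be paths $p_e$ or between distinct bags, which is guaranteed exactly because the chosen copy of the $1$-subdivision sits on an independent set of $H$. The remainder is bookkeeping against Definition~\ref{def:gadgets} and a direct appeal to the expander family $\mathcal{F}$.
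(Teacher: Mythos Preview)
Your proposal is correct and follows essentially the same approach as the paper: embed the $1$-subdivision $\hat F$ of $F_k$ on an independent set of size $|V(F_k)|+|E(F_k)|\le 2ck\le \ell$ by adding edges, and read off the $F_k$-gadget with $R$ equal to everything outside that set. The paper is slightly terser---it just observes that $\widehat H$ contains $\hat F$ as an induced subgraph and appeals to the remark after Definition~\ref{def:gadgets}---whereas you verify the three gadget conditions directly, but the content is the same.
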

\begin{proof}
     If $k\geq k_0$, then by Theorem~\ref{thm:nice_expanders} we have $|E(F_k)| \le ck$ and $|V(F_k)|\le ck$. Let $\hat{F}$ be the $1$-subdivision of $F_k$, and let $m=|V(\hat{F})|$. Clearly, $|V(\hat{F})| \le |V(F_k)|+|E(F_k)|\leq 2ck\leq \ell$.
     
     Now let us look at $H$. Since $\alpha(H)=\ell \ge m$, then $H$ has an independent set $H[I]$ of size~$m$, that is, of size $|V(\hat{F})|$. Therefore, $H[I]$ has an edge-supergraph isomorphic to $\hat{F}$. Let us write $A$    for a set of edges (between vertices in $I$) that, when added to $H[I]$, yield $\hat{F}$. Clearly the edge-supergraph $\widehat{H}:=(V(H),E(H)\cup A)$ contains $\hat{F}$ as induced subgraph. 
     Since $\hat{F}$ is the $1$-subdivision of $F_k$, we conclude that $\widehat{H}$ has an $F_k$-gadget.
\end{proof}

Now we can show that, if $C$ has unbounded independence number, then $\#\indsubsprobd(C)$ is hard.
\begin{lemma}\label{lem:indsubs_hard}
    If $C$ has unbounded independence number, then $\#\indsubsprobd(C)$ is $\#\W{1}$-hard and cannot be solved in time
    \begin{linenomath*}
    \begin{equation}
    f(|H|,d(G))\cdot |V(G)|^{o\left(\frac{\alpha(H)}{\log (\alpha(H))}\right)} \,,
    \end{equation}
    \end{linenomath*}
    for any function $f$, unless ETH fails.
\end{lemma}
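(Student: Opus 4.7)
The plan is to mirror the structure of the proof of Lemma~\ref{lem:subs_hard}, but replacing quotients by edge-supergraphs. Concretely, I would define $\mathcal{F}' = \{F_k \in \mathcal{F} : k \ge k_0 \text{ and } \exists H \in C \text{ with } k = \lfloor \alpha(H)/2c \rfloor\}$. Since $C$ has unbounded independence number, $\mathcal{F}'$ has unbounded treewidth, so by Theorem~\ref{thm:hardness_bottleneck} the problem $\#\cphomsprob(\mathcal{F}')$ is $\#\W{1}$-hard and admits the desired fine-grained lower bound. I then construct a tight parameterized Turing reduction $\#\cphomsprob(\mathcal{F}') \fptred \#\indsubsprobd(C)$. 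Given an input $(F_k, G)$, first locate $H \in C$ with $\alpha(H) = \ell$ and $k = \lfloor \ell/2c \rfloor$; by Lemma~\ref{lem:F-supergraph}, compute an edge-supergraph $\widehat{H}$ of $H$ on the same vertex set that has an $F_k$-gadget. Then invoke Lemma~\ref{lem:FGadget_main} to produce an $\widehat H$-coloured graph $G'$ with $d(G') \le |V(\widehat H)| + 2$, $|G'| = O(|G|\cdot |\widehat H|)$, and $|\cphoms{F_k}{G}| = |\cphoms{\widehat H}{G'}|$.

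Next, as in Section~\ref{sec:subgraphs}, I apply Facts~\ref{fact:cphoms_to_cfhoms} and \ref{fact:cfhoms_to_homs} to write
\begin{equation}
    |\cphoms{\widehat H}{G'}| = |\auts{\widehat H}|^{-1} \sum_{S \subseteq V(\widehat H)} (-1)^{|S|} |\homs{\widehat H}{G'-S}|,
\end{equation}
so it suffices to extract each $|\homs{\widehat H}{G'-S}|$ using an oracle for $\#\indsubsprobd(C)$. The main new ingredient is the complexity monotonicity step: by combining Facts~\ref{fact:indsub_sub_basis} and \ref{fact:sub_hom_basis}, we can write
\begin{equation}
    |\indsubs{H}{\star}| = \sum_{H''} a(H'') \cdot |\homs{H''}{\star}|,
\end{equation}
where the sum ranges over isomorphism classes of simple graphs. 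I must verify that $a(\widehat H) \ne 0$. The key observation is that $\widehat H$ has the same vertex set as $H$, so it can arise in the combined expansion only from the pair $(H', \rho) = (\widehat H, \bot)$: any partition $\rho \ne \bot$ strictly reduces the number of vertices, and any other edge-supergraph of $H$ on $V(H)$ is not isomorphic to $\widehat H$ (modulo the standard multiplicity count $|\{\widehat H \supseteq H\}|$). Hence the coefficient reduces, up to a nonzero combinatorial factor, to $(-1)^{|E(\widehat H)|-|E(H)|}\,|\{\widehat H \supseteq H\}| \cdot \mu(\bot,\bot)$, which is nonzero because $\mu(\bot,\bot)=1$ and $\widehat H$ is by construction a bona fide edge-supergraph of $H$.

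Once $a(\widehat H) \ne 0$ is established, the rest of the argument is identical to the subgraph case: for each $S$, apply the oracle to $|\indsubs{H}{(G'-S) \otimes H_i}|$ for a sequence of graphs $H_i$ depending only on $H$, use the tensor-product identity $|\homs{H''}{(G'-S)\otimes H_i}| = |\homs{H''}{G'-S}| \cdot |\homs{H''}{H_i}|$ together with Lemma~3.6 of~\cite{CurticapeanDM17} to invert the resulting linear system, and recover $|\homs{\widehat H}{G'-S}|$. The degeneracy-preservation step of Gishboliner et al.~\cite[Observation A.1]{Gishboliner20} gives $d((G'-S) \otimes H_i) \le d(G'-S) \cdot |V(H_i)| \le (|V(\widehat H)|+2)\cdot|V(H_i)|$, which is a function of $|F_k|$ only. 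Since $\alpha(H) = \ell = \Theta(k) = \Theta(\tw(F_k))$ and $|G'|$ is polynomial in $|G|$, any algorithm for $\#\indsubsprobd(C)$ running in time $f(|H|,d(G'))\cdot |V(G')|^{o(\alpha(H)/\log \alpha(H))}$ would yield an algorithm for $\#\cphomsprob(\mathcal{F}')$ in time $\hat f(|F_k|) \cdot |V(G)|^{o(\tw(F_k)/\log \tw(F_k))}$, contradicting Theorem~\ref{thm:hardness_bottleneck} under ETH.

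I expect the main (only) substantive point beyond mimicking Lemma~\ref{lem:subs_hard} to be the nonvanishing of $a(\widehat H)$ in the linear combination for induced subgraphs; the argument above hinges on the fact that $\widehat H$ has exactly $|V(H)|$ vertices, which uniquely pins down the partition $\rho=\bot$ and the edge-supergraph contributing to its coefficient. All other steps---the choice of $\mathcal{F}'$, the F-gadget construction, the colour-prescribed inclusion–exclusion, and the tensor-product monotonicity inversion---transfer verbatim from the subgraph proof.
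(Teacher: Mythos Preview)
Your proposal is correct and follows essentially the same approach as the paper's proof: defining $\mathcal{F}'$ via $\alpha(H)$, invoking Lemma~\ref{lem:F-supergraph} to obtain the edge-supergraph $\widehat H$ with an $F_k$-gadget, and then running the complexity-monotonicity machinery of~\cite{CurticapeanDM17,Gishboliner20} exactly as in Lemma~\ref{lem:subs_hard}. The one place you go slightly beyond the paper is that you spell out the elementary argument for $a(\widehat H)\ne 0$ (the vertex-count forces $\rho=\bot$, so only the isomorphism class of $\widehat H$ contributes), whereas the paper simply cites~\cite{CurticapeanDM17} for this fact; your direct argument is correct and is in fact the standard proof underlying that citation.
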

\begin{proof}
    The proof is almost identical to the proof of Lemma~\ref{lem:subs_hard}, with only a few differences.
    First, we let:
    \begin{linenomath*}
    \begin{equation}
    \scF'=\{F_k~|~k\geq k_0 ~\wedge~ \exists \ell \in \mathbb{N}, H\in C ~:~ k=\lfloor\ell/2c\rfloor \wedge\alpha(H)=\ell \} \,.
    \end{equation}
    \end{linenomath*}
    Note that this is the same set $\scF'$ defined in the proof of Lemma~\ref{lem:subs_hard}, but with $\alpha(H)$ in place of $\imn(H)$.
    
    We construct a tight parameterized Turing-reduction from $\#\cphomsprob(\scF')$ to $\#\indsubsprobd(C)$.
    The reduction is identical to the one in the proof of Lemma~\ref{lem:subs_hard},
    but now we let $\widehat{H}$ be an edge-supergraph of $H$ that has an $F_k$-gadget. Note that $\widehat{H}$ exists, by Lemma~\ref{lem:F-supergraph}, and again, it can be computed in time depending only on $H$.
    Now, we need to express the number of \emph{induced} subgraphs as a linear combination of homomorphism counts (and to show that the coefficient of~$\widehat{H}$ does not vanish).
    To this end, since $|\indsubs{H}{\star}| = |\auts{H}^{-1}| \cdot |\strembs{H}{\star}|$, we can combine Fact~\ref{fact:sub_hom_basis} and Fact~\ref{fact:indsub_sub_basis} to obtain:
    \begin{linenomath*}
    \begin{equation}
    |\indsubs{H}{\star}| = |\auts{H}^{-1}| \cdot \sum_{H'}(-1)^{|E(H')|-|E(H)|} \cdot |\{H' \supseteq H\}| \cdot \sum_\sigma \mu(\bot,\sigma) \cdot |\homs{H'/\sigma}{\star}| \,
    \end{equation}
    \end{linenomath*}
    where the first sum is over all isomorphism classes of graphs, and the second sum is over all partitions of $V(H')$.
    Collecting for isomorphic graphs we obtain a function $a$ of finite support such that
    \begin{linenomath*}
    \begin{equation}
    |\indsubs{H}{\star}| =  \sum_{H'} a(H') \cdot |\homs{H'}{\star}| \,
    \end{equation}
    \end{linenomath*}
    where the sum is again taken over all isomorphism classes of graphs $H'$. Curticapean, Dell and Marx~\cite{CurticapeanDM17} have shown that $a(H')\neq 0$ for each edge-supergraph $H'$ of $H$ --- see also~\cite{Gishboliner20}. In particular, the coefficient $a(\widehat{H})$ of  $|\homs{\widehat{H}}{\star}|$ is non-zero. 
    This yields the desired reduction from $\#\cphomsprobd(\scF')$.
    The lower bound follows along the lines of Lemma~\ref{lem:subs_hard}, only with $\alpha(H)$ in place of $\imn(H)$.
\end{proof}

\begin{corollary}[Corollary~\ref{cor:intro_indsubs_cor} restated]
	The following problems are $\#\W{1}$-hard when parameterized by $k$ and $d(G)$, and cannot be solved in time $f(k,d(G))\cdot |G|^{o(k/\log k)}$ for any function $f$, unless ETH fails. Given $G$ and $k$, 
	\begin{enumerate}
		\item compute the number of induced copies of the $(k,k)$-biclique in $G$.
		\item compute the number of $k$-independent sets in $G$.
		\item compute the number of induced $k$-cycles in $G$.
		\item compute the number of induced $k$-paths in $G$.
		\item compute the number of induced $k$-matchings in $G$.
	\end{enumerate}
\end{corollary}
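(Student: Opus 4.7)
The plan is to obtain each of (1)--(5) as an immediate instance of Lemma~\ref{lem:indsubs_hard}. For each of the five parametric families $C$ of patterns in the corollary, it suffices to verify that $C$ is computable, that it has unbounded independence number (the hypothesis of the lemma), and that $\alpha(H) = \Theta(k)$ for the pattern $H$ of parameter $k$; the first two conditions activate the $\#\W{1}$-hardness and the fine-grained lower bound of the lemma, while the third lets me rewrite the lemma's exponent $o(\alpha(H)/\log\alpha(H))$ as the desired $o(k/\log k)$.

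All five independence-number computations are routine. For $k$-independent sets the equality $\alpha = k$ is trivial. For the $(k,k)$-biclique $K_{k,k}$, each of the two colour classes is an independent set of size $k$, and no independent set can meet both sides of the bipartition, so $\alpha(K_{k,k}) = k$. For the $k$-matching $M_k$ (on $2k$ vertices), picking one endpoint per edge yields $\alpha(M_k) \ge k$, and the upper bound is forced because each edge contributes at most one vertex to any independent set. For induced $k$-cycles and induced $k$-paths, taking every second vertex gives independent sets of size $\lfloor k/2 \rfloor$ and $\lceil k/2 \rceil$ respectively, tight by a standard alternating argument. In each case $\alpha(H)$ grows linearly in the pattern's vertex count and the corresponding family is trivially computable.

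Plugging these values into Lemma~\ref{lem:indsubs_hard} directly yields $\#\W{1}$-hardness of $\#\indsubsprobd(C)$ for each of the five families. For the fine-grained part, since $\alpha(H) \ge c_i \cdot k$ for some absolute constant $c_i > 0$ depending only on the family, any algorithm running in time $f(k,d(G)) \cdot |G|^{o(k/\log k)}$ would in particular run in time $f'(|H|,d(G)) \cdot |G|^{o(\alpha(H)/\log\alpha(H))}$, contradicting Lemma~\ref{lem:indsubs_hard} under ETH.

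I do not foresee any substantive obstacle: all the real work is already done inside Lemma~\ref{lem:indsubs_hard}, which in turn rests on the F-gadget reduction of Lemma~\ref{lem:FGadget_main}, the expander family of Theorem~\ref{thm:nice_expanders}, and the extension of complexity monotonicity to degenerate host graphs due to Gishboliner et al.~\cite{Gishboliner20}. The corollary itself is a short bookkeeping step on top of that infrastructure. The only mild care needed is to align the ``$k$'' labelling vertex count (for paths, cycles, independent sets) or edge/side count (for matchings and bicliques) with the ``$k$'' used in the lemma's quantitative conclusion, but the mismatch is at most a factor of $2$ and is absorbed by the little-$o$.
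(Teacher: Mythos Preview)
Your proposal is correct and follows exactly the paper's approach: the paper's proof is a single sentence invoking Lemma~\ref{lem:indsubs_hard} together with the observation that each of the five pattern classes has independence number $\Omega(k)$. Your write-up simply spells out the routine $\alpha$-computations that the paper leaves implicit.
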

\begin{proof}
Use Lemma~\ref{lem:indsubs_hard}, noting that every mentioned graph class has independence number lower bounded by $\Omega(k)$.
\end{proof}

Let us finally prove the full classification for counting subgraphs in degenerate graphs; Theorems~\ref{thm:intro_indsubs_finegrained} and~\ref{thm:intro_indsubs_param} are immediate consequences:
\begin{proof}[Proof of Theorem~\ref{thm:main_indsubgraphs}]
    The upper bound follows diectly from Theorem 9 and Lemma 14 of~\cite{Bressan21}, and the lower bound holds by Lemma~\ref{lem:indsubs_hard}.
\end{proof}

\subsection{Generalised Induced Subgraph Patterns and k-Graphlets}
Having classified the complexity of counting the number of induced copies of a particular graph~$H$, we will now follow the approach of Jerrum and Meeks~\cite{JerrumM15} and consider the problem of counting all induced subgraphs that satisfy a fixed graph property. More precisely, given a computable graph property $\Phi$, we consider the problem $\#\indsubsprobd(\Phi)$ which asks, on input a graph $G$ and a positive integer $k$, to compute the number of induced subgraphs of size $k$ in $G$ that satisfy $\Phi$. The parameterization is given by~$k$ and~$d(G)$.

\noindent Observe that $\#\indsubsprobd(\Phi)$ subsumes the problems of counting induced copies of $k$-cliques, $k$-cycles or $k$-paths in bounded degeneracy graphs by setting $\Phi(H):=1$ if and only if $H$ is a clique, a cycle, or a path, respectively. However, since $\Phi$ can be any (computable) graph property, we can also express problems of counting more general induced subgraph patterns, such as the number of $k$-graphlets; here, a $k$-graphlet is a connected induced subgraph of size $k$.

The goal is to determine the complexity of $\#\indsubsprobd(\Phi)$ for any property $\Phi$. However, even without the condition of $G$ having bounded degeneracy (i.e., even if $d(G)$ is not a parameter), the complexity of this problem is not fully understood (cf.\ \cite{RothSW20}). For this reason, a complete classification in the spirit of Theorem~\ref{thm:main_indsubgraphs} along $\Phi$ is currently elusive. Therefore, we will focus on properties that are hereditary or minor-closed, as well as on selected natural properties such as connectivity; we emphasise that the latter was the first one for which the case of unbounded degeneracy was understood~\cite{JerrumM15}.

Our hope is that including the degeneracy of $G$ in the parameter (i.e. assuming that the degeneracy of $G$ is small), will allow us to identify new tractability results. However, the results of the current section indicate that bounded degeneracy does not help for generalised induced subgraph counting; note that Theorem~\ref{thm:intro_graphlets} is subsumed by what follows:
\begin{theorem}\label{thm:indsubsphi_hard_cor}
    The problem $\#\indsubsprobd(\Phi)$ is $\#\W{1}$-hard if any of the following is true:
    \begin{enumerate}
        \item $\Phi$ is the property of being connected.
        \item $\Phi$ is the property of being claw-free.
        \item $\Phi$ is minor-closed, non-trivial and of unbounded independence number (e.g., planarity).
    \end{enumerate}
\end{theorem}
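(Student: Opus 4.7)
The plan is to apply the algebraic approach to hardness in the spirit of Section~\ref{sub:techniques_lb}, and then combine it with Lemma~\ref{lem:FGadget_main} to transfer $\#\W{1}$-hardness from colour-prescribed homomorphism counting on bicliques to $\#\indsubsprobd(\Phi)$. Concretely, for every fixed $k$ the function $G \mapsto |\indsubs{\Phi,k}{G}|$ admits a finite expansion
\[
    |\indsubs{\Phi,k}{G}| = \sum_{H'} a_{\Phi,k}(H')\cdot|\homs{H'}{G}|,
\]
and the algebraic approach~\cite{DorflerRSW19,RothSW20} guarantees $a_{\Phi,k}(H')\neq 0$ whenever $H'$ is edge-transitive with a prime-power number of edges and $\Phi$ distinguishes $H'$ from $I_{|V(H')|}$. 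It therefore suffices to produce, for each of the three cases, a family of such witnesses $\{H'_\ell\}_\ell$ with $\tw(H'_\ell)\to \infty$ that moreover carries an F-gadget of treewidth $\Omega(\ell)$.

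I will take $H'_\ell$ to be the $1$-subdivision of the $(\ell,\ell)$-biclique $K_{\ell,\ell}$ with $\ell$ restricted to powers of $2$. One checks that $H'_\ell$ is arc-transitive (lifting the arc-transitive action of $\auts{K_{\ell,\ell}}$), has $2\ell^2$ edges (a power of $2$), and by construction admits $K_{\ell,\ell}$ as an F-gadget in the sense of Definition~\ref{def:gadgets}, taking $S_v = \{v\}$, $P_{\{u,v\}} = \{w_{uv}\}$ and $R = \emptyset$, where $w_{uv}$ is the subdivision vertex on the edge $\{u,v\}$. Since the class of bicliques has unbounded treewidth, these F-gadgets do too. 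I then verify $\Phi(H'_\ell) \neq \Phi(I_{|V(H'_\ell)|})$ case by case. For (1) connectedness, $H'_\ell$ is connected while $I_{|V(H'_\ell)|}$ is not. For (2) claw-freeness, every ``original'' biclique vertex of $H'_\ell$ has $\ell \geq 3$ pairwise non-adjacent subdivision neighbours that form an induced $K_{1,3}$, whereas any independent set is trivially claw-free. For (3), where $\Phi$ is non-trivial minor-closed with unbounded independence number, a short argument shows $I_m \in \Phi$ for every $m$ (any $G \in \Phi$ with $\alpha(G) \geq m$ has $I_m$ as an induced subgraph, hence as a minor), while for $\ell$ exceeding the size of a fixed forbidden minor of $\Phi$, $K_{\ell,\ell}$ and therefore also its subdivision $H'_\ell$ contains that minor, so $H'_\ell \notin \Phi$.

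The final assembly is a reduction chain. Starting from an instance of $\#\cphomsprob(\{K_{\ell,\ell}\}_\ell)$, which is $\#\W{1}$-hard by Theorem~\ref{thm:hardness_bottleneck} since the class has unbounded treewidth, Lemma~\ref{lem:FGadget_main} produces an $H'_\ell$-coloured host of degeneracy bounded by a function of $\ell$ whose colour-prescribed count equals the original. Inclusion-exclusion over colours (Facts~\ref{fact:cphoms_to_cfhoms} and~\ref{fact:cfhoms_to_homs}) converts this into a linear combination of uncoloured values $|\homs{H'_\ell}{G'-S}|$ on bounded-degeneracy hosts. Applying degenerate complexity monotonicity exactly as in the proof of Lemma~\ref{lem:subs_hard}, via the expansion of $|\indsubs{\Phi,k}{\star}|$ above and the tensor-product linear system of~\cite{CurticapeanDM17,Gishboliner20}, together with the non-vanishing of $a_{\Phi,k}(H'_\ell)$, recovers $|\homs{H'_\ell}{\star}|$ from polynomially many queries to $\#\indsubsprobd(\Phi)$ on host graphs of degeneracy bounded by a function of $\ell$.

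The main obstacle is the bookkeeping between the coloured framework in which the algebraic approach is cleanest and the uncoloured statement of $\#\indsubsprobd(\Phi)$: I will need to introduce a vertex-coloured variant of the problem, verify it is interreducible with the uncoloured version in the degenerate setting while preserving the parameter $k+d(G)$ (following the colour-reduction techniques of~\cite{Roth19,RothSW20}), and check that the tensor-product coefficient matrix from the monotonicity step remains invertible on the row corresponding to the witness $H'_\ell$. Modulo this technical step, the argument above yields all three items of the statement.
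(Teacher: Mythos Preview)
Your proposal is correct and follows essentially the same approach as the paper: the paper packages the argument into a lemma stating that $\#\indsubsprobd(\Phi)$ is $\#\W{1}$-hard whenever $\Phi(\mathsf{IS}_{2\ell+\ell^2}) \neq \Phi(K^2_{\ell,\ell})$ for all but finitely many $\ell$, proved via the reduction chain $\#\cphomsprob(\{K_{\ell,\ell}\}) \fptred \#\cphomsprobd(\{K^2_{\ell,\ell}\}) \fptred \#\cpindsubsprobd(\Phi) \fptred \#\indsubsprobd(\Phi)$, where the middle step uses the colour-prescribed algebraic criterion of~\cite{DorflerRSW19} and the last step is inclusion--exclusion on colours; your ``main obstacle'' is exactly this colour-prescribed intermediate problem, and your three case verifications match the paper's. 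One minor correction: the $1$-subdivision $K^2_{\ell,\ell}$ is edge-transitive but \emph{not} arc-transitive for $\ell\ge 3$ (its edges join a degree-$\ell$ vertex to a degree-$2$ vertex, so no automorphism reverses an arc); only edge-transitivity is needed, and the paper verifies it via the criterion that $H\setminus e \cong H\setminus e'$ for all edges $e,e'$.
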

We emphasise that, in contrast to Theorem~\ref{thm:main_indsubgraphs}, we do not obtain (almost) tight conditional lower bounds, but only $\#\W{1}$-hardness.\footnote{Our $\#\W{1}$-hardness proof yields an implicit conditional lower bound stating that a running time of $f(k,d(G)) \cdot |G|^{o(\sqrt{k})}$ is not possible for any function $f$, unless ETH fails.}

The proof of Theorem~\ref{thm:indsubsphi_hard_cor} relies on an algebraic approach of analysing the coefficients in the associated linear combinations of homomorphism counts, as introduced in~\cite{DorflerRSW19}. However, here we need a result that is stronger (and, it turns out, also more technical).

\begin{lemma}\label{lem:phi_IS_biclique}
    Let $\Phi$ be a computable graph property. Suppose that for all but finitely many positive integers~$\ell$ we have
    \begin{linenomath*}
    \begin{equation}\label{eq:is_biclique_dif}
        \Phi(\mathsf{IS}_{2\ell+\ell^2}) \neq \Phi(K^2_{\ell,\ell})\,,
    \end{equation}
    \end{linenomath*}
    where $K^2_{\ell,\ell}$ is the subdivision of the $(\ell,\ell)$-biclique, and $\mathsf{IS}_{2\ell+\ell^2}$ is the independent set of size $2\ell+\ell^2$. Then, $\#\indsubsprobd(\Phi)$ is $\#\W{1}$-hard.
\end{lemma}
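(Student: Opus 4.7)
The plan is to combine the algebraic approach to hardness with complexity monotonicity in degenerate graphs and the F-gadget reduction lemma. As sketched in Section~\ref{sub:techniques_lb}, I would expand $|\indsubs{\Phi,k}{G}|$ as a linear combination of homomorphism counts (equation~(\ref{eq:lincomb_intro_indsubphi})), and identify an infinite family of graphs $H'_k$ that appear with non-zero coefficient and whose homomorphism counting problem is already $\#\W{1}$-hard on hosts of bounded degeneracy.

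The natural candidate for $H'_k$ is exactly the graph $K^2_{\ell,\ell}$ appearing in the hypothesis. Take $\ell = 2^j$ and $k = 2\ell + \ell^2 = |V(K^2_{\ell,\ell})|$, choosing $j$ large enough so that the hypothesis $\Phi(\mathsf{IS}_k) \neq \Phi(K^2_{\ell,\ell})$ applies. Since $\ell$ is a power of $2$, the number of edges of $K^2_{\ell,\ell}$, namely $2\ell^2 = 2^{2j+1}$, is a prime power. Moreover $K^2_{\ell,\ell}$ is edge-transitive: within each ``side'' of the subdivision the automorphism group $S_\ell \wr S_2$ of $K_{\ell,\ell}$ already acts transitively, and swapping the two sides of the original bipartition interchanges the two sides of subdivision edges. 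With these three ingredients (edge-transitivity, prime-power number of edges, distinguishing $\Phi$) the algebraic approach of~\cite{DorflerRSW19} applies and gives $a_{\Phi,k}(K^2_{\ell,\ell}) \neq 0$.

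I would then invoke complexity monotonicity for degenerate graphs~\cite{Gishboliner20}: non-vanishing of this coefficient implies that computing $|\indsubs{\Phi,k}{G}|$ with parameter $k+d(G)$ is at least as hard (under parameterised Turing reductions) as computing $|\homs{K^2_{\ell,\ell}}{\star}|$, and therefore also $|\cphoms{K^2_{\ell,\ell}}{\star}|$, on bounded-degeneracy hosts. The final link is that $K^2_{\ell,\ell}$ is the $1$-subdivision of $K_{\ell,\ell}$, so it has $K_{\ell,\ell}$ as an F-gadget in the trivial way ($S_v = \{v\}$, $P_e = \{v_e\}$, $R = \emptyset$). Applying Lemma~\ref{lem:FGadget_main} yields a parameterised Turing reduction $\#\cphomsprob(\{K_{\ell,\ell}\}_\ell) \fptred \#\cphomsprobd(\{K^2_{\ell,\ell}\}_\ell)$; since $\tw(K_{\ell,\ell}) = \ell$ is unbounded, Theorem~\ref{thm:hardness_bottleneck} establishes $\#\W{1}$-hardness of the biclique source, and composing the reductions transfers this to $\#\indsubsprobd(\Phi)$.

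The main technical obstacle I expect is the careful bookkeeping of the coloured versus uncoloured formulations. The algebraic approach of~\cite{DorflerRSW19} is phrased for a vertex-coloured variant of $\#\indsubsprob(\Phi)$, so I would first install an interreduction between the coloured and uncoloured versions that preserves bounded degeneracy of the host, following the standard constructions of~\cite{Curticapean15, Roth19}. Together with verifying that the Kronecker-product queries invoked by complexity monotonicity remain of bounded degeneracy (the observation from the proof of Lemma~\ref{lem:subs_hard}, reusing~\cite[Observation A.1]{Gishboliner20}), the full chain of reductions then goes through, and the fact that we can pick infinitely many $\ell = 2^j$ satisfying the hypothesis of the lemma ensures that the parameter $k$ is unbounded along the reduction, which is what is needed for $\#\W{1}$-hardness.
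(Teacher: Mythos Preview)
Your proposal is correct and follows essentially the same route as the paper: it defines the same chain of reductions via the colour-prescribed problem, uses the F-gadget reduction (Lemma~\ref{lem:FGadget_main}) to pass from $K_{\ell,\ell}$ to $K^2_{\ell,\ell}$, applies the algebraic approach of~\cite{DorflerRSW19} with the same edge-transitivity and prime-power verification for $K^2_{\ell,\ell}$ with $\ell=2^j$, and closes with complexity monotonicity and the coloured-to-uncoloured inclusion-exclusion. The ``technical obstacle'' you flag is precisely what the paper handles by introducing the intermediate problem $\#\cpindsubsprobd(\Phi)$ and the three-step chain $\#\cphomsprob(\widehat{C}) \fptred \#\cphomsprobd(C) \fptred \#\cpindsubsprobd(\Phi) \fptred \#\indsubsprobd(\Phi)$.
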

\begin{proof}
    Let $\widehat{C}$ be the set of all bicliques $K_{\ell,\ell}$ such that $\ell$ is a power of $2$ and satisfies Equation~(\ref{eq:is_biclique_dif}). The hypotheses of the lemma imply that $\widehat{C}$ is infinite and has unbounded treewidth. Thus, $\#\cphomsprob(\widehat{C})$ is $\#\W{1}$-hard by Theorem~\ref{thm:hardness_bottleneck}. We will use $\#\cphomsprob(\widehat{C})$ to infer the hardness of $\#\indsubsprobd(\Phi)$. 
    
    Now let $C$ be the set of all subdivisions $K^2_{\ell,\ell}$ such that $\ell$ satisfies Equation~(\ref{eq:is_biclique_dif}). 
    Moreover, let  $\#\cpindsubsprobd(\Phi)$ be the following problem: we are given an $H$-coloured graph $G$ for some~$H$, and the goal is to compute $|\cpindsubs{\Phi}{H}{G}|$, where:
    \begin{linenomath*}
    \begin{equation}
    \cpindsubs{\Phi}{H}{G}:= \{ S \subseteq V(G)~:~\Phi(G[S]) \,\wedge\, |S|=|V(H)|\,\wedge\, c_H(S)=V(H)\} \,,
    \end{equation}
    \end{linenomath*}
    that is, the number of induced subgraphs of order $|V(H)|$ in $G$ that satisfy $\Phi$ and are colourful with respect to $c_H$. The parameterization is given by $|V(H)|+d(G)$.
    
    \noindent We will prove the following chain of reductions:
    \begin{linenomath*}
    \begin{equation}
        \#\cphomsprob(\widehat{C})
        \fptred \#\cphomsprobd(C) \fptred \#\cpindsubsprobd(\Phi) \fptred \#\indsubsprobd(\Phi)
    \end{equation}
    \end{linenomath*}
    which will prove the claim since, as said, $\#\cphomsprob(\widehat{C})$ is $\#W[1]$-hard.

    \begin{claim}
    $\#\cphomsprob(\widehat{C}) \fptred \#\cphomsprobd(C)$
    \end{claim}
    \begin{claimproof}
    This reduction holds by Lemma~\ref{lem:FGadget_main}, since $K^2_{\ell,\ell}$ by construction has a $K_{\ell,\ell}$-gadget.
    \end{claimproof}

    \begin{claim}
    $\#\cphomsprobd(C) \fptred \#\cpindsubsprobd(\Phi)$
    \end{claim}
    \begin{claimproof}
    We apply the following result from~\cite{DorflerRSW19}.
    Let $G$ be an $H$-coloured graph. There exists a computable function $a$ of finite support such that
    \begin{linenomath*}
    \begin{equation}\label{eq:cpindsubs_Phi_H_G}
    |\cpindsubs{\Phi}{H}{G}| = \sum_{H'} a(H') \cdot |\cphoms{H'}{G}|  \,,
    \end{equation}
    \end{linenomath*}
    where the sum is over all (isomorphism classes of) edge-subgraphs $H'$ of $H$, that is, graphs obtained from~$H$ by deleting edges, and where $\cphoms{H'}{G}$ has the obvious meaning, since $V(H')=V(H)$.
    
    Now, complexity monotonicity holds as in the uncoloured case: querying the oracle to obtain the counts $|\cpindsubs{\Phi}{H}{G\otimes H_i}|$ for a sequence of $H$-coloured graphs $H_i$ yields a system of linear equations which has a unique solution from which we can obtain $|\cphoms{H'}{G}|$ for each $H'$ with $a(H')\neq 0$ (see~\cite[Lemma~14]{DorflerRSW19}). Since the $H_i$ only depend on $H$, the degeneracy $d(G\otimes H_i)$ is bounded by $f(|H|)\cdot d(G)$ for some computable function $f$. 
    Consequently, it remains to show that $a(H')$ does not vanish in Equation~\ref{eq:cpindsubs_Phi_H_G} for $H'=K^2_{\ell,\ell}$. This is where the algebraic approach applies: It was shown in~\cite{DorflerRSW19} that $a(H)\neq 0$ whenever the following two criteria are satisfied:
    \begin{enumerate}
        \item $\Phi(\mathsf{IS}_{|V(H)|})\neq \Phi(H)$
        \item $H$ is a $p$-edge-transitive graph for some prime $p$.
    \end{enumerate}
    Here, a graph is called $p$-edge-transitive, if it is edge-transitive, that is, the automorphism group acts transitively on the set of edges, and if the number of edges is a power of $p$. 
    
    Now, by the premise of the lemma, we have $\Phi(\mathsf{IS}_{|V(K^2_{\ell,\ell})|}\neq \Phi(H)$. Note further that $|E(K^2_{\ell,\ell})|=2\ell^2$ is a power of $2$ since $\ell$ is. Finally, we use the fact that a graph $H$ is edge-transitive if and only if $H\setminus e$ and $H\setminus e'$ are isomorphic for every pair of edges~$e$ and~$e'$~\cite{AndersenDSV92}. Using this criterion, it is easy to verify that $K^2_{\ell,\ell}$ is indeed edge-transitive.
    
    We can hence conclude that the coefficient of the subdivision of the biclique does not vanish, hence $\#\cphomsprobd(C) \fptred \#\cpindsubsprobd(\Phi)$ as claimed.
    \end{claimproof}

    \begin{claim}
    $\#\cpindsubsprobd(\Phi) \fptred \#\indsubsprobd(\Phi)$
    \end{claim}
    \begin{claimproof}
    It is known that computing $|\cpindsubs{\Phi}{H}{G}|$ reduces to computing the number of induced subgraphs of size $|V(H)|$ that satisfy $\Phi$ in (uncoloured) induced subgraphs of $G$ via the inclusion-exclusion principle (see, for instance, the full version of~\cite{DorflerRSW19}). Since the degeneracy of an induced subgraph of $G$ is at most $d(G)$, the latter reduction remains valid if the degeneracy of the graphs is an additional parameter. This proves the claim and the lemma.
    \end{claimproof}
    This concludes the proof of Lemma~\ref{lem:phi_IS_biclique}.
\end{proof}

\noindent We can finally prove Theorem~\ref{thm:indsubsphi_hard_cor}.
\begin{proof}[Proof of Theorem~\ref{thm:indsubsphi_hard_cor}]
    For cases 1.\ and 2.,  is note that $\Phi$ distinguishes independent sets from subdivisions of bicliques. Thus, Lemma~\ref{lem:phi_IS_biclique} applies. For case 3., the Robertson-Seymour Theorem~\cite{RobertsonS04} implies that any non-trivial minor-closed property is characterised by a \emph{finite} set of forbidden minors. Therefore $\Phi$ is computable. Since $\Phi$ is non-trivial, the set of forbidden minors is non-empty. Since every graph is the minor of the subdivision of some large enough biclique, there exists an $\ell_0$ such that $\Phi$ is false for $K^2_{\ell,\ell}$ for all $\ell\geq \ell_0$. Finally, since $\Phi$ has unbounded independence number, we have that for all $\ell\geq \ell_0$:
    \begin{linenomath*}
    \begin{equation}
    \Phi(\mathsf{IS}_{2\ell+\ell^2}) \neq \Phi(K^2_{\ell,\ell})\,,
    \end{equation}
    \end{linenomath*}
    concluding the proof.
\end{proof}

\noindent\textbf{Remark.} As a consequence of our results above, for minor-closed properties we obtain an exhaustive classification into fixed-parameter and $\#\W{1}$-hard cases.
Formally:
\begin{corollary}[Theorem~\ref{thm:intro_indsubs_minor_closed} restated]
    Let $\Phi$ be a minor-closed graph property. If $\Phi$ is trivial or has bounded independence number, then $\#\indsubsprobd(\Phi)$ is fixed-parameter tractable. Otherwise, $\#\indsubsprobd(\Phi)$ is $\#\W{1}$-hard.
\end{corollary}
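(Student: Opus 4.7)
The plan is to derive the corollary by combining the hardness direction already established in Theorem~\ref{thm:indsubsphi_hard_cor}(3) with a simple algorithmic argument covering the two tractable subcases. The hard direction is immediate: a minor-closed, non-trivial $\Phi$ of unbounded independence number falls under the premise of Theorem~\ref{thm:indsubsphi_hard_cor}(3), so $\#\indsubsprobd(\Phi)$ is $\#\W{1}$-hard. What remains is to supply an FPT algorithm when $\Phi$ is trivial or has bounded independence number.

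If $\Phi$ is trivial, then on input $(G,k)$ the answer is either $\binom{|V(G)|}{k}$ (when $\Phi$ is constantly true) or $0$ (when $\Phi$ is constantly false), both computable in polynomial time.

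If $\Phi$ has bounded independence number $\alpha_\Phi$, the strategy is to reduce to pattern-wise induced subgraph counting via Theorem~\ref{thm:main_indsubgraphs}. By the Robertson--Seymour Theorem the minor-closed property $\Phi$ is decidable, as already observed in the proof of Theorem~\ref{thm:indsubsphi_hard_cor}. The algorithm enumerates a representative of each isomorphism class of $k$-vertex graphs, tests membership in $\Phi$, and collects the set $\mathcal{H}_k := \{H : |V(H)|=k,\ \Phi(H)\}$; this step takes time $g(k)$ for some computable $g$. Then
\[
|\indsubs{\Phi,k}{G}| = \sum_{H \in \mathcal{H}_k} |\indsubs{H}{G}|,
\]
and Theorem~\ref{thm:main_indsubgraphs} applies to each term. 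Since $\Phi(H)$ forces $\alpha(H) \leq \alpha_\Phi$, every $|\indsubs{H}{G}|$ is computed in time $f(k, d(G)) \cdot |V(G)|^{\alpha_\Phi} \cdot \log|V(G)|$ for some computable $f$, and summing the at most $g(k)$ contributions preserves fixed-parameter tractability in the parameter $k+d(G)$.

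There is no real obstacle; the corollary is essentially a bookkeeping argument that packages Theorem~\ref{thm:indsubsphi_hard_cor}(3), Theorem~\ref{thm:main_indsubgraphs}, and the decidability of minor-closed properties. The only conceptual point worth highlighting is that the hypothesis ``$\Phi$ has bounded independence number'' transfers uniformly to every pattern $H \in \mathcal{H}_k$, which is precisely what is needed for Theorem~\ref{thm:main_indsubgraphs} to be applied pattern-by-pattern with an exponent independent of $k$.
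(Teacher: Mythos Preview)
Your proposal is correct and follows essentially the same approach as the paper's proof: the hardness direction invokes Theorem~\ref{thm:indsubsphi_hard_cor}(3), the trivial case is handled by returning $0$ or $\binom{|V(G)|}{k}$, and the bounded-independence-number case enumerates all $k$-vertex graphs satisfying $\Phi$ and sums their induced-subgraph counts via Theorem~\ref{thm:main_indsubgraphs}, using Robertson--Seymour for decidability. Your explicit observation that the bound $\alpha(H)\le\alpha_\Phi$ makes the exponent in Theorem~\ref{thm:main_indsubgraphs} independent of $k$ is a helpful clarification, but the argument is otherwise identical to the paper's.
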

\begin{proof}
    Recall that every minor-closed property is computable, by the Robertson-Seymour theorem~\cite{RobertsonS04}. If $\Phi$ is trivial, then on input $G$ and $k$ we output either $0$ or $\binom{|V(G)|}{k}$, depending on whether $\Phi=0$ or $\Phi=1$.
    If $\Phi$ has bounded independence number, we can, on input $G$ and $k$, invoke the algorithm from Theorem~\ref{thm:main_indsubgraphs} on every graph $H$ on $k$ vertices and for which $\Phi(H)=1$. Afterwards, we output the sum of those counts.
    The $\#\W{1}$-hardness of the remaining cases holds by Theorem~\ref{thm:indsubsphi_hard_cor}.
\end{proof}

\section{Towards a Classification for $\#\homsprob_{\text{D}}(C)$}
In this section, we provide results on both the hardness and the tractability of $\#\homsprob_{\text{D}}(C)$. These results do not form a complete classification of the fixed-parameter tractability of $\#\homsprob_{\text{D}}(C)$, as they do not give conditions that are both necessary \emph{and} sufficient for the class $C$ to be ``easy''. However, they highlight some interesting directions. Moreover, the hardness part is the technical heart of the hardness results for $\#\subsprob_{\text{D}}(C)$ and $\#\indsubsprob_{\text{D}}(C)$. The tractability part instead provides some intuition on how one could prove tractability through bounds on the treewidth of F-gadgets, or bounds on the cliquewidth of the pattern graph.
\subsection{Hardness Results}
Let $\boxplus_k$ be the $k$-by-$k$ grid graph.\footnote{Formally, $\boxplus_k$ has vertex set $\{(i,j)~|~0\leq i,j \leq k-1\}$, and has an edge between $(i,j)$ and $(i',j')$ if and only if $|i-i'|+|j-j'|=1$.}
We say that a class of graphs $C$ has induced grid minors of unbounded size if, for every positive integer $k$, there exists $H\in C$ such that $\boxplus_k$ is an induced minor of $H$.
By leveraging the $F$-gadgets results of Section~\ref{sec:gadgets}, we give the following hardness result:
\begin{theorem}[Theorem~\ref{thm:intro_homs_hard} restated]\label{thm:igm_homsprobd_hardness}
Let $C$ be a computable class of graphs. If $C$ has unbounded induced grid minors, then $\#\homsprobd(C)$ is $\#\W{1}$-hard.
\end{theorem}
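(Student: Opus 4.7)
The plan is to chain three reductions that together yield $\#\W{1}$-hardness of $\#\homsprobd(C)$: a structural step identifying F-gadgets of unbounded treewidth inside $C$, the F-gadget reduction of Lemma~\ref{lem:FGadget_main}, and a standard uncolouring step.

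First I would prove the structural claim that if $C$ has induced grid minors of unbounded size, then $\gadgets(C)$ has unbounded treewidth (this is the relevant direction of Lemma~\ref{lem:intro_induced_minor_char}). Concretely, for every target $k$, I would take $H\in C$ admitting an induced-minor model of $\boxplus_{2k+1}$ with witness blocks $\{B_{(i,j)}\}$, and exhibit a $\boxplus_k$-gadget inside $H$. The natural partition takes $S_{(i,j)}:= B_{(2i,2j)}$ as vertex blocks, the blocks with exactly one odd coordinate as candidate edge blocks $P_e$, and sends the blocks with both coordinates odd into the remainder $R$. Conditions~1 and~2 of Definition~\ref{def:gadgets} hold because each $B_{(i,j)}$ is connected and adjacent grid blocks carry at least one edge of $H$; the delicate point is that $H[\{s_{u,e}\}\cup P_e \cup \{s_{v,e}\}]$ must be an \emph{induced} simple path and that no edge of $H$ may escape the partition. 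Both requirements are handled by pruning each candidate $P_e$ to a shortest internal path between a neighbour of $S_u$ and a neighbour of $S_v$, dumping leftover vertices into $R$; the ``only if'' direction of the induced-minor witness definition rules out stray edges between non-adjacent grid blocks. Since $\tw(\boxplus_k)\geq k$, this exhibits some $F\in\gadgets(C)$ with $\tw(F)\geq k$.

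With $\gadgets(C)$ of unbounded treewidth established, Theorem~\ref{thm:hardness_bottleneck} gives that $\#\cphomsprob(\gadgets(C))$ is $\#\W{1}$-hard. I would then apply Lemma~\ref{lem:FGadget_main} as a parameterized Turing reduction to the colour-prescribed counterpart of $\#\homsprobd(C)$: on input $(F,G)$, computability of $C$ lets us locate, in time depending only on $|F|$, some $H\in C$ with an $F$-gadget, after which the lemma builds an $H$-coloured graph $G'$ with $d(G')\leq |V(H)|+2$ and $|\cphoms{F}{G}|=|\cphoms{H}{G'}|$. Finally I would uncolour via the standard inclusion-exclusion already used in the proof of Lemma~\ref{lem:subs_hard}: combining Facts~\ref{fact:cphoms_to_cfhoms} and~\ref{fact:cfhoms_to_homs} gives
\begin{equation}
    |\cphoms{H}{G'}| = |\auts{H}|^{-1}\cdot \sum_{S\subseteq V(H)} (-1)^{|S|}\cdot |\homs{H}{G'-S}|\,,
\end{equation}
so $2^{|V(H)|}$ oracle calls to $\#\homsprobd(C)$ suffice, and since each $G'-S$ is an induced subgraph of $G'$ it inherits $d(G'-S)\leq |V(H)|+2$, keeping the oracle parameter bounded by a function of $|F|$. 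Chaining everything produces $\#\W{1}$-hardness of $\#\homsprobd(C)$.

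The main obstacle is the structural step. The F-gadget definition is far more restrictive than an induced-minor witness structure: the ``edge blocks'' $P_e$, together with the chosen endpoints in the adjacent vertex blocks, must induce simple paths in $H$, and no edge of $H$ may be misplaced across the partition. Turning a large induced grid minor of $H$ into an F-gadget of comparable treewidth therefore requires a careful pruning argument, possibly starting from a constant-factor larger induced grid minor so that enough structure survives into the resulting gadget.
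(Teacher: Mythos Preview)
Your proposal is correct and follows essentially the same route as the paper: the paper proves the theorem by combining Lemma~\ref{lem:induced_minor_char} (the structural step, where it too passes from $\boxplus_{2k}$ to a $\boxplus_k$-gadget by keeping the even--even blocks as $S_{(i,j)}$, extracting an induced path through each odd-coordinate block for $P_e$, and dumping the rest into $R$) with Lemma~\ref{lem:hardness_F_gadgets}, which chains exactly the three reductions you describe via Lemma~\ref{lem:FGadget_main} and Facts~\ref{fact:cphoms_to_cfhoms}--\ref{fact:cfhoms_to_homs}. The only cosmetic difference is that the paper uses $\boxplus_{2k}$ rather than $\boxplus_{2k+1}$, and phrases the pruning as ``choose an induced path'' rather than ``choose a shortest path'', but these are interchangeable.
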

Theorem~\ref{thm:igm_homsprobd_hardness} follows immediately by Lemma~\ref{lem:induced_minor_char} and Lemma~\ref{lem:hardness_F_gadgets}, that we prove in the rest of the section.

\begin{lemma}\label{lem:induced_minor_char}
$C$ has unbounded induced grid minors if and only if $\gadgets(C)$ has unbounded treewidth.
\end{lemma}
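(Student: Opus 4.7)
The plan is to prove each direction separately.

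For ($\Leftarrow$): Let $H \in C$ have $\boxplus_k$ as induced minor, witnessed by pairwise-disjoint connected blocks $\{B_{(i,j)} : 0 \le i,j < k\}$ with edges in $H$ between $B_{(i,j)}$ and $B_{(i',j')}$ if and only if $|i-i'| + |j-j'| = 1$. I will construct a $\boxplus_m$-gadget in $H$ for $m = \lfloor k/3 \rfloor$, which gives $\boxplus_m \in \gadgets(C)$ and hence unbounded treewidth in $\gadgets(C)$ as $k \to \infty$. For each $(a,b) \in V(\boxplus_m)$ I pick a single vertex $s_{(a,b)} \in B_{(3a, 3b)}$ and set $S_{(a,b)} := \{s_{(a,b)}\}$. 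For each horizontal edge $e = \{(a,b),(a+1,b)\}$ of $\boxplus_m$ I take $p_e$ to be a shortest path from $s_{(a,b)}$ to $s_{(a+1,b)}$ in $H[B_{(3a, 3b)} \cup B_{(3a+1, 3b)} \cup B_{(3a+2, 3b)} \cup B_{(3a+3, 3b)}]$, and let $P_e$ be its interior; vertical edges are handled symmetrically; everything else goes into $R$. Shortest paths are induced, so $p_e = H[\{s_{u,e}\} \cup P_e \cup \{s_{v,e}\}]$ is a simple path as required. The crux is to verify condition 3 of Definition~\ref{def:gadgets}: this holds because the induced-minor property confines every edge of $H$ to a grid-adjacent block pair, the singleton choice of $S_v$ pushes all non-chosen vertices of $S$-blocks and corner blocks into $R$, and the shortest-path property forbids chords of $p_e$ and edges from $s_{(a,b)}$ to non-first path vertices inside the four-block corridor.

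For ($\Rightarrow$): Let $F \in \gadgets(C)$ have treewidth $t$ and let $H \in C$ carry an $F$-gadget $(\mathcal{S}, \mathcal{P}, R)$. Considering $H^* := H[\bigcup_v S_v \cup \bigcup_e P_e]$, condition 3 of Definition~\ref{def:gadgets} forces every edge of $H^*$ to lie inside some $H[S_v]$ or on some $p_e$. Contracting each $H[S_v]$ to a single vertex and then collapsing each path $p_e$ down to length $2$ (keeping one subdivision vertex) realises the $1$-subdivision $F^{(1)}$ of $F$ as an induced minor of $H^*$, and hence of $H$. So it suffices to show $F^{(1)}$ has induced grid minors of unbounded size as $t$ grows.

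By the Excluded Grid Theorem~\cite{RobertsonS86-ExGrid}, since $\tw(F) = t$, $F$ contains $\boxplus_{g(t)}$ as a minor for some $g$ with $g(t) \to \infty$, witnessed by connected disjoint blocks $\{A_w\}_w \subseteq V(F)$. I lift this to an induced minor witness $\{C_w\}_w$ in $F^{(1)}$ by taking $C_w$ to be $A_w$, plus the subdivision vertices of all $F$-edges with both endpoints in $A_w$, plus, for each grid-neighbor $w'$ of $w$, the subdivision vertex of a single chosen $F$-edge joining $A_w$ and $A_{w'}$; I then delete every remaining subdivision vertex from $F^{(1)}$. The main obstacle is checking that the $\{C_w\}$ genuinely form an \emph{induced} minor witness: this works because $F^{(1)}$ is bipartite with original vertices adjacent only to subdivision vertices, so after the deletion, blocks $C_w, C_{w'}$ corresponding to non-grid-adjacent $w, w'$ have no edges between them, while the chosen boundary subdivision vertices supply exactly one edge for each grid-adjacent pair. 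Composing induced minors then yields $\boxplus_{g(t)}$ as an induced minor of $H$, so $C$ has unbounded induced grid minors.
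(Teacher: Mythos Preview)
Your second direction ($\Rightarrow$) is essentially correct and is a clean alternative to the paper's argument. The paper builds the witness structure for $\boxplus_k$ in $H$ directly from the model of $\boxplus_k$ in $F$ together with the $F$-gadget; your factorisation through the $1$-subdivision $F^{(1)}$ works just as well and is arguably more transparent. One small slip: as written, for a grid edge $\{w,w'\}$ you add ``the subdivision vertex of a single chosen $F$-edge'' to \emph{both} $C_w$ and $C_{w'}$, which breaks disjointness if you pick the same edge from both sides. The obvious fix---assign that subdivision vertex to exactly one of the two blocks---restores disjointness while keeping connectivity and the required adjacencies.

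Your first direction ($\Leftarrow$), however, has a genuine gap. By taking each $S_{(a,b)}$ to be a \emph{single} vertex $s_{(a,b)}\in B_{(3a,3b)}$ you force the paths $p_e$ to run through the blocks $B_{(3a,3b)}$ themselves, not only through the intermediate corridor blocks. Now consider two edges $e,e'$ of $\boxplus_m$ incident to the same vertex $(a,b)$: both shortest paths start at $s_{(a,b)}$ and may pick up further vertices of $B_{(3a,3b)}$ in their interiors $P_e,P_{e'}$. Nothing prevents these interiors from (i) sharing a vertex of $B_{(3a,3b)}$, destroying the required partition, or (ii) containing two distinct vertices of $B_{(3a,3b)}$ that are adjacent in $H$, giving an edge of $H$ that lies in no $H[S_v]$, in neither $p_e$ nor $p_{e'}$, and is not incident to $R$---a violation of condition~3 of Definition~\ref{def:gadgets}. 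The shortest-path property only rules out chords \emph{within} a single $p_e$; it says nothing about edges between two different paths.

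The paper avoids this entirely by setting $S_{(i,j)}:=B(2i,2j)$ (the whole block) and letting each $P_e$ live solely in the single intermediate block $B(2i{+}1,2j)$. Then the sets $P_e$ are automatically pairwise disjoint and pairwise non-adjacent, since distinct intermediate blocks are non-adjacent in the induced-minor witness. Your argument is easily repaired along the same lines: drop the singleton choice and let $S_{(a,b)}$ be the full block (spacing~$2$ then suffices).
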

\begin{proof}
    We prove the two directions separately. 
    \begin{claim}
        If $C$ has unbounded induced grid minors, then $\gadgets(C)$ has unbounded treewidth.
    \end{claim}
    \begin{claimproof}
    Since $C$ has unbounded induced grid minors, for any $k>0$ there exists $H\in C$ such that $\boxplus_{2k}$ is an induced minor of $H$. We show that this implies that $H$ has an $F$-gadget where $F=\boxplus_k$.
    This implies the claim, since $\boxplus_k$ has treewidth $k$ (cf.\ Chapter~7.7.1 in~\cite{CyganFKLMPPS15}).

    Let $W$ be the witness structure for $\boxplus_{2k}$ in $H$. For every $(i,j)\in V(\boxplus_{2k})$, let $B(i,j)$ be the subset of $V(W)$ corresponding to $(i,j)$. An $F$-gadget of $H$ can be constructed as follows:
    \begin{itemize}
        \item For each $(i,j)\in V(F)$, define $S_{(i,j)} = B(2i,2j)$.
        \item For each $e\in E(F)$, define $P_e$ as follows. Without loss of generality assume $e=\{(i,j),(i+1,j)\}$; the case where $e=\{(i,j),(i,j+1)\}$ is symmetric.
        First, note that $H$ must contain an induced path $p_e$ between $S_{(i,j)}=B(2i,2j)$ and  $S_{(i+1,j)}=B(2(i+1),2j)$ such that all intermediate vertices are in $B(2i+1,2j)$.
        This is true since, by definition of the minor, there are edges between $B(2i,2j)$ and $B(2i+1,2j)$ and between $B(2i+1,2j)$ and $B(2(i+1),2j)$, but no edges between $B(2i,2j)$ and $B(2(i+1),2j)$.
        We set $P_e$ to be the set of intermediate vertices of $p_e$.
        \item We let $R$ contain all remaining vertices of $V(H)$.
    \end{itemize}
    We argue that the resulting partition $(\mathcal{S},\mathcal{P},R)$ of $V(H)$ is an $F$-gadget, by checking that it satisfies the three properties of Definition~\ref{def:gadgets}.
    \begin{enumerate}
        \item $\forall\, u\in V(F)$, $S_{u} \ne \emptyset$ and $H[S_{u}]$ is connected. This holds true since in our case $S_{u}=B_{(2i,2j)}$, and by the properties of $B$.
        \item $\forall\, e\in E(F)$: $P_e \ne \emptyset$, and if $e=\{u,v\}$, then there are two vertices $s_{u,e} \in S_u$ and $s_{v,e} \in S_v$ such that the subgraph $p_e := H[\{s_{u,e}\} \cup P_e \cup \{s_{v,e}\}]$ is a simple path with endpoints $s_{u,e}$ and $s_{v,e}$.
        All these conditions hold by construction, see above.
        \item $\forall\, e_h \in E(H)$, either $e_h \in H[S_v]$ for some $v \in V(F)$, or $e_h \in p_e$ for some $e \in E(F)$, or $e_h$ is incident to a vertex of $R$.
        To see this, note that each edge of $H$ which is not incident to a vertex of $R$ must be fully contained in some $H[S_{(i,j)}]$ or in some $p_e$. The latter holds true, as the paths $p_e$ are induced and there is no edge between $S_{(i,j)}=B_{(2i,2j)}$ and $S_{(i',j')}=B_{(2i',2j')}$ for any $(i,j)\neq(i',j')$.
    \end{enumerate}
    Hence, $H$ has an $F$-gadget for $F=\boxplus_k$.
\end{claimproof}
    \begin{claim}
        If $\gadgets(C)$ has unbounded treewidth, then $C$ has unbounded induced grid minors.
    \end{claim}
    \begin{claimproof}
    Since $\gadgets(C)$ has unbounded treewidth, by the Excluded Grid Theorem~\cite{RobertsonS86-ExGrid}, for any integer $k>0$ there exists some $F \in \gadgets(C)$ such that $\boxplus_k$ is a minor of $F$. Fix such an $F$, and consider any $H \in C$ such that $H$ has an $F$-gadget.
    We show that $H$ contains $\boxplus_k$ as induced minor.
    
    Let $M=(M_{(i,j)})_{0\leq i,j\leq k-1}$ be the model of $\boxplus_k$ in $F$.
    Using $M$, we define a witness structure $W=(W_{(i,j)})_{0\leq i,j\leq k-1}$ of $\boxplus_k$ in $H$, as follows. For every vertex $(i,j)\in V(\boxplus_k)$, we let $E(i,j)$ be the set of edges $e=\{u,v\}$ of $F$ such that:
    \begin{itemize}
        \item[(a)] $\{u,v\} \subseteq M_{(i,j)}$, or
        \item[(b)] $u\in M_{(i,j)}$ and $v\in M_{(i+1,j)}$ (if $i<k-1$), or
        \item[(c)] $u\in M_{(i,j)}$ and $v \in M_{(i,j+1)}$ (if $j<k-1$).
    \end{itemize}
    Now, by hypothesis, $H$ has an $F$-gadget $(\mathcal{S},\mathcal{P},R)$, where $\mathcal{S}=\{S_v\}_{v \in V(F)}$ and $\mathcal{P}=\{P_e\}_{e \in E(F)}$.
    We set:
    \begin{linenomath*}
    \begin{equation}
    W_{(i,j)} = \dot{\bigcup_{v\in M_{(i,j)}}} S_v ~~~\dot\cup~~~ \dot{\bigcup_{e\in E(i,j)}}P_e\,,
    \end{equation}
    \end{linenomath*}
    We observe that the sets $W_{(i,j)}$ are pairwise disjoint by construction. This holds because (i) the sets $M_{(i,j)}$ are pairwise disjoint by definition, and the sets $S_v$ are pairwise disjoint by definition, and (ii) the sets $E(i,j)$ are pairwise disjoint by construction, and the sets $P_e$ are pairwise disjoint by definition. It remains to show that $W=(W_{(i,j)})_{0\leq i,j\leq k-1}$ is a witness structure of $\boxplus_k$ in $H$.
   
   First, let us argue that $H[W_{(i,j)}]$ is connected for each $(i,j)\in V(\boxplus_k)$. By construction, each vertex $a$ in $W_{(i,j)}$ is either contained in $S_v$ for some $v\in M_{(i,j)}$, or in $P_e$ for some $e\in E(i,j)$. In the latter case, by definition of $E(i,j)$ and of $F$-gadgets, $a$ must be connected to a vertex $b\in S_u$ for some $u\in M_{(i,j)}$ via a path containing only vertices in $P_e$. Since $P_e \subseteq W_{(i,j)}$ it remains to show that any pair $a\in S_v$ and $b\in S_u$ for $u,u'\in M_{(i,j)}$ is connected in $H[W_{(i,j)}]$. By definition of $F$-gadgets, each $S_u$ induces a connected graph in $H$, and by definition of models, $u$ and $u'$ are connected in $F[M_{(i,j)}]$. Since $P_e$ is a subset of $W_{(i,j)}$ for each $e\in E(F)$ fully contained in $M_{(i,j)}$ (see (a)), we conclude that $a$ and $b$ are indeed connected via a path in $H[W_{(i,j)}]$.
   
   Finally, we have to show that there is an edge between $W_{(i,j)}$ and $W_{(i',j')}$ in $H$ if and only if $(i,j)$ and $(i',j')$ are adjacent in $\boxplus_k$.
   For the ``if'' direction, assume $(i,j)$ and $(i',j')$ are adjacent in $\boxplus_k$; we only consider the case $i'=i+1$ and $j'=j$; since the case $i'=i$ and $j'=j+1$ is symmetric. By definition of model, there exist vertices $u\in M_{(i,j)}$ and $v\in M_{(i+1,j)}$ such that $e:=\{u,v\}\in E(F)$. By construction of $W$, we have $P_e \subseteq W_{(i,j)}$ and $S_v \subseteq W_{(i+1,j)}$, and by definition of $F$-gadget, one vertex of $P_e$ (the last vertex of the path $p_e$) is adjacent to a vertex in $S_v$, concluding the proof of this direction.
   For the ``only if'' direction, assume that $(i,j)$ and $(i',j')$ are not adjacent in $\boxplus_k$. By definition of $F$-gadget, two vertices $a,b\in \dot{\bigcup}_{v\in V(F)} S_v ~\dot\cup~ \dot{\bigcup}_{e\in E(F)}P_e$ can be adjacent in $H$ only if one of the following is satisfied:
   \begin{itemize}
       \item $\{a,b\} \subseteq S_v$ for some $v\in V(F)$, or
       \item $\{a,b\} \subseteq P_e$ for some $e\in E(F)$, or
       \item $a\in S_v$ for some $v\in V(F)$ and $b\in P_e$ for some $e\in E(F)$, with $e$ incident to $v$.
   \end{itemize}
   By construction of $W$, we included only vertices of $P_e$ in a block of $W$ if the edge $e$ is either fully contained in one block (see (a)) or if $e$ connects two adjacent blocks (see (b) and (c)).
   Consequently, there are no edges between $W_{(i,j)}$ and $W_{(i',j')}$ for non-adjacent $(i,j)$ and $(i',j')$. This concludes the second direction and thus the proof of this claim.
    \end{claimproof}
    This concludes the proof of Lemma~\ref{lem:induced_minor_char}.
\end{proof}

\begin{lemma}\label{lem:hardness_F_gadgets}
If $\gadgets(C)$ has unbounded treewidth then $\#\homsprobd(C)$ is $\#\W{1}$-hard.
\end{lemma}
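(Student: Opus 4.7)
The plan is to exhibit a parameterized Turing reduction chain
\[
\#\cphomsprob(\gadgets(C))\;\fptred\;\#\cphomsprobd(C)\;\fptred\;\#\homsprobd(C),
\]
and to observe that, since $\gadgets(C)$ has unbounded treewidth, the leftmost problem is $\#\W{1}$-hard by Theorem~\ref{thm:hardness_bottleneck}. Consequently $\#\W{1}$-hardness propagates to $\#\homsprobd(C)$, which is the desired conclusion. The whole proof is thus a matter of assembling pre-existing tools: the F-gadget reduction of Lemma~\ref{lem:FGadget_main} for the first step, and the standard colour-prescribed-to-uncoloured inclusion-exclusion for the second.

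For the first reduction, given an instance $(F,G)$ of $\#\cphomsprob(\gadgets(C))$, I would first enumerate the (computable) class $C$ and test the combinatorial definition from Section~\ref{sec:gadgets} to locate a graph $H\in C$ that admits an $F$-gadget. This search takes time depending only on $|F|$ because $F\in\gadgets(C)$ by assumption. Feeding $(F,H,G)$ to Lemma~\ref{lem:FGadget_main} produces in time $f(|H|)\cdot|G|^{O(1)}$ an $H$-coloured graph $G'$ with $d(G')\leq |V(H)|+2$, $|G'|=O(|G|\cdot|H|)$, and $|\cphoms{F}{G}|=|\cphoms{H}{G'}|$. The resulting oracle query $(H,G')$ to $\#\cphomsprobd(C)$ has parameter $|V(H)|+d(G')$ which is bounded by a function of $|F|$, since $H$ was selected purely as a function of $F$.

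For the second reduction I would combine Fact~\ref{fact:cphoms_to_cfhoms} with Fact~\ref{fact:cfhoms_to_homs} to obtain
\[
|\cphoms{H}{G'}|\;=\;\frac{1}{|\auts{H}|}\sum_{S\subseteq V(H)}(-1)^{|S|}\,|\homs{H}{G'-S}|,
\]
and then query $\#\homsprobd(C)$ on each of the $2^{|V(H)|}$ instances $(H,G'-S)$, where $G'-S$ is the graph obtained from $G'$ by removing all vertices whose colour lies in $S$. Each $G'-S$ is an induced subgraph of $G'$, hence $d(G'-S)\leq d(G')\leq|V(H)|+2$, so every oracle query lives within the FPT parameter regime dictated by $|F|$. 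Combining the answers with the displayed formula recovers $|\cphoms{H}{G'}|$, and hence $|\cphoms{F}{G}|$.

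I do not anticipate a genuine obstacle: all the structural work has been absorbed into Lemma~\ref{lem:FGadget_main}. The only point requiring attention is parameter bookkeeping. One has to verify that (i) finding $H$ takes FPT time in $|F|$, which needs computability of $C$ (inherited from the theorem statement); (ii) the parameter $|V(H)|+d(G')$ of the intermediate queries is bounded by a function of $|F|$, which holds because $H$ depends only on $F$ and $d(G')\leq|V(H)|+2$; and (iii) the $2^{|V(H)|}$-fold blow-up in oracle calls is likewise a function of $|F|$. With these checks in place the chain is a valid parameterized Turing reduction, completing the proof.
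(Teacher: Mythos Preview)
Your proposal is correct and follows essentially the same approach as the paper: a reduction chain through $\#\cphomsprob(\gadgets(C))$ and $\#\cphomsprobd(C)$, using Lemma~\ref{lem:FGadget_main} for the first step and the inclusion-exclusion of Facts~\ref{fact:cphoms_to_cfhoms} and~\ref{fact:cfhoms_to_homs} for the second. The only cosmetic difference is that the paper prepends one more link $\#\homsprob(\gadgets(C)) \fptred \#\cphomsprob(\gadgets(C))$ and cites Dalmau--Jonsson for hardness, whereas you invoke Theorem~\ref{thm:hardness_bottleneck} directly on $\#\cphomsprob(\gadgets(C))$; your route is slightly more direct and your parameter bookkeeping is spelled out in more detail than the paper's.
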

\begin{proof}
First, if $\gadgets(C)$ has unbounded treewidth, then $\#\homsprob(\gadgets(C))$ is $\#\W{1}$-hard by the well-known classification of Dalmau and Jonsson~\cite{DalmauJ04}.
Now, we prove the following chain of reductions:
\begin{linenomath*}
\begin{equation}
\#\homsprob(\gadgets(C)) \fptred \#\cphomsprob(\gadgets(C)) \fptred \#\cphomsprobd(C) \fptred \#\homsprobd(C)    
\end{equation}
\end{linenomath*}
The first inequality is a well-known fact, see for instance~\cite{DorflerRSW19}.
The second inequality follows from Lemma~\ref{lem:FGadget_main}.
The third inequality can be proven by reducing $\#\cphomsprobd(C)$ to $\#\homsprobd(C)$ via the successive application of Fact~\ref{fact:cphoms_to_cfhoms} and Fact~\ref{fact:cfhoms_to_homs}; note that, for the latter, we only need to query the oracle for induced subgraphs of $G'$, which cannot increase its degeneracy.
The proof is complete.
\end{proof}

\subsection{Some Tractability Results}
\newcommand{\dagH}{\orient{H}}
\label{sec:hom_ub}
In this section we give some tractability results for $\#\homsprobd(C)$. The first result links the dag-treewidth of $\orient{H}$ to the treewidth of an $F$-gadget of $H$. The second result connects the dag-treewidth of $\orient{H}$ to the cliquewidth of its skeleton graph, described below. In both cases, we obtain fixed-parameter tractability even when the treewidth of $H$ is unbounded.

Let $\dagH$ be any directed acyclic graph. Recall that we denote by $S$ the set of sources of $\dagH$, and for any $U \subseteq V(\dagH)$ we denote by $\dagH(U)$ the set of all vertices that are reachable from some $u \in U$. If $v \in V(\dagH)$ is such that $v \in \dagH(\{s\}) \cap \dagH(\{s'\})$ for two distinct sources $s,s'$ then we say $v$ is a \emph{joint}. We write $J(U)$ for the set of all joints reachable from some $u \in U$. When $U$ is a singleton $\{u\}$, we may simply write $J(u)$, $\dagH(u)$, and so on.

\subsubsection{Upper Bounds from the Treewidth of an F-gadget}
For any subset $U \subseteq V(H)$, the \emph{local sources} of $\orient{H}[U]$ are the sources of $\orient{H}[U]$ taken as a standalone graph. The set of local sources of $\orient{H}[U]$ is denoted by $\ls(U)$.  We prove:
\begin{theorem}\label{thm:F_tw_H_dtw}
Suppose $H$ has an $F$-gadget $(\scS,\scP,R)$, and let $\orient{H}$ be an acyclic orientation of~$H$. Then
\begin{linenomath*}
\begin{equation}
\dtw(\orient{H}) \le r + s \cdot (\tw(F)+1)^2
\end{equation}
\end{linenomath*}
where $r = |\ls(R)|$ and $s$ is the maximum number of local sources in any block of $\scS$ and $\scP$,
\begin{linenomath*}
\begin{equation}
s = \max\left(\max_{v \in V(F)} \big|\ls(S_v)\big|, \max_{e \in E(F)} \big|\ls(P_e)\big| \right)
\end{equation}
\end{linenomath*}
\end{theorem}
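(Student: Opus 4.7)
The plan is to build a generalised d.t.d.\ $\scT$ of $\orient{H}$ by lifting a tree decomposition $(T_F, \beta)$ of $F$ of width $\tw(F)$. Giving $\scT$ the same underlying tree shape as $T_F$, for each node $t$ I set
\[
B_t := \ls(R) \,\cup\, \bigcup_{v \in \beta(t)} \ls(S_v) \,\cup\, \bigcup_{e \subseteq \beta(t)} \ls(P_e).
\]
Since $|\beta(t)| \le \tw(F)+1$ and there are at most $\binom{|\beta(t)|}{2}$ edges of $F$ inside $\beta(t)$, the width bound $|B_t| \le r + s(\tw(F)+1)^2$ follows immediately.

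Property~1 of Definition~\ref{def:gdtd} is built in. For property~2, any $w \in V(H)$ lies in exactly one of $R$, some $S_v$, or some $P_e$; in each case $w$ is reachable from the local sources of its block via a path entirely inside that block (itself a subgraph of $\orient{H}$), and these local sources appear in some $B_t$ by standard properties of tree decompositions (each vertex and each edge of $F$ is contained in some bag of $T_F$).

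The hard part will be property~3, the joint-path property, which I would phrase in its equivalent Helly form: for every $w \in V(H)$, the set $T(w) := \{t : w \in \orient{H}(B_t)\}$ is a subtree of $T_F$. Vertices reachable from $\ls(R)$ in $\orient{H}$ give $T(w) = V(T_F)$ and are immediate. For the remaining $w$, no directed path from any $B_t$ to $w$ may visit $R$, as otherwise some vertex of $R$ would reach $w$ and place it in the previous case. The F-gadget rigidity then kicks in: the only edges of $H$ between distinct blocks $S_{v'}$ and $P_{e'}$ live on the simple paths $p_{e'}$, which forces $e'$ to be incident to $v'$ in $F$. Hence any directed path in $\orient{H}$ from a local source of $S_{v'}$ (or of $P_{e'}$) to $w$ induces a walk in $F$ ending at the block of $w$. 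Letting $A(w) \subseteq V(F)$ denote the set of $v'$ for which some vertex of $\ls(S_{v'})$ reaches $w$ in $\orient{H}$, the key lemma to prove is that $A(w)$ induces a connected subgraph of $F$: whenever $v' \in A(w)$ via a walk $v' = v_0, v_1, \dots, v_k$ realised by a directed path in $\orient{H}$, each intermediate $v_i$ also lies in $A(w)$, because the path's entry vertex into $S_{v_i}$ is reachable from some local source of $S_{v_i}$ within $\orient{H}[S_{v_i}]$ and the tail of the path still carries us to $w$. A symmetric argument controls the edges $e \in E(F)$ with $\ls(P_e)$ reaching $w$, forcing both endpoints of such an $e$ into $A(w)$. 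Consequently
\[
T(w) = \bigcup_{v' \in A(w)} \{t : v' \in \beta(t)\},
\]
a union of subtrees of $T_F$ indexed by a connected subset of $V(F)$, which is itself a subtree by the standard Helly property of tree decompositions. The delicate point, and the main obstacle, is rigorously establishing the walk-tracing structure and the connectedness of $A(w)$; this will require a careful case analysis on the block transitions of a directed path in $\orient{H}$, especially to rule out subtle interactions with $R$ and with backtracking through joints.
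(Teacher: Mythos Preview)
Your construction of the bags $B_t$ and the verification of properties~1, 2, and the width bound are exactly as in the paper. The divergence is in property~3, and there your argument has a genuine gap.

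The claimed identity $T(w) = \bigcup_{v' \in A(w)} \{t : v' \in \beta(t)\}$ is false, and so is the assertion that $\ls(P_e)$ reaching $w$ forces both (or even one) endpoint of $e$ into $A(w)$. Counterexample: take $F=K_2$ on $\{u,v\}$ with edge $e$, set $S_u=\{a\}$, $S_v=\{b\}$, $P_e=\{c,d,c'\}$, $R=\emptyset$, so that $H$ is the path $a\,c\,d\,c'\,b$, and orient $c\to a$, $c\to d$, $c'\to d$, $c'\to b$. For $w=d$ one has $\ls(S_u)=\{a\}$ and $\ls(S_v)=\{b\}$, both sinks, so $A(w)=\emptyset$; yet $\ls(P_e)=\{c,c'\}$ reaches $w$, so $T(w)\neq\emptyset$. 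More generally, whenever the directed path from $\ls(P_e)$ to $w$ stays entirely inside $P_e$, neither endpoint of $e$ need lie in $A(w)$, and your formula under-counts $T(w)$. Your connectedness argument for $A(w)$ itself is sound; it is the identity linking $A(w)$ to $T(w)$ that breaks, and with it the conclusion.

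The paper sidesteps $A(w)$ entirely. It first proves a purely undirected lemma (Lemma~\ref{lem:T_path_intersect}): any path in $H\setminus R$ between $V(t_1)$ and $V(t_2)$ intersects $V(t)$ for every $t\in T(t_1,t_2)$, because the path projects to an alternating vertex/edge sequence in $F$ which, by standard tree-decomposition properties, must meet the intermediate bag. For property~3, given $z\in V(\orient{H}(B_{t_1}))\cap V(\orient{H}(B_{t_2}))$ not reachable from $\ls(R)$, one takes directed paths $\pi(u_1,z)$ and $\pi(u_2,z)$ with $u_i\in\ls(t_i)$ (neither touching $R$), concatenates them into an undirected $u_1$--$z$--$u_2$ path, applies the lemma to obtain some $u^*\in V(t)$ on it, and observes that $z$ is reachable from $u^*$ (since $u^*$ lies on one of the directed halves) and $u^*$ from $\ls(t)\subseteq B_t$. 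Your route can be repaired by tracking reaching blocks in $V(F)\cup E(F)$ (the incidence graph of $F$) rather than just $V(F)$, but the paper's concatenation argument is shorter and avoids the case analysis you flag as the main obstacle.
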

Before proceeding to the proof, we need some notation and a technical lemma.
Let $T$ be a tree decomposition of $F$. Consider each bag $t \in T$. We write $e \in t$ in place of $e \subseteq t$, to denote that the endpoints of $e$ are both in $t$. We define the following sets:
\begin{linenomath*}
\begin{align}
    V(t) &= \left(\bigcup_{v \in t} S_v\right) \cup \left(\bigcup_{e \in t} P_e \right)
\\    
    \ls(t) &= \left(\bigcup_{v \in t} \ls(S_v)\right) \cup \left(\bigcup_{e \in t} \ls(P_e) \right)
\end{align}
\end{linenomath*}
Note that $V(t) \subseteq V(\orient{H}(\ls(t)))$; that is, every vertex of $V(t)$ is reachable from some $s \in \ls(t)$.

\begin{lemma}\label{lem:T_path_intersect}
Let $t_1,t_2 \in T$ and let $t \in T(t_1,t_2)$. For any two vertices $u_1 \in V(t_1)$ and $u_2 \in V(t_2)$, any path in $H \setminus R$ between $u_1$ and $u_2$ intersects $V(t)$.
\end{lemma}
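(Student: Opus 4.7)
My plan is to translate the question into one about subtrees of $T$. For every vertex $x \in V(H) \setminus R$ I will associate a subtree $T[x] \subseteq T$: when $x \in S_v$ I set $T[x] := \{t \in T : v \in t\}$, and when $x \in P_e$ with $e = \{u,v\}$ I set $T[x] := \{t \in T : u,v \in t\}$. The first set is a subtree by the defining property of a tree decomposition of $F$, and the second is an intersection of two such subtrees and is nonempty because $T$ must contain a bag covering the edge $e$ of $F$. A direct unwinding of the definitions shows that $x \in V(t)$ if and only if $t \in T[x]$; in particular $t_1 \in T[u_1]$ and $t_2 \in T[u_2]$.

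The core step, which I expect to be the (routine) main obstacle, is the claim that $T[x] \cap T[y] \neq \emptyset$ for every edge $\{x,y\} \in E(H \setminus R)$. This is exactly where the $F$-gadget axioms enter. Since $\{x,y\}$ is not incident to $R$, condition (3) in Definition~\ref{def:gadgets} forces $\{x,y\}$ to lie either inside some $H[S_v]$ or inside some path $p_e$. In the first case $T[x] = T[y]$. In the second case, either both endpoints belong to $P_e$ (again giving $T[x] = T[y]$), or one endpoint lies in $P_e$ while the other is $s_{u,e}$ or $s_{v,e}$ for $e=\{u,v\}$; since any bag containing the edge $e$ in $F$ automatically contains the relevant single endpoint, one of $T[x], T[y]$ is contained in the other. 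Carrying out this case analysis carefully is the only place where the geometry of $F$-gadgets is really used.

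From here the lemma follows by the Helly property of subtrees of a tree. Given any path $u_1 = x_0, x_1, \ldots, x_k = u_2$ in $H \setminus R$, consecutive subtrees $T[x_i]$ and $T[x_{i+1}]$ intersect, so $\bigcup_{i=0}^{k} T[x_i]$ is a connected subtree of $T$. Since it contains both $t_1 \in T[x_0]$ and $t_2 \in T[x_k]$, it also contains the unique $T$-path between them, and in particular the bag $t$. Hence $t \in T[x_i]$ for some $i$, which unwinds via the equivalence above to $x_i \in V(t)$, proving that the given path intersects $V(t)$ as required.
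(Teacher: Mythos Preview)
Your proof is correct and follows essentially the same approach as the paper. The paper first groups the path into maximal subpaths lying in a single $S_v$ or $P_e$, obtaining an alternating incidence sequence in $F$, and then invokes the subtree property of tree decompositions; you do the same thing at the finer granularity of individual vertices, assigning each $x$ its subtree $T[x]$ and using connectivity of the union of consecutively intersecting subtrees. Both arguments boil down to the same fact about tree decompositions, so there is no substantive difference. (A small terminological quibble: what you actually use is connectedness of a chain of pairwise-intersecting subtrees, not the Helly property as such, but your argument as written is correct regardless.)
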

\begin{proof}
Let $\pi$ be any such path. Since $F$ is an induced minor of $H$, then $\pi$ is the concatenation of $\ell \ge 1$ paths $\pi_1\, \ldots \, \pi_{\ell}$, such that every $\pi_i$ lies entirely inside $S_v$ or $P_e$ for some $v \in V(F)$ or $e \in E(F)$. This gives us a sequence $x_1,\ldots,x_{\ell}$ where the elements with odd index are vertices (edges) of $F$ and those with even index are edges (vertices) of $F$, such that $x_1 \in t_1$ and $x_{\ell} \in t_2$, and that $x_i$ and $x_{i+1}$ form an incident pair for every $i=1,\ldots,\ell-1$ if $\ell \ge 2$. Thus, we have an alternating sequence of incident vertices and edges from $F[t_1]$ to $F[t_2]$. By the properties of tree decompositions, it follows that $t$ must contain at least one $x_i$ in this sequence (whether it is a vertex or an edge). By construction this implies that $\pi_i \subseteq S_v$ for some $v \in t$ or $\pi_i \subseteq P_e$ for some $e \in t$. But then $S_v, P_e \subseteq V(t)$, hence $\pi$ intersects $V(t)$, as claimed.
\end{proof}

\begin{proof}[Proof of Theorem~\ref{thm:F_tw_H_dtw}]
Let $T$ be an optimal tree decomposition of $F$. We build a d.t.d.\ $\scT$ of $H$ as follows.
For each $t \in T$, define the bag:
\begin{linenomath*}
\begin{equation}
    B_t = \ls(R) \,\cup\, \ls(t)
\end{equation}
\end{linenomath*}
Then, let $\scT$ be the tree obtained from $T$ by replacing $t$ with $B_t$ for all $t \in T$.

First, let us prove that $\scT$ is a valid dtd for $\orient{H}$ by checking Definition~\ref{def:gdtd}. Property (1) holds by construction. For property (2), consider any $u \in V(H)$. If $u \in R$, then by construction $u \in \orient{H}(B)$ for any bag $B$. If instead $u \notin R$, then $u \in S_v$ for some $v \in V(F)$ or $u \in P_e$ for some $e \in E(F)$. In any case, since $T$ is a tree decomposition of $F$, there exists $t \in T$ such that $v \in t$ or $e \in t$. Hence, $u$ is reachable from $\ls(t)$, and therefore from $(B_t)$, since $\ls(t) \subseteq B_t$.

We now prove property (3), using $B=B_t$ and $B_1=B_{t_1}$ and $B_2=B_{t_2}$. If $\orient{H}(B_{t_1}) \cap \orient{H}(B_{t_2}) = \emptyset$ then the claim is trivial. Otherwise, fix any vertex $z \in \orient{H}(B_{t_1}) \cap \orient{H}(B_{t_2})$. If $z$ is reachable from some $u \in \ls(R)$, then $z \in \orient{H}(B_t)$ as said above. We can therefore continue assuming that $z$ is not reachable from $\ls(R)$. Hence there must be vertices $u_1 \in \ls(t_1)$ such that $z \in \orient{H}(u_1)$, and $u_2 \in \ls(t_2)$ such that $z \in \orient{H}(u_2)$. Note that $u_1 \in V(t_1)$ and $u_2 \in V(t_2)$.

Now let $\pi(u_1,z)$ be any directed path from $u_1$ to $z$. Note that $\pi(u_1,z) \cap R = \emptyset$ since otherwise $z$ would be reachable from $\ls(R)$, contradicting our assumptions. Define similarly $\pi(u_2,z)$ and let $\pi(z,u_2)$ be its reverse. Now, we consider $\pi(u_1,z,u_2)$, the path obtained by concatenating $\pi(u_1,z)$ and $\pi(z,u_2)$. Since $u_1 \in V(t_1)$ and $u_2 \in V(t_2)$, by Lemma~\ref{lem:T_path_intersect} $\pi(u_1,z,u_2)$ contains a vertex $u^* \in V(t)$. But $z$ is reachable from $u^*$ by construction of $\pi(u_1,z,u_2)$, and so $z$ is reachable from $\ls(t)$ as well. Since $B_t \supseteq \ls(t)$, then $z \in V(\orient{H}(B_t))$, as claimed.

Finally, we show that $\dtw(\scT) \le r + s \cdot (\tw(T)+1)^2$. Consider any bag $B_t \in \scT$. Clearly $|B_t| \le r + |\ls(t)|$, so we prove that $|\ls(t)| \le s \cdot (\tw(T)+1)^2$. In turn, note that for any set $S_v$ or $P_e$ such that $v \in t$ or $e \in t$, by definition we add at most $s$ vertices to $B_t$. Therefore we need only to see that $t$ contains at most $(\tw(T)+1)^2$ vertices and/or edges of $F$. This follows by straightforward calculations as $t$ contains at most $\tw(T)+1$ vertices.
\end{proof}

Consider, for instance, a graph $H$ which is obtained from a graph $F$ by first subdividing every edge, yielding new vertices $v_e$ for each $e\in E(F)$, and, second, substituting every primal vertex $v\in V(F)$ by a clique $C_v$. Then $H$ can have arbitrary large treewidth (just substitute by large enough cliques). However, $H$ also has an $F$-gadget $(\mathcal{S},\mathcal{P},\emptyset)$ where $S_v=V(C_v)$ for each $v\in V(F)$, and $P_e = \{v_e\}$ for each $e\in E(F)$. Since each $S_v$ and each $P_e$ only have one local source for any acyclic orientation of $H$, we obtain by the previous result that the dag treewidth of $H$ is bounded by $(\mathsf{tw}(F)+1)^2$. In other words, if $F$ has small treewidth, then $H$ has small dag treewidth even though it can have arbitrarily large treewidth.

\subsubsection{Upper Bounds from the Clique-Width of the Skeleton Graph}
In this section, we use the original definition of dag tree decomposition by~\cite{Bressan21}. The difference with Definition~\ref{def:gdtd} is that each bag must contain only sources of $\dagH$, whereas our definition allows for arbitrary vertices. Clearly, the dag treewidth of $\dagH$ according to this definition is an upper bound to the dag treewidth according to Definition~\ref{def:gdtd}. Therefore, the upper bounds of this section are upper bounds to our generalised notion of dag treewidth.
\begin{definition}[\cite{Bressan21}, Definition 2.]
\label{def:dtd}
A dag tree decomposition (d.t.d.) of $\orient{H}$ is a rooted tree $\scT=(\bags,\et)$ such that: 
\begin{enumerate}[itemsep=4pt,parsep=0pt,topsep=2pt]
\item $B \subseteq S$ for all $B \in \bags$
\item $\bigcup_{B \in \bags} V(\orient{H}(B)) = V(H)$
\item for all $B,B_1,B_2 \in \bags$, if $B \in \scT(B_1,B_2)$ then $V(\orient{H}(B_1)) \cap V(\orient{H}(B_2)) \subseteq V(\orient{H}(B))$
\end{enumerate}
\end{definition}
\noindent The reason for using this definition is that, instead of working with $\dagH$, we can work with a reachability graph called \emph{skeleton}.
\begin{definition}[\cite{Bressan21}, Definition 7]
\noindent The \emph{skeleton} of a dag $\dagH=(V(\dagH),A(\dagH))$ is the bipartite dag $\skel(\dagH) = (V_{\sk},E_{\sk})$ where $V_{\sk} = S \cup J$ and $E_{\sk} \subseteq S \times J$, and $(u,v) \in E_{\sk}$ if and only if $v \in J(u)$.
\end{definition}
\noindent Indeed:
\begin{lemma}
$\scT$ is a DTD for $\dagH$ if and only if $\scT$ is a DTD for $\skel(\dagH)$.
\end{lemma}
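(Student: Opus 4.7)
The plan is to verify all three properties of Definition~\ref{def:dtd} transfer between $\dagH$ and $\skel(\dagH)$. Property (1) is immediate: by construction, $\skel(\dagH)$ is bipartite with arcs going only from $S$ to $J$, so its set of sources is exactly $S$, which is also the set of sources of $\dagH$; the requirement $B \subseteq S$ is therefore literally the same condition in both graphs.

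Next I would record two reachability facts. First, for any $B \subseteq S$, the set of vertices reachable from $B$ in $\skel(\dagH)$ is exactly $B \cup J(B)$, and by the definition of the skeleton's arcs this coincides with $V(\dagH(B)) \cap (S \cup J)$. Second, every vertex $v \in V(\dagH) \setminus (S \cup J)$ (i.e.\ a non-source non-joint of $\dagH$) has a \emph{unique} source ancestor $s(v) \in S$, and one has $v \in V(\dagH(B))$ if and only if $s(v) \in B$. These two facts let me translate reachability statements between $\dagH$ and $\skel(\dagH)$ with no loss.

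For property (2), in either direction the argument reduces to showing that every source $s \in S$ lies in some bag. This is forced in both settings, because a source $s$ is reachable from $B$ (in either $\dagH$ or $\skel(\dagH)$) only if $s \in B$. Given that every source is in some bag, the first fact above implies every joint is covered in $\dagH$ iff it is covered in $\skel(\dagH)$, and the second fact implies every non-source non-joint of $\dagH$ is automatically covered by the bag containing its unique source ancestor; so requiring $\bigcup_{B} V(\dagH(B)) = V(\dagH)$ is equivalent to requiring $\bigcup_{B} V(\skel(\dagH)(B)) = S \cup J$.

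For property (3) I would do a three-way case split on a candidate vertex $v$ lying in the intersection $V(\dagH(B_1)) \cap V(\dagH(B_2))$, and analogously for $\skel(\dagH)$. If $v \in S$, then $v$ must lie in $B_1 \cap B_2$, so the two intersections coincide on $S$ and the conclusion $v \in B$ for either graph is the same. If $v \in J$, membership in $V(\dagH(B_i))$ is equivalent to membership in $V(\skel(\dagH)(B_i))$ by the first reachability fact, so the inclusion transfers directly. If $v$ is a non-source non-joint, then by the second fact both $v \in V(\dagH(B_1))$ and $v \in V(\dagH(B_2))$ reduce to $s(v) \in B_1 \cap B_2$, which is the source case already handled, and then $s(v) \in B$ pulls $v$ back into $V(\dagH(B))$. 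In every case, property~(3) for one graph yields it for the other after translating back via the two reachability facts. The main obstacle is purely organisational: once the two facts above are stated cleanly, no further ideas are required and the lemma follows from routine case analysis.
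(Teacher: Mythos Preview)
Your proof is correct and follows the same core idea as the paper's own argument: the sources of $\dagH$ and $\skel(\dagH)$ coincide, and reachability from sources to joints is preserved by construction, so the three DTD conditions translate back and forth. Your treatment is simply more explicit than the paper's (which is a one-paragraph sketch), in particular by singling out the non-source non-joint vertices via their unique source ancestor---a point the paper leaves implicit.
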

\begin{proof}
By construction, $S(\skel(\dagH))=S(\dagH)$, $J(\skel(\dagH))=J(\dagH)$, and for any nodes $s \in S(\dagH)$ and $v \in J(\dagH)$, $v$ is reachable from $s$ in $\skel(\dagH)$ if and only if $v$ is reachable from $s$ in $\dagH$. Hence $\scT$ satisfies the properties of Definition~\ref{def:dtd} for $\skel(\dagH)$ if and only if $\scT$ satisfies the properties of Definition~\ref{def:dtd} for $\dagH$.
\end{proof}
Figure~\ref{fig:tree_decomp} gives an example of skeleton graph. Note that $\skel(\dagH)$ does not contain nodes that are neither sources nor joints, as they are irrelevant as far as $\scT$ is concerned. Note also that computing $\skel(\dagH)$ takes time $|H|^{O(1)}$.

\begin{figure}[h]
\centering
\resizebox{.8\textwidth}{!}{%
\begin{tikzpicture}[
scale=0.6
]
\node[graph] (r1) at (0,0) {1};
\node[graph] (r2) at (2,0) {2};
\node[graph] (r3) at (4,0) {3};
\node[graph] (r4) at (6,0) {4};
\node[graph] (v1) at ($(r1)+(-1,-2)$) {5};
\node[graph] (v2) at ($(v1)+(2,0)$) {6};
\node[graph] (v3) at ($(v2)+(2,0)$) {7};
\node[graph] (v4) at ($(v3)+(2,0)$) {8};
\node[graph] (v5) at ($(v4)+(2,0)$) {9};
\path[->] (r1) edge (v1);
\path[->] (r1) edge (v2);
\path[->] (r2) edge (v2);
\path[->] (r2) edge (v3);
\path[->] (r2) edge (v4);
\path[->] (r3) edge (v3);
\path[->] (r3) edge (v4);
\path[->] (r3) edge (v5);
\path[->] (r4) edge (v5);
\path[->] (v4) edge (v3);
\end{tikzpicture}
\hspace*{40pt}
\begin{tikzpicture}[
scale=0.6
]
\node[source] (r1) at (0,0) {1};
\node[source] (r2) at (2,0) {2};
\node[source] (r3) at (4,0) {3};
\node[source] (r4) at (6,0) {4};
\node[joint] (j6) at ($(r1)+(0,-2)$) {6};
\node[joint] (j7) at ($(j6)+(2,0)$) {7};
\node[joint] (j8) at ($(j7)+(2,0)$) {8};
\node[joint] (j9) at ($(j8)+(2,0)$) {9};
\path[->] (r1) edge (j6);
\path[->] (r2) edge (j6);
\path[->] (r2) edge (j7);
\path[->] (r2) edge (j8);
\path[->] (r3) edge (j7);
\path[->] (r3) edge (j8);
\path[->] (r3) edge (j9);
\path[->] (r4) edge (j9);
\end{tikzpicture}
}
\caption{Left: a generic directed acyclic graph. Right: its skeleton (sources above, joints below).}
\label{fig:tree_decomp}
\end{figure}
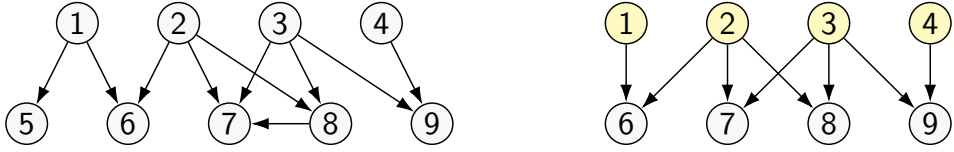

\paragraph*{Clique-width.}
\newcommand{\CliqueT}{T}
The notions of \emph{clique decomposition} and \emph{clique-width} of a graph were defined by Courcelle and Olariu in~\cite{Courcelle2000-CW}. A clique decomposition is defined by a rooted parse tree $\CliqueT$ (or, equivalently, by what is called a $k$-formula).
Each node $x \in \CliqueT$ specifies an operation $O_x$ among the ones listed below.
Let $\CliqueT_x$ be the subtree of $\CliqueT$ rooted at $x$, and $H_x$ be the subgraph of $H$ obtained by parsing $\CliqueT_x$.
We let $H=H_{\rt(\CliqueT)}$.
Each vertex $v \in H_x$ has a label $i_x(v) \in [k]$, and we denote by $\Lab_x(i)$ the set of vertices of $H_x$ with label $i$.
Each operation $O_x$ is one of:
\begin{itemize}
    \item[-] \create($i$): let $H_x=(\{v\},\emptyset)$ and $i_x(v)=i$. Allowed only if $x$ is a leaf.
    \item[-] \unite: let $H_x$ be the disjoint union of $H_{x'}$ and $H_{x''}$, where $x',x''$ are the children of $x$ in $\CliqueT$. Allowed only if $x$ has two children.
    \item[-] \link($i,j$): let $H_x$ be obtained from $H_{x'}$ by setting $V(H_x)=V(H_{x'})$ and $E(H_{x})=E(H_{x'}) \cup (\Lab_{x'}(i) \times \Lab_{x'}(j))$, where $x'$ is the child of $x$. So, we are adding edges between each vertex labelled $i$ and each vertex labelled $j$ in $H_{x'}$. Allowed only if $x$ has one child.
    \item[-] \relabel($i,j$): let $H_x$ be obtained from $H_{x'}$ by relabeling all $v \in \Lab_{x'}(i)$ as $j$, where $x'$ is the child of $x$. Allowed only if $x$ has one child.
\end{itemize}
The smallest number of distinct labels needed to build $H$ using such a tree is the clique-width of $H$, denoted by $\cw(H)$.
We will show that $\tau(\dagH) \le \cw(\skel(\dagH))$, where $\cw(\skel(\dagH))$ is meant on the undirected version of $\skel(\dagH)$.
As a consequence, we will prove the following result. First, let us introduce:
\begin{definition}
The \emph{skeleton clique-width} of $H$ is:
\begin{linenomath*}
\begin{equation}
    \skelcw(H) := \max_{\dagH} \cw(\skel(\dagH))
\end{equation}
\end{linenomath*}
where the $\max$ is over all acyclic orientations of $H$.
\end{definition}
The goal of this section is to prove Theorem~\ref{thm:intro_homs_tract}, which is an immediate consequence of combining the dynamic programming algorithm along dag treewidth~\cite[Theorem~9]{Bressan21} with the following bound on the skeleton clique-width: 
\begin{theorem}\label{thm:skelcw}
Every graph $H$ satisfies $\tau_1(H) \le \skelcw(H)$.
\end{theorem}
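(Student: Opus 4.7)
The plan is to prove the orientation-wise inequality $\dtw(\orient{H}) \le \cw(\skel(\orient{H}))$ for every acyclic orientation $\orient{H}$, from which the theorem follows by taking maxima over $\orient{H}$. Fix such an $\orient{H}$ and an optimal $k$-expression for the skeleton $\skel(\orient{H})$ with $k=\cw(\skel(\orient{H}))$ labels; let $T$ be its parse tree and $H_x$ the skeleton subgraph built at $x \in T$. I would construct a d.t.d.\ $\scT=(\bags,\et)$ of $\orient{H}$ whose underlying tree coincides with $T$ and whose bag $B_x \subseteq S$ at each $x$ has size at most $k$.

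The bag $B_x$ holds one representative source per label currently used by sources of $H_x$: for every label $i$ with $\Lab_x(i)\cap S \ne \emptyset$, pick $r_x(i)\in \Lab_x(i)\cap S$ and set $B_x=\{r_x(i):\Lab_x(i)\cap S \ne \emptyset\}$, so that $|B_x|\le k$. The representatives are assigned recursively bottom up: at a \create\ leaf producing a source $s$, $s$ is the unique representative; \link\ inherits $r$ from the child; \relabel($i,j$) transfers the representative from label $i$ to label $j$; at a \unite\ node with children $x_1,x_2$, a fixed left-preference rule resolves shared labels. With this recursion, for every source $s$ the set of nodes $x$ with $s \in B_x$ is a single path in $T$ from the \create\ leaf of $s$ upward, hence a subtree of $\scT$.

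Several axioms of Definition~\ref{def:gdtd} then follow directly. Bags are subsets of $S$ of size at most $k$ by construction. Each source is covered by the bag of its own \create\ leaf; each joint $v$ is covered because every skeleton edge $(s,v)$ is created by a \link($i,j$) operation that connects $v$ to every source currently labelled $i$, including the representative, so $v \in J(B_x)$ at that parse node $x$. The source component of property~(3), namely $B_{x_1}\cap B_{x_2} \subseteq B_x$ for all $x \in \scT(x_1,x_2)$, is immediate from the subtree property of the bag-membership sets established above.

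The main obstacle, and the technical heart of the proof, is the joint component of property~(3): if a joint $v$ is reached from both $B_{x_1}$ and $B_{x_2}$, then every intermediate bag on $\scT(x_1,x_2)$ must also reach~$v$. The central lemma to establish is the \emph{twinning invariant} of clique-width: once two sources share a label at a parse node $y$, they share a (possibly relabelled) label at every ancestor of $y$, and any \link\ applied on or above $y$ connects them to exactly the same set of joints. Using this invariant one can transport a witness source for $v$ along the path $\scT(x_1,x_2)$: whenever the left-preference rule at a \unite\ replaces a current witness $s$ by a new representative $s'$, the two are already twinned at that \unite\ and the edge to $v$ is inherited by $s'$ whenever it was created on or above that node; the remaining case, in which the edge was produced strictly below in a sibling subtree, is handled symmetrically by tracking a witness surviving in that subtree. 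Assembling these observations along the path, together with the source-side subtree property, closes the joint Helly property and yields $\dtw(\orient{H}) \le k$.
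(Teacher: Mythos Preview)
Your construction has a real gap: restricting each bag to one \emph{source} representative per label does not in general satisfy property~(3), and your case split for the joint Helly property misses precisely the configuration that breaks it.

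Take $\skel(\orient{H})$ to be the subdivided claw with centre $s_2$: sources $s_1,s_2,s_3,s_4$, joints $v_1,v_2,v_3$, edges $s_2v_1,s_2v_2,s_2v_3,s_1v_1,s_3v_2,s_4v_3$. This tree has clique-width~$3$, realised by the standard expression: for each leg $s_jv_i$ (with $j\in\{1,3,4\}$) create $s_j$ with label~$1$ and $v_i$ with label~$2$, \unite, \link($1,2$), \relabel($1,3$); then \unite\ the three legs pairwise; finally create $s_2$ with label~$1$, \unite, \link($1,2$). At the parse node $u$ that is the \unite\ of the $v_1$-leg and the $v_2$-leg, the label classes are $2=\{v_1,v_2\}$ and $3=\{s_1,s_3\}$, so your bag $B_u$ is the singleton $\{s_1\}$ (or $\{s_3\}$) by left-preference. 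But $u$ lies on the tree path between the leaves $\xon(s_3)$ and $\xon(s_2)$, and $v_2 \in J(\{s_3\})\cap J(\{s_2\})$ while $v_2 \notin J(\{s_1\})$; property~(3) fails. Swapping the preference only moves the failure to the joint $v_1$ and the pair $\xon(s_1),\xon(s_2)$. The hole in your argument is that the witness edge $(s_3,v_2)$ is created strictly below $u$ in the subtree you are \emph{coming from}, not in the sibling subtree: once $s_3$ is displaced by $s_1$, the twinning invariant at $u$ says nothing about edges of $s_3$ made before the merge, and no witness for $v_2$ survives in $B_u$.

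The paper's construction avoids this by also tracking \emph{joint} labels. For each \emph{active} label it picks a representant---possibly a joint $v$---and puts into the bag not $v$ but its \emph{designated source} $s_v$, the least source linked to $v$ at $\xoff(v)$; crucially $s_v$ need not lie in $H_x$. In the example above the only active label at $u$ is label~$2$, its representant is a joint, and the paper's bag at $u$ is $\{s_2\}$ even though $s_2$ has not yet been created at $u$. This out-of-$H_x$ source is exactly what makes property~(3) go through in the paper's verification. Confining bags to $B_x\subseteq H_x$, as your scheme does, is too weak.
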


\paragraph*{Preliminaries.}
We will consider a clique decomposition $\CliqueT$ of the undirected version of $\skel(\dagH)$. To avoid cluttering the notation, however, we will denote $\skel(\dagH)$ simply by $H$. Bear in mind that $H$ is bipartite into sources and joints, a property that will be used in our proofs.

\noindent For any $x,y \in \CliqueT$ we denote by $\CliqueT(x,y)$ the unique simple path from $x$ to $y$ in $\CliqueT$. We denote by $\prec$ the partial order given by $\CliqueT$ over its nodes. That is, $x \preceq y$ means that $x$ is $y$ or an ancestor of $y$, so $x \in \CliqueT(y,\rt(\CliqueT))$, and $x \prec y$ means that $x$ is an ancestor of $y$, so $x \in \CliqueT(y,\rt(\CliqueT)) \setminus \{y\}$. Note that $\CliqueT$ has the following basic ``label propagation'' property: 
\begin{fact}
If $x \preceq y$ then $\Lab_y(i_y(v)) \subseteq \Lab_x(i_x(v))$.
\end{fact}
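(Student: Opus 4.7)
The plan is to induct on the length of the unique path $\CliqueT(x,y)$ from $x$ to $y$. The base case $x = y$ is trivial since the claimed inclusion becomes an equality. For the inductive step, let $z$ be the child of $x$ lying on $\CliqueT(x,y)$, so that $y \preceq z$; by the inductive hypothesis $\Lab_y(i_y(v)) \subseteq \Lab_z(i_z(v))$, and it suffices to prove the one-step version $\Lab_z(i_z(v)) \subseteq \Lab_x(i_x(v))$.

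The one-step claim follows by a case analysis on the operation $O_x$ performed at $x$. The operation cannot be $\create$, since $x$ has a child. If $O_x = \unite$, then $H_x$ is the disjoint union of $H_z$ and $H_{z'}$ with labels preserved, so $i_x(v) = i_z(v)$ and $\Lab_x(i_z(v)) = \Lab_z(i_z(v)) \cup \Lab_{z'}(i_z(v))$, which contains $\Lab_z(i_z(v))$. If $O_x = \link(i,j)$, labels are untouched, so $i_x(v) = i_z(v)$ and $\Lab_x(i_z(v)) = \Lab_z(i_z(v))$. Finally, if $O_x = \relabel(i,j)$, then either $i_z(v) \ne i$, in which case $i_x(v) = i_z(v)$ and $\Lab_x(i_z(v)) \supseteq \Lab_z(i_z(v))$, or $i_z(v) = i$, in which case $i_x(v) = j$ and $\Lab_x(j) = \Lab_z(i) \cup \Lab_z(j) \supseteq \Lab_z(i) = \Lab_z(i_z(v))$.

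The heart of the argument is simply that every clique-width operation acts uniformly on vertices sharing a label: $\unite$ and $\link$ do not modify labels at all, while $\relabel(i,j)$ moves an entire colour class at once. Thus no vertex sharing $v$'s label can ever be ``separated'' from $v$ as we walk upward in $\CliqueT$, which is precisely the content of the fact. There is no real obstacle here; the only care needed is to verify the $\relabel$ case when $i_z(v)$ equals $i$ versus when it does not, and to note that $\create$ is ruled out by $x$ having a child.
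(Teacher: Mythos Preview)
Your proof is correct. The paper does not actually prove this statement; it merely records it as a basic ``label propagation'' property and restates it informally (``the set of vertices having the same label as $v$ can only grow as we move towards the root of $\CliqueT$''). Your induction on the length of $\CliqueT(x,y)$ together with the case analysis on $O_x$ is exactly the natural way to justify the claim, and all four cases are handled correctly.

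One small notational slip: where you write ``so that $y \preceq z$'', the paper's convention is that $a \preceq b$ means $a$ is an ancestor of (or equal to) $b$, so you should have written $z \preceq y$. You clearly intended this, since you then apply the inductive hypothesis in the correct direction to obtain $\Lab_y(i_y(v)) \subseteq \Lab_z(i_z(v))$.
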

In words, the set of vertices having the same label as $v$ can only grow as we move towards the root of $\CliqueT$.

We say that a node $v$ is \emph{created} at $x \in \CliqueT$ if $x$ is a leaf and $v \in V(H_x)$. We say that an edge $(u,v)$ is \emph{created} at $x \in \CliqueT$ if $u,v \in H_v$ and $O_x = \link(i_x(u),i_x(v))$. Note that an edge can be created at multiple nodes. Given $v \in H$, we denote by \xon($v$) the leaf of $\CliqueT$ where $v$ is created, and by \xoff($v$) the node of $\CliqueT$ closest to the root of $\CliqueT$ where an edge insisting on $v$ is created. Obviously, $\xoff(v) \preceq \xon(v)$.

\begin{definition}
The \emph{active span} of $v$ is $\aspan(v)=\CliqueT(\xon(v),\xoff(v))$.
\end{definition}
If $x \in \aspan(v)$ then we say $v$ is \emph{active} at $x$. We denote by $A_x$ the vertices active at $x$, and by $A_x(i) = A_x \cap \Lab_x(i)$ the vertices with label $i$ that are active at $x$.

\paragraph*{Designated sources.}
We assign a \textit{designated source} $s_v$ to each vertex $v \in H$. If $v \in S$, then $s_v=v$. Else, we let $s_v = \min S_v$, where: 
\begin{linenomath*}
\begin{equation}
    S_v = \{s \in S: (s,v) \text{ created at } \xoff(v)\}
\end{equation}
\end{linenomath*}
Thus, to choose $s_v$, we look at all edges incident to $v$ that are created at $\xoff(v)$ and choose the smallest source appearing in those edges.

Before continuing, we give some useful properties.
\begin{claim}
\label{claim:props}
If for some $x \in \CliqueT$ we have $u \in A_x$ and $i_x(u)=i_x(v)$, then
\begin{enumerate}[leftmargin=4em,topsep=0pt,label=\roman*)]
\item $\xoff(u)=\xoff(v)$
\item $v \in A_x$
\item $\{u,v\} \subseteq J$ or $\{u, v\} \subseteq S$
\item $u \in J$ $\Longrightarrow$ $s_u=s_v$
\end{enumerate}
\end{claim}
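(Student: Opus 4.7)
My plan is to derive all four items from a single \emph{label-propagation fact}: if $i_x(u)=i_x(v)$ and $y \preceq x$, then $u$ and $v$ are both present at $y$ and satisfy $i_y(u)=i_y(v)$. The presence at $y$ is automatic since $V(H_x) \subseteq V(H_y)$ for any ancestor $y$ of $x$ (vertex sets grow monotonically under \create\ and \unite, and are preserved by \link\ and \relabel). The equality of labels at $y$ follows by straightforward induction on the operations along $\CliqueT(y,x)$: \relabel\ and \unite\ preserve the equivalence ``sharing a label'', \link\ does not touch labels at all, and \create\ only introduces singleton label classes at leaves. I would state this observation as a preliminary lemma before starting the proof.

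For~(i), from $u \in A_x$ we have $\xoff(u) \preceq x$. The operation at $\xoff(u)$ is some $\link(i,j)$ with, say, $u \in \Lab_{\xoff(u)}(i)$, and by definition of $\xoff$ at least one edge $(u,w)$ with $w \in \Lab_{\xoff(u)}(j)$ is created there. Label propagation places $v$ in the same class as $u$ at $\xoff(u)$, namely $\Lab_{\xoff(u)}(i)$, so the very same \link\ also creates the edge $(v,w)$, yielding $\xoff(v) \preceq \xoff(u)$. For~(ii), combining $\xoff(v) \preceq \xoff(u) \preceq x$ with $\xon(v) \succeq x$ (which holds because $v \in V(H_x)$ forces the leaf creating $v$ to lie in the subtree rooted at $x$) gives $x \in \aspan(v)$, i.e.\ $v \in A_x$. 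Symmetry now applies: running the argument of~(i) with $u$ and $v$ exchanged produces $\xoff(u) \preceq \xoff(v)$, so~(i) upgrades to equality.

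For~(iii), suppose for contradiction that $u$ and $v$ lie on opposite sides of the bipartition $V_{\sk} = S \cup J$. The common vertex $w$ produced in~(i) then satisfies both $(u,w) \in E(\skel(\dagH))$ and $(v,w) \in E(\skel(\dagH))$, forcing $w$ to lie on the opposite side of both $u$ and $v$ simultaneously, contradicting bipartiteness of $\skel(\dagH)$. Hence $\{u,v\}\subseteq S$ or $\{u,v\}\subseteq J$. For~(iv), when $u \in J$ we have $v \in J$ by~(iii). At the common node $\xoff(u)=\xoff(v)$, the operation $\link(i,j)$ (with $u,v \in \Lab_{\xoff(u)}(i)$) must satisfy $\Lab_{\xoff(u)}(j) \subseteq S$, since any joint in $\Lab_{\xoff(u)}(j)$ would create an illegal joint--joint edge with $u$. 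Consequently $S_u = \Lab_{\xoff(u)}(j) = S_v$, whence $s_u = \min S_u = \min S_v = s_v$.

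The only real technical care lies in verifying the label-propagation lemma, but this is a routine structural induction and I do not anticipate any conceptual obstacle; everything else reduces to that lemma together with bipartiteness of the skeleton.
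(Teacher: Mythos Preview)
Your proposal is correct and follows essentially the same approach as the paper: both rely on the label-propagation fact (which the paper states explicitly just before the claim) to show that $u$ and $v$ share a label at every ancestor of $x$, then derive (i)--(iv) from this together with the bipartiteness of the skeleton. The only cosmetic difference is that you first establish $\xoff(v)\preceq\xoff(u)$, bootstrap (ii), and then close (i) by symmetry, whereas the paper argues the two-sided statement ``an edge incident to $u$ is created at $y$ iff one incident to $v$ is'' directly; your extra remark in (iv) that $\Lab_{\xoff(u)}(j)\subseteq S$ is correct but not needed, since $S_u$ and $S_v$ are already defined as intersections with $S$.
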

\begin{claimproof}
\begin{enumerate}[leftmargin=2em,label=\roman*)]
\item since $i_{x}(u)=i_{x}(v)$, then, at any $y \preceq x$, an edge incident to $u$ is created if and only if an edge incident to $v$ is created.
Thus, $\xoff(u)=\xoff(v)$.

\item on the one hand, $\xon(v) \succeq x$, since $v \in H_x$. On the other hand, $\xoff(v) \preceq x$, because $\xoff(u)=\xoff(v)$ by (i), and $\xoff(u) \preceq x$ since $u \in H_x$. We conclude that $x \in \aspan(v)$, i.e., $v \in A_x$.

\item suppose by contradiction that $u \in S$ and $v \in J$.
Let $y=\xoff(u)=\xoff(v)$.
As noted above $i_y(u)=i_y(v)$, thus at $y$ we create edges between $u$ and $z$ and between $v$ and $z$, for some $z \in H$.
This contradicts the assumption that $H$ is bipartite into sources and joints.

\item recall the set $S_v$ above.
The proof of (i) implies that $S_u=S_v$, and therefore $s_u=s_v$.
\end{enumerate}
\end{claimproof}

\begin{claim}
\label{claim:active}
At any $x \in \CliqueT$, for any label $i$, either $A_x(i)=\emptyset$ or $A_x(i)=\Lab_x(i)$.
\end{claim}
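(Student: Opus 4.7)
The plan is to observe that this claim is an almost immediate consequence of part (ii) of Claim~\ref{claim:props}, which already encodes the essential "label classes are active together" phenomenon. The key insight is that labels at a node $x$ are shared only among vertices that have the same $\xoff$ value (by part (i) of Claim~\ref{claim:props}), which in turn forces them to become active and inactive at the same nodes of $\CliqueT$.

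More concretely, I would prove the statement as follows. Suppose $A_x(i) \neq \emptyset$, and pick any $u \in A_x(i)$. To establish $A_x(i) = \Lab_x(i)$, it suffices (since the inclusion $A_x(i) \subseteq \Lab_x(i)$ is immediate from the definition) to show $\Lab_x(i) \subseteq A_x(i)$. Let $v \in \Lab_x(i)$ be arbitrary. Then $v \in V(H_x)$ and $i_x(u) = i_x(v) = i$, and $u$ is active at $x$. Applying Claim~\ref{claim:props} (ii) with this $u$ and $v$, we conclude $v \in A_x$, hence $v \in A_x \cap \Lab_x(i) = A_x(i)$.

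Since there is essentially no obstacle here beyond invoking the previous claim, the only subtlety worth double-checking is that part (ii) of Claim~\ref{claim:props} really applies to \emph{every} $v \in \Lab_x(i)$ and not merely to vertices of $H$ that happen to sit in $H_x$; this is automatic because $\Lab_x(i)$ is by definition a subset of $V(H_x)$, so the hypotheses $v \in V(H_x)$ and $i_x(u) = i_x(v)$ are both satisfied. No further case analysis or structural reasoning on $\CliqueT$ is needed, making this the shortest step in the chain leading to Theorem~\ref{thm:skelcw}.
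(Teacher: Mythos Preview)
Your proof is correct and follows exactly the paper's approach: the paper's proof is the single line ``Follows from point (ii) of Claim~\ref{claim:props},'' and you have simply unpacked that invocation carefully. The verification that $v \in \Lab_x(i)$ guarantees $v \in V(H_x)$ is a nice sanity check but not a point of divergence.
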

\begin{claimproof}
Follows from point (ii) of Claim~\ref{claim:props}.
\end{claimproof}

\paragraph*{Representants.}
Recall that $A_x(i) = A_x \cap \Lab_x(i)$ is the set of active vertices with label $i$.
\begin{definition}
The representant of label $i$ at $x$ is defined, if $A_x(i) \ne \emptyset$, as $r_x(i) \,\defeq\, \min A_x(i) = \min \Lab_x(i)$.
The set of representants at $x$ is  $R_x = \{r_x(i) : A_x(i) \ne \emptyset\}$.
\end{definition}
Note that, if $A_x(i) \ne \emptyset$, then by Claim~\ref{claim:active} $A_x(i) = \Lab_x(i)$, so the definition above is always consistent.
Note also that $|R_x| \le k$.

\begin{claim}[Monotonicity of representant sources]
\label{claim:src_repr_cont}
Let $s \in S$. Suppose that $s \in A_y \setminus R_y$.
Then, $s \notin R_x$ for all $x \preceq y$.
\end{claim}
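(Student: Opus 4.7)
The plan is as follows. Since $s \in A_y \setminus R_y$, there exists a "witness" vertex $s' \in A_y$ with $i_y(s') = i_y(s)$ and $s' < s$ (namely $s' = r_y(i_y(s))$). My strategy is to show that this same witness $s'$ continues to witness that $s$ is not the representative at every ancestor $x \preceq y$ whenever $s$ is still active there; when $s$ is not active, it is trivially not a representative.

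Fix $x \preceq y$ and consider two cases. If $s \notin A_x$, then $s \notin R_x$ by definition, and we are done. Suppose instead that $s \in A_x$. I first use the label propagation fact: since $x \preceq y$ and $i_y(s') = i_y(s)$, we have $s' \in \Lab_y(i_y(s)) \subseteq \Lab_x(i_x(s))$, so $i_x(s') = i_x(s)$. Next I invoke Claim~\ref{claim:props}(ii) with $u = s'$ and $v = s$: since $s \in A_x$ and $i_x(s') = i_x(s)$, we get $s' \in A_x$ as well. Therefore $s' \in A_x(i_x(s))$ with $s' < s$, which forces $r_x(i_x(s)) \le s' < s$. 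Since a vertex can only ever be the representative of its own label class, this gives $s \ne r_x(i_x(s))$, hence $s \notin R_x$.

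The only non-routine step is the second case, and even there the real content is simply the combination of label propagation (inherited from the parse tree) with point (ii) of Claim~\ref{claim:props} (which transfers activeness between same-labelled vertices). Everything else is immediate from the definitions of $A_x(i)$ and $r_x(i)$. I do not anticipate any obstacle beyond unwinding these definitions carefully.
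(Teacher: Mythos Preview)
Your argument is correct and follows essentially the same line as the paper's proof: both pick the witness $s' = r_y(i_y(s)) < s$ and use label propagation $\Lab_y(i_y(s)) \subseteq \Lab_x(i_x(s))$ to conclude $r_x(i_x(s)) < s$ whenever the representant at $x$ exists. One small slip: in your appeal to Claim~\ref{claim:props}(ii) the roles should be $u = s$, $v = s'$ (the hypothesis of that claim is $u \in A_x$, which you have for $s$, not yet for $s'$); in fact this detour is unnecessary, since once $s \in A_x(i_x(s))$ Claim~\ref{claim:active} already gives $A_x(i_x(s)) = \Lab_x(i_x(s)) \ni s'$, which is exactly how the paper argues.
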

\begin{claimproof}
Let $i=i_y(s)$.
Since $s \in A_y$, we have $\emptyset \ne A_y(i)=L_y(i)$ by Claim~\ref{claim:active}.
However, since $s \notin R_y$, we must have $s > r_y(i) = \min L_y(i)$.
Now, since $x \preceq y$, we have $\Lab_y(i) \subseteq \Lab_x(i_x(s))$.
Therefore, if $r_x(i_x(s))$ exists, it will satisfy $r_x(i_x(s)) = \min \Lab_x(i_x(s)) \le \Lab_y(i) < s$.
Thus $s \ne r_x(i_x(s))$, i.e., $s \notin R_x$.
\end{claimproof}

\paragraph*{Constructing a DTD from a clique decomposition.}
From the parse tree $\CliqueT$, we construct a rooted dag tree decomposition $\scT$ by replacing each node $x$ of $\CliqueT$ with the bag:
\begin{linenomath*}
\begin{equation}
B(x) = \{s_{v} : v \in R_x\}
\end{equation}
\end{linenomath*}
That is, $B(x)$ contains the designated source of each representant at $x$. Note that in general we can have $B(x) \nsubseteq H_x$. Before proving that $\scT$ is a DTD of $\skel(\dagH)$, we need some ancillary notions and claims.

\begin{definition}
An edge $(s,v) \in H$ is \emph{broken at $x$} if $|\{s,v\} \cap H_x| = 1$.
\end{definition}
Clearly, if $(s,v) \in H$ is broken at $x$ then $(s,v)$ is created at some $y \preceq x$.

\begin{claim}[Active joints have their designated sources in the bag]
\label{cl:active_v_has_sv_in_bag}
\label{cl:sv_in_bag}
If $v \in J$ and $v \in A_x$, then $s_v \in B(x)$.
\end{claim}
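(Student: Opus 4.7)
The plan is to locate, inside $R_x$, a single vertex $u$ whose designated source coincides with $s_v$, which immediately gives $s_v \in B(x)$. The natural choice is the representant of $v$'s own label, namely $u := r_x(i_x(v))$. This representant exists because $v \in A_x$ forces $A_x(i_x(v)) \ne \emptyset$, so the definition of $r_x$ applies.

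The main step is to verify that $u$ and $v$ share their designated source. I would proceed as follows. First, Claim~\ref{claim:active} gives $A_x(i_x(v)) = \Lab_x(i_x(v))$, and since $u$ and $v$ both lie in this common set, $u \in A_x$ with $i_x(u) = i_x(v)$. Second, since $v \in J$, applying Claim~\ref{claim:props}(iii) to the pair $(u,v)$ forces $u \in J$ as well (the bipartiteness of $\skel(\dagH)$ into sources and joints prevents the other case). Third, with both $u$ and $v$ in $J$ and sharing a label at $x$, Claim~\ref{claim:props}(iv) yields $s_u = s_v$.

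Finally, since $u = r_x(i_x(v)) \in R_x$ by construction, the definition $B(x) = \{s_w : w \in R_x\}$ immediately gives $s_u \in B(x)$, and the identity $s_u = s_v$ completes the proof.

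I do not anticipate a real obstacle here: the entire argument is a short chain of invocations of Claims~\ref{claim:props} and~\ref{claim:active}, whose only subtle ingredient is the bipartiteness of the skeleton, which has already been absorbed into Claim~\ref{claim:props}(iii). The only thing to be careful about is not to conflate ``$v \in H_x$'' with ``$v \in A_x$'', as $B(x)$ is defined via representants (which in general are not required to be inside $H_x$) rather than via membership in $H_x$; but this distinction is exactly what makes the designated-source mechanism work, and it is already baked into the definitions.
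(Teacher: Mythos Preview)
Your proof is correct and essentially identical to the paper's: both take $u = r_x(i_x(v))$ and use Claim~\ref{claim:props} to conclude $s_u = s_v$, whence $s_v \in B(x)$. The only difference is cosmetic: you explicitly invoke part~(iii) to first establish $u \in J$ before applying~(iv), whereas the paper applies~(iv) directly (relying on the symmetry of the hypothesis in $u$ and $v$, since here $v \in A_x$ and $v \in J$).
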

\begin{claimproof}
Let $i=i_x(v)$. By point (iv) of Claim~\ref{claim:props}, we have $s_v=s_u$ for all $u \in A_x(i)$. In particular, $s_v=s_u$ for $u = \min A_x(i) = r_x(i)$. But, by construction of $B(x)$, we have $s_u \in B(x)$.
\end{claimproof}

\begin{claim}[Joints of broken edges are covered]
\label{claim:broken_cover}
If $(s,v)$ is broken at $x$, then $v \in J(B(x))$.
\end{claim}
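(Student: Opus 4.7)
The plan is to identify the node $z \in \CliqueT$ at which the link operation creates the edge $(s,v)$, argue that $z \prec x$ strictly, and then case split on which endpoint of $(s,v)$ lies in $H_x$. Since $\skel(\dagH)$ is bipartite, $s \in S$ and $v \in J$. To establish $z \prec x$, I would observe that $s,v \in H_z$ forces $\CliqueT_z$ to contain both $\xon(s)$ and $\xon(v)$, while exactly one of these leaves lies in $\CliqueT_x$; since two subtrees of a rooted tree are either nested or disjoint, this forces $\CliqueT_x \subsetneq \CliqueT_z$, hence $z \prec x$ (the case $z = x$ being impossible, as it would place both endpoints in $H_x$).

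For Case 1, in which $v \in H_x$ and $s \notin H_x$, the argument is short. The link at $z$ creates an edge incident to $v$, so $\xoff(v) \preceq z$, and combined with $z \prec x \preceq \xon(v)$ this places $x$ inside $\aspan(v)$, yielding $v \in A_x$. Since $v \in J$, Claim~\ref{cl:sv_in_bag} gives $s_v \in B(x)$, and by the definition of the designated source the edge $(s_v, v)$ lies in $E(H)$, so $v \in J(s_v) \subseteq J(B(x))$, as required.

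Case 2, in which $s \in H_x$ and $v \notin H_x$, is the main obstacle, since the endpoint missing from $H_x$ is the joint $v$ we must cover, so Claim~\ref{cl:sv_in_bag} does not apply to $v$ directly. The same active-span argument as in Case 1 will give $s \in A_x$. I would then set $s^\star := r_x(i_x(s))$; by Claim~\ref{claim:props}(iii) applied to the source $s$ and its equally-labeled representative, $s^\star$ must also lie in $S$, so $s^\star = s_{s^\star}$ is an element of $B(x)$. It then remains to produce the edge $(s^\star, v)$ in $E(H)$, for which the key observation is an upward label-preservation property: since every relabel operation acts on an entire label class, two vertices sharing a label at a node $y$ must continue to share a label at every ancestor of $y$. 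Applying this to $s$ and $s^\star$ along the path from $x$ up to $z$ yields $i_z(s) = i_z(s^\star)$. The link at $z$ creating $(s,v)$ must then have the form \link($a,b$) with $s$ of label $a$ and $v$ of label $b$ at $z$; since $s^\star$ also carries label $a$ at $z$, the same operation creates the edge $(s^\star, v)$, whence $v \in J(s^\star) \subseteq J(B(x))$.
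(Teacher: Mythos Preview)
Your proposal is correct and follows essentially the same approach as the paper: the same case split on which endpoint lies in $H_x$, the use of Claim~\ref{cl:sv_in_bag} when $v \in H_x$, and in the other case taking the representant $r_x(i_x(s))$ and using upward label propagation to show it is also linked to $v$. Your version is more explicit (arguing $z \prec x$ via subtree nesting, and invoking Claim~\ref{claim:props}(iii) to justify $s^\star \in S$), but the argument is the same.
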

\begin{claimproof}
Suppose first $v \in H_x$. Then $\xoff(v) \preceq x$, so $v \in A_x$. Moreover, $v \in J$. Then $s_v \in B(x)$ by Claim~\ref{cl:active_v_has_sv_in_bag}. Thus, $v \in J(B(x))$.

Now suppose $s \in H_x$. Then $\xoff(s) \preceq x$, so $s \in A_x$. Let $i=i_x(s)$. Then $\emptyset \ne A_x(i) = L_x(i)$, by Claim~\ref{claim:active}, and so $r=r_x(i)$ exists. Now, let $y \preceq x$ be any node where $(s,v)$ is created. Then, the edge $(r,v)$ is created at $y$, too, since $i_y(r)=i_y(s)$. Thus $(r,v) \in H$, which means $v \in J(r)$. This implies $v \in J(B(x))$, since $r \in B(x)$ by construction of $B(x)$.
\end{claimproof}

\begin{claim}
\label{cl:sv_broken}
If $s \in B(x) \setminus H_x$, then the edge $(s,v) \in H$ is broken at $x$ for some joint vertex $v \in H_x$ such that $s_v=s$. Moreover, $s \in B(y)$ for all $a \preceq y \preceq x$, for the lowest $a \preceq x$ such that $s \in H_a$.
\end{claim}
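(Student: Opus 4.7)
The plan is to extract $v$ directly from the definition of $B(x)$, and then to use that same $v$ as a witness at every bag $B(y)$ with $a \preceq y \preceq x$. Since $s \in B(x)$, by construction of $B(x)$ there is a representant $v \in R_x$ with $s_v = s$. I would first verify that $v$ must be a joint living in $H_x$, and that $(s,v)$ is an edge of $H$ broken at $x$.

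For the first part: representants satisfy $R_x \subseteq A_x$, and any vertex active at $x$ has $x \preceq \xon(v)$, so $v \in H_x$. If $v$ were a source, then $s = s_v = v$ would lie in $H_x$, contradicting $s \notin H_x$; hence $v$ is a joint. By the definition of $s_v$, the edge $(s,v)$ is created at $\xoff(v)$, so $(s,v) \in H$; and because $v \in H_x$ while $s \notin H_x$, this edge is broken at $x$.

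For the second part, I claim the same $v$ yields a representant with designated source $s$ at every $y$ with $a \preceq y \preceq x$. First, $v \in H_x \subseteq H_y$ since $y \preceq x$. To conclude $v \in A_y$ I need $\xoff(v) \preceq y$, and this is the one step that requires care: the edge $(s,v)$ is created at $\xoff(v)$, which forces both $s$ and $v$ to lie in $H_{\xoff(v)}$. Since $v \in H_x$ already, the lowest node of $\CliqueT$ whose subgraph contains both endpoints is exactly $a$, by the definition of $a$; hence $\xoff(v) \preceq a \preceq y$, and $v \in A_y$ follows.

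Finally, set $i := i_y(v)$ and let $r := r_y(i)$, which exists because $v \in A_y(i) \neq \emptyset$. By Claim~\ref{claim:props}(iii), $r$ is a joint (since $v$ is), and by Claim~\ref{claim:props}(iv) applied to the joint $r$ and the joint $v$ sharing label $i$ at $y$, we get $s_r = s_v = s$. Since $r \in R_y$, this gives $s = s_r \in B(y)$, as required. The main obstacle is exactly the bound $\xoff(v) \preceq a$; once that is in hand, the rest is a clean application of the monotonicity of designated sources already packaged in Claim~\ref{claim:props}.
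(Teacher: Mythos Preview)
Your proof is correct and follows essentially the same approach as the paper's. The paper's argument is terser: it observes that $s=s_v$ for some $v\in J\cap A_x$, notes that the edge $(s,v)$ forces $v\in A_a$ (and hence $v\in A_y$ for all $a\preceq y\preceq x$), and then invokes Claim~\ref{cl:sv_in_bag} directly to conclude $s\in B(y)$; you instead unpack that last step via Claim~\ref{claim:props}(iii)--(iv), and you are more explicit than the paper in justifying $\xoff(v)\preceq a$ (the paper simply asserts the edge is ``created at $a$''), which is a virtue. One phrasing nit: when you write ``the lowest node of $\CliqueT$ whose subgraph contains both endpoints is exactly $a$'', you should say ``the lowest ancestor of $x$'', since that is what the definition of $a$ gives you and what you actually need (you are implicitly using $\xoff(v)\preceq x$, which holds because $v\in A_x$).
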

\begin{claimproof}
If $s \in B(x) \setminus H_x$, then $s=s_v$ for some $v \in J \cap A_x$, so $(s,v)$ is broken at $x$. By definition of $a$, the edge $(s,v)$ is created at $a$, and therefore $v \in A_a$ as well. Hence, for all $y$ such that $a \preceq y \preceq x$, we have $v \in A_y$, which by Claim~\ref{cl:sv_in_bag} implies $s \in B(y)$.
\end{claimproof}

\begin{claim}
\label{cl:sv_broken_2}
Suppose $s \in H_y \cap B(y)$. Then $s \in H_x \Rightarrow s \in B(x)$ for any $x \succ y$.
\end{claim}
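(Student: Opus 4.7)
The plan is to prove $s \in R_x$, from which $s_s = s \in B(x)$ follows at once by definition of $B(x)$. The key conceptual observation is that the definition $s_v = \min S_v$ of the designated source forces $s$ itself to be a source-representant at every active node; consequently, once I establish $s \in R_y$ as an intermediate step, the claim reduces to a routine application of label propagation.

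The first and most delicate step is to show $s \in R_y$. Unfolding $s \in B(y)$ yields some $v \in R_y$ with $s_v = s$; if $v = s$ this is immediate, so I may assume $v \in J$ and $s = \min S_v$. Suppose for contradiction that there is $s'' \in L_y(i_y(s))$ with $s'' < s$. By Claim~\ref{claim:props}(iii), $s''$ is a source, and from $v \in A_y$ we have $\xoff(v) \preceq y$, so $s'' \in H_y \subseteq H_{\xoff(v)}$. Label propagation applied to the pair $\xoff(v) \preceq y$ and the vertex $s$ gives $L_y(i_y(s)) \subseteq L_{\xoff(v)}(i_{\xoff(v)}(s))$, hence $i_{\xoff(v)}(s'') = i_{\xoff(v)}(s)$. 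Therefore the link operation at $\xoff(v)$ creates the edge $(s'', v)$, so $s'' \in S_v$, contradicting $s = \min S_v$. Combining this with the easy observation that $\xoff(s) \preceq \xoff(v) \preceq y \preceq \xon(s)$, which yields $s \in A_y$, I conclude $s = r_y(i_y(s)) \in R_y$.

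The second step is to propagate $s \in R_y$ down to $x$. Label propagation applied to the pair $y \preceq x$ gives $L_x(i_x(s)) \subseteq L_y(i_y(s))$. Since $s$ lies in the former and is the minimum of the latter, $s = \min L_x(i_x(s))$. The active-span conditions $\xoff(s) \preceq y \preceq x$ and $x \preceq \xon(s)$ (which holds since $s \in H_x$) give $s \in A_x$, so $s \in R_x$ and therefore $s \in B(x)$. The main obstacle I expect is the contradiction argument in Step 1: it is the unique place where the minimum-over-$S_v$ definition of $s_v$ interacts non-trivially with label propagation, and it is precisely this interaction that encodes the invariant being claimed. Everything else is a routine consequence of label monotonicity and the definition of active spans.
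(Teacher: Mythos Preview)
Your proof is correct. It differs from the paper's in one structural respect: the paper splits into the cases $s \in R_y$ and $s \notin R_y$, handling the second case by a further sub-case analysis on whether $v \in H_x$, with the sub-case $v \notin H_x$, $s \notin R_x$ ending in a contradiction via $s_v = \min S_v$. You instead prove $s \in R_y$ unconditionally in Step~1, using essentially the same $\min S_v$ contradiction, thereby showing that the paper's second case is vacuous under the standing hypothesis $s \in H_y$. Your Step~2 is the paper's Case~1, with the appeal to Claim~\ref{claim:src_repr_cont} replaced by the direct label-propagation argument that underlies that claim anyway. The net effect is a shorter, case-free proof; the paper's version is slightly more modular in that it reuses Claim~\ref{claim:src_repr_cont} as a black box.

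One minor presentational point: you invoke Claim~\ref{claim:props}(iii) to conclude $s'' \in S$ before you have written down $s \in A_y$, which is the hypothesis that claim needs. The derivation of $s \in A_y$ (via $\xoff(s) \preceq \xoff(v) \preceq y \preceq \xon(s)$) is independent of everything else in Step~1, so you should just move it to the beginning of the step.
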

\begin{claimproof}
We distinguish two cases. First, suppose that $s \in R_y$. Then,
\begin{linenomath*}
\begin{align}
    s \in R_y &\Rightarrow s \in A_y && R_y \subseteq A_y
    \\ &\Rightarrow s \in A_x && s \in H_x \text{ and } x \succ y
    \\ &\Rightarrow s \in R_x && s \in R_y, y \prec x, \text{Claim }\ref{claim:src_repr_cont}
    \\ &\Rightarrow s \in B(x)
\end{align}
\end{linenomath*}
Now suppose that $s \notin R_y$. Then, by construction of $R_y$,
\begin{equation}
    s \notin R_y \Rightarrow s=s_v : (s,v) \in H, v \in R_y
\end{equation}
Thus, $v \in A_y$. Since $y \prec x$, then either $v \in H_x$ and $v \in A_x$, or $v \notin H_x$. If $v \in A_x$, then $s = s_v \in B(x)$ by Claim~\ref{cl:sv_in_bag}. If instead $v \notin H_x$, then $(s,v)$ is broken at $x$ with $s \in A_x$. If $s \in R_x$, then $s \in B(x)$. Otherwise, $s \notin R_x$, which means that $\min A_x(i) = s' < s$, for some source $s'$, where $i=i_x(s)$. But then, when $(s,v)$ is created, $(s',v)$ is created too. Thus $(s',v) \in H$. But then $s_v \le s'$, so $s_v \ne s$, a contradiction.
\end{claimproof}

We now prove:
\begin{lemma}
Let $\scT$ be the tree obtained from $\CliqueT$ as described above.
Then, $\scT$ is a valid DTD of $H$ of width $\le k$.
\end{lemma}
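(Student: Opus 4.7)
The plan is to verify the three conditions of Definition~\ref{def:dtd} and the width bound $\dtw(\scT) \le k$ in turn. The width bound follows at once from $|B(x)| \le |R_x| \le k$; condition~(1) holds by construction since each designated source lies in $S$; and for condition~(2), every $w \in V(H) = S \cup J$ is covered by some bag: if $w \in J$, Claim~\ref{cl:active_v_has_sv_in_bag} gives $s_w \in B(x)$ for every $x \in \aspan(w)$, and $(s_w, w) \in H$ by definition of $s_w$, so $w \in V(\dagH(B(x)))$; if $w \in S$, then at the leaf $\xon(w)$ we have $H_{\xon(w)} = \{w\}$ and $R_{\xon(w)} = \{w\}$, giving $w \in B(\xon(w))$.

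The bulk of the work is condition~(3). Since $\scT$ and $\CliqueT$ share the same underlying tree, condition~(3) is equivalent to the statement that for every $w \in V(H)$ the set $L(w) = \{x \in \CliqueT : w \in V(\dagH(B(x)))\}$ is a subtree of $\CliqueT$; equivalently, $L(w) \cap P$ is a (possibly empty) contiguous segment for every root-to-leaf path $P$. I will fix such a $P$ and proceed by cases on whether $w$ is a joint or a source. For $w \in J$, I will write $L(w) \cap P = \bigcup_s T_s$ where the union runs over the sources $s$ with $(s, w) \in H$ and $T_s = \{x \in P : s \in B(x)\}$. For each such $s$, Claim~\ref{cl:sv_broken_2} implies that the portion of $T_s$ lying on the root-path of $\xon(s)$ is downward-closed toward $\xon(s)$, hence an interval, while Claim~\ref{cl:sv_broken} describes the remaining ``broken'' portion of $T_s$, where $s \in B(x) \setminus H_x$, and shows that it is also an interval adjacent to the first one. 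In particular, $T_{s_w} \supseteq \aspan(w) \cap P$ by Claim~\ref{cl:active_v_has_sv_in_bag}. Finally, Claim~\ref{claim:broken_cover} allows adjacent contributions of distinct witness sources to be chained together, so the union $\bigcup_s T_s$ is itself one contiguous interval. The case $w \in S$ is dual: $w \in V(\dagH(B(x)))$ iff $w \in B(x)$, and $w$ can sit in $B(x)$ either as its own representant $w \in R_x$ (whose $x$-set on $P$ is governed by Claim~\ref{claim:src_repr_cont} and forms a contiguous suffix of $\aspan(w) \cap P$ starting at $\xon(w)$) or as $w = s_v$ for some joint representant $v$ (whose $x$-set is handled exactly as in the joint case via Claims~\ref{cl:sv_broken} and~\ref{cl:sv_broken_2}); the two are glued into a single interval as before.

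The main technical obstacle will be to show that all the different ways in which a source can witness $w \in V(\dagH(B(x)))$---directly via its own representant, indirectly via a joint representant, or via a ``broken'' edge---combine along any root-to-leaf path into a single interval, rather than into a disjoint union of intervals. The auxiliary claims proved above are engineered precisely for this: Claims~\ref{claim:props} and~\ref{claim:active} normalise the behaviour of same-labelled vertices; Claim~\ref{claim:src_repr_cont} controls the direct-representant part; Claim~\ref{claim:broken_cover} ensures that broken edges never lose coverage at $x$; and Claims~\ref{cl:sv_broken} and~\ref{cl:sv_broken_2} establish the crucial ``persistence'' of designated sources in bags along a root-path.
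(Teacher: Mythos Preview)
Your treatment of conditions (1), (2), and the width bound is fine and matches the paper. The gap is in your reformulation of condition~(3). While it is correct that condition~(3) is equivalent to each $L(w)$ being a subtree of $\CliqueT$, your second equivalence---that $L(w)$ is a subtree iff $L(w)\cap P$ is an interval for every root-to-leaf path $P$---is false. Take $\CliqueT$ to be a root $r$ with two leaf children $a,b$, and set $L(w)=\{a,b\}$: each root-to-leaf path meets $L(w)$ in a single node, yet $L(w)$ is disconnected. Your argument, which fixes a single root-to-leaf path $P$ and analyses $\bigcup_s T_s$ along $P$, therefore cannot detect the failure mode where the two witnessing bags $x',x''$ sit in different branches of~$\CliqueT$. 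All of your ``persistence'' claims (\ref{claim:src_repr_cont}, \ref{cl:sv_broken}, \ref{cl:sv_broken_2}) speak only about ancestor--descendant relations, so nothing in the sketch bridges two incomparable nodes.

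The paper's proof does not pass through the subtree reformulation. It works directly: fix $x\in\CliqueT(x',x'')$, assume without loss of generality $x\prec x'$, pick $z\in J(B(x'))\cap J(B(x''))$ with witnesses $s'\in B(x')$, $s''\in B(x'')$, and case-split on how $\{z\}\cup S_z$ (the in-neighbourhood of $z$) intersects $H_x$. If the intersection is partial, an edge into $z$ is broken at $x$ and Claim~\ref{claim:broken_cover} finishes. If $(\{z\}\cup S_z)\cap H_x=\emptyset$, then $s'\notin H_{x'}$, so $s'=s_{v'}$ for some joint $v'\in A_{x'}$; that joint stays active at $x$ and Claim~\ref{cl:sv_in_bag} gives $s'\in B(x)$. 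The essential case you are missing is $(\{z\}\cup S_z)\subseteq H_x$: here the paper introduces the lowest common ancestor $a$ of $x',x''$, uses Claim~\ref{cl:sv_broken} to place $s''\in B(a)$, and then Claim~\ref{cl:sv_broken_2} to propagate $s''$ down to $B(x)$. This LCA step is precisely what handles incomparable $x',x''$, and it has no analogue in your root-to-leaf-path argument.
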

\begin{proof}
We check the three properties of Definition~\ref{def:dtd}. First, by construction $B(x) \subseteq S(H)$ for all $x \in \CliqueT$. Second, for all $s \in S(H)$  there exists $x \in \CliqueT$ such that $s \in B(x)$. To see this, take $x=\xon(s)$; then $s$ is active at $x$, and since it is also the only vertex in $H_x$, we have $s=r_x(i_x(s))$ and therefore $B(x)=\{s\}$. Clearly, $|B(x)| \le |R_x| \le k$, so the width of $\scT$ is at most $k$. It remains to show the third property, that is:
\begin{linenomath*}
\begin{equation}
x \in \CliqueT(x',x'')
\;\Rightarrow\;
J(B(x')) \cap J(B(x'')) \subseteq J(B(x))
\end{equation}
\end{linenomath*}
Clearly, we can assume $x \notin \{x',x''\}$. This implies $x \prec x'$ or $x \prec x''$, so, without loss of generality, we assume $x \prec x'$. Now consider any vertex $z \in J(B(x')) \cap J(B(x''))$.
By construction, $z \in J(s') \cap J(s'')$ for some $s' \in B(x')$ and $s'' \in B(x'')$. Hence, $s'=s_{v'}$ and $s''=s_{v''}$ for some $v' \in R_{x'}$ and $v'' \in R_{x''}$.

Let $S_z$ be the set of sources of $H$ having an edge to $z$. We consider three cases.

\textbf{Case 1}: $(z \cup S_z) \cap H_x \ne \emptyset$ and $(z \cup S_z) \nsubseteq H_x$. Then, some edge $(s,z) \in H$ is broken at $H_x$. Claim~\ref{claim:broken_cover} then implies $z \in J(B(x))$.

\textbf{Case 2}: $(z \cup S_z) \cap H_x=\emptyset$.
Then $s' \notin H_{x'}$, since otherwise, as $x \prec x'$, we would have $s' \in H_{x}$, contradicting $S_z \cap H_x = \emptyset$. Therefore, $s' \notin R_{x'}$. Yet, we know $s' \in B(x')$. Thus, we must have $s'=s_{v'}$, where $v'$ is some joint vertex of $H$ that satisfies $v' \in R_{x'} \subseteq A_{x'}$. But since $x \prec x'$, then $v' \in H_x$, too. Since $s' \notin H_x$, then the edge $(s',v')$ is broken at $x$. This implies $v' \in A_x$, which by Claim~\ref{cl:sv_in_bag} implies $s' = s_{v'} \in B(x)$, and thus $z \in J(B(x))$.

\textbf{Case 3}: $(z \cup S_z) \subseteq H_x$.
Denote by $a \in \CliqueT$ the lowest common ancestor of $x',x''$, let $\CliqueT_a$ be the subtree of $\CliqueT$ rooted at $a$. Note that $s'' \in H_a$, since $s'' \in S_z \subseteq H_x$, and since $a \prec x$ as $a$ is the node of $\CliqueT(x',x'')$ closest to $\rt(\CliqueT)$.

Now, we claim that $s'' \in B(a)$. If $a=x''$, this holds because by hypothesis $s'' \in B(x'')$.
Suppose instead that $a \ne x''$. Then $H_x \cap H_{x''} = \emptyset$, as $x$ and $x''$ belong to disjoint subtrees of $\CliqueT_a$. But $s'' \in H_x$, since $s'' \in S_z$ and we assume $(z \cup S_z) \subseteq H_x$. Therefore, $s'' \notin H_{x''}$, and moreover, $a$ is the lowest ancestor of $x''$ such that $s'' \in H_a$. By Claim~\ref{cl:sv_broken}, $s'' \in B(a)$.

Thus, $s'' \in H_a \cap B(a)$. Since $x \succ a$ and $s'' \in H_x$, by Claim~\ref{cl:sv_broken_2} $s'' \in B(x)$, and so $z \in J(B(x))$.
\end{proof}

\section{Approximate Counting}\label{sec:approx}
In this section, we consider the complexity of approximately counting subgraphs and induced subgraphs in degenerate graphs. We will identify classes of graphs $C$ for which $\#\subsprobd(C)$ and $\#\indsubsprobd(C)$ can be approximated efficiently by a fixed-parameter tractable approximation scheme:
\begin{definition}[\cite{ArvindR02,Meeks16,DLM20}] 
Let $(P,\kappa)$ be a parameterized counting problem. A \emph{fixed-parameter tractable approximation scheme (FPTRAS)} for $(P,\kappa)$ is a randomised algorithm $\mathbb{A}$ that expects as input an instance $I$ of  $(P,\kappa)$ and $\varepsilon>0$, and returns an output $\hat{c}$ such that
\begin{linenomath*}
\begin{equation}\label{eq:eps_approx}
    \Pr[(1-\varepsilon)P(I)\leq \hat{c} \leq (1+\varepsilon)P(I)]\geq 2/3\,.
\end{equation}
\end{linenomath*}
The running time of $\mathbb{A}$ must be bounded by $f(\kappa(I))\cdot {poly}(|I|,\varepsilon^{-1})$, where $f$ is computable and independent of $I$.
\end{definition}
\noindent The error probability of an FPTRAS can be brought down to any $\delta>0$, thus matching the original definition of an FPTRAS by~\cite{ArvindR02}, by standard amplification techniques, at the cost of an additional factor of $O(\log(1/\delta))$ in the running time (see~\cite{DLM20}). In what follows, we will call $\hat c$ an $\varepsilon$\emph{-approximation} of $P(I)$ if $(1-\varepsilon)P(I)\leq \hat c \leq (1+\varepsilon)P(I)$.

Let us start by noting that, for $\#\subsprobd(C)$ and $\#\indsubsprobd(C)$, an FPTRAS may not always exist. To see this, let $C$ be the set of all $1$-subdivisions of cliques. Then the $k$-clique problem, which is $\W{1}$-hard, reduces to the decision problems $\subsprobd(C)$ and $\indsubsprobd(C)$, by computing the $1$-subdivision $G'$ of $G$, which has degeneracy at most $2$, and checking if it contains the subdivision of a $k$-clique. Since an $\varepsilon$-approximation of the number of subdivided $k$-cliques in $G'$ reveals whether or not $G$ contains a $k$-clique, we conclude that $\#\subsprobd(C)$ and $\#\indsubsprobd(C)$ do not have an FPTRAS unless $\mathrm{FPT}$ coincides with $\W{1}$ under randomised reductions.

Hence, in general $\#\subsprobd(C)$ and $\#\indsubsprobd(C)$ are unlikely to admit an FPTRAS. Nonetheless, we identify large classes of patterns $C$ for which an FPTRAS exists. For the remaining classes, we are not able to prove hardness results; but we point out that, for approximate counting problems, dichotomies into easy and hard cases are quite uncommon, and often even unlikely to exist~\cite{DyerGJ10}. Even in the case of non-degenerate graphs, the complexity of approximate counting subgraphs is only partially resolved: \cite{ArvindR02} established the existence of an FPTRAS for counting copies of patterns of bounded treewidth, but it it still not known if patterns of unbounded treewidth induce hardness.

For our results, presented in the subsequent subsections, we rely on a recent reduction from ``approximate counting to colourful decision'' for self-contained $k$-witness problems due to~\cite{DLM20}. To this end, we say that a graph $G$ is $k$-\emph{vertex coloured} if it comes with a (not necessarily proper) surjective colouring $c:V(G)\rightarrow [k]$.
Now let $C$ be a class of graphs. The problem $\mcolsubsprobd(C)$ expects as input a $k$-vertex coloured graph $G$ and a $k$-vertex pattern $H$, and the goal is to decide whether $G$ contains a subgraph isomorphic to $H$ that, additionally, contains each colour precisely once. The parameterization is given by $k+d(G)$. The problem $\mcolindsubsprobd(C)$ is the same, but for \emph{induced} multicoloured copies.

The following is a direct consequence of Theorem~9 in~\cite{DLM20}:\footnote{We remark that the formal statement of Theorem~9 in~\cite{DLM20} describes the running time of the algorithm for the multicolour decision version only as a function in $n$ and $k$. However, by mimicking its proof (that is, by applying~\cite[Theorem 1]{DLM20}) for our case of approximate counting copies and induced copies of $H$ in $G$, we see that the running time can also be expressed as a function in $G$ and $k$, and thus also in $d(G)$.}

\begin{theorem}[\cite{DLM20}]\label{thm:coldecapprox}
Let $C$ be a class of graphs. If $\mcolsubsprobd(C)$ can be solved in time $T(G,k,d(G))$, then there is a randomised algorithm $\mathbb{A}$ that, on input a $k$-vertex pattern $H\in C$, an $n$-vertex graph $G$, and $\varepsilon>0$, computes in time $\varepsilon^{-2}k^{2k}n^{o(1)}T(G,k,d(G))$ and with probability at least $2/3$ an $\varepsilon$-approximation of $\#\subs{H}{G}$. In particular, $\#\subsprobd(C)$ has an FPTRAS whenever $\mcolsubsprobd(C)$ is FPT. The same holds true for $\mcolindsubsprobd(C)$ and $\#\indsubsprobd(C)$.
\end{theorem}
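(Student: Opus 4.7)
The plan is to apply the generic reduction of Dell, Lapinskas and Meeks~\cite{DLM20} from approximate counting to colourful decision, and to verify that the degeneracy parameter behaves nicely under the restriction operations used in that reduction. Concretely, Theorem~1 of~\cite{DLM20} shows that, for any self-contained $k$-witness problem, an FPTRAS can be obtained from any algorithm that solves the colourful decision version on (possibly vertex-weighted) subinstances of the input. Theorem~9 of~\cite{DLM20} applies this framework to $\#\subsprob$ and $\#\indsubsprob$ in the non-degenerate setting. Our task is to redo their derivation keeping track of the degeneracy of the subinstances that the reduction feeds to the decision oracle.

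First, I would verify that $\#\subsprob$ and $\#\indsubsprob$ fit the self-contained $k$-witness framework of~\cite{DLM20}. The witness set for an instance $(H,G)$ is the set of copies (respectively induced copies) of $H$ in $G$; each witness has exactly $k=|V(H)|$ vertices in $V(G)$; and the framework is self-contained in the sense that for any $S\subseteq V(G)$, the witnesses that lie inside $S$ are exactly the witnesses of the subinstance $(H,G[S])$. This is immediate from the definitions of subgraph and induced subgraph. The colourful decision version that the framework uses, when cast as a problem about $k$-vertex coloured host graphs, is precisely $\mcolsubsprobd(C)$ and $\mcolindsubsprobd(C)$ as defined just before the theorem.

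Second, I would apply the generic reduction of~\cite[Theorem~1]{DLM20} to obtain a randomised algorithm $\mathbb{A}$ that, given $(H,G,\varepsilon)$, queries the colourful decision oracle on $n^{o(1)}$-many subinstances $(H, G[S_i])$ with $k$-colourings derived from random hash functions, and returns an $\varepsilon$-approximation with probability at least $2/3$ in total time $\varepsilon^{-2} k^{2k} n^{o(1)} \cdot T^{\star}$, where $T^{\star}$ is the maximum time spent on a single oracle call. This is a direct unwinding of the general framework and does not need any new ingredient.

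Third, the key observation that makes the statement use $T(G,k,d(G))$ rather than something larger is that degeneracy is monotone under taking induced subgraphs: for every $S\subseteq V(G)$ we have $d(G[S])\le d(G)$, so every subinstance fed to the oracle has degeneracy at most $d(G)$. Consequently each oracle call costs at most $T(G,k,d(G))$, yielding the claimed overall running time $\varepsilon^{-2}k^{2k}n^{o(1)} T(G,k,d(G))$. The FPTRAS consequence for $\#\subsprobd(C)$ and $\#\indsubsprobd(C)$ is then immediate: if the colourful decision version is FPT in $k+d(G)$, then $T(G,k,d(G))\le f(k,d(G))\cdot|V(G)|^{O(1)}$ for some computable $f$, which matches the FPTRAS running-time requirement. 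The main (mild) obstacle is purely bookkeeping: making sure that the reduction of~\cite{DLM20} only ever calls the oracle on induced subgraphs of $G$ (so that the degeneracy bound $d(G[S])\le d(G)$ applies), and that the colouring $c:V(G[S])\to[k]$ it provides is a valid input to $\mcolsubsprobd(C)$ / $\mcolindsubsprobd(C)$; both are straightforward inspections of the construction in~\cite{DLM20}.
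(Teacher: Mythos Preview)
Your proposal is correct and matches the paper's approach. The paper does not give a self-contained proof of this theorem; it simply states it as ``a direct consequence of Theorem~9 in~\cite{DLM20}'' and adds a footnote remarking that, by mimicking the proof of that theorem (i.e., applying \cite[Theorem~1]{DLM20}), the running time can be expressed in terms of $G$, $k$, and hence $d(G)$ --- which is exactly the degeneracy-monotonicity observation you spell out.
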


Before we begin with our treatment of the multicolour decision versions, let us remark that, following Theorems~2 and~9 in~\cite{DLM20}, we also obtain efficient algorithms for $\varepsilon$-approximate \emph{sampling} of subgraphs and induced subgraphs in degenerate graphs whenever the multicolour decision version is fixed-parameter tractable.

\subsection{Detecting Multicoloured Subgraphs}
As a warm-up, we will begin with $\mcolsubsprobd(C)$, which turns out to be the significantly easier case when compared to $\mcolindsubsprobd(C)$, as we will see later. The reason for that is that the detection of multicoloured subgraphs is equivalent to detecting the existence of a homomorphism whose image is multicoloured. In particular, the constraint of hitting each colour precisely once makes sure that the homomorphism is injective. 

\begin{lemma}
Let $C$ be a class of graphs. There is a computable function $\hat{f}$ such that the problem $\mcolsubsprobd(C)$ can be solved in time $\hat{f}(k,d)\cdot n^{\tau_1(H)} \log n$, where $d=d(G)$ and $n=|V(G)|$. In particular, $\mcolsubsprobd(C)$ is FPT whenever the (generalised) dag treewidth of $C$ is bounded.
\end{lemma}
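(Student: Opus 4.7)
The plan is to reduce the detection of multicoloured subgraphs to detecting colour-prescribed homomorphisms, and then invoke the dag tree decomposition algorithm of~\cite{Bressan21}.

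First I would observe that a multicoloured subgraph of $G$ isomorphic to $H$ is precisely the image of a \emph{colourful embedding} $\varphi : V(H) \to V(G)$, that is, an injective homomorphism whose image uses each of the $k$ colours exactly once. Moreover, every colourful \emph{homomorphism} is automatically injective, because it maps the $k$ vertices of $H$ onto $k$ distinct colour classes. Consequently, the decision problem $\mcolsubsprobd$ is equivalent to deciding whether there exists a colourful homomorphism from $H$ to the $k$-vertex-coloured graph $G$.

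Next, I would enumerate the $k!$ bijections $\pi : [k] \to V(H)$. For each such $\pi$, define $c_H := \pi \circ c$ on $V(G)$, and let $G_\pi$ be the spanning subgraph of $G$ obtained by keeping only the edges $\{u,v\} \in E(G)$ with $\{c_H(u), c_H(v)\} \in E(H)$. Then $c_H$ is a homomorphism from $G_\pi$ to $H$, so $G_\pi$ is genuinely $H$-coloured; and a colourful homomorphism from $H$ to $G$ extending $\pi^{-1}$ on colours is exactly a colour-prescribed homomorphism from $H$ to $(G_\pi,c_H)$. Thus, it suffices to test, for every $\pi$, whether $|\cphoms{H}{G_\pi}| > 0$. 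Since each $G_\pi$ is a subgraph of $G$, we have $d(G_\pi) \le d(G) = d$.

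Finally, I would apply the dag tree decomposition dynamic program of~\cite[Theorem~9]{Bressan21} to count $|\cphoms{H}{G_\pi}|$. That algorithm counts homomorphisms in time $f(|H|,d) \cdot n^{\tau_1(H)} \log n$, and it extends to the colour-prescribed variant without any change in the exponent of $n$: each DP cell simply restricts attention to those partial homomorphisms that, for every $v \in V(H)$ already assigned, map $v$ into $c_H^{-1}(v)$. This at most multiplies the hidden function $f$ by a factor depending only on $|H|$. Summing over the $k!$ choices of $\pi$, the total running time is $k! \cdot f(k,d) \cdot n^{\tau_1(H)} \log n = \hat{f}(k,d) \cdot n^{\tau_1(H)} \log n$ for an appropriate computable $\hat{f}$, which establishes the claim. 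The main subtlety is checking that the DP of~\cite{Bressan21} accommodates the colour prescription within the same asymptotic exponent in $n$; this follows because the colour constraint only restricts (rather than enlarges) the set of candidate extensions kept in each bag of the dag tree decomposition.
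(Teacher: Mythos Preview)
Your proposal is correct and follows essentially the same approach as the paper: both iterate over the $k!$ bijections between colour classes and vertices of $H$, reduce to detecting colour-respecting homomorphisms, and adapt the dynamic program of~\cite{Bressan21} by restricting partial homomorphisms to those compatible with the colouring. Your version is slightly more explicit in that you build the subgraph $G_\pi$ to turn the $k$-vertex colouring into a genuine $H$-colouring before invoking colour-prescribed homomorphism counting, whereas the paper simply says it discards non-colour-respecting homomorphisms inside the DP; these are two packagings of the same idea.
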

\begin{proof}
Given a $k$-vertex coloured graph $G$ of degeneracy $d$ and a $k$-vertex pattern $H\in C$, the algorithm proceeds as follows: First, we guess a (not necessarily proper) surjective $k$-vertex colouring $c_H:V(H)\rightarrow [k]$ of $H$. Note that there are $k!$ many choices as $|V(H)|=k$. Let $c_G$ be the $k$-vertex colouring of $G$.

Now, for each guess of $c_H$, we can decide whether there is a homomorphism $\varphi$ from $H$ to $G$ that respects $c_H$ (i.e., for each $v\in V(H)$ we have $c_G(\varphi(v))=c_H(v)$) in time $f(k,d)\cdot n^{\tau_1(H)} \log n$. To this end, we adapt the dynamic programming algorithm of~\cite{Bressan21} in a straightforward way: when enumerating the homomorphisms $\phi$ from (the subgraphs of) $H$ to $G$, we discard $\phi$ if it does not respect the colouring. It is immediate to see that the resulting algorithm counts the number of homomorphisms from $H$ to $G$ that respect $c_H$ and thus, in particular, decides if one such homomorphism exists.

The overall running time bound is achieved by setting $\hat{f}(k,d):=k!f(k,d)$.
\end{proof}

The result on approximate counting is an immediate consequence.

\begin{theorem}[Theorem~\ref{thm:intro_FPTRAS_subcount} restated]\label{thm:FPTRAS_subcount}
There exists a computable function $f$ and a randomised algorithm $\mathbb{A}$ that, on input an $n$-vertex graph $G$, a $k$-vertex pattern $H$, and $\varepsilon>0$, computes in time
$\varepsilon^{-2} \cdot f(k,d(G))\cdot n^{\tau_1(H)+o(1)}$ and with probability at least $2/3$, an $\varepsilon$-approximation of $\#\subs{H}{G}$. In particular, $\#\subsprobd(C)$ has an FPTRAS whenever the (generalised) dag treewidth of $C$ is bounded.
\end{theorem}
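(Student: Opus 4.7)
The proof plan is essentially a direct composition of two ingredients already in place: the FPT algorithm for the multicoloured decision problem $\mcolsubsprobd(C)$ stated in the lemma immediately preceding the theorem, and the black-box reduction from approximate counting to colourful decision of Dell, Lapinskas and Meeks (Theorem~\ref{thm:coldecapprox}).

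First, I would invoke the preceding lemma to fix a deterministic algorithm $\algo_{\text{dec}}$ that, on input a $k$-vertex pattern $H$ and a $k$-vertex coloured $n$-vertex graph $G$ of degeneracy $d$, decides $\mcolsubsprobd(\{H\})$ in time
\begin{equation}
T(G,k,d) \;\le\; \hat{f}(k,d)\cdot n^{\tau_1(H)}\cdot \log n
\end{equation}
for some computable function $\hat{f}$. Observe that $\tau_1(H)$ depends only on the pattern and is independent of $G$, so this running time has the form required as input to Theorem~\ref{thm:coldecapprox}.

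Second, I would plug $\algo_{\text{dec}}$ into the reduction of Theorem~\ref{thm:coldecapprox}. That theorem yields a randomised algorithm $\mathbb{A}$ which, given $H$, $G$, and $\varepsilon>0$, returns an $\varepsilon$-approximation of $|\subs{H}{G}|$ with probability at least $2/3$, running in time
\begin{equation}
\varepsilon^{-2}\cdot k^{2k}\cdot n^{o(1)}\cdot T(G,k,d(G))
\;\le\;
\varepsilon^{-2}\cdot k^{2k}\cdot \hat{f}(k,d(G))\cdot n^{\tau_1(H)}\cdot n^{o(1)}\cdot \log n.
\end{equation}
The remaining step is purely cosmetic: absorb the $k^{2k}$ factor into a new computable function $f(k,d):=k^{2k}\hat{f}(k,d)$, and absorb the $\log n$ factor into $n^{o(1)}$ (since $\log n = n^{o(1)}$). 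This yields the claimed bound
\begin{equation}
\varepsilon^{-2}\cdot f(k,d(G))\cdot n^{\tau_1(H)+o(1)}.
\end{equation}

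The parameterized complexity consequence is then immediate: if $\tau_1(H)$ is bounded uniformly over $C$ by some constant $c$, then $n^{\tau_1(H)+o(1)} = n^{c+o(1)} = \poly(n)$, so the running time is $f(k,d(G))\cdot \poly(n,\varepsilon^{-1})$, matching the definition of an FPTRAS. There is no substantive obstacle to overcome here; the content of the theorem lies entirely in the preceding lemma (which in turn leverages the dag tree decomposition dynamic program of~\cite{Bressan21} modified to respect a vertex colouring) and in the DLM reduction, and this proof is just the bookkeeping needed to combine them.
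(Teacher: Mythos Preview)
The proposal is correct and takes essentially the same approach as the paper, which simply states that the result follows immediately by combining the preceding lemma with Theorem~\ref{thm:coldecapprox}. Your write-up spells out the bookkeeping (absorbing $k^{2k}$ and $\log n$ into $f$ and $n^{o(1)}$) that the paper leaves implicit, but the argument is identical.
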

\begin{proof}
Follows immediately by combining the previous lemma with Theorem~\ref{thm:coldecapprox}.
\end{proof}

\textbf{Examples.} We give two examples of classes $C$ for which approximate counting is hard when parameterized only by $k$, but is FPT when parameterized by $k+d$. First, let $C$ be the class of wreath graphs, which are the disjunctive product (also called ``OR product'') of cycles and independent sets. It is not hard to see that if $H$ is a wreath graph, then every acyclic orientation $\orient{H}$ of $H$ has essentially the same dag tree decomposition of the corresponding oriented cycle, and in particular has $\tau(H)=2$. Since this holds for every orientation and every such $H$, we have $\tau_1(C)=2$. Hence, by Theorem~\ref{thm:FPTRAS_subcount} there exists an FPTRAS for counting the copies of any $H \in C$ when the parameter is $k+d$. Second, let $C$ be the class of all graphs $H$ that can be constructed through an F-gadget as follows. Let $F$ be any fixed graph. Then, let $H$ be any graph obtained from $F$ by exploding every node into a clique, and by subdividing every original edge once. Then, $H$ has an $F$-gadget $(\mathcal{S},\mathcal{P},\emptyset)$, where $H[S_v] \in \mathcal{S}$ is a clique for every $v \in V(F)$. Now consider any acyclic orientation $\orient{H}$ of $H$. Clearly, $\orient{H}(S_v)$ as a standalone graph has exactly one node of zero indegree, and the same for $\orient{H}(P_e)$, for all $v \in V(F)$ and all $e \in E(F)$. Therefore, by Theorem~\ref{thm:F_tw_H_dtw}, we have that $\tau_1(\orient{H})$ is constant, as $\tw(F)$ is constant (being $F$ fixed). Thus, by Theorem~\ref{thm:FPTRAS_subcount} there exists an FPTRAS for counting the copies of any given $H \in C$ when the parameter is $k+d$. However, in both cases, $C$ has unbounded treewidth. Therefore the existence of an FPTRAS in the non-degenerate case (i.e., for the parameter $k$ only) is not known, and even impossible under the assumption that treewidth characterises FPT approximability of parameterized subgraph counting in the non-degenerate case (see~\cite{ArvindR02}). 


\subsection{Detecting Multicoloured Induced Subgraphs}
Next, we continue with $\mcolindsubsprobd(C)$. In contrast to $\mcolsubsprobd(C)$, we are not able to rely on dynamic programming over dtd's if we aim for any interesting results. Instead, we construct a novel algorithm for detecting colourful induced subgraphs in degenerate graphs, the base case of which is given by the independent set problem:

\begin{lemma}\label{lem:multicoloured_IS}
The problem of detecting a multicoloured $k$-independent set in a $k$-vertex coloured graph $G$, when parameterized by $k+d(G)$ is fixed-parameter tractable  and can be solved in time $O(k^{k} d(G)^k  (k^2+|G|))$.
\end{lemma}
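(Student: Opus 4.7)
The algorithm is recursive and branches on small colour classes, with a base case handled by a greedy argument on a degeneracy ordering. The first step would be to fix a degeneracy ordering $v_1, v_2, \ldots, v_n$ of $G$ in which each $v_i$ has at most $d := d(G)$ neighbours in $\{v_1, \ldots, v_{i-1}\}$; this can be produced in time $O(|G|)$, and any induced subgraph inherits a valid degeneracy ordering of parameter at most $d$.

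The base case to prove is the following: if every colour class satisfies $|V_j| > kd$, then $G$ contains a colourful $k$-independent set, which can be extracted in time $O(|G|)$. To construct it I would process vertices in \emph{reverse} degeneracy order and greedily maintain a partial set $I$, adding the current vertex $v$ to $I$ exactly when (i) no vertex of colour $c(v)$ is already in $I$ and (ii) no neighbour of $v$ already in $I$ (equivalently, no larger-indexed neighbour, since everything in $I$ was processed earlier in the reverse order). For correctness, suppose the output $I$ misses some colour $c^*$; then every vertex of $V_{c^*}$ must have been rejected by condition (ii), i.e., is a \emph{smaller-indexed neighbour} of some $u \in I$. Each $u$ has at most $d$ smaller-indexed neighbours in the degeneracy ordering and $|I| \le k-1$, so $|V_{c^*}| \le (k-1)d < kd$, contradicting the hypothesis.

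The recursive step handles the complementary case: some colour $c^*$ satisfies $|V_{c^*}| \le kd$. I would branch over each $v \in V_{c^*}$ as the candidate representative of colour $c^*$, and recurse on the subgraph induced by $V(G)\setminus N[v]$ with the induced $(k-1)$-colouring using colours $[k]\setminus\{c^*\}$; the overall answer is YES iff some branch returns YES. Since every recursive subgraph is a subgraph of $G$, its degeneracy is still at most $d$, so the base-case threshold ``$kd$'' remains valid at every level. The recursion has depth at most $k$ and branching factor at most $kd$, producing at most $(kd)^k = k^k d^k$ leaves; each recursive node performs $O(|G|+k^2)$ work (computing colour-class sizes, extracting $N[v]$, and, at leaves, running the greedy and reading off the final independent set). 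Summing the geometric series yields the claimed bound $O(k^k d^k (k^2+|G|))$.

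The main technical point is the greedy argument: using the \emph{reverse} degeneracy order is essential, since it is the smaller-indexed neighbours of vertices already in $I$ that are bounded by $d$, giving the clean $(k-1)d$ estimate. A forward-order variant fails because a single selected vertex may have unboundedly many later neighbours. The remaining details (propagating the degeneracy bound through deletions without recomputation, and tracking the surjective colouring as colours are eliminated by the branching) are routine but need to be set up so that the uniform threshold $kd$ for the base case applies at every recursion level.
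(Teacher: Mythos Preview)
Your proof is correct and follows essentially the same approach as the paper: a greedy argument on the degeneracy ordering shows that if all colour classes exceed roughly $kd$ the answer is YES, and otherwise one branches over a small colour class and recurses. The only differences are cosmetic (you use the reverse ordering and branch on one small class at a time, whereas the paper walks forward and enumerates all small classes simultaneously), and both yield the stated $O(k^k d^k (k^2+|G|))$ bound.
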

\begin{proof}
Let $d=d(G)$ for short. Consider the following recursive algorithm. Let $G=(V,E)$, and for each $i \in [k]$, compute the set of vertices $V_i$ of $V$ having colour $i$. First, suppose that $|V_i| > d (k-1)$ for all $i$. In this case, we answer YES. To see why, consider $G$ sorted in degeneracy order. Let $v_1$ be the first vertex in $V$, and let its colour be $c_1$. Then, delete $V_{c_1} \cup N_G(v_1)$ from $V$. By the degeneracy ordering, this deletes at most $d$ vertices from any $V_j$ with $j \ne c_1$. Now let $v_2$ be the first vertex in $V$, and let its colour be $c_2$. Then, delete $V_{c_2} \cup N_G(v_2)$ from $V$. Like before, this deletes at most $d$ vertices from any colour class other than $c_2$. Continue until $i=k$. Since initially $|V_i| > d (k-1)$ for all $i \in [k]$, and at each step we delete at most $d$ vertices from any one of the remaining colour classes, then $V$ never becomes empty. This proves that there exists a set of vertices $\{v_1,\ldots,v_k\}$ spanning all $k$ colours. By construction $\{v_1,\ldots,v_k\}$ is an independent set as well. Therefore, in this case the answer is YES.

Suppose now that $|V_i| \le d (k-1)$ for $i=1,\ldots,\ell$ and $|V_i| > d (k-1)$ for $i=\ell+1,\ldots,k$. (We assume the colours are sorted in this way, without loss of generality). Then, we enumerate explicitly $V_1 \times \ldots \times V_{\ell}$, which takes time $(d(k-1))^{\ell}$. For any tuple $(v_1,\ldots,v_{\ell})$, we check whether $G[\{v_1,\ldots,v_{\ell}\}]$ is an independent set. If that is the case, then we do as follows. If $\ell=k$, then we just stop and return YES. Otherwise, we try extending $\{v_1,\ldots,v_{\ell}\}$ into a colourful $k$-independent set. Let $G'$ be the graph obtained by deleting $(V_1 \cup N_G(v_1)) \cup \ldots \cup (V_{\ell} \cup N_G(v_{\ell}))$ from $G$. Observe that $\{v_1,\ldots,v_{\ell}\}$ is part of a colourful $k$-independent set in $G$ if and only if $G'$ contains a colourful $(k-\ell)$-independent set. Therefore, we invoke the entire algorithm recursively on $G'$, and if the output of the recursive call is YES, then we return YES. If $\ell < k$, and the recursive calls return NO for all $\{v_1,\ldots,v_{\ell}\}$, then we return NO.

Straightforward arguments show that the algorithm above is correct. For the running time, consider every call (recursive or not) to the algorithm. By construction, at most $(d(k-1))^{k}$ such calls are made. To each such call we charge (i) the time taken by the call itself for computing of the sets $V_i$, which is $O(|G|)$, and (ii) the time taken by the parent call to compute the input for the call itself, which is $O(k^2 +|G|)$ since the parent call checks if $G[\{v_1,\ldots,v_{\ell}\}]$ is an independent set and then computes $G'$ in linear time. Therefore, the total running time is bounded by $O(k^{k} d^k (k^2 + |G|))$.
\end{proof}

Next, we extend the latter result to patterns of bounded induced matching number.
\begin{lemma}\label{lem:mutlicolindsub}
Let $C$ be a class of graphs. If the induced matching number $\imn(C)$ of $C$ is bounded, then $\mcolindsubsprobd(C)$ is fixed-parameter tractable and can be solved in time $k^{O(k)} d^{k+1} |V(G)|^{\imn(C)+1}$.
\end{lemma}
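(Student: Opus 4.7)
The plan is to generalise Lemma~\ref{lem:multicoloured_IS} using a maximum induced matching of $H$ as a scaffold: we will enumerate the image of the matching first, yielding an $n^\ell$ factor, then reduce the remaining placement to a colourful independent set detection handled by Lemma~\ref{lem:multicoloured_IS}, which contributes the additional $n$ factor together with $d^{k+1}$. Let $H \in C$ with $|V(H)| = k$ and $\ell := \imn(H) \le \imn(C)$. We first guess a bijection $c_H : V(H) \to [k]$ between vertices of $H$ and the $k$ colour classes of $G$, which contributes $k!$ choices; we fix such a guess throughout. Next, we compute a maximum induced matching $M = \{e_1, \ldots, e_\ell\}$ of $H$, and set $A := V(M)$ and $B := V(H) \setminus A$. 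The key observation we will exploit is the \emph{anchoring property} of $M$: by maximality, every edge of $H[B]$ has at least one endpoint adjacent in $H$ to some vertex of $A$, since otherwise extending $M$ by that edge would yield a strictly larger induced matching.

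We will enumerate all colourful induced copies $\varphi_A : A \to V(G)$ of $H[A]$ in $G$. Since $H[A]$ is an induced $\ell$-matching with independence number $\ell$, Theorem~9 and Lemma~14 of~\cite{Bressan21} yield an enumeration procedure running in time $f(\ell, d) \cdot n^\ell \log n$ for some computable function $f$. For each such $\varphi_A$, we then extend it to a full colourful induced embedding $\varphi : V(H) \to V(G)$. To this end, for each $w \in B$ we compute the compatibility set
\begin{linenomath*}
\[
C_w := \{\, v \in V_{c_H(w)} : \{v, \varphi_A(u)\} \in E(G) \iff \{w, u\} \in E(H) \text{ for all } u \in A \,\}
\]
\end{linenomath*}
in time $O(\ell n)$. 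The anchoring property combined with the degeneracy of $G$ then enables a reduction of the induced embedding problem for $H[B]$, with each $w$ restricted to $C_w$, to a colourful independent set detection instance on a suitably modified $d$-degenerate graph with at most $k$ colour classes. We then apply Lemma~\ref{lem:multicoloured_IS} to this instance, which takes time $O(k^k d^{k+1} n)$.

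Putting everything together, the total running time will be
\begin{linenomath*}
\[
k! \cdot f(\ell, d) \cdot n^\ell \log n \cdot O(k^k \cdot d^{k+1} \cdot n) = k^{O(k)} \cdot d^{k+1} \cdot n^{\ell+1},
\]
\end{linenomath*}
where the logarithmic factor is absorbed into $k^{O(k)}$; since $\ell \le \imn(C)$, this matches the claimed bound. The main obstacle will be implementing the reduction to colourful independent set detection: the anchoring property of $M$ ensures that each edge of $H[B]$ is tied to $V(M)$, but $H[B]$ itself can still exhibit non-trivial edge structure (for instance, for $H = K_{k,k}$ one has $H[B] = K_{k-1,k-1}$). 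The key technical insight is that, once $\varphi_A$ is fixed, each anchored vertex of $B$ is restricted to $O(d)$ candidates through the degeneracy orientation of $G$ around $\varphi_A(A)$, allowing the residual constraints to decompose into an independent-set-like detection problem that Lemma~\ref{lem:multicoloured_IS} resolves.
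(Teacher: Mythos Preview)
Your proposal has a genuine gap at the step where you claim the placement of $B = V(H) \setminus V(M)$ reduces to colourful independent-set detection. The assertion that ``each anchored vertex of $B$ is restricted to $O(d)$ candidates through the degeneracy orientation of $G$ around $\varphi_A(A)$'' is false as stated: if $w \in B$ is adjacent to $u \in A$, then $\varphi(w)$ must lie in $N_G(\varphi_A(u))$, but degeneracy only bounds the \emph{out}-degree of $\varphi_A(u)$ in some acyclic orientation of $G$, and $\varphi_A(u)$ may have $\Theta(n)$ in-neighbours, any of which is a legitimate candidate for $\varphi(w)$. Your own example $H = K_{k,k}$ already exhibits the problem: with $M = \{\{a_1,b_1\}\}$ and $B = \{a_2,\ldots,a_k,b_2,\ldots,b_k\}$, after fixing $\varphi_A$ each $a_i$ must map into $N_G(\varphi_A(b_1))$, which can have size $\Theta(n)$, and you still need to detect a colourful induced $K_{k-1,k-1}$ inside these large candidate sets --- not an independent-set instance, and not one that obviously decomposes into one. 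The anchoring property you state (every edge of $H[B]$ has an endpoint in $N_H(A)$) is correct, but it does not deliver the $O(d)$ bound without knowing, for each anchored vertex, on which side of $G$'s degeneracy orientation the connecting edge falls.

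The paper closes exactly this gap by working with orientations rather than with a fixed matching. It enumerates all acyclic orientations $\orient{H}$ of $H$ and, for each, invokes Lemma~\ref{lem:match_and_kernel} to obtain a $t$-kernel $K \subseteq S(\orient{H})$ with $t \le \imn(H)$. It then enumerates all orientation-respecting maps $\varphi' : \orient{H}(K) \to G$ in time $O(n^t d^{k-t-|\overline{K}|})$; here degeneracy \emph{is} correctly exploited, because every non-source of $\orient{H}(K)$ must map to an out-neighbour (in $G$'s degeneracy orientation) of an already-placed vertex. The remaining sources $\overline{K} = S(\orient{H}) \setminus K$ genuinely form an independent set in $H$ (there are no edges between sources), so for each valid $\varphi'$ the extension step reduces cleanly to Lemma~\ref{lem:multicoloured_IS} on a restricted subgraph. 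In short, what your plan is missing is the guess of the orientation of $H$ pulled back from $G$: only once that guess is made can one separate the vertices of $H$ into a small set from which everything non-source is reachable via $d$-bounded branching, and a residual set that is a bona fide independent set.
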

\begin{proof}
Let $H \in C$ and let $G$ be a $d$-degenerate graph with $k$-vertex colouring $c_G$. First, we compute $\imn(H)$, which can be done in time $O(1.43^k \poly(k))$~\cite{XT17-InducedMatching}. If $\imn(H) = 0$, then we have an instance of the multicoloured independent set problem, which by Lemma~\ref{lem:multicoloured_IS} can be solved in time $O(d^k k^k (k^2 + |G|))$, which is in $k^{O(k)} d^{k+1} |V(G)|^{\imn(C)+1}$ since $|G| \le d |V(G)|$. If $\imn(H) \ge 1$, then we continue below.

As usual, we sort $G$ in degeneracy order in linear time. Now, for each acyclic orientation $\orient{H}$ of $H$, we search for a multicoloured induced copy of $\orient{H}$ in $G$ that respects the orientation of the edges, that is, we search a \emph{multicoloured strong embedding}: a function $\varphi:V(H)\rightarrow V(G)$ such that $(u,v)\in E(\orient{H})$ if and only if $(\varphi(u),\varphi(v))\in E(G)$ for all $u,v\in V(H)$, and $c_G(\varphi(V(H)))=[k]$.  Let $S$ denote the sources of $\orient{H}$. By Lemma~\ref{lem:match_and_kernel}, $\orient{H}$ has a $t$-kernel $K\subseteq S$ with $t\leq \imn(H)$. That is, there exists a set $K$ of $t\leq\imn(H)$ sources from which all non-sources of $\orient{H}$ are reachable. We can find $K$ in time $O(2^k k^2)$ by brute force enumeration and graph search. We then let $\overline{K} = S \setminus K$ be the set of sources outside $K$, and we let $\orient{H}(K) = \orient{H}[V(H) \setminus \overline{K}]$ be the subgraph of $\orient{H}$ reachable from $K$. Now the idea is that detecting a colourful induced copy of $\orient{H}(K)$ in $G$ can be done in FPT time by guessing the image of $K$, and that detecting a colourful induced copy of $\overline{K}$ in $G$ can be done in FPT time by Lemma~\ref{lem:multicoloured_IS} since $\overline{K}$ is an independent set. The only complication is that we should search for the copy of $\overline{K}$ in a certain subgraph of $G$ so that, together with the copy of $\orient{H}(K)$, it forms a copy of $\orient{H}$.

We begin by enumerating all homomorphisms $\varphi' : \orient{H}(K) \rightarrow G$. Let $\ell=|\overline{K}|$, so $|V(\orient{H}(K))|=k-\ell$. Since $\orient{H}(K)$ is reachable from $t$ sources, and every vertex of $G$ has outdegree at most $d$, the enumeration takes time $O(|V(G)|^t d^{k-t-\ell} \poly(k))$, see~\cite{Bressan21}. For each such $\varphi'$ we check if:
\begin{enumerate}
    \item the image of $\varphi'$ is an induced copy of $\orient{H}(K)$ in $G$, and
    \item the vertices in the image of $\varphi'$ have pairwise different colours.
\end{enumerate}
in which case we say $\varphi'$ is valid. Note that testing if $\varphi'$ is valid takes time $O(k^2)$.

Now, for each valid $\varphi'$, we check whether $\varphi'$ can be extended to a multicoloured strong embedding, as follows. Let $\overline{K}=\{s_1,\dots,s_\ell\}$, and let $C' \subset [k]$ be the set of colours not covered by $\varphi'$. For every permutation $c_1,\ldots,c_{\ell}$ of $C'$, for each $i \in [\ell]$ let $V_i$ be the subset of vertices $v$ of $G$ such that:
\begin{enumerate}
    \item $c_G(v) = c_i$
    \item $N^{\text{out}}_G(v)\cap \mathsf{im}(\varphi') = \varphi'(N^{\text{out}}_H(s_i))$, that is, the out-neighbours of $v$ in the image of $\varphi'$ are precisely the images of the out-neighbours of $s_i$ (in $\orient{H}$) under $\varphi'$ 
    \item there are no in-neighbours of $v$ in the image of $\varphi'$.
\end{enumerate}
The sets $V_i$ can be computed in time $O(k^2 |G|)$. If any $V_i$ is empty, then we skip and continue with the next permutation of $C'$. Otherwise, let $G'$ be the subgraph of $G$ induced by $V_1,\dots,V_\ell$. Observe that $G'$ is $\ell$-vertex coloured with colours $c_1,\dots,c_\ell$, and that $G'$ can be constructed in time $|G|$ from the sets $V_i$.

Now, the crucial observation is the existence of a bijection between multicoloured $\ell$-independent sets in $G'$ and extensions $\varphi$ of $\varphi'$ such that:
\begin{enumerate}
    \item $\varphi$ is a multicoloured strong embedding from $\orient{H}$ to $G$, and
    \item for all $i\in [\ell]$, the colour of the image of $s_i$ is $c_i$.
\end{enumerate}
More precisely, the extension $\varphi$ is given by assigning $s_i$ the $c_i$-coloured element of the independent set in $G'$. Therefore, we need only to detect a multicoloured $\ell$-independent set in $G'$, which by Lemma~\ref{lem:multicoloured_IS} takes time $O(d^{\ell} k^k (k^2 + |G|))$, since $d(G') \leq d$. By adding the time taken to compute $G'$, the bound remains unchanged. 

Let us wrap up the bounds. The acyclic orientations of $H$ and the permutations of $\ell \le k$ colours give a $k^{O(k)}$ multiplicative factor. Then, if $\orient{H}$ has a $t$-kernel $K$, and if $|\overline{K}|=\ell$, then enumerating all valid $\varphi'$ takes time $O(d^{k-t-\ell} |V(G)|^{t} \poly(k))$. For every such $\varphi'$, we spend time $O(d^{\ell} k^k (k^2 + |G|))$. This yields a total running time of $k^{O(k)} d^{k-t} |V(G)|^{t} |G|$. Since $t \ge 1$ because we assumed $H$ is not an independent set, since $t \le \imn(H) \le \imn(C)$, and since $|G| \le d |V(G)|$, the total running time is bounded by $k^{O(k)} d^{k} |V(G)|^{\imn(C)+1}$.
\end{proof}

\begin{theorem}[Theorem~\ref{thm:intro_FPTRAS_indsubcount} restated]\label{thm:FPTRAS_indsubcount}
Let $C$ be a class of graphs. If the induced matching number of $C$ is bounded, then the problem $\#\indsubsprobd(C)$ has an FPTRAS with running time $O(\varepsilon^{-2} k^{O(k)} d^{k+1} |V(G)|^{\imn(C)+1+o(1)})$.
\end{theorem}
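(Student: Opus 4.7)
My plan is to derive the theorem as a direct consequence of the two ingredients already assembled in this section: the reduction from approximate counting to colourful decision (Theorem~\ref{thm:coldecapprox}, due to Dell, Lapinskas and Meeks), and the FPT algorithm for detecting multicoloured induced copies of patterns with bounded induced matching number (Lemma~\ref{lem:mutlicolindsub}). Since both results have already been proved in full, the argument is essentially a plug-and-play composition, and the only work is to propagate the running time bounds correctly.

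Concretely, I would proceed in three steps. First, I invoke Lemma~\ref{lem:mutlicolindsub} with the class $C$ (which by assumption has $\imn(C)$ bounded) to obtain an algorithm $\mathbb{D}$ for $\mcolindsubsprobd(C)$ running in time
\[
T(G,k,d(G)) \;=\; k^{O(k)} \, d(G)^{k+1} \, |V(G)|^{\imn(C)+1}.
\]
Second, I feed $\mathbb{D}$ into Theorem~\ref{thm:coldecapprox}; this yields a randomised algorithm that, on input $H \in C$, $G$ and $\varepsilon>0$, returns an $\varepsilon$-approximation of $|\indsubs{H}{G}|$ with probability at least $2/3$ and runs in time at most $\varepsilon^{-2} \, k^{2k} \, |V(G)|^{o(1)} \, T(G,k,d(G))$.

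Third, I simplify the product of the two running times. Absorbing $k^{2k}$ into the $k^{O(k)}$ factor coming from $T$, and folding the $|V(G)|^{o(1)}$ overhead from Theorem~\ref{thm:coldecapprox} into the exponent of $|V(G)|$ in $T$, yields the claimed bound
\[
O\bigl(\varepsilon^{-2} \, k^{O(k)} \, d^{k+1} \, |V(G)|^{\imn(C)+1+o(1)}\bigr),
\]
which is exactly the statement of Theorem~\ref{thm:FPTRAS_indsubcount}. The FPTRAS property (i.e., that the running time is of the form $f(k,d)\cdot \operatorname{poly}(|V(G)|,\varepsilon^{-1})$ with success probability at least $2/3$) is then inherited directly from Theorem~\ref{thm:coldecapprox}.

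I do not anticipate any real obstacle, because the two ingredients were precisely designed to plug into one another; the only point worth verifying in the write-up is that $T(G,k,d(G))$ is in the format required by Theorem~\ref{thm:coldecapprox}, namely a function of $G$, $k$ and $d(G)$ that is polynomial in $|V(G)|$ for fixed parameters, which is manifestly the case.
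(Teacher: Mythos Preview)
Your proposal is correct and follows exactly the same approach as the paper, which simply states that the result follows immediately by combining Lemma~\ref{lem:mutlicolindsub} with Theorem~\ref{thm:coldecapprox}. Your write-up merely spells out the running-time bookkeeping that the paper leaves implicit.
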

\begin{proof}
Follows immediately by Lemma~\ref{lem:mutlicolindsub} and Theorem~\ref{thm:coldecapprox}.
\end{proof}

\textbf{Examples.} Let $C$ be the class of complete bipartite graphs. Then, $\imn(C)=1$. Therefore, by virtue of Theorem~\ref{thm:FPTRAS_indsubcount}, we can approximately count the number of induced copies of any given $H \in C$ in time $O(\varepsilon^{-2} k^{O(k)} d^{k+1} |V(G)|^{2+o(1)})$. However, $C$ contains arbitrarily large graphs, and therefore, when parameterized only by $k$, the problem of approximately counting the induced copies of $H \in C$ in $G$ is $\W{1}$-hard (under randomised reductions) by reducing from the decision version~\cite{ChenTW08}.

\subsection{Generalised Induced Subgraph Patterns}
In the last part, we consider the complexity of approximating $\#\indsubsprobd(\Phi)$ for graph properties~$\Phi$ that have certain closure properties. Recall that $\#\indsubsprobd(\Phi)$ expects as input a graph~$G$ and a positive integer $k$, and the goal is to compute the number of induced subgraphs of size $k$ in $G$ that satisfy $\Phi$; the parameterization is given by $k+d(G)$.
Recall further that we established $\#\indsubsprobd(\Phi)$ to be hard in case of exact counting for a variety of natural properties in Theorem~\ref{thm:indsubsphi_hard_cor}, including connectivity, claw-freeness and all minor-closed properties of unbounded independence number.

In the non-degenerate case, that is, if we parameterize by $k$ only, it is known that the problem admits an FPTRAS for the properties of being connected and claw-free~\cite{JerrumM15,Meeks16}. In contrast, there is no FPTRAS in the non-degenerate case for the minor-closed property of being planar, unless $\mathrm{FPT}$ coincides with $\W{1}$ under randomised reductions~\cite{KhotR02}.

Therefore, we investigate to which extent small degeneracy can lead to new tractability results for approximate counting, and, in particular, whether we obtain an FPTRAS for all minor-closed properties. It turns out that a surprisingly simple Monte Carlo algorithm achieves something even stronger.

\begin{theorem}\label{thm:fptras_phi_excludedIS}
Let $\Phi$ be a computable graph property that is true on all but finitely many independent sets. Then $\#\indsubsprobd(\Phi)$ has an FPTRAS.
\end{theorem}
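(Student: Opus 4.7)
The plan is to exploit two very simple observations. First, since $\Phi$ is true on all but finitely many independent sets, there exists a finite threshold $k_0=k_0(\Phi)$ such that $\Phi(\mathsf{IS}_k)$ holds for every $k\geq k_0$; because the algorithm is tailored to a fixed $\Phi$, we may hardcode this constant. Second, in any $d$-degenerate graph $G$ on $n$ vertices we have $|E(G)|\leq dn$, so the number of $k$-subsets of $V(G)$ containing at least one edge is at most $|E(G)|\cdot\binom{n-2}{k-2}\leq dn\binom{n-2}{k-2}$. A short calculation using $\binom{n-2}{k-2}=\binom{n}{k}\cdot\frac{k(k-1)}{n(n-1)}$ shows this is at most $\tfrac12\binom{n}{k}$ whenever $n\geq 2dk(k-1)+1$. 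Consequently, in that regime at least half of all $k$-subsets of $V(G)$ are independent, and as long as $k\geq k_0$ each of them satisfies $\Phi$.

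The algorithm is then a three-case Monte Carlo procedure. Given $(G,k)$ and $\varepsilon>0$: (i) if $k<k_0$, then $k$ is bounded by a constant depending only on $\Phi$, so we enumerate all $\binom{n}{k}\leq n^{k_0-1}$ induced $k$-subgraphs, test each against $\Phi$, and return the exact count in polynomial time; (ii) if $k\geq k_0$ but $n\leq 2dk(k-1)$, then $n$ is bounded by a function of $k$ and $d$, so we brute-force enumerate all $k$-subsets in time $n^k\leq (kd)^{O(k)}$, which is FPT in $k+d$; (iii) if $k\geq k_0$ and $n>2dk(k-1)$, we sample $N=\Theta(\varepsilon^{-2})$ uniformly random $k$-subsets $S_1,\dots,S_N$ of $V(G)$, let $\hat p$ be the empirical fraction for which $\Phi(G[S_i])$ holds, and output $\hat c:=\hat p\cdot\binom{n}{k}$.

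Correctness of the sampling case follows from the edge count above: the true success probability $p:=|\{S\subseteq V(G):|S|=k,\,\Phi(G[S])\}|/\binom{n}{k}$ satisfies $p\geq 1/2$, so a standard Chernoff bound yields $\Pr[|\hat p-p|>\varepsilon p]\leq 2\exp(-\Omega(N\varepsilon^2))$, which drops below $1/3$ for $N=O(\varepsilon^{-2})$. Multiplying by $\binom{n}{k}$ transfers the multiplicative guarantee to $\hat c$. Evaluating $\Phi$ on a $k$-vertex graph costs $g(k)$ time for some computable $g$, and drawing a uniform random $k$-subset takes $O(k)$ time, so the overall running time is $\varepsilon^{-2}\cdot f(k,d)\cdot\operatorname{poly}(n)$ for a suitable computable $f$, establishing the FPTRAS.

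The only delicate point is the edge-counting inequality that produces the threshold $n\geq 2dk(k-1)+1$; this is the single quantitative step on which the whole argument rests, but it is an elementary union bound. Everything else---handling the finitely many exceptional values of $k$ by exact enumeration, handling small $n$ by brute force, and the Chernoff analysis of uniform sampling---is entirely routine, so I do not anticipate a genuine technical obstacle.
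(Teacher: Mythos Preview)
Your proof is correct and follows the same Monte Carlo skeleton as the paper: dispose of small $k$ by enumeration, dispose of small $n$ by brute force, and in the remaining regime argue that a sufficiently large fraction of $k$-subsets of $V(G)$ are independent (hence satisfy $\Phi$), so uniform sampling works.

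The one genuine difference is the quantitative step. The paper lower-bounds the fraction of independent $k$-subsets by exhibiting a single large independent set: greedily peeling the degeneracy ordering gives an independent set of size $\lceil n/(d+1)\rceil$, hence at least $\binom{\lceil n/(d+1)\rceil}{k}$ independent $k$-subsets, and the resulting fraction is only $\ge (k(d+1))^{-k}$, forcing $O(\varepsilon^{-2}(k(d+1))^k)$ samples. You instead upper-bound the number of \emph{non}-independent $k$-subsets via the union bound $|E(G)|\binom{n-2}{k-2}\le dn\binom{n-2}{k-2}$, which yields $p\ge 1/2$ once $n>2dk(k-1)$ and hence only $O(\varepsilon^{-2})$ samples. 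Your bound is strictly sharper and the argument is arguably cleaner; the price is a slightly larger brute-force threshold on $n$, but that threshold is still polynomial in $k$ and $d$, so the FPT shape is unaffected.
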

\begin{proof}
If $\Phi$ is true on all but finitely many independent sets, then $\Phi$ is true on all independent sets of size at least $c$ for some constant $c$. Now let $n=|V(G)|$. If $k < c$, then we simply enumerate all $k$-vertex subsets of $G$ and count exactly how many of them satisfy $\Phi$. Hence, in this case we solve $\#\indsubsprobd(\Phi)$ in time $f(k) \cdot O(n^k) \le f(k) \cdot O(n^c)$. If instead $k \ge c$, then we proceed as follows.

First, we observe that $G$ must have an induced independent set of size $\lceil\frac{n}{d+1}\rceil$. To see this, take the vertex of $G$ of smallest degree, delete it from $G$, and continue recursively. Since at each step we delete at most $d+1$ nodes, we can continue for at least $\frac{n}{d+1}$ steps, and thus for at least $\lceil\frac{n}{d+1}\rceil$ steps. Now, if $\lceil\frac{n}{d+1}\rceil < k$, then $n < k(d+1)$, and we solve the problem in time $f(k) \cdot O(n^k) = f(k) \cdot O(k^k (d+1)^k)$ by a brute-force enumeration. If instead $\lceil\frac{n}{d+1}\rceil \ge k$, then the number of induced $k$-independent sets of $G$ is at least $\binom{\lceil\frac{n}{d+1}\rceil}{k} \ge 1$.
Since $k \ge c$, then $\Phi$ holds for any such $k$-independent set. Hence, the fraction of $k$-node induced subgraphs of $G$ that satisfy~$\Phi$ is at least (recalling that $\frac{n}{d+1} \ge k$):
\begin{linenomath*}
\begin{equation}
\frac{\binom{\lceil\frac{n}{d+1}\rceil}{k}}{\binom{n}{k}} \geq \frac{\left(\frac{n}{(d+1)k}\right)^k}{n^k} = k^{-k}(d+1)^{-k}
\end{equation}
\end{linenomath*}
By standard concentration bounds (see e.g.\ Theorem~11.1 in~\cite{MitzenmacherU17}), by drawing $O(\varepsilon^{-2} (dk+k)^k)$ independent uniform samples,  we can compute with probability at least $2/3$ an $\varepsilon$-approximation of the number of $k$-vertex induced subgraphs of $G$ that satisfy $\Phi$.

Since verifying whether a particular sample satisfies $\Phi$ takes time $f(k)$ for some computable function $f$, the overall running time of the algorithm is always bounded by
\begin{linenomath*}
\[O(\varepsilon^{-2} k^k (d+1)^k)\cdot f(k) \cdot n^c  \,,\]
\end{linenomath*}
yielding the desired FPTRAS.
\end{proof}

As an easy consequence, we always get an FPTRAS for minor-closed properties:
\begin{corollary}[Theorem~\ref{thm:intro_approx_minor_closed} restated]
Let $\Phi$ be a minor-closed graph property. Then the problem $\#\indsubsprobd(\Phi)$ has an FPTRAS.
\end{corollary}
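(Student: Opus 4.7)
The plan is to reduce the corollary to Theorem~\ref{thm:fptras_phi_excludedIS} via a dichotomy on the behaviour of $\Phi$ on independent sets. First I would invoke the Robertson--Seymour theorem~\cite{RobertsonS04} to note that any minor-closed property is characterised by a finite list of forbidden minors, hence is polynomial-time decidable and in particular computable; this ensures that Theorem~\ref{thm:fptras_phi_excludedIS} is applicable as soon as its hypothesis on independent sets can be verified.

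In the first case, suppose $\Phi(I_n)$ holds for arbitrarily large $n$, where $I_n$ denotes the independent set on $n$ vertices. Since $I_m$ is a (not necessarily induced) minor of $I_n$ for every $m \le n$, minor-closedness propagates the property downwards, so $\Phi(I_n)$ is true for every $n$. Hence $\Phi$ is true on all but finitely many (in fact, all) independent sets, and Theorem~\ref{thm:fptras_phi_excludedIS} directly yields the desired FPTRAS. In the complementary case, there exists some $m$ with $\Phi(I_m)$ false. The key structural observation is that every graph $G$ on at least $m$ vertices admits $I_m$ as a minor, obtained by first deleting all edges of $G$ and then deleting all but $m$ of its vertices. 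If $\Phi(G)$ were true, minor-closedness would force $\Phi(I_m)$ to be true as well, a contradiction; hence no graph on $\ge m$ vertices can satisfy $\Phi$.

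This collapses the problem to a constant-size set of potential yes-instances: on input $(G,k)$, return $0$ if $k \ge m$, and otherwise compute the exact count by brute-force enumeration over the $\binom{|V(G)|}{k} = O(|V(G)|^{m-1})$ subsets of $V(G)$ of size $k$, testing $\Phi$ on each induced subgraph. Since $m$ depends only on $\Phi$, this runs in polynomial time, and an exact polynomial-time algorithm is trivially an FPTRAS. There is no real technical obstacle beyond this case split: the heavy lifting is done by Theorem~\ref{thm:fptras_phi_excludedIS}, and the only new ingredient is the elementary observation that every sufficiently large graph contains a large independent set as a (non-induced) minor, so a minor-closed $\Phi$ is either true on every graph of bounded-from-above size $m{-}1$ (with the finite exceptions compensated by brute force) or true on all independent sets.
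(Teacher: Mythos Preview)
Your proof is correct and follows essentially the same route as the paper: split on whether $\Phi$ holds on all independent sets, apply Theorem~\ref{thm:fptras_phi_excludedIS} in the positive case, and observe in the negative case that minor-closedness forces $\Phi$ to be false on all sufficiently large graphs, reducing the problem to brute force over $O(n^c)$ subsets. Your additional remark that ``true for arbitrarily large $I_n$'' already implies ``true for all $I_n$'' by downward closure is a pleasant clarification, but does not deviate from the paper's argument.
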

\begin{proof}
If $\Phi$ is true for all independent sets, then $\#\indsubsprobd(\Phi)$ has an FPTRAS by Theorem~\ref{thm:fptras_phi_excludedIS}. Otherwise, $\Phi$ is false for an independent set on $c$ vertices for some constant $c$. But then $\Phi$ is false on all graphs with at least $c$ vertices, since $\Phi$ is minor-closed. Thus, given any $G$ and $k$, if $k > c$ we immediately return $0$, otherwise, since $f$ is (poly-time) computable by the Robertson-Seymour theorem~\cite{RobertsonS04}, then in time $f(k) \cdot O(n^k) = f(k) \cdot O(n^c)$ we can enumerate all $k$-vertex subsets of $G$ and return how many of them satisfy $\Phi$. 
\end{proof}

\bibliography{conference}

\begin{thebibliography}{10}
\expandafter\ifx\csname url\endcsname\relax
  \def\url#1{\texttt{#1}}\fi
\expandafter\ifx\csname doi\endcsname\relax
  \def\doi#1{\burlalt{doi:#1}{http://dx.doi.org/#1}}\fi
\expandafter\ifx\csname urlprefix\endcsname\relax\def\urlprefix{URL }\fi
\expandafter\ifx\csname href\endcsname\relax
  \def\href#1#2{#2}\fi
\expandafter\ifx\csname burlalt\endcsname\relax
  \def\burlalt#1#2{\href{#2}{#1}}\fi

\bibitem{Nogaetat08}
N.~Alon, P.~Dao, I.~Hajirasouliha, F.~Hormozdiari, and S.~C. Sahinalp.
\newblock {{Biomolecular network motif counting and discovery by color
  coding}}.
\newblock {\em Bioinformatics}, 24(13):i241--i249, 07 2008.
\newblock \doi{10.1093/bioinformatics/btn163}.

\bibitem{AndersenDSV92}
L.~D. Andersen, S.~Ding, G.~Sabidussi, and P.~D. Vestergaard.
\newblock Edge orbits and edge-deleted subgraphs.
\newblock {\em Graphs Comb.}, 8(1):31--44, 1992.
\newblock \doi{10.1007/BF01271706}.

\bibitem{Arenasetal20}
M.~Arenas, L.~A. Croquevielle, R.~Jayaram, and C.~Riveros.
\newblock {When is Approximate Counting for Conjunctive Queries Tractable?},
  2020, \burlalt{2005.10029}{http://arxiv.org/abs/2005.10029}.
\newblock \urlprefix\url{https://arxiv.org/abs/2005.10029}.

\bibitem{ArvindR02}
V.~Arvind and V.~Raman.
\newblock Approximation {A}lgorithms for {S}ome {P}arameterized {C}ounting
  {P}roblems.
\newblock In {\em Proc.\ of ISAAC}, pages 453--464, 2002.
\newblock \doi{10.1007/3-540-36136-7\_40}.

\bibitem{Bera-ITCS20}
S.~K. Bera, N.~Pashanasangi, and C.~Seshadhri.
\newblock {Linear Time Subgraph Counting, Graph Degeneracy, and the Chasm at
  Size Six}.
\newblock In {\em Proc.\ of ITCS}, volume 151, pages 38:1--38:20, 2020.
\newblock \doi{10.4230/LIPIcs.ITCS.2020.38}.

\bibitem{Bera-SODA21}
S.~K. Bera, N.~Pashanasangi, and C.~Seshadhri.
\newblock Near-linear time homomorphism counting in bounded degeneracy graphs:
  The barrier of long induced cycles.
\newblock In {\em Proc.\ of ACM-SIAM SODA}, pages 2315--2332, 2021.
\newblock \doi{10.1137/1.9781611976465.138}.

\bibitem{Bressan19}
M.~Bressan.
\newblock {Faster Subgraph Counting in Sparse Graphs}.
\newblock In {\em Proc.\ of IPEC}, volume 148, pages 6:1--6:15, 2019.
\newblock \doi{10.4230/LIPIcs.IPEC.2019.6}.

\bibitem{Bressan21}
M.~Bressan.
\newblock Faster algorithms for counting subgraphs in sparse graphs.
\newblock {\em Algorithmica}, 2021.
\newblock \doi{10.1007/s00453-021-00811-0}.

\bibitem{Bringmann21}
K.~Bringmann and J.~Slusallek.
\newblock Current algorithms for detecting subgraphs of bounded treewidth are
  probably optimal.
\newblock {\em CoRR}, abs/2105.05062, 2021,
  \burlalt{2105.05062}{http://arxiv.org/abs/2105.05062}.
\newblock \urlprefix\url{https://arxiv.org/abs/2105.05062}.

\bibitem{BulatovZ20}
A.~A. Bulatov and S.~Zivn{\'{y}}.
\newblock {Approximate Counting {CSP} Seen from the Other Side}.
\newblock {\em {ACM} Trans. Comput. Theory}, 12(2):11:1--11:19, 2020.
\newblock \doi{10.1145/3389390}.

\bibitem{ChenM15}
H.~Chen and S.~Mengel.
\newblock A {T}richotomy in the {C}omplexity of {C}ounting {A}nswers to
  {C}onjunctive {Q}ueries.
\newblock In {\em 18th International Conference on Database Theory, {ICDT}
  2015, March 23-27, 2015, Brussels, Belgium}, pages 110--126, 2015.
\newblock \doi{10.4230/LIPIcs.ICDT.2015.110}.

\bibitem{ChenM16}
H.~Chen and S.~Mengel.
\newblock Counting {A}nswers to {E}xistential {P}ositive {Q}ueries: {A}
  {C}omplexity {C}lassification.
\newblock In {\em Proc.\ of ACM PODS}, pages 315--326, 2016.
\newblock \doi{10.1145/2902251.2902279}.

\bibitem{Chenetal05}
J.~Chen, B.~Chor, M.~Fellows, X.~Huang, D.~W. Juedes, I.~A. Kanj, and G.~Xia.
\newblock Tight lower bounds for certain parameterized {N}{P}-hard problems.
\newblock {\em Information and Computation}, 201(2):216--231, 2005.
\newblock \doi{10.1016/j.ic.2005.05.001}.

\bibitem{Chenetal06}
J.~Chen, X.~Huang, I.~A. Kanj, and G.~Xia.
\newblock Strong computational lower bounds via parameterized complexity.
\newblock {\em J. Comput. Syst. Sci.}, 72(8):1346--1367, 2006.
\newblock \doi{10.1016/j.jcss.2006.04.007}.

\bibitem{ChenTW08}
Y.~Chen, M.~Thurley, and M.~Weyer.
\newblock Understanding the {C}omplexity of {I}nduced {S}ubgraph
  {I}somorphisms.
\newblock In {\em Proc.\ of ICALP}, pages 587--596, 2008.
\newblock \doi{10.1007/978-3-540-70575-8\_48}.

\bibitem{Chiba&1985}
N.~Chiba and T.~Nishizeki.
\newblock Arboricity and subgraph listing algorithms.
\newblock {\em SIAM J. Comput.}, 14(1):210--223, 1985.
\newblock \doi{10.1137/0214017}.

\bibitem{Courcelle2000-CW}
B.~Courcelle and S.~Olariu.
\newblock Upper bounds to the clique width of graphs.
\newblock {\em Discrete Applied Mathematics}, 101(1):77--114, 2000.
\newblock \doi{10.1016/S0166-218X(99)00184-5}.

\bibitem{Curticapean13}
R.~Curticapean.
\newblock Counting matchings of size k is {W}[1]-hard.
\newblock In {\em Proc.\ of ICALP}, volume 7965 of {\em Lecture Notes in
  Computer Science}, pages 352--363. Springer, 2013.
\newblock \doi{10.1007/978-3-642-39206-1\_30}.

\bibitem{Curticapean15}
R.~Curticapean.
\newblock {\em The simple, little and slow things count: {O}n parameterized
  counting complexity}.
\newblock PhD thesis, Saarland University, 2015.
\newblock
  \urlprefix\url{http://scidok.sulb.uni-saarland.de/volltexte/2015/6217/}.

\bibitem{CurticapeanDM17}
R.~Curticapean, H.~Dell, and D.~Marx.
\newblock Homomorphisms are a good basis for counting small subgraphs.
\newblock In {\em Proc.\ of ACM STOC}, pages 210--223, 2017.
\newblock \doi{10.1145/3055399.3055502}.

\bibitem{CurticapeanM14}
R.~Curticapean and D.~Marx.
\newblock Complexity of {C}ounting {S}ubgraphs: {O}nly the {B}oundedness of the
  {V}ertex-{C}over {N}umber {C}ounts.
\newblock In {\em Proc.\ of IEEE FOCS}, pages 130--139, 2014.
\newblock \doi{10.1109/FOCS.2014.22}.

\bibitem{CyganFKLMPPS15}
M.~Cygan, F.~V. Fomin, L.~Kowalik, D.~Lokshtanov, D.~Marx, M.~Pilipczuk,
  M.~Pilipczuk, and S.~Saurabh.
\newblock {\em Parameterized {A}lgorithms}.
\newblock Springer, 2015.
\newblock \doi{10.1007/978-3-319-21275-3}.

\bibitem{DalmauJ04}
V.~Dalmau and P.~Jonsson.
\newblock The complexity of counting homomorphisms seen from the other side.
\newblock {\em Theoretical Computer Science}, 329(1-3):315--323, 2004.
\newblock \doi{10.1016/j.tcs.2004.08.008}.

\bibitem{DLM20}
H.~Dell, J.~Lapinskas, and K.~Meeks.
\newblock Approximately counting and sampling small witnesses using a colourful
  decision oracle.
\newblock In {\em Proc.\ of ACM-SIAM SODA}, pages 2201--2211, 2020.
\newblock \doi{10.1137/1.9781611975994.135}.

\bibitem{DorflerRSW19}
J.~D{\"{o}}rfler, M.~Roth, J.~Schmitt, and P.~Wellnitz.
\newblock {Counting Induced Subgraphs: An Algebraic Approach to
  {\#}W[1]-hardness}.
\newblock In {\em Proc.\ of MFCS}, volume 138, pages 26:1--26:14, 2019.
\newblock \doi{10.4230/LIPIcs.MFCS.2019.26}.

\bibitem{DurandM15}
A.~Durand and S.~Mengel.
\newblock Structural {T}ractability of {C}ounting of {S}olutions to
  {C}onjunctive {Q}ueries.
\newblock {\em Theory of Computing Systems}, 57(4):1202--1249, 2015.
\newblock \doi{10.1007/s00224-014-9543-y}.

\bibitem{DyerGJ10}
M.~E. Dyer, L.~A. Goldberg, and M.~Jerrum.
\newblock An approximation trichotomy for {B}oolean {\#}{CSP}.
\newblock {\em J. Comput. Syst. Sci.}, 76(3-4):267--277, 2010.
\newblock \doi{10.1016/j.jcss.2009.08.003}.

\bibitem{Eden20-CliqueGap}
T.~Eden, D.~Ron, and W.~Rosenbaum.
\newblock Almost optimal bounds for sublinear-time sampling of $k$-cliques:
  Sampling cliques is harder than counting.
\newblock 2020, \burlalt{2012.04090}{http://arxiv.org/abs/2012.04090}.

\bibitem{Eden20-CliqueArboricity}
T.~Eden, D.~Ron, and C.~Seshadhri.
\newblock Faster sublinear approximation of the number of k-cliques in
  low-arboricity graphs.
\newblock In {\em Proc.\ of ACM-SIAM SODA}, pages 1467--1478.
\newblock \doi{10.1137/1.9781611975994.89}.

\bibitem{Eppstein99}
D.~Eppstein.
\newblock Subgraph {I}somorphism in {P}lanar {G}raphs and {R}elated {P}roblems.
\newblock {\em J. Graph Algorithms Appl.}, 3(3):1--27, 1999.
\newblock \doi{10.7155/jgaa.00014}.

\bibitem{Eppstein2011-ListClique}
D.~Eppstein and D.~Strash.
\newblock Listing all maximal cliques in large sparse real-world graphs.
\newblock In P.~M. Pardalos and S.~Rebennack, editors, {\em Experimental
  Algorithms}, pages 364--375, 2011.

\bibitem{ErmanKKW10}
R.~Erman, L.~Kowalik, M.~Krnc, and T.~Walen.
\newblock Improved induced matchings in sparse graphs.
\newblock {\em Discret. Appl. Math.}, 158(18):1994--2003, 2010.
\newblock \doi{10.1016/j.dam.2010.08.026}.

\bibitem{FlumG04}
J.~Flum and M.~Grohe.
\newblock The {P}arameterized {C}omplexity of {C}ounting {P}roblems.
\newblock {\em {SIAM} J. Comput.}, 33(4):892--922, 2004.
\newblock \doi{10.1137/S0097539703427203}.

\bibitem{FlumG06}
J.~Flum and M.~Grohe.
\newblock {\em {P}arameterized {C}omplexity {T}heory}.
\newblock Texts in Theoretical Computer Science. An {EATCS} Series. Springer,
  2006.
\newblock \doi{10.1007/3-540-29953-X}.

\bibitem{Gishboliner20}
L.~Gishboliner, Y.~Levanzov, and A.~Shapira.
\newblock Counting subgraphs in degenerate graphs.
\newblock {\em Electron. Colloquium Comput. Complex.}, 27:177, 2020.
\newblock \urlprefix\url{https://eccc.weizmann.ac.il/report/2020/177}.

\bibitem{Grohe07}
M.~Grohe.
\newblock The complexity of homomorphism and constraint satisfaction problems
  seen from the other side.
\newblock {\em J. {ACM}}, 54(1):1:1--1:24, 2007.
\newblock \doi{10.1145/1206035.1206036}.

\bibitem{Grohe&2009expansion}
M.~Grohe and D.~Marx.
\newblock On tree width, bramble size, and expansion.
\newblock {\em Journal of Combinatorial Theory, Series B}, 99(1):218 -- 228,
  2009.
\newblock \doi{10.1016/j.jctb.2008.06.004}.

\bibitem{ImpagliazzoP01}
R.~Impagliazzo and R.~Paturi.
\newblock On the {C}omplexity of k-{S}{A}{T}.
\newblock {\em J. Comput. Syst. Sci.}, 62(2):367--375, 2001.
\newblock \doi{10.1006/jcss.2000.1727}.

\bibitem{JerrumM15}
M.~Jerrum and K.~Meeks.
\newblock The parameterised complexity of counting connected subgraphs and
  graph motifs.
\newblock {\em J. Comput. Syst. Sci.}, 81(4):702--716, 2015.
\newblock \doi{10.1016/j.jcss.2014.11.015}.

\bibitem{Kasteleyn63}
P.~W. Kasteleyn.
\newblock Dimer {S}tatistics and {P}hase {T}ransitions.
\newblock {\em Journal of Mathematical Physics}, 4(2):287--293, 1963.
\newblock \doi{10.1063/1.1703953}.

\bibitem{KhotR02}
S.~Khot and V.~Raman.
\newblock Parameterized complexity of finding subgraphs with hereditary
  properties.
\newblock {\em Theor. Comput. Sci.}, 289(2):997--1008, 2002.
\newblock \doi{10.1016/S0304-3975(01)00414-5}.

\bibitem{Komarath20}
B.~Komarath, A.~Pandey, and C.~S. Rahul.
\newblock Graph homomorphism polynomials: Algorithms and complexity.
\newblock {\em CoRR}, abs/2011.04778, 2020,
  \burlalt{2011.04778}{http://arxiv.org/abs/2011.04778}.
\newblock \urlprefix\url{https://arxiv.org/abs/2011.04778}.

\bibitem{Lovasz12}
L.~Lov{\'{a}}sz.
\newblock {\em Large {N}etworks and {G}raph {L}imits}, volume~60 of {\em
  Colloquium Publications}.
\newblock American Mathematical Society, 2012.
\newblock \urlprefix\url{http://www.ams.org/bookstore-getitem/item=COLL-60}.

\bibitem{Marx10}
D.~Marx.
\newblock Can {Y}ou {B}eat {T}reewidth?
\newblock {\em Theory of Computing}, 6(1):85--112, 2010.
\newblock \doi{10.4086/toc.2010.v006a005}.

\bibitem{Marx13}
D.~Marx.
\newblock Tractable hypergraph properties for constraint satisfaction and
  conjunctive queries.
\newblock {\em J. {ACM}}, 60(6):42:1--42:51, 2013.
\newblock \doi{10.1145/2535926}.

\bibitem{Meeks16}
K.~Meeks.
\newblock The challenges of unbounded treewidth in parameterised subgraph
  counting problems.
\newblock {\em Discrete Applied Mathematics}, 198:170--194, 2016.
\newblock \doi{10.1016/j.dam.2015.06.019}.

\bibitem{Miloetal02}
R.~Milo, S.~Shen-Orr, S.~Itzkovitz, N.~Kashtan, D.~Chklovskii, and U.~Alon.
\newblock {Network Motifs: Simple Building Blocks of Complex Networks}.
\newblock {\em Science}, 298(5594):824--827, 2002.
\newblock \doi{10.1126/science.298.5594.824}.

\bibitem{MitzenmacherU17}
M.~Mitzenmacher and E.~Upfal.
\newblock {\em {Probability and Computing: Randomized Algorithms and
  Probabilistic Analysis}}.
\newblock Cambridge University Press, second edition, 2017.

\bibitem{NesetrildM12}
J.~Nesetril and P.~O. de~Mendez.
\newblock {\em Sparsity - Graphs, Structures, and Algorithms}, volume~28 of
  {\em Algorithms and combinatorics}.
\newblock Springer, 2012.
\newblock \doi{10.1007/978-3-642-27875-4}.

\bibitem{RobertsonS86-ExGrid}
N.~Robertson and P.~D. Seymour.
\newblock Graph minors. {V.} {E}xcluding a planar graph.
\newblock {\em J. Comb. Theory, Ser. {B}}, 41(1):92--114, 1986.
\newblock \doi{10.1016/0095-8956(86)90030-4}.

\bibitem{RobertsonS04}
N.~Robertson and P.~D. Seymour.
\newblock Graph minors. {XX.} {W}agner's conjecture.
\newblock {\em J. Comb. Theory, Ser. {B}}, 92(2):325--357, 2004.
\newblock \doi{10.1016/j.jctb.2004.08.001}.

\bibitem{Roth19}
M.~Roth.
\newblock {\em Counting {P}roblems on {Q}uantum {G}raphs: {P}arameterized and
  {E}xact {C}omplexity {C}lassifications}.
\newblock PhD thesis, Saarland University, 2019.
\newblock
  \urlprefix\url{https://scidok.sulb.uni-saarland.de/bitstream/20.500.11880/27575/1/thesis.pdf}.

\bibitem{RothS18}
M.~Roth and J.~Schmitt.
\newblock Counting induced subgraphs: {A} {T}opological {A}pproach to
  {\#}{W}[1]-hardness.
\newblock In {\em Proc.\ of IPEC}, pages 24:1--24:14, 2018.
\newblock \doi{10.4230/LIPIcs.IPEC.2018.24}.

\bibitem{RothSW20}
M.~Roth, J.~Schmitt, and P.~Wellnitz.
\newblock Counting small induced subgraphs satisfying monotone properties.
\newblock In {\em Proc.\ of IEEE FOCS}, pages 1356--1367. {IEEE}, 2020.
\newblock \doi{10.1109/FOCS46700.2020.00128}.

\bibitem{RothW20}
M.~Roth and P.~Wellnitz.
\newblock {Counting and Finding Homomorphisms is Universal for Parameterized
  Complexity Theory}.
\newblock In {\em Proc.\ of ACM-SIAM SODA}, pages 2161--2180. {SIAM}, 2020.
\newblock \doi{10.1137/1.9781611975994.133}.

\bibitem{Schilleretal15}
B.~Schiller, S.~Jager, K.~Hamacher, and T.~Strufe.
\newblock {StreaM - A Stream-Based Algorithm for Counting Motifs in Dynamic
  Graphs}.
\newblock In {\em Algorithms for Computational Biology}, pages 53--67, Cham,
  2015.

\bibitem{TemperleyF61}
H.~N.~V. Temperley and M.~E. Fisher.
\newblock Dimer problem in statistical mechanics-an exact result.
\newblock {\em The Philosophical Magazine: A Journal of Theoretical
  Experimental and Applied Physics}, 6(68):1061--1063, 1961.
\newblock \doi{10.1080/14786436108243366}.

\bibitem{Hofetal12}
P.~van~'t Hof, M.~Kaminski, D.~Paulusma, S.~Szeider, and D.~M. Thilikos.
\newblock On graph contractions and induced minors.
\newblock {\em Discret. Appl. Math.}, 160(6):799--809, 2012.
\newblock \doi{10.1016/j.dam.2010.05.005}.

\bibitem{XT17-InducedMatching}
M.~Xiao and H.~Tan.
\newblock Exact algorithms for maximum induced matching.
\newblock {\em Information and Computation}, 256:196--211, 2017.
\newblock \doi{10.1016/j.ic.2017.07.006}.

\end{thebibliography}
\newpage

\appendix
\section{Proof of Lemma~\ref{lem:striking_fact}}
\begin{proof}
We consider the three cases separately.
\begin{itemize}
    \item $P=\#\homsprob$.
    
    The claim holds for $P(C)$ by the results of~\cite{DalmauJ04} and the Excluded Grid Theorem~\cite{RobertsonS86-ExGrid}. For $P_{\textsc{d}}$, it holds in one direction by our Theorem~\ref{thm:intro_homs_hard}.
    \item $P=\#\subsprob$. 
    
    For $P(C)$,~\cite{CurticapeanM14}, says that $P(C)$ is $\#\W{1}$-hard if and only if $\vc(C)$ is unbounded. Since the vertex cover number is within a factor of $2$ of the matching number, this implies that $P(C)$ is $\#\W{1}$-hard if and only if $C$ has unbounded matching number. Now let $M_{\ell}$ be a matching on $2\ell$ vertices. Now we observe that a graph $H$ has $M_{\ell}$ as a minor if and only if it has $M_{\ell}$ as a subgraph. The ``if'' is trivial since a subgraph is a minor. For the ``only if'', look at the model of $M_{\ell}$ in $H$, and note that if $\{u,v\} \in E(M_{\ell})$, then there is an edge between the blocks $S_u,S_v$ of the model of $M_{\ell}$ in $H$. We conclude that $\vc(C)$ is unbounded if and only if $C$ has unbounded matching minors.
    
    For $P_{\textsc{d}}(C)$, Theorem~\ref{thm:intro_subs_param} says that $P_{\textsc{d}}(C)$ is $\#\W{1}$-hard if and only if $\imn(C)$ is unbounded. Now we observe that a graph $H$ has $M_{\ell}$ as an induced minor if and only if it has $M_{\ell}$ as an induced subgraph. The ``if'' direction is trivial since and induced subgraph is an induced minor. The ``only'' if direction follows the same argument as above, but now $\{u,v\} \in E(M_{\ell})$ if and only if there is an edge between $S_u,S_v$ in $H$.
    
    \item $P=\#\indsubsprob$.
    
    From~\cite{ChenM15}, we know that $P(C)$ is $\#\W{1}$-hard if and only if $C$ is infinite. Now observe that $C$ is infinite if and only if $C$ contains arbitrarily large independent set minors, since any $k$-vertex graph $H$ has an independent set minor of size $k$ (just delete all edges of $H$).
    
    For $P_{\textsc{d}}(C)$, Theorem~\ref{thm:intro_indsubs_param} says that $P_{\textsc{d}}(C)$ is $\#\W{1}$-hard if and only if $\imn(C)$ is unbounded. Now let $I_{\ell}$ be the independent set on $\ell$ vertices. We observe that a graph $H$ has $I_{\ell}$ as induced minor if and only if it has $I_{\ell}$ as induced subgraph. The ``if'' is trivial since an induced subgraph is an induced minor. For the ``only if'', take one vertex from each block $S_v$ in the witness structure of $I_{\ell}$ in $H$ and note that no edge can exist between two such vertices since $I_{\ell}$ is an induced minor.
\end{itemize}
\end{proof}

\end{document}